\setlist[enumerate,1]{
{1.},
ref={\arabic*}
}
\setlist[enumerate,2]{
{(a)},
ref={\theenumi{}(\alph*)}
}
\setlist[itemize]{
label={--}
}
\theoremstyle{plain}
\newtheorem{thm}{Theorem}[section]
\newtheorem{corollary}[thm]{Corollary}
\newtheorem{lemma}[thm]{Lemma}
\newtheorem{proposition}[thm]{Proposition}
\theoremstyle{definition}
\newtheorem{definition}[thm]{Definition}
\newtheorem{remark}[thm]{Remark}
\newtheorem{example}[thm]{Example}
\numberwithin{equation}{subsection}
\newcommand{\mf}{\mathfrak}
\newcommand{\mc}{\mathcal}
\newcommand{\ms}{\mathscr}
\newcommand{\op}{\operatorname}
\newcommand{\id}{\operatorname{id}}
\newcommand{\im}{\operatorname{im}}
\newcommand{\ev}{\operatorname{ev}}
\newcommand{\R}{\mathbb R}
\newcommand{\Z}{\mathbb Z}
\newcommand{\N}{\mathbb N}
\newcommand{\Nast}{{\N^{\ast}}}
\newcommand{\e}{\varepsilon}
\newcommand{\bigmid}{~ \Big |~}
\newcommand{\A}{\mathbb{A}}
\newcommand{\B}{\mathbb{B}}
\newcommand{\Pot}{{\mathcal P}}
\newcommand{\T}{\mathbb T}
\renewcommand{\P}{{\mathbb P}}
\renewcommand{\d}{{\mathrm d}}
\newcommand{\X}{{\bm{\mathsf X}}}
\newcommand{\x}{{\bm{\mathsf x}}}
\newcommand{\Tr}{{\bm{\mathsf T}}}
\newcommand{\y}{{\bm{\mathsf y}}}
\newcommand{\w}{{\bm{\mathsf w}}}
\def\ps@pprintTitle{
  \let\@oddhead\@empty
  \let\@evenhead\@empty
  \let\@oddfoot\@empty
  \let\@evenfoot\@oddfoot
  }
\newcommand{\Today}{9th November 2024}
\patchcmd{\MaketitleBox}{\footnotesize\itshape\elsaddress\par\vskip36pt}{\footnotesize\itshape\elsaddress\par\parbox[b][36pt]{\linewidth}{\vfill\hfill\textnormal{\Today}\hfill\null\vfill}}{}{}%
\patchcmd{\pprintMaketitle}{\footnotesize\itshape\elsaddress\par\vskip36pt}{\footnotesize\itshape\elsaddress\par\parbox[b][36pt]{\linewidth}{\vfill\hfill\textnormal{\Today}\hfill\null\vfill}}{}{}%
\begin{document}
\hypersetup{
    linkcolor=blue, 	%Colour for normal internal links
    anchorcolor=black, 	%Colour for anchor (target) text
    citecolor=green, 	%Colour for bibliographical citations
    filecolor=cyan, 	%Colour for links that open local files
    urlcolor=magenta,   
    pdftitle={DSEF1},
    pdfauthor={E. Emanuel Rapsch}
    }
    
\begin{frontmatter}

%% use the tnoteref command within \title for footnotes;
%% use the tnotetext command for theassociated footnote;
%% use the fnref command within \author or \affiliation for footnotes;
%% use the fntext command for theassociated footnote;
%% use the corref command within \author for corresponding author footnotes;
%% use the cortext command for theassociated footnote;
%% use the ead command for the email address,
%% and the form \ead[url] for the home page:
\title{Decision making in stochastic extensive form I: \\Stochastic decision forests}
%% \title{Title\tnoteref{label1}}
%% \tnotetext[label1]{nothing to declare}
\author[1]{E.\ Emanuel Rapsch}
\ead{rapsch@math.tu-berlin.de}
%% \ead[url]{home page}
%% \fntext[label2]{Again, nothing to declare}
%% \cortext[cor1]{}
\affiliation[1]{organization={Institut für Mathematik, Technische Universität Berlin},
                addressline={Straße des 17.\ Juni 136}, 
                %postcode={10623},
                city={10623 Berlin},
                %state={Berlin},
                country={Germany}
                }
%% \fntext[label3]{}

\begin{abstract}
	A general theory of stochastic decision forests is developed to bridge two concepts of information flow: decision trees and refined partitions on the one side, filtrations from probability theory on the other. Instead of the traditional ``nature'' agent, this framework uses a single lottery draw to select a tree of a given decision forest. Each ``personal'' agent receives dynamic updates from an own oracle on the lottery outcome and makes partition-refining choices adapted to this information. 
    This theory addresses a key limitation of existing approaches in extensive form theory, which struggle to model continuous-time stochastic processes, such as Brownian motion, as outcomes of ``nature'' decision making. Additionally, a class of stochastic decision forests based on time-indexed action paths is constructed, encompassing a wide range of models from the literature and laying the groundwork for an approximation theory for stochastic differential games in extensive form.
\end{abstract}

%%%%%%Graphical abstract
%%%%\begin{graphicalabstract}
%%%%%\includegraphics{grabs}
%%%%\end{graphicalabstract}
%%%%
%%%%%%Research highlights
%%%%\begin{highlights}
%%%%\item Research highlight 1
%%%%\item Research highlight 2
%%%%\end{highlights}

\begin{keyword}
% %% keywords here, in the form: keyword \sep keyword
Extensive form games \sep Dynamic games \sep Stochastic games \sep Decision making \sep Sequential decision theory \sep Stochastic processes\smallskip

% %% PACS codes here, in the form: \PACS code \sep code
% %% MSC codes here, in the form: \MSC code \sep code
% %% or \MSC[2008] code \sep code (2000 is the default)
\JEL C73 \sep D81 \smallskip

\MSC[2020] 91A15 \sep 91A18 \sep %91A25 \sep 91A25 \sep 91A35 \sep 91A60 \sep 91B02 \sep 
91B06 %\sep 91B70
\end{keyword}

\end{frontmatter}

\tableofcontents

\section*{Introduction}
\addcontentsline{toc}{section}{Introduction}

This paper is the first in a series aimed at developing a unified theory of stochastic games and decision problems in extensive form. Representing a decision problem in extensive form means specifying it in terms of what the author suggests calling ``extensive form characteristics'', namely, the flow of information about past choices and exogenous events, along with the set of choices available to decision makers. Although classical theory, as established by von Neumann and Morgenstern in \cite{Neumann1944} and furthered by Kuhn in \cite{Kuhn1950,Kuhn1953Extensive}, relies on strong finiteness assumptions, the concept itself is very general and broadly applicable. In a series of papers, including \cite{AlosFerrer2005,AlosFerrer2008,AlosFerrer2011Comment}, and in the monograph \cite{AlosFerrer2016}, Alós-Ferrer and Ritzberger develop an abstract, highly general theory of extensive form games and decision problems and give a concrete order-theoretic characterisation of its own boundaries. Beyond these boundaries, the notions of strategy, outcome, and equilibrium lose rigorous decision-theoretic meaning when applied to the extensive form characteristics of a given decision problem.

Hence, any decision problem and game exhibiting extensive form characteristics lies within these boundaries or is, at least, some sort of limit of objects within these boundaries. The inclusion, or the precise meaning of this limit, allows us to rigorously determine the decision-theoretic meaning of strategies, outcomes, equilibrium, etc.\ for the given decision problem. This opens a new perspective on problems in continuous time in particular -- a domain where the subtlety of this issue has long been recognised (see, e.g.\ \cite{Simon1989,Stinchcombe1992} and the references therein). Based on the maximality result in \cite{Stinchcombe1992}, systematically emphasised in \cite{AlosFerrer2008,AlosFerrer2011Comment}, which advocates restricting outcomes to certain piecewise constant paths, \cite{AlosFerrer2015} introduces, for the first time, a rigorous extensive form foundation of continuous-time games and decision problems involving such outcomes.

Having said that, an important question remains to be addressed: How can the ``extensive form characteristics'' of stochastic games, decision problems exposed to randomness, or those endowed with randomisation devices be modelled? The standard approach in game and decision theory is to introduce a ``nature'' agent executing a behaviour strategy (described in, e.g.\ \cite[Subsection~2.2.3]{AlosFerrer2005} or \cite{Fudenberg1991,AlosFerrer2016}, with roots in the works of Shapley \cite{Shapley1953} and Harsanyi \cite{Harsanyi1967,Harsanyi1968a,Harsanyi1968b}). However, a fundamental problem arises in the continuous-time case: many relevant modelling applications -- from economics (see, e.g.\ \cite{Riedel2017,Grenadier1996Strategic}) and finance (see \cite{Delbaen2006Mathematics,Pham2009Continuous,Bayer2023Rough,Lasry2007}) to engineering (see \cite{Cohen2023Optimal,Huang2006} and the references therein) and reinforcement learning (see, for example, \cite{Guo2023Reinforcement}) -- require exogenous noise represented by stochastic processes whose paths are anything but piecewise constant, with Brownian motion serving as a paradigmatic case (see, for instance, \cite{Pham2009Continuous,Karatzas1998Methods,Cohen2015Stochastic} for an overview of both theory and applications). By no means this implies ``the world to be Brownian'' or any similar assumption; rather, such models are widely used, be it because it appears reasonable to allow for unpredictably small time lags until the next exogenous information revelation, be it because certain quantities arguably evolve in decision paths of low regularity, be it for mathematical and computational convenience. One can certainly argue that these models exhibit ``extensive form characteristics'' without fitting into the Alós-Ferrer--Ritzberger framework (\cite{AlosFerrer2016}). This creates a fundamental issue: the lack of a rigorous decision-theoretic foundation of a large class of relevant models. 

Despite this, stochastic control and differential game theory, based on stochastic analysis, have found a pragmatic way to deal with the ``extensive form characteristics'' of a continuous-time decision problem (see, e.g.\ \cite{Cohen2015Stochastic,Carmona2018}). While extensive form theory is based on graph theory and refined partitions, stochastic analysis models exogenous information through filtrations on a given measurable space of scenarios. These are strictly less restrictive objects because $\sigma$-algebras need not be generated by partitions of the sample space. This works as if a one-shot lottery draw selects a scenario $\omega$ at the beginning, without communication, and then, as time progresses, more and more properties of $\omega$ become known to the decision makers. For instance, this could be knowledge about the realised path of a Brownian motion. From a decision-theoretic perspective, there is no disadvantage in using this weaker structure for modelling exogenous information. 

However, when it comes to the actual decision makers, the stochastic analysis-inspired approach is not compatible with the extensive form paradigm in the strict sense. This is because it does not explain strategies, randomisation, outcomes, or (dynamic, e.g.\ subgame-perfect) equilibria in terms of a decision tree-like object and choices locally available at moves. In this approach, ``strategies'' are typically defined as stochastic processes satisfying a certain measurability condition (e.g.\ progressive measurability) and such that a given stochastic differential equation, depending on them in a non-anticipating way, has a unique solution in some sense. This contradicts the problem's own extensive form characteristics in the strict sense, due to the \emph{ex post} restriction of the strategy space, which implies that the availability of a choice depends on future decisions (for a detailed discussion of this issue, see the introduction in \cite{AlosFerrer2015}). The ``outcome'' is taken to be the solution to the differential equation, and ``equilibria'' (or ``optimal controls'') are defined with implicit reference to the dynamic programming principle, but without an extensive form foundation. This not only fundamentally undermines the use of the equilibrium concept, but also introduces potential confusion, particularly regarding the decision-theoretic meaning of these ``equilibria'' (for example, concerning the notions of ``closed'' and ``open loop'' ``equilibria'' in stochastic differential games, see the discussion in \cite[p.\ 72--76]{Carmona2018}), as well as the precise definition of subgames (for example, in stochastic timing games, see \cite{Riedel2017} for further discussion).

Thus, there is room for an extensive form theory that models exogenous information through filtration-like objects, while modelling endogenous decision making with decision tree-like objects and partition-refining choices. This approach synthesises both frameworks in a productive way, which the author refers to as \emph{stochastic extensive forms}. This is the aim of this project, with the present paper forming its first part. The theory can be used to construct extensive form games involving general stochastic processes as noise. Moreover, it provides a way to precisely characterise how stochastic control and differential games can be approximated by stochastic extensive form decision problems. For instance, it applies to the stochastic timing game, which is typically not at all formulated in extensive form (see, in particular, \cite{Riedel2017}, where strategic form games are stacked according to a notion of ``consistency'', which is justified \emph{a priori} by an analogy to discrete time and \emph{a posteriori} in \cite{Steg2018} by a ``discrete time with an infinitesimal grid'' approximation argument reminiscent of \cite{Simon1987,Simon1989}). While a central motivation for this project stems from continuous-time problems, the theory is not confined to this domain. It is fundamentally about pushing the boundaries of extensive form theory as far as logically possible, with a focus set on probability. This makes it possible to formally establish links between uses of concepts such as equilibrium, which otherwise might seem ad hoc. It also allows for a clear outline of the decision-theoretic structure that arises naturally from the combination of these two ideas, distinguishing it from the additional structure sometimes added in modelling based on the modeller's discretion or specific perspectives on a given problem. For example, stochastic extensive form theory is independent of the representation of noise through a ``nature'' agent's virtual decision making, and in that sense, it is naturally stochastic. \smallskip

The first step in this endeavour, which is taken in the present paper, is to develop a theory of stochastic decision forests. These can be understood as forests of decision trees, where each tree corresponds to exactly one exogenous scenario, equipped with a similarity structure across trees, which identifies moves. Decision trees, as graph-theoretical objects, are the traditional base model for extensive form games and decision problems. However, as pointed out in \cite{AlosFerrer2005}, the refined partitions-based representation, using the set of all maximal chains, exhibits not only strong duality properties but also makes dynamic decision making amenable to the traditional decision-theoretic paradigm of choice under uncertainty: acts which assign a consequence to any state, following Savage's framework (\cite{Savage1972Foundations}). In this refined partitions approach, acts translate into strategies, mapping each move to a choice determined by a set of outcomes still possible at that move from the agent's perspective. This is how uncertainty about choices (whether past or future choices of oneself or others, and present choices of opponents) -- that is, endogenous information -- is handled. 
The modelling of uncertainty about exogenous noise, or exogenous information, in this paper, however, is fundamentally different. It is the aforementioned similarity structure, consisting in so-called random moves, that can be equipped with a filtration-like object which dynamically reveals information about the realised exogenous scenario. This allows for general stochastic processes to model noise without running into outcome generation problems along that dimension. The adaptedness of strategies with respect to exogenous information can be based on a concept of adapted choices that is introduced in the final section.

The fundamental motivation for this theory arises in a large class of concrete applications. Thus, its presentation is tightly accompanied by examples. While some of them remain rather pedagogical, a general class of stochastic decision forests based on time-indexed action paths is also constructed. A small set of readily verifiable axioms allows for many different time regimes, many different specifications of the outcomes (e.g.\ paths of timing games, also in the case of the scenario-dependent expiration of certain options), and general stochastic noise. Of course, it also includes the deterministic case. To the best of the author's knowledge, this theoretical unification of action path-based decision problems with extensive form characteristics within a single framework is a second new contribution of this paper in its own right.\smallskip

Before concluding the introduction, three remarks remain to be made. First, this is the first of a series of three papers; thus, similar to the opening movement of a piano sonata, its first part is self-contained and addresses an independent research problem in its own right, but clearly prepares for and is closely related to the other two. Second, as pointed out by Aumann (see \cite{Aumann2020}), game theory arguably is ``interactive decision theory'', while decision theory is, in a trivial sense, single-player game theory. Still, game theory and decision theory are not the same; the former is more concerned with the ``interactive'' aspect, while the latter focuses more on the ``decision'' facet. This paper is more concerned with the latter, so it primarily employs the corresponding terminology. However, when the context allows, ``game'' is used instead of ``decision problem'', or vice versa, and similarly, the terms ``decision maker'', ``agent'', and ``player''  are used interchangeably. Note, however, that the concept of a decision maker is not formalised in this first paper; its use is purely motivational and of no structural importance at this stage. Finally, the relevant proofs of all new theorems, propositions, lemmata, and claims in examples can be found in the corresponding subsection of the appendix.

\section*{Notation}
\begin{itemize}[label=--]
    \item $\N = \Z_+$ = the natural numbers including zero, $\Nast = \N \setminus \{0\}$, $\R$ = the real numbers, $\R_+ = \{x\in\R \mid x \ge 0\}$;
    \item a function $f\colon D \to V$ from a set $D$ called \emph{domain} to a set $V$ is a subset of $D\times V$ such that for all $x\in D$ there is a unique $y\in V$ with $(x,y)\in f$, and this unique $y$ is denoted by $f(x)$; in other words, $f$ is described through its graph; as an abbreviation, the constant map on $D$ with value $y\in V$ is denoted by $y_{D} = D \times \{y\}$;
    \item $\mc P(A) = \mc P A$ = the set of subsets of a given set $A$, $\mc P(f) = \mc P f$ = the function $\mc P A \to\mc P B, M \mapsto \{ f(m) \mid m\in M\}$ for a given function $f\colon A \to B$ between two sets $A$ and $B$;\footnote{$\mc P$ defines a covariant endofunctor on the category of sets.}
    \item $\im f = (\mc P f)(D)$ = the image of a set-theoretic function $f\colon D \to V$;
    \item $\bigcup M $ = the union of a set $M$ = the set of all $x$ that are the element of some $S\in M$, also written $\bigcup_{i\in I} S_i$ in case $M$ is the image of some function $I\ni i\mapsto S_i$, for some set $I$;
    \item $|M|$ = if $M$ is not an element of some real coordinate space $\R^d$, $d\in\N$, then this is the cardinality of the set $M$;
    \item $>$ = the strict partial order associated to a weak partial order $\ge$;
    \item $\ms E|_D = \{E \cap D \mid E\in\ms E\}$, for any $\sigma$-algebra $\ms E$ and any $D\in\ms E$;
    \item $\ms E_1 \vee \ms E_2$ = the smallest $\sigma$-algebra on $\Omega$ containing both $\ms E_1$ and $\ms E_2$, given a set $\Omega$ and sets $\ms E_1,\ms E_2 \subseteq \mc P(\Omega)$.
\end{itemize}

\section{Decision forests}\label{sec:def}

The basic object of classical extensive form decision and game theory is the decision tree. In the stochastic generalisation presented here, nature does not act as an agent taking decisions dynamically, but is simply replaced with a device ``randomly'' selecting the decision tree the ``personal'' agents follow during their decision making process. This implies that we consider \emph{decision forests}, rather than decision trees. Although this term is not new (as testified by \cite{Rokach2016Decision}), we consider it in the context of abstract decision theory, and more precisely within the -- decision-theoretically natural -- refined partitions framework, and in the aim of making this framework amenable to exogenous noise in the sense of general probability theory. This framework has originally been developed for trees, rather than forests, pioneered in \cite[Section~8]{Neumann1944}, and developed in much more generality in \cite{AlosFerrer2005,AlosFerrer2008} and subsequent papers, much of which is covered in the monograph \cite{AlosFerrer2016}. In this section, we fix some order-theoretic language, present a definition and interpretation of decision forests within the refined partitions approach, and show that decision forests are nothing but forests of decision trees.

\subsection{Order- and graph-theoretic conventions}\label{subs:conventions}
We first recall some basic definitions from graph and order theory, thereby fixing conventions used in this text, which combine those from \cite{Bollobas2013Modern,AlosFerrer2005,Davey2002}. In a partially ordered set (in short, \emph{poset}) $(N,\ge)$ a \emph{chain} is a subset $c\subseteq N$ such that for all $x,y\in c$, $x\ge y$ or $y\ge x$ holds true. A \emph{maximal chain} is a chain that is maximal as a chain with respect to set inclusion in $\mc P(N)$. $x\in N$ is called a \emph{maximal element} iff there is no $y\in N$ other than $x$ such that $y\ge x$. $x\in N$ is called \emph{maximum} iff for all $y\in N$, $x \ge y$. For $x\in N$, the \emph{principal up-set} $\uparrow x$ and \emph{principal down-set} $\downarrow x$ are defined by
\[ \uparrow x = \{y\in N \mid y\ge x\}, \qquad \downarrow x = \{y\in N \mid x \ge y\}. \]

Moreover, in this text, a poset $(F,\ge)$ is called a \emph{forest} iff for every $x\in F$, $\uparrow x$ is a chain. A forest $(F,\ge)$ is called \emph{rooted} iff $F\neq\emptyset$ and for every $x\in F$, $\uparrow x$ contains a maximal element of $(F,\ge)$. A forest $(T,\ge)$ is called a \emph{tree} iff for every $x,y\in T$, $(\uparrow x) \cap (\uparrow y) \neq \emptyset$. Given a forest $(F,\ge)$, the elements $x\in F$ are called \emph{nodes}. Nodes $x\in F$ such that $\downarrow x = \{x\}$ are called \emph{terminal}. We state the following lemma, fundamental for what follows. It can actually be seen as an explicitly order-theoretic reformulation of a basic result from graph theory (see the discussion in \cite[Section~I.1]{Bollobas2013Modern}).\footnote{As the claim that it is a reformulation requires proof, and also for the reader's convenience, a proof can be found in the appendix.}

\begin{lemma}\label{lemma:partion_of_forest}
    For any forest $(F, \ge)$ there exists a unique partition $\mc F$ of $F$ into trees such that for all $x, y \in F$ with $x \ge y$ there is $T \in \mc F$ with $x, y \in T$. If $(F,\ge)$ is rooted, then for any $T\in\mc F$, $(T,\ge)$ is a rooted tree and has a maximum.
\end{lemma}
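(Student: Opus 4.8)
The plan is to exhibit the partition concretely as the set of equivalence classes of a natural relation and then verify the required properties in turn. I would define on $F$ the relation $x \sim y$ iff $(\uparrow x)\cap(\uparrow y)\neq\emptyset$, i.e.\ $x$ and $y$ admit a common upper bound. Reflexivity and symmetry are immediate (note $x\in\uparrow x$). The key step, and the only place where the forest hypothesis is genuinely used, is transitivity: given $a\in(\uparrow x)\cap(\uparrow y)$ and $b\in(\uparrow y)\cap(\uparrow z)$, both $a$ and $b$ lie in $\uparrow y$, which is a chain since $(F,\ge)$ is a forest; hence $a,b$ are comparable, and whichever of the two is the larger then dominates both $x$ and $z$, witnessing $x\sim z$. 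I expect this to be the crux of the whole argument. Let $\mc F$ denote the resulting partition of $F$ into equivalence classes.

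Next I would record that each class is up-closed: if $x\in T\in\mc F$ and $z\ge x$, then $z\in(\uparrow z)\cap(\uparrow x)$, so $z\sim x$ and $z\in T$. Consequently $\uparrow_T x = \uparrow_F x$ for every $x\in T$, where $\uparrow_T$ denotes the principal up-set computed in the sub-poset $(T,\ge)$. From this, each $(T,\ge)$ is itself a forest (its up-sets are chains), and in fact a tree: for $x,y\in T$ we have $x\sim y$, so $(\uparrow_T x)\cap(\uparrow_T y) = (\uparrow_F x)\cap(\uparrow_F y)\neq\emptyset$. The defining property of the partition is then trivial, since $x\ge y$ forces $x\in(\uparrow x)\cap(\uparrow y)$, whence $x\sim y$ and $x,y$ lie in a common class.

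For uniqueness I would show that any partition $\mc F'$ into trees with the stated property mutually refines $\mc F$. If $B\in\mc F'$ and $x,y\in B$, then $(\uparrow_F x)\cap(\uparrow_F y)\supseteq(\uparrow_B x)\cap(\uparrow_B y)\neq\emptyset$ because $(B,\ge)$ is a tree, so $x\sim y$; thus each $B$ sits inside a single class of $\mc F$. Conversely, if $x\sim y$ with common upper bound $a$, the defining property applied to $a\ge x$ and to $a\ge y$ places $x,a$ and $y,a$, and hence $x,y$, in one block of $\mc F'$, so each class of $\mc F$ sits inside a single block of $\mc F'$. Two partitions that refine one another coincide.

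Finally, assuming $(F,\ge)$ rooted, I would fix $T\in\mc F$ (nonempty) and prove the stronger ``maximum'' assertion directly, deducing the rest. Pick any $x\in T$; by rootedness $\uparrow_F x$ contains an element $m$ maximal in $(F,\ge)$, and $m\in T$ by up-closedness. For arbitrary $t\in T$ choose $a$ with $a\ge t$ and $a\ge x$; then $m,a\in\uparrow_F x$, which is a chain, so they are comparable, and since $m$ is maximal we get $m\ge a$ (if instead $a\ge m$ then $a=m$), whence $m\ge a\ge t$. Thus $m$ is a maximum of $(T,\ge)$. Being a maximum, $m$ is maximal in $(T,\ge)$ and lies in every $\uparrow_T t$, so $(T,\ge)$ is a rooted tree, completing the proof.
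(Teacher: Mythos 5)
Your proof is correct and follows essentially the same route as the paper's: the same equivalence relation (existence of a common upper bound), the same use of the chain property of $\uparrow y$ for transitivity, the same mutual-refinement argument for uniqueness, and the same construction of the maximum from a maximal element of a principal up-set in the rooted case. No gaps.
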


The elements of $\mc F$ are called \emph{connected components} of $(F,\ge)$. The maximum of a rooted tree $(T,\ge)$ is called the \emph{root}. The \emph{roots} of a forest $(F,\ge)$ are the roots of its connected components. A \emph{decision forest} (\emph{decision tree}) is a rooted forest (tree, respectively) $(F,\ge)$ such that all $x,y\in F$ with $x\neq y$ can be separated by some maximal chain $c\subseteq F$, that is, $c\cap \{x,y\}$ is a singleton. A \emph{move} in a decision forest $(F,\ge)$ is a non-terminal node $x\in F$. In the present text, let us call a decision forest $(F,\ge)$ \emph{(everywhere) non-trivial} iff some (any, respectively) root is a move.

If $V$ is a set, a \emph{$V$-poset} is a subset $N\subseteq \mc P(V)$. The name derives from the fact that $(N,\supseteq)$ defines a poset of subsets of $V$ ordered by set inclusion.

\begin{remark}\label{rmk:def_tree_AFK}
    Note that the definition of a tree used in this text is an order-theoretic transcription of a graph-theoretical concept. What is called forest here, corresponds indeed to a ``forest'', also named ``acyclic graph'' in graph theory (see \cite[Sections I.1, I.2]{Bollobas2013Modern}), but is called ``tree'' in \cite[Definition~1]{AlosFerrer2005}. This latter terminology adapts the use of this term in order theory (see \cite{Davey2002} and compare the discussion in \cite[Remark~2]{AlosFerrer2005}). In the present text the use of forests with multiple connected components, describing exogenous scenarios, is central, and hence using the term ``tree'' for this may be misleading. Hence, the definitions of the present text insist stronger on the graph-theoretical aspect as presented in \cite{Bollobas2013Modern}, and especially on the botanic metaphor of trees and forests, than do those in \cite{AlosFerrer2005,Davey2002}, though without adding unnecessary discreteness assumptions. 
    
    Note, however, that a ``rooted tree'' in the sense of \cite[Definition~1]{AlosFerrer2005} corresponds to a rooted tree in this text because the \cite{AlosFerrer2005}-definition of ``rooted'' demands the existence of a maximum, not only of maximal elements for principal up-sets. But a rooted forest in the sense of this text need not be a ``rooted tree'' in the sense of \cite{AlosFerrer2005}. 

    It should also be noted that we use the partial order $\ge$ rather than $\le$ because of the representation used in the subsequent subsection, following the decision- and game-theoretic texts \cite{AlosFerrer2005,AlosFerrer2016}, though this may differ from the usual convention in other contexts.
\end{remark}

\subsection{Decision forests}

The following definition is, formally, a transcription of the characterisation of (a subclass of) ``game trees'' in \cite[Theorem~3]{AlosFerrer2005}. So, although the object is formally not new, we look at it from a slightly different perspective which is why it is recalled here. First, it emphasises the fact that there can be multiple connected components, but restricts the attention to those \cite{AlosFerrer2005}-``game trees'' whose connected components are rooted. Second, the terminology in this text is different also in that it insists on the purely decision-theoretic aspect. What is called ``decision forest (or tree) over a set'' and ``decision path'' here, respectively, is called ``game tree`` and ``play'' in \cite{AlosFerrer2005}. Third, while \cite{AlosFerrer2005}-``game trees'' are defined via certain set-theoretic properties which can be rather easily verified in applications, and are then characterised via the so-called ``representation by plays'', the present text perceives the decision-theoretic essence of decision forests over sets rather as being the duality between outcomes and nodes expressed by that representation, and thus uses it as the definition.

\begin{definition}\label{def:decision_forest}
    Let $V$ be a set. A \emph{decision forest on $V$} is a $V$-poset $F$ such that:
    \begin{enumerate}
        \item\label{def:decision_forest.rooted_forest} $(F,\supseteq)$ is a rooted forest;
        \item\label{def:decision_forest:repr_by_dec_paths} $F$ is \emph{its own representation by decision paths}, that is, if $W$ denotes the set of maximal chains in $(F,\supseteq)$, and for every $y\in F$, $W(y) = \{w\in W \mid y\in w\}$, then there is a bijection $f\colon V \to W$ such that for every $y\in F$, $(\mc P f)(y) = W(y)$.
    \end{enumerate}
    $F$ is called \emph{decision tree on $V$} iff, in addition, for all $x,y\in F$ there is $z\in F$ with $z\supseteq x\cup y$.

    The nodes, terminal nodes, and moves of $(F,\supseteq)$ are also called \emph{nodes}, \emph{terminal nodes}, and \emph{moves} of $F$, respectively, and the elements of $V$ are called \emph{outcomes}. The set of moves of $F$ is denoted by $X(F)$ or $X$ in short.
\end{definition}

Following \cite{AlosFerrer2005}, $V$ can be seen as the set of possible \emph{outcomes} of the game, and the forest $(F,\supseteq)$ specifies how, as the decision problem is tackled dynamically, the set of realisable outcomes becomes smaller and smaller. Thus, elements of $F$ can be interpreted as nodes. See \cite{AlosFerrer2005} for handy criteria on the set $F$ characterising the situation of it being a decision forest on $V$: essentially, this corresponds to $F$ being a ``game tree'' on $V$ in the sense of \cite[Definition~4]{AlosFerrer2005} such that all connected components of $(F,\supseteq)$ have maximal elements.

Condition \ref{def:decision_forest}.\ref{def:decision_forest.rooted_forest} clarifies that the predecessors of any node $x\in F$ can be totally ordered and contain a root, and therefore constitute an account of the past with a beginning, a history. The set $W$ of maximal chains can be interpreted as the set of \emph{decision paths}. Condition \ref{def:decision_forest}.\ref{def:decision_forest:repr_by_dec_paths} says that possible outcomes and decision paths are in one-to-one correspondence, in such a way that a node is contained in some decision path iff the corresponding outcome is contained in that node. Even more is true: according to \cite[Theorem~3, Corollary~2]{AlosFerrer2005} and Remark~\ref{rmk:def_tree_AFK}, the bijection $f$ is uniquely determined. More precisely:

\begin{proposition}[\cite{AlosFerrer2005}]\label{prop:f(v)=uparrow v}
    Let $F$ be a decision forest on some set $V$ and $f\colon V \to W$ be a map as in Definition~\ref{def:decision_forest}, where $W$ is the set of maximal chains in $(F,\supseteq)$. Then,
    \[ \forall v\in V\colon\quad f(v) = \uparrow \{v\} = \{x\in F \mid v\in x\}. \]
\end{proposition}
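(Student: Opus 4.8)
The plan is to extract from the defining property $(\mc P f)(y) = W(y)$ a single pointwise membership equivalence and then read the claim off almost immediately. Write $C_v := \{x \in F \mid v \in x\}$ for $v \in V$. Since in the poset $(F,\supseteq)$ one has $x \supseteq \{v\}$ if and only if $v \in x$, the set $C_v$ is by definition exactly the up-set $\uparrow\{v\}$ appearing in the statement; no assumption that $\{v\}$ itself lies in $F$ is needed for this reading. Hence the whole proposition reduces to establishing the equality $f(v) = C_v$.

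The core step is to prove the equivalence
\[ v \in y \iff y \in f(v) \qquad \text{for all } v \in V,\ y \in F. \]
To see this, I would unwind the two sides of condition \ref{def:decision_forest}.\ref{def:decision_forest:repr_by_dec_paths}: by definition of the covariant power-set functor one has $(\mc P f)(y) = \{f(u) \mid u \in y\}$ (recall $y \subseteq V$), while $W(y) = \{w \in W \mid y \in w\}$. For the forward implication, if $v \in y$ then $f(v) \in (\mc P f)(y) = W(y)$, which by the definition of $W(y)$ says precisely $y \in f(v)$. For the reverse implication, if $y \in f(v)$ then, since $f(v) \in W$ (as $f$ maps into $W$), we have $f(v) \in W(y) = (\mc P f)(y)$, so $f(v) = f(u)$ for some $u \in y$; injectivity of $f$ forces $v = u$, whence $v \in y$.

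With the equivalence in hand the conclusion is immediate. Because $f(v) \in W$ is a maximal chain in $(F,\supseteq)$, it is a subset of $F$, so
\[ f(v) = \{y \in F \mid y \in f(v)\} = \{y \in F \mid v \in y\} = C_v = \uparrow\{v\}, \]
where the middle equality is exactly the equivalence just proved and the first uses $f(v) \subseteq F$.

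There is no genuinely hard step here; the entire content is bookkeeping with the functor $\mc P f$ and the sets $W(y)$. The one point that must not be overlooked is that \emph{injectivity} of $f$ is indispensable for the reverse implication $y \in f(v) \Rightarrow v \in y$: without it one could only conclude that some $u \in y$ maps to $f(v)$, not that $v$ itself lies in $y$. By contrast, the surjectivity half of the bijection is not needed for this particular statement, and the forest/tree axioms enter only through the fact that elements of $W$ are subsets of $F$.
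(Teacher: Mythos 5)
Your argument is correct. Note, however, that the paper itself gives no proof of Proposition~\ref{prop:f(v)=uparrow v}: it is imported from \cite{AlosFerrer2005} (via Theorem~3 and Corollary~2 there, together with Remark~\ref{rmk:def_tree_AFK}), so there is no in-paper proof to compare against. What you have produced is a clean, self-contained derivation directly from Definition~\ref{def:decision_forest}: the pointwise equivalence $v\in y \Leftrightarrow y\in f(v)$ follows from unwinding $(\mc P f)(y)=W(y)$, with injectivity of $f$ carrying the reverse implication, and the identity $f(v)=\{x\in F\mid v\in x\}$ then falls out because $f(v)$, being a maximal chain, is a subset of $F$. Your observations that surjectivity of $f$ and the rooted-forest axiom play no role here, and that the notation $\uparrow\{v\}$ is to be read as $\{x\in F\mid v\in x\}$ without assuming $\{v\}\in F$, are both accurate. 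As a bonus, the same equivalence immediately yields the uniqueness of $f$, which is the form in which the paper actually invokes the cited result.
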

Thus $V$, which formally is the set of possible outcomes, and $W$, which formally is the set of possible decision paths, can be identified in a uniquely determined way, namely via the $f$ above. Under this identification, we have the following duality statement:
\[ x\in f(v) \qquad \Longleftrightarrow \qquad v\in x, \]
for all nodes $x\in F$ and all outcomes $v\in V$. As a consequence, in the subsequent sections of the paper the set a decision forest is defined on is denoted by $W$ rather than by $V$, consistent with other pieces of the literature, notably with \cite{AlosFerrer2005,AlosFerrer2016}. \smallskip

There is a second duality, namely between two approaches to dynamic decision theory: graphs, based on tree-like objects, and refined partitions, based on a set of outcomes. This is made precise in the following two propositions, which are contained in \cite[Lemma 14 and Theorem~1]{AlosFerrer2005}, following Remark~\ref{rmk:def_tree_AFK}:

\begin{proposition}[\cite{AlosFerrer2005}]\label{prop:decision_forest_over_set_is_decision_forest}
    Let $F$ be a decision forest on a set $V$. Then the poset $(F,\supseteq)$ is a decision forest.
\end{proposition}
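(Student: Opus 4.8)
The plan is to verify the two defining properties of an (abstract) decision forest for the poset $(F,\supseteq)$, namely that it is a rooted forest and that any two distinct nodes can be separated by some maximal chain. The first property is immediate: it is precisely condition~\ref{def:decision_forest}.\ref{def:decision_forest.rooted_forest} in the hypothesis that $F$ is a decision forest on $V$. Hence the whole content of the proof is the separation property, and the key tool will be the explicit description of maximal chains furnished by Proposition~\ref{prop:f(v)=uparrow v}.

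First I would fix two distinct nodes $x,y\in F$. Since $F$ is a $V$-poset, i.e.\ $F\subseteq\mc P(V)$, its elements are subsets of $V$ and the order is set inclusion; thus $x\neq y$ means exactly that $x$ and $y$ differ as subsets of $V$, so their symmetric difference is non-empty. I would then choose some $v\in V$ lying in exactly one of $x,y$, say, after possibly swapping the roles of $x$ and $y$, with $v\in x$ and $v\notin y$.

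Next I would exploit the duality already established to control membership in the candidate maximal chain. By Proposition~\ref{prop:f(v)=uparrow v}, the image $f(v)\in W$ is a maximal chain in $(F,\supseteq)$ satisfying $f(v)=\{z\in F\mid v\in z\}$; in particular $z\in f(v)$ if and only if $v\in z$. Applying this to $z=x$ and $z=y$ together with the choice of $v$ yields $x\in f(v)$ but $y\notin f(v)$, so that $f(v)\cap\{x,y\}=\{x\}$ is a singleton. Therefore $f(v)$ separates $x$ and $y$, which establishes the separation property and completes the argument that $(F,\supseteq)$ is a decision forest.

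I do not expect a genuine obstacle: once Proposition~\ref{prop:f(v)=uparrow v} is in hand, the argument is essentially a restatement of the elementary fact that distinct subsets of $V$ must differ in at least one point. The only steps demanding care are the bookkeeping of the equivalence $z\in f(v)\Leftrightarrow v\in z$ and the observation that $f(v)$ is genuinely a maximal chain, which holds because $f$ takes values in the set $W$ of maximal chains; both are supplied directly by the preceding results.
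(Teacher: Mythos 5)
Your argument is correct. Note, however, that the paper does not actually prove this proposition itself: it is stated as a citation of \cite{AlosFerrer2005} (the text points to Lemma~14 and Theorem~1 there, filtered through Remark~\ref{rmk:def_tree_AFK}), and the appendix deliberately proves Lemma~\ref{lemma:partion_of_forest} and Theorem~\ref{thm:decision_forest=forest_of_decision_trees} \emph{without} using it. So what you have written is a genuinely different route: a short self-contained derivation of the separation property from the duality $z\in f(v)\Leftrightarrow v\in z$ supplied by Proposition~\ref{prop:f(v)=uparrow v}, rather than an appeal to the representation-by-plays machinery of the cited source. The logic is sound: distinct subsets of $V$ differ at some $v$, and $f(v)$ is by construction a maximal chain meeting $\{x,y\}$ in exactly one element; the rooted-forest half is Axiom~\ref{def:decision_forest}.\ref{def:decision_forest.rooted_forest} verbatim. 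Two minor bookkeeping remarks. First, the fact that $f(v)$ is a maximal chain comes from $f$ being a map into $W$ as required by Definition~\ref{def:decision_forest}, not from Proposition~\ref{prop:f(v)=uparrow v}, which only supplies the explicit formula $f(v)=\{z\in F\mid v\in z\}$; you conflate the two slightly, but nothing breaks. Second, you could even bypass Proposition~\ref{prop:f(v)=uparrow v} entirely: the equivalence $y\in f(v)\Leftrightarrow v\in y$ follows directly from the defining condition $(\mc P f)(y)=W(y)$ together with the injectivity of $f$, which would make the proposition a consequence of Definition~\ref{def:decision_forest} alone. Since Proposition~\ref{prop:f(v)=uparrow v} precedes this proposition in the paper and is itself only cited, there is no circularity in using it as you do.
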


\begin{proposition}[\cite{AlosFerrer2005}]\label{prop:repr_by_dec_paths_of_decision_forest}
     Let $(F_0,\ge)$ be a decision forest. Let $V$ be the set of its maximal chains and for any $x_0\in F_0$ let $V(x_0)$ be the set of $v\in V$ with $x_0\in v$. Let \[F = \{V(x_0) \mid x_0\in F_0\}.\] Then $F$ defines a decision forest on $V$, and $(F,\supseteq)$ is order-isomorphic to $(F_0,\ge)$.
\end{proposition}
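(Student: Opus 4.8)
The plan is to exhibit the natural candidate for the order-isomorphism and reduce both assertions to its bijectivity and order-faithfulness. Define $\varphi\colon F_0 \to F$ by $\varphi(x_0) = V(x_0)$; this is surjective by the definition of $F$. I would first show that $\varphi$ is an \emph{order-embedding}, i.e.\ that $x_0 \ge y_0 \iff V(x_0)\supseteq V(y_0)$, together with injectivity. Granting this, $\varphi$ is an order-isomorphism $(F_0,\ge)\to(F,\supseteq)$, which is exactly the second claim. The first claim then splits into the two conditions of Definition~\ref{def:decision_forest}: condition~\ref{def:decision_forest}.\ref{def:decision_forest.rooted_forest} is immediate, since ``being a rooted forest'' is a purely order-theoretic property and $(F_0,\ge)$, being a decision forest, is a rooted forest, so its isomorphic copy $(F,\supseteq)$ is too; and $F\subseteq\mc P(V)$ is a $V$-poset by construction.

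For the order-embedding, the forward implication uses the forest structure: if $x_0 \ge y_0$ and $v$ is a maximal chain with $y_0\in v$, then every element of $v$ is comparable to $y_0$, hence (using $x_0\ge y_0$ and that $\uparrow y_0$ is a chain) comparable to $x_0$; maximality of $v$ forces $x_0\in v$, so $V(y_0)\subseteq V(x_0)$. Injectivity is where the separation axiom of a decision forest enters: if $x_0\neq y_0$, a maximal chain $c$ with $c\cap\{x_0,y_0\}$ a singleton witnesses $V(x_0)\neq V(y_0)$, and contrapositively $\varphi$ is injective. For the reverse implication, I would pick any $v\in V(y_0)$ --- non-empty because $\{y_0\}$ extends to a maximal chain by the Hausdorff maximal principle --- so that $V(y_0)\subseteq V(x_0)$ gives $x_0,y_0\in v$, hence $x_0,y_0$ comparable; the case $y_0 > x_0$ would yield $V(y_0)\supseteq V(x_0)$ by the forward implication, whence $V(x_0)=V(y_0)$ and, by injectivity, $x_0=y_0$, a contradiction. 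Thus $x_0\ge y_0$.

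It remains to verify condition~\ref{def:decision_forest}.\ref{def:decision_forest:repr_by_dec_paths}. Since $\varphi$ is an order-isomorphism it maps the maximal chains of $(F_0,\ge)$ bijectively onto those of $(F,\supseteq)$; writing $W$ for the latter, this yields a bijection $f\colon V\to W$, $f(v) = (\mc P\varphi)(v) = \{V(x_0)\mid x_0\in v\}$. For a node $V(y_0)\in F$ I would then compute, for $v\in V$,
\[ f(v)\in W(V(y_0)) \iff V(y_0)\in f(v) \iff \exists\, x_0\in v\colon V(x_0)=V(y_0) \iff y_0\in v \iff v\in V(y_0), \]
where the central equivalence uses the injectivity of $\varphi$. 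Since $f$ is a bijection, this says exactly $(\mc P f)(V(y_0)) = W(V(y_0))$, which is the required representation by decision paths, and completes the proof.

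The main obstacle is the order-embedding step, and within it the reverse implication $V(x_0)\supseteq V(y_0)\Rightarrow x_0\ge y_0$: it is the only place where the two defining features of a decision forest --- that every node lies on a maximal chain (rootedness via a Zorn-type argument) and that distinct nodes are separated by a maximal chain --- must be combined, rather than the forest axiom alone. Once the isomorphism is secured, everything afterward is essentially bookkeeping with the functor $\mc P$.
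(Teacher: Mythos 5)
Your proof is correct, and every step checks out: the forward implication of the order-embedding correctly combines the forest property (that $\uparrow y_0$ is a chain) with maximality of $v$; injectivity correctly isolates the separation axiom; the reverse implication correctly uses non-emptiness of $V(y_0)$ (via the Hausdorff maximal principle) together with injectivity to rule out $y_0>x_0$; and the final computation of $(\mc P f)(V(y_0))=W(V(y_0))$ is sound, using surjectivity of $f$ for the inclusion $\supseteq$ and injectivity of $\varphi$ for the central equivalence.

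The comparison with the paper is somewhat asymmetric, because the paper does not actually prove Proposition~\ref{prop:repr_by_dec_paths_of_decision_forest}: it is stated as a citation of \cite{AlosFerrer2005} (Lemma~14 and Theorem~1 there), read through Remark~\ref{rmk:def_tree_AFK}, and the author notes that by Theorem~\ref{thm:decision_forest=forest_of_decision_trees} one may invoke the cited results for rooted trees only and then reassemble the forest from its connected components. Your argument instead proves the statement directly for forests in one pass, organised around the single map $\varphi(x_0)=V(x_0)$, so that both conclusions (decision forest on $V$, and order-isomorphism) fall out of one order-embedding computation. This buys self-containedness and avoids any appeal to the tree decomposition or to the external reference; what it gives up is the modularity the paper is after, namely that the tree case alone (as established in the literature) suffices once Theorem~\ref{thm:decision_forest=forest_of_decision_trees} is in place. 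Your chain-of-equivalences verification of condition~\ref{def:decision_forest}.\ref{def:decision_forest:repr_by_dec_paths} also parallels, in the forest setting, the bookkeeping the appendix carries out for the converse direction in Lemma~\ref{lemma:set_forest} and the proof of Theorem~\ref{thm:decision_forest=forest_of_decision_trees}, so the two developments are consistent with each other.
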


Rephrasing the ``representation by plays'' from \cite{AlosFerrer2005}, we call $F$ the \emph{representation by decision paths} of $(F_0,\ge)$. The definition of a decision forest on a set requires essentially that the operations from the two preceding proposition are, up to isomorphism, inverse to each other. See \cite[Section~4]{AlosFerrer2005} for more details on this.

Thus, there is a rigorous sense in that both of the mentioned descriptions are equivalent. While the graph-theoretical approach is graphically more convenient, the refined partitions approach is naturally aligned with Savage's acts as a model for choice under uncertainty (see the introduction of \cite{AlosFerrer2005} for a discussion). We note that, although phrased differently and with a different aim, all three preceding propositions and the duality concepts constitute one of the essential innovations of \cite{AlosFerrer2005}. 

\subsection{Forests of decision trees}

In the present text, it is crucial to deal with forests and not only with trees, in the sense of Subsection~\ref{subs:conventions} and Remark~\ref{rmk:def_tree_AFK}. Often, the study of forests can be reduced to the study of trees by considering the connected components separately. Is this true for decision forests as well, and in what sense? In other words, is the duality of nodes and outcomes compatible with the forest structure? This question gets fully answered in the following theorem. More precisely, we have:

\begin{thm}\label{thm:decision_forest=forest_of_decision_trees}
	Let $V$ be a set and $F$ be a $V$-poset defining a rooted forest with respect to $\supseteq$. Let $\mc F$ denote the set of connected components of the rooted forest $(F,\supseteq)$. For every $T\in\mc F$, let $V_T$ denote the root of $(T,\supseteq)$.
	
	Then $F$ is a decision forest on $V$ iff 
    \begin{enumerate}
        \item\label{thm:decision_forest=forest_of_decision_trees.partition} $\{V_T\mid T\in\mc F\}$ is a partition of $V$, and
        \item\label{thm:decision_forest=forest_of_decision_trees.T_decision_tree} for every $T\in\mc F$, $(T,\supseteq)$ defines a decision tree on $V_{T}$.
    \end{enumerate}
\end{thm}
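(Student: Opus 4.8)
The plan is to reduce everything to an analysis of the canonical map $f\colon v\mapsto \{x\in F\mid v\in x\}$ of Proposition~\ref{prop:f(v)=uparrow v}, decomposed along the connected components. I would begin with a structural observation about chains: since $\mc F$ is a partition of $F$ into trees and any two comparable nodes lie in a common component (Lemma~\ref{lemma:partion_of_forest}), every chain of $(F,\supseteq)$ is contained in a single component. Consequently the set $W$ of maximal chains is the disjoint union $\bigsqcup_{T\in\mc F} W_T$, where $W_T$ is the set of maximal chains of $(T,\supseteq)$, and for $y\in T$ one has $W(y)=W_T(y)$. I would also record two facts used repeatedly: every maximal chain of a rooted tree contains its maximum (otherwise one could adjoin the root), so each $w\in W_T$ contains $V_T$; and since $V_T$ is the $\supseteq$-maximum of $T$, every node $x\in T$ satisfies $x\subseteq V_T$, whence $\bigcup T=V_T$.

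For the forward implication I would assume $F$ is a decision forest on $V$, so that by Proposition~\ref{prop:f(v)=uparrow v} the map $f(v)=\{x\in F\mid v\in x\}$ is a bijection $V\to W$ with $(\mc P f)(y)=W(y)$. To obtain the partition property~\ref{thm:decision_forest=forest_of_decision_trees.partition}, note that $f(v)$ is a maximal chain lying in a unique component $T$ and containing its root $V_T$, so $v\in V_T$; this proves covering, while disjointness follows because $v\in V_T\cap V_{T'}$ would force the chain $f(v)$ to meet two components. For~\ref{thm:decision_forest=forest_of_decision_trees.T_decision_tree}, the extra tree axiom holds since $V_T\supseteq x\cup y$ for all $x,y\in T$; and using disjointness of the roots one checks that for $v\in V_T$ no node outside $T$ contains $v$, so $f$ restricts to the analogous map $f_T\colon V_T\to W_T$, $v\mapsto\{x\in T\mid v\in x\}$. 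Injectivity and the identity $(\mc P f_T)(y)=W_T(y)$ are inherited from $f$ (using $y\subseteq V_T$), and surjectivity onto $W_T$ follows because any $w\in W_T$ equals some $f(v)$ and contains $V_T$, forcing $v\in V_T$.

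For the converse I would assume~\ref{thm:decision_forest=forest_of_decision_trees.partition} and~\ref{thm:decision_forest=forest_of_decision_trees.T_decision_tree} and assemble the global bijection from the component-wise ones. Condition~\ref{thm:decision_forest=forest_of_decision_trees.T_decision_tree} gives, via Proposition~\ref{prop:f(v)=uparrow v} applied to each tree, bijections $f_T\colon V_T\to W_T$, $v\mapsto\{x\in T\mid v\in x\}$, satisfying $(\mc P f_T)(y)=W_T(y)$. Using the disjointness in~\ref{thm:decision_forest=forest_of_decision_trees.partition} exactly as above, I would show that $f$ agrees with $f_T$ on each $V_T$, so that $f=\bigsqcup_T f_T$ is a bijection $V=\bigsqcup_T V_T \to \bigsqcup_T W_T=W$; the required identity $(\mc P f)(y)=W(y)$ then transfers from the component level. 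Together with the hypothesis that $(F,\supseteq)$ is a rooted forest, this yields Definition~\ref{def:decision_forest}.

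I expect the main obstacle to be the bookkeeping around the role of condition~\ref{thm:decision_forest=forest_of_decision_trees.partition}: the whole argument hinges on the fact that disjointness of the roots is precisely what forces $f(v)$ to live in a single component, and hence to be a genuine chain rather than an object spanning several trees. Making this equivalence airtight -- in particular seeing that a failure of the partition property breaks the bijection -- is the crux, whereas the tree axiom and the transfer of the representation identity are comparatively routine once the chain-localisation step is in place.
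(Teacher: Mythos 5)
Your proposal is correct and follows essentially the same route as the paper's own proof: the paper likewise decomposes along connected components, first establishing (in an auxiliary lemma on $V$-posets forming rooted forests) that $T=\{x\in F\mid V_T\supseteq x\}$ and that the maximal chains through $V_T$ are exactly the maximal chains of $(T,\supseteq)$ -- your chain-localisation step -- and then restricts, respectively glues, the canonical bijection $v\mapsto\{x\in F\mid v\in x\}$ component by component, deriving the partition property from the fact that a maximal chain cannot meet two components. The only cosmetic difference is that in the sufficiency direction the paper assembles the global map from abstract component bijections and checks injectivity and surjectivity by hand, whereas you identify it outright with the canonical map; both arguments are sound.
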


To verify the Propositions \ref{prop:f(v)=uparrow v}, \ref{prop:decision_forest_over_set_is_decision_forest}, and \ref{prop:repr_by_dec_paths_of_decision_forest}, on can cite the results in \cite{AlosFerrer2005} in combination with Remark~\ref{rmk:def_tree_AFK}. In view of Theorem~\ref{thm:decision_forest=forest_of_decision_trees}, it suffices, however, to cite the results for the case of rooted trees only. This will be of interest at a later stage when analysing outcome existence and uniqueness for stochastic extensive forms.

\begin{remark}\label{rmk:decision_forests_over_sets}
    Theorem~\ref{thm:decision_forest=forest_of_decision_trees} offers a way of constructing decision forests from collections of decision trees. Let $\mc F_0$ be a non-empty set of decision trees on sets. For any $T_0\in\mc F_0$, let $V_{T_0}$ be its root. Then, let
    \[ V = \bigcup_{T_0\in\mc F_0} V_{T_0} \times \{T_0\} = \{ (v,T_0) \mid T_0 \in \mc F_0,\, v\in T_0\}, \]
    the disjoint union of all roots, and let
    \[ F = \big\{ x \times \{T_0\} \mid T_0 \in \mc F_0,\, x\in T_0\big\}. \]
    Then, $F$ is a decision forest on $V$. The set of connected components is given by
    \[ \mc F = \Big\{ \big\{ x \times \{T_0\} \mid x\in T_0\big\} \bigmid T_0 \in \mc F_0\Big\}. \]
    Moreover, Theorem~\ref{thm:decision_forest=forest_of_decision_trees} states that all decision forests on sets can be represented in this form.
\end{remark}

For examples of decision forests on sets, the reader is referred to the following section, where decision forests whose connected components are indexed over the set of exogenous scenarios are considered.\footnote{Also note the examples of decision trees on sets in \cite{AlosFerrer2005} and the more extensive version in \cite{AlosFerrer2016} in combination with the preceding Remark~\ref{rmk:decision_forests_over_sets}.}

\section{Stochastic decision forests}\label{sec:sdf}

In this section, the central notion of this paper -- stochastic decision forests -- is introduced. The main idea behind this is to weaken the traditional assumptions on exogenous information, which is no more assumed to arise through the dynamic decision making of a nature agent. Rather, an exogenous scenario $\omega$ is ``randomly'' realised within a given measurable space which determines the decision tree underlying the actual decision makers' problem. Therefore, before the main definition, measurable spaces are discussed as a model for exogenous scenarios and sets of scenarios whose probability can be measured. Subsequently, the definition of stochastic decision forests is given and analysed, simple examples are presented, and the class of action path stochastic decision forests is constructed.

\subsection{Exogenous scenario spaces}\label{subs:Ex_sc_sp}

For the remainder of the paper, an \emph{exogenous scenario space} is a measurable space $(\Omega,\ms E)$ such that $\Omega\neq\emptyset$.\footnote{For more information on measure and probability theory, the reader is referred to the introductory texts \cite{Bogachev2007Measure,Kallenberg2021}.} $\Omega$ describes the set of possible \emph{exogenous scenarios} which replace the outcomes possibly generated by the behaviour of a ``nature'' agent. $\ms E$ is the set of \emph{events} alias collections of exogenous scenarios that can be measured by all relevant decision makers. That is, we implictly suppose that decision makers alias agents form beliefs about the probability of events, in the sense of probability theory, and here, $\ms E$ describes the set of collections of exogenous scenarios that agents assign probabilities to.\footnote{These beliefs can be interpreted as representation devices of agents' preferences over the random outcomes of strategic interaction in terms of expected utility -- ``random'' because of the dependence on exogenous scenarios, which in this particular context play the role of states while outcomes correspond to the consequences, put in Savage's words, see, e.g.\ \cite{Savage1972Foundations,Gilboa1989Maxmin,Hara2023Multiple}.}

Why does one suppose $\ms E$ to be a $\sigma$-algebra? As an algebra of sets, $\ms E$ can be seen as a Boolean algebra and can therefore model basic logical operations on events. Moreover, in the probabilistic approach, we are only interested in those functions $\P\colon\ms E\to [0,1]$ compatible with the algebra structure\footnote{The meaning of the phrase ``compatible with the algebra structure'' can be reduced to the statement: a) $\P(\Omega) = 1$, and b) for all disjoint $E_1,E_2\in\ms E$ we have $\P(E_1\cup E_2) = \P(E_1) + \P(E_2)$.} that are $\sigma$-additive, which can be thought of as a continuity property, and it is a fundamental measure-theoretic result, that, for such $\P$, without loss of generality, $\ms E$ can be supposed to be closed under countable unions, that is, to be a $\sigma$-algebra (see \cite[Section~1.5]{Bogachev2007Measure}).

If $\ms E$ can be generated by a countable partition $\ms P$ of $\Omega$, any probability measure on it is entirely described through the probabilities on these partitioning events. So, though all elements of $\ms E$ including unions of partition members are events, one can formally describe the situation by the set $\ms P$ and a countable family of positive numbers $(p_E)_{E\in \ms P}$ adding up to $1$ and indexed by that set. However, in general, $\ms E$ need not be generated by a partition of $\Omega$. For instance, it may be that all singletons are events and have probability zero, as in the case of the Lebesgue measure on the unit interval $[0,1]$. Hence, from a general perspective, scenarios may be neither a relevant nor a sufficient description of exogenous data for decision makers, but it is the ($\sigma$-)algebra $\ms E$ of events that these agents are concerned about. Moreover, as famously illustrated by the Lebesgue measure, in case $\Omega$ is uncountable, there is a non-trivial trade-off between the fineness\footnote{Recall that a $\sigma$-algebra $\ms E'$ on $\Omega$ is \emph{finer} than $\ms E$ iff $\ms E \subseteq \ms E'$.} 
of $\ms E$ and the amount of probability measures on it (see \cite[Section~1.7]{Bogachev2007Measure}). Hence, to describe the agent's relation to exogenous data and admissible beliefs about their probability, it is crucial to specify $\ms E$. This is crucial, not the least because many relevant applications require an uncountable $\Omega$ (with a $\sigma$-algebra $\ms E$ that cannot be generated by a countable partition), be it the uniform distribution on $[0,1]$, be it continuous martingales omnipresent in finance (see, e.g.\ \cite{Delbaen2006Mathematics,Bayer2023Rough}).

\subsection{Stochastic decision forests}
In light of the preceding argument, we model uncertainty about exogenous events by an exogenous scenario space $(\Omega,\ms E)$ and suppose that some $\omega\in\Omega$ is realised, determining the relevant decision tree without being directly communicated to the decision makers. So, the agents do not necessarily have complete information about exogenous events in $\ms E$, which tell in which tree they are while making choices. Note that events in $\ms E$ need not be generated by a partition of $\Omega$. Hence, there must be a structure of similarity among trees that can serve as a consistent basis for exogenous information revelation in an abstract and general probabilistic sense. This distinguishes the present approach from the traditional model of games or decision problems under incomplete information (\cite{Harsanyi1967,Harsanyi1968a,Harsanyi1968b}) or stochastic games (\cite{Shapley1953}, see also \cite[Subsection~2.2.2.5]{AlosFerrer2016}), represented via a randomising nature agent and partition-based information sets with respect to that agent's past action.

\begin{definition}\label{def:sdf}
    A \emph{stochastic decision forest}, in short \emph{\textsc{sdf}}, on an exogenous scenario space $(\Omega,\ms E)$ is a triple $(F,\pi,\X)$ consisting of:
    \begin{enumerate}
        \item\label{def:sdf.df} a decision forest $F$ on some set $W$;
        \item\label{def:sdf.conn_comp} a surjective map $\pi\colon F\to \Omega$ such that the set $\mc F$ of connected components of $(F,\supseteq)$ is given by the fibres of $\pi$, that is,
        \[ \mc F = \{\pi^{-1}(\{\omega\}) \mid \omega\in \Omega\}; \]
        \item\label{def:sdf.X} a set $\X$ such that: 
        \begin{enumerate}
            \item\label{def:sdf.X.section} any element $\x\in\X$ is a section of moves defined on some non-empty event, that is, it is a map $\x\colon D_\x \to X$ satisfying $\pi\circ\x = \id_{D_\x}$ for some $D_\x\in\ms E\setminus\{\emptyset\}$;
            \item\label{def:sdf.X.cov} $\X$ induces a covering of $X$, that is, $\{\x(\omega) \mid \x\in\X,\,\omega\in D_\x\} = X$.
        \end{enumerate}
    \end{enumerate}
    The elements of $\X$ are called \emph{random moves}. For $\omega\in\Omega$, let $T_\omega = \pi^{-1}(\{\omega\})$ and $W_\omega$ be the root of $T_\omega$. For $E\subseteq\Omega$, let $W_E = \bigcup_{\omega\in E} W_\omega$ and $F_E = \bigcup_{\omega\in E} T_\omega$.
\end{definition}

In other words, a stochastic decision forest is a decision forest on a set (Axiom~\ref{def:sdf.df}), whose connected components are indexed by $\Omega$ (Axiom~\ref{def:sdf.conn_comp}), admitting a set of sections of moves (Axiom~\ref{def:sdf.X.section}) called random moves that form a covering (Axiom~\ref{def:sdf.X.cov}) of $X$. As discussed in the remainder of this series of papers, random moves form a flexible basis for a general description of information revelation for different agents, under the innocent hypothesis that there is at least one active agent per move. 

If one visualises a stochastic decision forest in two dimensions, any tree growing along the vertical ``decision path'' axis, the forest's trees being placed along the horizontal ``$\Omega$'' axis, then random moves are -- roughly speaking -- horizontal or diagonal sections covering the set of moves in a decision-theoretically interpretable way. Hence, in a stochastic decision forest, the flow of information can be decomposed into the endogenous movement of agents along the set of random moves caused by decision making and the exogenous information revelation at those random moves. Note also that stochastic decision forests are allowed to vary across scenarios: the trees in $\mc F$ need not be isomorphic. The crowns may become shallower in some scenarios (because some options are no more available there).

It is possible that random moves as such can be ordered in a way consistent with the ordering on $F$. This means that one can relate these points of exogenous information revelation uniformly by means of the words ``before'' and ``after''. This possibility is necessary for conceiving an agent's capacity to condition action on knowledge about exogenous information received ``earlier''. Note that this order consistency of random moves is not about information sets and, in particular, not to be confounded with the strong order on information sets as introduced in \cite{Ritzberger1999Recall}. An example of an order inconsistent stochastic decision forest is presented in the following subsection (namely, Gilboa's interpretation of the absent-minded driver formulated as a stochastic decision forest).
Furthermore, we introduce two non-triviality conditions.
In many contexts it seems reasonable to assume first that any root is a move -- a situation which (provided sufficient measurability) can always be obtained by eliminating those scenarios $\omega$ with singleton $T_\omega$ alias singleton $W_\omega$ --, and second that the set of random moves describes the basis of exogenous information most efficiently, in the sense that it cannot be extended without compromising its structure.

\begin{definition}\label{def:sdf.addon}
    Given a stochastic decision forest $(F,\pi,\X)$ on an exogenous scenario space $(\Omega,\ms E)$, let $\ge_\X$ denote the partial on $\X$ defined by
    \[ \x_1 \ge_\X \x_2 \quad \Longleftrightarrow \quad \Big[ D_{\x_1} \supseteq D_{\x_2}~ \text{ and }~\forall \omega\in D_{\x_2}\colon \x_1(\omega) \supseteq \x_2(\omega)\Big]. \]
    A set $\tilde\X\subseteq\X$ of random moves is said \emph{order consistent} iff for all $\x_1,\x_2\in\tilde\X$:
    \begin{equation*}
        \Big[\exists\omega\in D_{\x_1}\cap D_{\x_2}\colon~ \x_1(\omega) \supseteq \x_2(\omega)\Big] \qquad \Longrightarrow \qquad \x_1 \ge_\X \x_2.
    \end{equation*}

    A stochastic decision forest $(F,\pi,\X)$
    \begin{enumerate}[label=4(\alph*),ref=4(\alph*)]
        \item\label{def:sdf.X.OC} is said \emph{order consistent} iff $\X$ is order consistent;
        \item\label{def.sdf.X.surely_NT} is said \emph{surely non-trivial} iff $(F,\supseteq)$ is everywhere non-trivial;
        \item\label{def:sdf.X.max} that is order consistent, is said \emph{maximal} iff for every set $\bar\X$ such that $(F,\pi,\bar\X)$ is an order consistent stochastic decision forest and that is \emph{refined by $\X$} in that for all $\bar\x\in\bar\X$ there is $P_{\bar\x}\subseteq \X$ with $\bar\x = \bigcup P_{\bar\x}$,\footnote{According to standard set-theoretic conventions, $\bar\x = \bigcup P_{\bar\x}$ means: $\bar\x$ is a map with domain $\bigcup_{\x\in P_{\bar\x}} D_\x$ and for all $\x\in P_{\bar\x}$ and $\omega\in D_\x$, $\bar\x(\omega) = \x(\omega)$.} we have $\bar\X = \X$.
    \end{enumerate}
\end{definition}

As is going to be shown in the sequel, some of these properties are actually stronger than the definition formally indicates. In the case of order consistency, random moves cannot cross each other, and in particular, the root property is compatible with random moves.

\begin{lemma}\label{lemma:sdf.X.roots} 
    Let $(F,\pi,\X)$ be a stochastic decision forest. Let $D = \{\omega\in\Omega \mid W_\omega\in X\}$.
    \begin{enumerate}
        \item\label{lemma:sdf.X.roots.OC}  If $(F,\pi,\X)$ is order consistent, then, for any $\x\in\X$, $\x(\omega)$ is a root in $(F,\supseteq)$ for all or no $\omega\in D_\x$. 
        \item\label{lemma:sdf.X.roots.surely_non-trivial}  $D=\Omega$ iff $(F,\pi,\X)$ is surely non-trivial.
        \item\label{lemma:sdf.X.roots.root_random_move} If $(F,\pi,\X)$ is order consistent and maximal, and $X\neq\emptyset$, then $\x_0\colon D \to X,\,\omega\mapsto W_\omega$ is a random move, i.e.\ $\x_0\in\X$.
    \end{enumerate}
\end{lemma}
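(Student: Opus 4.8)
The plan is to treat the three parts in the order \ref{lemma:sdf.X.roots.surely_non-trivial}, \ref{lemma:sdf.X.roots.OC}, \ref{lemma:sdf.X.roots.root_random_move}, since the last builds on the first two. Part \ref{lemma:sdf.X.roots.surely_non-trivial} is a pure unwinding of definitions: by Lemma~\ref{lemma:partion_of_forest} together with Axiom~\ref{def:sdf.conn_comp}, the connected components of $(F,\supseteq)$ are exactly the fibres $T_\omega = \pi^{-1}(\{\omega\})$, each with maximum $W_\omega$, and since $\pi$ is surjective the roots of $(F,\supseteq)$ are precisely $\{W_\omega \mid \omega\in\Omega\}$. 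Being surely non-trivial means $(F,\supseteq)$ is everywhere non-trivial, i.e.\ every root is a move, i.e.\ $W_\omega\in X$ for all $\omega$, which is exactly $D=\Omega$; so the equivalence is immediate.

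For part \ref{lemma:sdf.X.roots.OC}, assume order consistency, fix $\x\in\X$, and suppose for contradiction that there are $\omega_1,\omega_2\in D_\x$ with $\x(\omega_1)=W_{\omega_1}$ a root but $\x(\omega_2)\subsetneq W_{\omega_2}$ not a root. The key observation is that $\x(\omega_2)\subsetneq W_{\omega_2}$ forces $T_{\omega_2}$ to contain two distinct nodes, so $W_{\omega_2}$ is non-terminal, hence a move. The covering Axiom~\ref{def:sdf.X.cov} then produces some $\y\in\X$ with $\y(\omega_2)=W_{\omega_2}$, the scenario being forced to be $\omega_2$ since $\pi\circ\y=\id$. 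Now $\y(\omega_2)\supseteq\x(\omega_2)$ at the common scenario $\omega_2$ yields $\y\ge_\X\x$ by order consistency, so in particular $\omega_1\in D_\y$ and $\x(\omega_1)=W_{\omega_1}\supseteq\y(\omega_1)$ (the root is the maximum); applying order consistency in the reverse direction gives $\x\ge_\X\y$, whence $\x(\omega_2)\supseteq\y(\omega_2)=W_{\omega_2}$, contradicting $\x(\omega_2)\subsetneq W_{\omega_2}$.

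For part \ref{lemma:sdf.X.roots.root_random_move}, let $P\subseteq\X$ be the set of random moves taking root values; by part \ref{lemma:sdf.X.roots.OC} this coincides with taking a root value at a single scenario, so every $\x\in P$ is a restriction of $\x_0$, and order consistency shows that two distinct elements of $P$ must have disjoint domains. First I would record $D\neq\emptyset$: since $X\neq\emptyset$, a move lies in some $T_\omega$ whose root $W_\omega$ is then itself a move (either that move is the root, or $T_\omega$ is non-singleton), so $\omega\in D$; and the covering argument of part \ref{lemma:sdf.X.roots.OC} shows every $\omega\in D$ lies in the domain of some element of $P$, so $D=\bigcup_{\x\in P}D_\x$ and $P\neq\emptyset$. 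The heart is to show $|P|=1$. If $\x_1,\x_2\in P$ were distinct, hence with disjoint domains, then $\x_1\cup\x_2$ is again a root section of moves on the measurable set $D_{\x_1}\cup D_{\x_2}$, and I would verify that $\bar\X=(\X\setminus\{\x_1,\x_2\})\cup\{\x_1\cup\x_2\}$ is again an order consistent stochastic decision forest refined by $\X$. Maximality then forces $\bar\X=\X$, which is absurd since $\x_1\in\X\setminus\bar\X$. Hence $P=\{\x_0\}$, and $\x_0\in\X$.

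The step I expect to be the main obstacle is the measurability of the domain $D$ of the candidate $\x_0$: a priori $D=\bigcup_{\x\in P}D_\x$ is an uncountable union of events and need not lie in $\ms E$, so one cannot directly present $\x_0$ as a random move and invoke maximality. The device above sidesteps this by never forming $\x_0$ until $P$ has been shown, through two-element merges and maximality, to be a singleton, at which point $D=D_{\x_0}$ is automatically an event. The remaining routine work is the case analysis establishing order consistency of the merged family $\bar\X$, which reduces, using part \ref{lemma:sdf.X.roots.OC}, to the combinations of an unchanged random move with the merged one.
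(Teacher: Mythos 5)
Your proposal is correct and follows essentially the same route as the paper's proof: part \ref{lemma:sdf.X.roots.surely_non-trivial} by unwinding definitions, part \ref{lemma:sdf.X.roots.OC} via the covering axiom plus a double application of order consistency, and part \ref{lemma:sdf.X.roots.root_random_move} by showing there can be only one root-valued random move through the two-element merge $\x_1\cup\x_2$ and maximality. Your explicit remark on why one cannot form $\x_0$ directly (measurability of the uncountable union) is exactly the point the paper's merge argument is designed to avoid.
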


Moreover, in the order consistent, surely non-trivial and maximal case, random moves can be seen as moves of a rooted decision tree in their own right. This makes the endogenous movement of agents along $\X$ particularly ordered: as said before, it becomes possible to relate two points of exogenous information revelation via ``before'' and ``after''. Let us make this precise. First, we might have to add terminal nodes, if they exist. For example, think of an ``infinite centipede'' which contains infinitely many maximal chains but only one maximal chain consisting only of moves. Hence, we extend the partial order $\ge_\X$ to a binary relation $\ge_\Tr$ on the disjoint union \[\Tr = \X \cup\Big \{\big\{(\omega,\{w\})\big\}\bigmid (\omega,w)\in\Omega\times W\colon \{w\} \in F,\,\pi(\{w\}) = \omega\Big\}\] by letting, for $\x\in\X$ and $(\omega,w),(\omega',w')\in\Omega\times W$ with $\{w\},\{w'\}\in F$ and $\pi(\{w\})=\omega$, $\pi(\{w'\})=\omega'$:
\begin{itemize}[label=--]
    \item $\x\ge_\Tr \big\{(\omega,\{w\})\big\}$ iff $\omega\in D_\x$ and $w\in\x(\omega)$;
    \item $\big\{(\omega,\{w\})\big\} \ge_\Tr \big\{(\omega',\{w'\})\big\}$ iff $(\omega,w)=(\omega',w')$;
    \item $\big\{(\omega,\{w\})\big\}\ngeq_\Tr \x$.
\end{itemize}
Note that a set $\big\{(\omega,\{w\})\big\}$ as above is nothing else than the set-theoretic function $\y\colon D_\y \to F$ with $D_\y = \{\omega\}$ and $\y(\omega)=\{w\}$. In analogy with random moves and since its image is a terminal node, we call such $\y$ a \emph{random terminal node}. The analogy is substantiated by the subsequent proposition which (seemingly) strengthens Axiom~\ref{def:sdf.X.OC} in Definition~\ref{def:sdf.addon} above.

\begin{proposition}\label{prop:ev_on_Tr_is_iso}
    Let $(F,\pi,\X)$ be an order consistent stochastic decision forest on some exogenous scenario space $(\Omega,\ms E)$ and $\Tr \bullet \Omega = \{(\y,\omega) \in \Tr\times \Omega \mid \omega \in D_\y\}$. Then
    the evaluation map $\ev\colon \Tr\bullet \Omega \to F, (\y,\omega) \mapsto \y(\omega)$ is a bijection such that for all $(\y_1,\omega_1),(\y_2,\omega_2)\in\Tr \bullet \Omega$:
    \[ \Big[\y_1 \ge_\Tr \y_2 \text{ and } \omega_1 = \omega_2\Big] \quad \Longleftrightarrow \quad \y_1(\omega_1) \supseteq \y_2(\omega_2). \]
\end{proposition}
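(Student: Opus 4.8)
The plan is to prove the two assertions—that $\ev$ is a bijection and that it transports $\ge_\Tr$ to $\supseteq$ in the stated sense—after first pinning down the ranges of the two kinds of elements of $\Tr$. The decisive structural fact I would establish at the outset is that the terminal nodes of $F$ are \emph{exactly} the singletons $\{w\}\in F$. Every singleton node is terminal because nodes are non-empty (by the representation by decision paths their images under $\mc P f$ are non-empty sets of maximal chains, so $\emptyset\notin F$), whence the only proper subset $\emptyset$ is not a node. Conversely, a terminal node $x$ must be a singleton: minimality of $x$ in $(F,\supseteq)$ forces every chain through $x$ to lie in $\uparrow x$, and $\uparrow x$ is itself a chain since $(F,\supseteq)$ is a forest, so $\uparrow x$ is the \emph{unique} maximal chain containing $x$; but if $v_1\neq v_2$ were two outcomes in $x$, then by Proposition~\ref{prop:f(v)=uparrow v} the sets $\uparrow\{v_1\}$ and $\uparrow\{v_2\}$ would be two distinct maximal chains both containing $x$, a contradiction. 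Consequently random moves take values exactly in the moves $X$ and random terminal nodes exactly in the terminal nodes, and these two node classes are disjoint and exhaust $F$.

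With this dichotomy, surjectivity and the bookkeeping identity are immediate. For any $(\y,\omega)\in\Tr\bullet\Omega$ one has $\pi(\y(\omega))=\omega$: for random moves this is the section property $\pi\circ\x=\id_{D_\x}$, and for random terminal nodes it is built into the construction of $\Tr$. A move $x\in X$ lies in the range of $\ev$ because $\X$ covers $X$ (Axiom~\ref{def:sdf.X.cov}), while a terminal node $x$ is a singleton $\{w\}$ and hence equals $\y(\omega)$ for its associated random terminal node $\y$ at $\omega=\pi(\{w\})$.

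For injectivity, suppose $\y_1(\omega_1)=\y_2(\omega_2)=:x$. Applying $\pi$ and using $\pi(\y_i(\omega_i))=\omega_i$ gives $\omega_1=\omega_2=:\omega$ at once. If $x$ is a move, both $\y_i$ are random moves with $\y_1(\omega)=\y_2(\omega)$, so $\y_1(\omega)\supseteq\y_2(\omega)$ and $\y_2(\omega)\supseteq\y_1(\omega)$; order consistency (Axiom~\ref{def:sdf.X.OC}) then yields $\y_1\ge_\X\y_2$ and $\y_2\ge_\X\y_1$, and antisymmetry of $\ge_\X$ forces $\y_1=\y_2$. If $x$ is terminal, both $\y_i$ are random terminal nodes, each determined by its common image, so again $\y_1=\y_2$. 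This is where order consistency is genuinely used, which is what makes the proposition a strengthening of Axiom~\ref{def:sdf.X.OC}.

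For the order equivalence I would run the same case split on the types of $\y_1,\y_2$, reading $\ge_\Tr$ off its defining clauses. The observation that streamlines the direction from $\y_1(\omega_1)\supseteq\y_2(\omega_2)$ is that the containment already forces $\omega_1=\omega_2$: comparable nodes share a connected component by Lemma~\ref{lemma:partion_of_forest}, and components are the $\pi$-fibres by Axiom~\ref{def:sdf.conn_comp}. The random-move/random-move subcase is then exactly order consistency (Axiom~\ref{def:sdf.X.OC}); the subcase with $\y_1\in\X$ and $\y_2$ a random terminal node reduces to the clause $\x\ge_\Tr\{(\omega,\{w\})\}\Leftrightarrow w\in\x(\omega)$; and the terminal/terminal subcase is immediate since both $\y_i$ are fixed by their image. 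The remaining subcase, $\y_1$ a random terminal node and $\y_2$ a random move, is vacuous on both sides: $\y_1(\omega_1)\supseteq\y_2(\omega_2)$ cannot hold, because a non-empty subset of the singleton $\y_1(\omega_1)$ is that singleton, which is terminal and not a move, matching the clause $\{(\omega,\{w\})\}\ngeq_\Tr\x$. I expect the main obstacle to be the singleton characterisation of terminal nodes, since everything else is essentially matching the definition of $\ge_\Tr$; this characterisation is precisely what guarantees that adjoining the random terminal nodes to $\X$ restores surjectivity onto the terminal part of $F$ without overlapping the random moves.
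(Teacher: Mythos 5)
Your proof is correct and follows essentially the same route as the paper's: a case split on whether each $\y_i$ is a random move or a random terminal node, with order consistency doing the work in the move/move case and the singleton characterisation of terminal nodes (which the paper relegates to a footnote, via the same appeal to the representation by decision paths) handling the rest. The only cosmetic difference is that you prove injectivity directly via $\pi$ and antisymmetry of $\ge_\X$, whereas the paper derives it from the order-embedding property it establishes first.
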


In order-theoretic terms, the evaluation map defines an order isomorphism between $\Tr\bullet\Omega$, equipped with the order induced by the product of $\ge_\Tr$ on $\Tr$ and equality on $\Omega$, and $(F,\supseteq)$. 
Now, as announced, we can make rigorous the sense in that random moves are the moves of a rooted decision tree, under the assumptions presented in Definition~\ref{def:sdf.addon}.

\begin{thm}\label{thm:Xrm_is_dec_tree}
    Let $(F,\pi,\X)$ be an order consistent, surely non-trivial and maximal stochastic decision forest on some exogenous scenario space $(\Omega,\ms E)$. 
    Then, $(\Tr,\ge_\Tr)$ defines a rooted decision tree and $\X$ is the set of its moves.
\end{thm}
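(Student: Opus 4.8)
The plan is to reduce every tree axiom for $(\Tr,\ge_\Tr)$ to the decision-tree structure of the individual components $T_\omega$ — which are decision trees on $W_\omega$ by Theorem~\ref{thm:decision_forest=forest_of_decision_trees} and Proposition~\ref{prop:decision_forest_over_set_is_decision_forest} — by transporting along the fibrewise correspondence of Proposition~\ref{prop:ev_on_Tr_is_iso}. First I would record the two facts that do the work. Fixing $\omega$ and restricting to those $\y\in\Tr$ with $\omega\in D_\y$, the evaluation $\y\mapsto\y(\omega)$ is injective into $T_\omega$: by order consistency together with antisymmetry of $\ge_\X$, equal values at $\omega$ force equality. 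It is also surjective onto $T_\omega$, since the covering Axiom~\ref{def:sdf.X.cov} together with $\pi\circ\x=\id_{D_\x}$ forces the witnessing scenario of a move lying in $T_\omega$ to be $\omega$, while terminal nodes lift to random terminal nodes; write $\y_x$ for the resulting \emph{lift} of a node $x\in T_\omega$. Secondly, Proposition~\ref{prop:ev_on_Tr_is_iso} yields \emph{comparability transfer}: for $\y,\y'$ both defined at $\omega$, they are $\ge_\Tr$-comparable iff $\y(\omega)$ and $\y'(\omega)$ are $\supseteq$-comparable in $T_\omega$. That $\ge_\Tr$ is itself a partial order is then a short case analysis directly from Definition~\ref{def:sdf.addon}.

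Next I would obtain the forest and tree structure. Each $\uparrow\y$ is a chain: all of its elements are defined at any fixed $\omega\in D_\y$, their values lie in the chain $\uparrow\y(\omega)$ of the forest $T_\omega$, and comparability transfer carries this back. For rootedness and the tree property at once I exhibit a maximum. Since $(F,\pi,\X)$ is surely non-trivial, $D=\Omega$ by item~\ref{lemma:sdf.X.roots.surely_non-trivial} of Lemma~\ref{lemma:sdf.X.roots}, and then item~\ref{lemma:sdf.X.roots.root_random_move} of Lemma~\ref{lemma:sdf.X.roots} (invoking maximality and $X\neq\emptyset$, which holds since the forest is rooted and every root is a move) puts $\x_0=(\omega\mapsto W_\omega)$ in $\X$. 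Because $W_\omega$ is the $\supseteq$-maximum of $T_\omega$, one checks $\x_0\ge_\Tr\y$ for all $\y\in\Tr$; a maximum lies in every up-set, so $(\Tr,\ge_\Tr)$ is rooted and any two up-sets meet, i.e.\ it is a tree.

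The main obstacle is the separation axiom, and this is where I would concentrate. Given $\y_1\neq\y_2$: if $D_{\y_1}\cap D_{\y_2}=\emptyset$ then, since comparability forces domain inclusion, $\y_1$ and $\y_2$ are incomparable, and any maximal chain through $\y_1$ separates them. Otherwise fix $\omega\in D_{\y_1}\cap D_{\y_2}$; order consistency and antisymmetry give $\y_1(\omega)\neq\y_2(\omega)$, and I separate these distinct nodes inside the decision tree $T_\omega$ by a maximal chain $d$ of $T_\omega$ chosen to contain $\y_1(\omega)$ but not $\y_2(\omega)$. Lifting $d$ at $\omega$ produces a chain through $\y_1$, which I extend to a maximal chain $c$ of $\Tr$. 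The delicate step is that extension must not reintroduce $\y_2$, and this is secured by the fact that $\y_2$ is defined at $\omega$: any element of $c$ defined at $\omega$ is comparable to every lift $\y_x$ with $x\in d$, hence by transfer its value is $\supseteq$-comparable to all of $d$, hence lies in $d$ by maximality, hence equals some $\y_x$; applied to $\y_2$ this would force $\y_2(\omega)\in d$, contrary to the choice of $d$. So $\y_2\notin c$ and $c$ separates them.

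Finally, to identify the moves: each random terminal node is $\ge_\Tr$-minimal by construction, hence terminal, whereas every $\x\in\X$ strictly dominates the lift of any proper predecessor of the move $\x(\omega)$ in $T_\omega$ and is therefore non-terminal. Hence the moves of $(\Tr,\ge_\Tr)$ are precisely the elements of $\X$, which completes the proof.
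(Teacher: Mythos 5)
Your proposal is correct in substance and shares most of its architecture with the paper's proof (partial order, forest property via the comparability transfer of Proposition~\ref{prop:ev_on_Tr_is_iso}, the maximum $\x_0$ via Lemma~\ref{lemma:sdf.X.roots}, and the identification of $\X$ with the moves), but your treatment of the separation axiom is genuinely different. The paper picks $w\in\y_1(\omega)\setminus\y_2(\omega)$, uses the maximal chain $\uparrow\{w\}$ of $(F,\supseteq)$ to manufacture an auxiliary node $y_3$ \emph{incomparable} with $\y_2(\omega)$ but below $\y_1(\omega)$, lifts it to $\y_3\in\Tr$, and then separates via a maximal chain through $\y_3$, relying only on the already-established forest property of $(\Tr,\ge_\Tr)$. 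You instead invoke the separation property of the decision tree $T_\omega$ itself, lift a separating maximal chain $d$ of $T_\omega$ fibrewise, extend it to a maximal chain $c$ of $(\Tr,\ge_\Tr)$, and rule out $\y_2\in c$ by showing that every element of $c$ defined at $\omega$ must evaluate into $d$. That last step is the real content of your route and it does go through: comparability with all lifts $\y_x$, $x\in d$, transfers to $\supseteq$-comparability of values, maximality of $d$ in $(T_\omega,\supseteq)$ forces membership, and fibrewise injectivity identifies the element with a lift. Your argument localises the separation problem entirely inside one tree $T_\omega$, at the price of a slightly more delicate maximal-chain-extension analysis; the paper's argument avoids that analysis by reducing to the incomparable case.

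One imprecision you should repair: the separation property of the decision tree $T_\omega$ only guarantees a maximal chain $d$ with $d\cap\{\y_1(\omega),\y_2(\omega)\}$ a singleton, not one containing $\y_1(\omega)$ specifically. Indeed, if $\y_2(\omega)\supsetneq\y_1(\omega)$, then \emph{every} maximal chain of $T_\omega$ through $\y_1(\omega)$ also contains $\y_2(\omega)$, since maximal chains are of the form $\uparrow\{w\}$ and $\y_1(\omega)\subseteq\y_2(\omega)$. So you must allow yourself to swap the roles of $\y_1$ and $\y_2$ (harmless, since the separation axiom is symmetric), exactly as the paper does with its ``for symmetry reasons'' reduction. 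Also, in the final step, ``proper predecessor of $\x(\omega)$'' should read ``proper element of $\downarrow\x(\omega)$'': with the order $\supseteq$ and the paper's conventions, predecessors are the larger nodes, and what witnesses non-terminality of $\x$ is a node strictly \emph{contained in} $\x(\omega)$.
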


Note, however, that $F$ cannot be subsumed under this derived object. The latter as such does not faithfully account for the fact that there is a realised scenario. For instance, only those maximal chains $\mathbf c$ in $(\Tr,\ge_\Tr)$ correspond to outcomes in $W = \bigcup F$ that satisfy $\bigcap_{\y\in\mathbf c} D_\y \neq \emptyset$, which need not be the case. This is developed a bit further in the second part, \cite[Subsection~3.1]{Rapsch2024DecisionB}.

\subsection{Simple examples}\label{subs:simple_sdf}
In the following we present three very simple examples of stochastic decision forests. The first example is illustrated in Figure~\ref{fig:simple_sdf}. 
\begin{figure}
    \centering
    \begin{tikzpicture}[node distance={20mm}, thick, main/.style = {draw, circle}] 
    \node[main] (1) {$\x_0(\omega_1)$}; 
    \node[main] (2) [above left of=1] {$\x_1(\omega_1)$}; 
    \node[main] (3) [above right of=1] {$\x_2(\omega_1)$}; 
    \node[main] (4) [right of=3] {$\x_1(\omega_2)$}; 
    \node[main] (5) [below right of=4] {$\x_0(\omega_2)$};
    \node[main] (6) [above right of=5] {$\x_2(\omega_2)$};
    \node[] (8) [above=0.5cm of 2] {$\{w_{112}\}$};
    \node[] (7) [left=0.2cm of 8] {$\{w_{111}\}$};
    \node[] (10) [above=0.5cm of 3] {$\{w_{122}\}$};
    \node[] (9) [left=0.2cm of 10] {$\{w_{121}\}$};
    \node[] (11) [above=0.5cm of 4] {$\{w_{211}\}$};
    \node[] (12) [right=0.2cm of 11] {$\{w_{212}\}$};
    \node[] (13) [above=0.5cm of 6] {$\{w_{221}\}$};
    \node[] (14) [right=0.2cm of 13] {$\{w_{222}\}$};
    \draw[->] (1) -- (2); 
    \draw[->] (1) -- (3); 
    \draw[->] (5) -- (4); 
    \draw[->] (5) -- (6); 
    \draw[->] (2) -- (7); 
    \draw[->] (2) -- (8); 
    \draw[->] (3) -- (9); 
    \draw[->] (3) -- (10); 
    \draw[->] (4) -- (11); 
    \draw[->] (4) -- (12); 
    \draw[->] (6) -- (13); 
    \draw[->] (6) -- (14); 
    \end{tikzpicture} 
    \caption{A simple stochastic decision forest represented as a directed graph, with $w_{\ell km} = (\omega_\ell,k,m)$, for $(\ell,k,m)\in \{1,2\}^3$. Moves are indicated by circles.}
    \label{fig:simple_sdf}
\end{figure}
It indicates \emph{pars pro toto} how finite stochastic extensive form decision problems can be formalised.

Let $\Omega = \{\omega_1,\omega_2\}$ be some exogenous scenario space with two elements and $\ms E = \mc P \Omega$. Let $W = \Omega \times \{1,2\}^2$ and $\x_0,\x_1,\x_2\colon \Omega \to \mc P(W)$ given by $\x_0(\omega) = \{\omega\} \times \{1,2\}^2$ and $\x_k(\omega) = \{(\omega,k)\}\times \{1,2\}$, $F = \{\x_k(\omega) \mid \omega\in\Omega,~k=0,1,2\} \cup \{ \{w\} \mid w\in W\}$, $\pi\colon F \to\Omega$ be the map sending any node to the first entry of an arbitrary choice among its elements.

\begin{lemma}\label{lemma:simple_sdf1}
    The tuple $(F,\pi,\X)$ defines an order consistent, surely non-trivial, and maximal stochastic decision forest.
\end{lemma}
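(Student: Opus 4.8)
The plan is to verify the three defining axioms of a stochastic decision forest (Definition~\ref{def:sdf}) first, and then the three properties from Definition~\ref{def:sdf.addon} in turn. Here $\X = \{\x_0,\x_1,\x_2\}$, where implicitly each $\x_k$ is regarded as a section with domain $D_{\x_k} = \Omega$. For Axiom~\ref{def:sdf.df} I would observe that $F$ is manifestly a $V$-poset on $W$; rather than checking the representation-by-decision-paths condition directly, the cleanest route is to invoke Theorem~\ref{thm:decision_forest=forest_of_decision_trees} via the construction in Remark~\ref{rmk:decision_forests_over_sets}. Indeed, $F$ is exactly the disjoint-union-of-trees construction applied to the two copies (indexed by $\omega_1,\omega_2$) of the standard depth-two binary decision tree on $\{1,2\}^2$; since each $\x_0(\omega) = \{\omega\}\times\{1,2\}^2$ is the root of its component $T_\omega$ and $\{\x_0(\omega_1),\x_0(\omega_2)\}$ partitions $W$, both hypotheses of that theorem hold, so $F$ is a decision forest on $W$.

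For Axiom~\ref{def:sdf.conn_comp} I would check that $\pi$ is well defined (every node's elements share the same first coordinate, by inspection of the three types $\x_0(\omega)$, $\x_k(\omega)$, and singletons $\{w\}$), surjective, and that its fibres $\pi^{-1}(\{\omega\}) = T_\omega$ are precisely the connected components identified above. Axiom~\ref{def:sdf.X} splits into the section property \ref{def:sdf.X.section} --- each $\x_k$ maps into $X$ (one verifies the image nodes are non-terminal) and satisfies $\pi\circ\x_k = \id_\Omega$ with $\Omega\in\ms E\setminus\{\emptyset\}$ --- and the covering property \ref{def:sdf.X.cov}, for which I would simply enumerate: the six moves of $F$ are the two roots $\x_0(\omega)$ and the four second-level nodes $\x_k(\omega)$ for $k\in\{1,2\}$, and these are exactly $\{\x(\omega)\mid \x\in\X,\,\omega\in\Omega\}$.

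For the add-on properties, sure non-triviality (\ref{def.sdf.X.surely_NT}) is immediate since each root $\x_0(\omega)$ is a move. Order consistency (\ref{def:sdf.X.OC}) I would verify by checking the implication in Definition~\ref{def:sdf.addon} on all pairs: the only nontrivial inclusions $\x_i(\omega)\supseteq\x_j(\omega)$ arise from $\x_0(\omega)\supseteq\x_k(\omega)$, and these already hold for every $\omega\in\Omega$, so the global relation $\x_0\ge_\X\x_k$ follows; incomparable pairs such as $\x_1,\x_2$ never satisfy the left-hand side, since $\x_1(\omega)$ and $\x_2(\omega)$ are disjoint. The genuinely laborious step --- and the one I expect to be the main obstacle --- is \textbf{maximality} (\ref{def:sdf.X.max}): I must show that any order consistent $\bar\X$ refined by $\X$ equals $\X$. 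The strategy is that a refining $\bar\x = \bigcup P_{\bar\x}$ is a union of members of $\X$ agreeing on overlaps; order consistency of $\bar\X$ forces these unions to respect the non-crossing structure, so one argues that no two distinct $\x_i,\x_j$ can be glued into a single section (their images lie at incomparable or strictly ordered levels, violating the section/consistency requirements), whence each $\bar\x$ is a single $\x_k$ and $\bar\X\subseteq\X$; the reverse inclusion uses that $\X$ itself refines any such $\bar\X$ together with the covering property to rule out a proper subset. I would keep this argument short by leaning on Lemma~\ref{lemma:sdf.X.roots} and Theorem~\ref{thm:Xrm_is_dec_tree}, which already encode how order consistency constrains the global shape of $\X$, rather than re-deriving the combinatorics by hand.
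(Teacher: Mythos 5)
Your overall plan — verifying Definition~\ref{def:sdf} axiom by axiom and then the three add-on properties — is exactly the paper's approach, and your treatment of Axioms \ref{def:sdf.conn_comp} and \ref{def:sdf.X}, sure non-triviality, and order consistency matches the paper's. For Axiom~\ref{def:sdf.df} you route through Theorem~\ref{thm:decision_forest=forest_of_decision_trees} and Remark~\ref{rmk:decision_forests_over_sets} (decomposing $F$ into two copies of a depth-two binary decision tree), whereas the paper instead checks directly that $w\mapsto{\uparrow}\{w\}$ is a bijection onto the maximal chains and cites the characterisation of game trees from Al\'os-Ferrer--Ritzberger; both routes work, and yours has the mild advantage of staying inside the paper's own results, at the cost of still having to verify that each finite component is a decision tree on its root.

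The one step where your proposal goes wrong is the one you single out as the main obstacle: maximality. First, the strategy of ``leaning on Lemma~\ref{lemma:sdf.X.roots} and Theorem~\ref{thm:Xrm_is_dec_tree}'' is circular — Theorem~\ref{thm:Xrm_is_dec_tree} and Part~\ref{lemma:sdf.X.roots.root_random_move} of Lemma~\ref{lemma:sdf.X.roots} take maximality as a \emph{hypothesis}, so neither can be used to establish it. Second, you have substantially overestimated the difficulty. The paper's entire argument is: since $D_{\x_k}=\Omega$ for all $k$, any $\bar\x=\bigcup P_{\bar\x}$ with $P_{\bar\x}\subseteq\X$ can only be a function if all members of $P_{\bar\x}$ agree at every $\omega\in\Omega$, i.e.\ coincide; as $\x_0,\x_1,\x_2$ are pairwise distinct as functions, $P_{\bar\x}$ is a singleton, so $\bar\X\subseteq\X$, and the covering requirement on $\bar\X$ (the images of $\x_0,\x_1,\x_2$ are pairwise disjoint and jointly exhaust $X$) forces $\bar\X=\X$. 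Your phrase ``violating the section/consistency requirements'' gestures at this, but the decisive fact is simply that all domains equal $\Omega$ — no order-theoretic machinery is needed. Replace the appeal to Theorem~\ref{thm:Xrm_is_dec_tree} with this one-line domain argument and the proof is complete.
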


The corresponding decision tree $(\Tr,\ge_\Tr)$ is illustrated in Figure~\ref{fig:simple_sdf_Tr}.
\begin{figure}
    \centering
    \begin{tikzpicture}[node distance={20mm}, thick, main/.style = {draw, circle}] 
    \node[main] (1) {$\x_0$}; 
    \node[main] (2) [above left=0.5cm and 2.3cm of 1] {$\x_1$}; 
    \node[main] (3) [above right=0.5cm and 2.3cm of 1] {$\x_2$};
    \node[] (9) [above=0.5cm of 2] {$\{w_{212}\}_{\{\omega_2\}}$};
    \node[] (8) [left=0cm of 9] {$\{w_{112}\}_{\{\omega_1\}}$};
    \node[] (7) [left=0cm of 8] {$\{w_{111}\}_{\{\omega_1\}}$};
    \node[] (10) [right=0cm of 9] {$\{w_{211}\}_{\{\omega_2\}}$};
    \node[] (12) [above=0.5cm of 3] {$\{w_{122}\}_{\{\omega_1\}}$};
    \node[] (13) [right=0cm of 12] {$\{w_{221}\}_{\{\omega_2\}}$};
    \node[] (11) [left=0cm of 12] {$\{w_{121}\}_{\{\omega_1\}}$};
    \node[] (14) [right=0cm of 13] {$\{w_{222}\}_{\{\omega_2\}}$};
    \draw[->] (1) -- (2); 
    \draw[->] (1) -- (3); 
    \draw[->] (2) -- (7); 
    \draw[->] (2) -- (8); 
    \draw[->] (2) -- (9); 
    \draw[->] (2) -- (10); 
    \draw[->] (3) -- (11); 
    \draw[->] (3) -- (12); 
    \draw[->] (3) -- (13); 
    \draw[->] (3) -- (14); 
    \end{tikzpicture} 
    \caption{The decision tree $(\Tr,\ge_\Tr)$ for the simple stochastic decision forest, with $w_{\ell km} = (\omega_\ell,k,m)$, for $(\ell,k,m)\in \{1,2\}^3$. (Random) moves are indicated by circles. Elements of $\Tr \setminus \X$, of the form $\{(\omega,\{w\})\}$ and seen as maps $\omega\mapsto \{w\}$, are denoted by $\{w\}_{\{\omega\}}$.}
    \label{fig:simple_sdf_Tr}
\end{figure}
\smallskip

As a variant, identifying the elements $(\omega_1,2,1)$ and $(\omega_1,2,2)$ in $W$ is seen below to provide a stochastic decision forest with a random move that is not defined on all of $\Omega$, as illustrated in Figure~\ref{fig:simple_sdf_variant}. 
\begin{figure}
    \centering
    \begin{tikzpicture}[node distance={20mm}, thick, main/.style = {draw, circle}] 
    \node[main] (1) {$\x'_0(\omega_1)$}; 
    \node[main] (2) [above left=0.6cm and 1cm of 1] {$\x'_1(\omega_1)$}; 
    \node[] (3) [above=2.1cm of 1] {$\{w'_{12}\}$}; 
    \node[main] (5) [right of=1] {$\x'_0(\omega_2)$};
    \node[main] (4) [above=0.2cm of 5] {$\x'_1(\omega_2)$}; 
    \node[main] (6) [above right=0.6cm and 2cm of 5] {$\x'_2(\omega_2)$};
    \node[] (8) [above=0.5cm of 2] {$\{w'_{112}\}$};
    \node[] (7) [left=0.2cm of 8] {$\{w'_{111}\}$};
    \node[] (11) [above=0.5cm of 4] {$\{w'_{211}\}$};
    \node[] (12) [right=0.2cm of 11] {$\{w'_{212}\}$};
    \node[] (13) [above=0.5cm of 6] {$\{w'_{221}\}$};
    \node[] (14) [right=0.2cm of 13] {$\{w'_{222}\}$};
    \draw[->] (1) -- (2); 
    \draw[->] (1) -- (3); 
    \draw[->] (5) -- (4); 
    \draw[->] (5) -- (6);  
    \draw[->] (2) -- (7); 
    \draw[->] (2) -- (8); 
    \draw[->] (4) -- (11); 
    \draw[->] (4) -- (12); 
    \draw[->] (6) -- (13); 
    \draw[->] (6) -- (14); 
    \end{tikzpicture} 
    \caption{A variant of the simple stochastic decision forest in Figure~\ref{fig:simple_sdf} represented as a directed graph, with $w'_{\ell km} = (\omega_\ell,k,m)$, for all triples $(\ell,k,m)\in\{1,2\}^3$ with $(\omega_\ell,k,m)\in W'$, and $w'_{12} = (\omega_1,2)$. Moves are indicated by circles.}
    \label{fig:simple_sdf_variant}
\end{figure}
Put differently, let $W' = W\setminus\{(\omega_1,2,1),(\omega_1,2,2)\} \cup \{(\omega_1,2)\}$ and let $\x'_0 = \x_0$, $\x'_1 = \x_1$, and $\x'_2\colon \rho^{-1}(\{2\}) \to \mc P(W')$ be given by $\x'_2(\omega) = \{(\omega,2)\}\times \{1,2\}$. Let $\X' = \{\x'_0,\x'_1,\x'_2\}$. Let $D_{\x'_0}=\Omega$, $D_{\x'_1} = \Omega$, $D_{\x'_2}=\{\omega_2\}$ and $F'=\{\x'(\omega) \mid \x'\in\X',~\omega\in D_{\x'}\} \cup \{\{w'\}\mid w'\in W'\}$. Let $\pi'\colon F'\to\Omega$ be the map sending any node to the first entry of an arbitrary choice among its elements.

\begin{lemma}\label{lemma:simple_sdf2}
    The tuple $(F',\pi',\X')$ defines an order consistent, surely non-trivial, and maximal stochastic decision forest.
\end{lemma}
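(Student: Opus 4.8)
The plan is to verify, in turn, the three axioms of Definition~\ref{def:sdf} and then order consistency, sure non-triviality, and maximality from Definition~\ref{def:sdf.addon}, exploiting throughout that $(F',\pi',\X')$ differs from the forest of Lemma~\ref{lemma:simple_sdf1} only in the component over $\omega_1$, where the former $2$-branch has collapsed to the terminal node $\{(\omega_1,2)\}$.

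First I would show that $F'$ is a decision forest on $W'$ by means of Theorem~\ref{thm:decision_forest=forest_of_decision_trees}. The connected components of $(F',\supseteq)$ are the two fibres $T'_{\omega_1},T'_{\omega_2}$ of $\pi'$, with roots $W'_{\omega_1}=\x'_0(\omega_1)$ and $W'_{\omega_2}=\x'_0(\omega_2)$; one checks that these roots are disjoint and exhaust $W'$, so they partition it, and that each $(T'_\omega,\supseteq)$ is a decision tree on its root. The component over $\omega_2$ is the full depth-two binary tree of Lemma~\ref{lemma:simple_sdf1}, while the component over $\omega_1$ is the same tree with its $2$-branch pruned to the single leaf $\{(\omega_1,2)\}$; both are finite and separate points by maximal chains, hence are decision trees. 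Surjectivity of $\pi'$ and the identification of its fibres with $\mc F$ (Axiom~\ref{def:sdf.conn_comp}) are immediate from the first-coordinate description of $\pi'$. Finally $\x'_0,\x'_1,\x'_2$ are sections of moves over the non-empty events $\Omega,\Omega,\{\omega_2\}\in\ms E$ (Axiom~\ref{def:sdf.X.section}), and since the moves of $F'$ are exactly $\x'_0(\omega_1),\x'_0(\omega_2),\x'_1(\omega_1),\x'_1(\omega_2),\x'_2(\omega_2)$, the family $\X'$ covers $X(F')$ (Axiom~\ref{def:sdf.X.cov}).

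Order consistency reduces to a finite inspection of the three pairs in $\X'$: the only nestings are $\x'_0\ge_{\X'}\x'_1$ and $\x'_0\ge_{\X'}\x'_2$, a root containing each branch, and here the decisive point specific to this variant is that $D_{\x'_0}=\Omega\supseteq\{\omega_2\}=D_{\x'_2}$, so the domain condition in Definition~\ref{def:sdf.addon} is met despite $\x'_2$ being only partially defined, while $\x'_1$ and $\x'_2$ are never nested at their common point $\omega_2$. Sure non-triviality follows from Lemma~\ref{lemma:sdf.X.roots}.\ref{lemma:sdf.X.roots.surely_non-trivial}, since both roots are moves and hence $D=\Omega$.

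The main obstacle is maximality, the sole universal statement. I would prove it directly. Let $(F',\pi',\bar\X)$ be an order consistent stochastic decision forest refined by $\X'$, so that every $\bar\x\in\bar\X$ equals $\bigcup P_{\bar\x}$ for some $P_{\bar\x}\subseteq\X'$. The crucial observation is that the domains of all three elements of $\X'$ contain $\omega_2$, at which $\x'_0(\omega_2),\x'_1(\omega_2),\x'_2(\omega_2)$ are pairwise distinct; hence the union of any two distinct elements of $\X'$ fails to be single-valued at $\omega_2$ and so is not a map. Therefore each $P_{\bar\x}$ is a singleton, giving $\bar\X\subseteq\X'$, and the covering requirement (Axiom~\ref{def:sdf.X.cov}) then forces $\bar\X=\X'$, because each move of $F'$ lies in the image of exactly one random move. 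The point worth stressing is that the collapse of the $\omega_1$-branch, which is precisely what makes $\x'_2$ only partially defined, leaves this gluing obstruction at $\omega_2$ intact, so maximality survives the modification.
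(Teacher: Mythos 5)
Your proposal is correct. The verification of Axioms~\ref{def:sdf.df}--\ref{def:sdf.X.cov}, of order consistency, and of sure non-triviality matches the paper's proof (which simply defers to the proof of Lemma~\ref{lemma:simple_sdf1} for the first two axioms and cites \cite[Theorem~3]{AlosFerrer2005} there, whereas you route the decision-forest property through Theorem~\ref{thm:decision_forest=forest_of_decision_trees}; both work, and yours stays internal to the paper at the cost of having to check the representation-by-decision-paths property for each of the two finite components, which you state somewhat loosely -- ``separate points by maximal chains'' is the poset notion of decision tree, not yet the bijection of Definition~\ref{def:decision_forest}.\ref{def:decision_forest:repr_by_dec_paths}, though for these explicit finite trees the verification is immediate). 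The genuinely different step is maximality: the paper argues that any $\bar\x'$ taking values in $\im\x'_0\cup\im\x'_1$ must equal $\x'_0$ or $\x'_1$ because these have full domain $\Omega$, and then that $\x'_2$ must appear as the one remaining element; you instead observe that $\omega_2$ lies in all three domains with pairwise distinct values there, so no union of two or more elements of $\X'$ is even a function, forcing every $P_{\bar\x}$ to be a singleton, and then invoke the covering axiom together with the pairwise disjointness of the three images to force $\bar\X=\X'$. Your localization at the single common point $\omega_2$ is arguably cleaner and makes explicit why the partial definedness of $\x'_2$ does not break maximality, which the paper leaves implicit.
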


The corresponding decision tree $(\Tr',\ge_{\Tr'})$ is illustrated in Figure~\ref{fig:simple_sdf_variant_Tr}.
\begin{figure}
    \centering
    \begin{tikzpicture}[node distance={20mm}, thick, main/.style = {draw, circle}] 
    \node[main] (1) {$\x'_0$}; 
    \node[main] (2) [above left=0.6cm and 1.8cm of 1] {$\x'_1$}; 
    \node[] (3) [above right=1.8cm and 0.5cm of 1] {$\{w'_{12}\}_{\{\omega_1\}}$};  
    \node[main] (6) [above right=0.6cm and 1.8cm of 1] {$\x'_2$};
    \node[] (8) [above left=0.6cm and 0.5cm of 2] {$\{w'_{112}\}_{\{\omega_1\}}$};
    \node[] (7) [left=0cm of 8] {$\{w'_{111}\}_{\{\omega_1\}}$};
    \node[] (11) [right=0cm of 8] {$\{w'_{211}\}_{\{\omega_2\}}$};
    \node[] (12) [right=0cm of 11] {$\{w'_{212}\}_{\{\omega_2\}}$};
    \node[] (13) [above right=0.6cm and 0.5cm of 6] {$\{w'_{221}\}_{\{\omega_2\}}$};
    \node[] (14) [right=0cm of 13] {$\{w'_{222}\}_{\{\omega_2\}}$};
    \draw[->] (1) -- (2); 
    \draw[->] (1) -- (3); 
    \draw[->] (1) -- (6);  
    \draw[->] (2) -- (7); 
    \draw[->] (2) -- (8); 
    \draw[->] (2) -- (11); 
    \draw[->] (2) -- (12); 
    \draw[->] (6) -- (13); 
    \draw[->] (6) -- (14); 
    \end{tikzpicture} 
    \caption{The decision tree $(\Tr',\ge_{\Tr'})$ for the variant of the simple stochastic decision forest, with $w'_{\ell km} = (\omega_\ell,k,m)$, for all triples $(\ell,k,m)\in \{1,2\}^3$ with $(\omega_\ell,k,m) \in W'$, and $w'_{12} = (\omega_1,2)$. (Random) moves are indicated by circles. Elements of $\Tr' \setminus \X'$, of the form $\{(\omega,\{w'\})\}$ and seen as maps $\omega\mapsto \{w'\}$, are denoted by $\{w'\}_{\{\omega\}}$.}
    \label{fig:simple_sdf_variant_Tr}
\end{figure}
\smallskip

The third example is an \textsc{sdf} representation of Gilboa's model of the absent-minded driver phenomenon in \cite{Gilboa1997Comment}. For this, let $(\Omega,\ms E)$ be some exogenous scenario space and $\rho\colon \Omega \to \{1,2\}$ an $\ms E$-$\mc P\{1,2\}$-measurable surjection. Further, let $D,H,M$ be three pairwise distinct symbols meaning ``disastrous region'', ``home'', and ``motel'' as in the original story from \cite{Piccione1997Interpretation}. Let $W = \Omega \times \{D,H,M\}$ and $\x_1,\x_2$ be maps defined on the whole of $\Omega$ given by
\[ \x_k(\omega) = \begin{cases} \{\omega\} \times \{D,H,M\}, &\text{ if } \rho(\omega) = k, \\ \{\omega\} \times \{H,M\}, & \text{ if } \rho(\omega) = 3-k, \end{cases} \qquad k\in\{1,2\}. \]
$F$ is defined as the union of the images of $\x_1$ and $\x_2$ and the set of all singleton sets in $W$. $\pi\colon F \to \Omega$ maps any element of $F$ to the first component of its elements. $(F,\supseteq)$ and $(\Tr,\ge_\Tr)$ are illustrated in Figure \ref{fig:absent_minded_driver_Gilboa_sdf}. Note that $(\Tr,\ge_\Tr)$ is not even a forest, both $\x_1$ and $\x_2$ are maximal elements, and both $\im\x_1$ and $\im\x_2$ contain both a root and a move that is not a root, respectively.

\begin{figure}
    \centering
    \begin{tikzpicture}[node distance={20mm}, thick, main/.style = {draw, circle}] 
    \node[main] (11) {$\x_1(\omega_1)$}; 
    \node[] (1D) [left=0.5 of 11] {$\{w_{1D}\}$};
    \node[main] (12) [above=0.5cm of 11] {$\x_2(\omega_1)$}; 
    \node[] (1H) [left=0.5 of 12] {$\{w_{1H}\}$};
    \node[] (1M) [above=0.5cm of 12] {$\{w_{1M}\}$}; 
    \node[main] (22) [right=0.5cm of 11] {$\x_2(\omega_2)$}; 
    \node[] (2D) [right=0.5 of 22] {$\{w_{1D}\}$};
    \node[main] (21) [above=0.5cm of 22] {$\x_1(\omega_2)$}; 
    \node[] (2H) [right=0.5 of 21] {$\{w_{1H}\}$};
    \node[] (2M) [above=0.5cm of 21] {$\{w_{1M}\}$}; 
    \draw[->] (11) -- (12); 
    \draw[->] (11) -- (1D); 
    \draw[->] (12) -- (1H); 
    \draw[->] (12) -- (1M);  
    \draw[->] (22) -- (21); 
    \draw[->] (22) -- (2D); 
    \draw[->] (21) -- (2H); 
    \draw[->] (21) -- (2M);  
    \end{tikzpicture} 
    \qquad
    \begin{tikzpicture}[node distance={20mm}, thick, main/.style = {draw, circle}] 
        \node[] (0) {};
        \node[main] (1) [above=0.5 of 0] {$\x_1$};
        \node[] (1D) [left=0.5cm of 1] {$\{w_{1D}\}_{\{\omega_1\}}$};
        \node[] (1H) [above left=0.5cm and 0.5cm of 1] {$\{w_{1H}\}_{\{\omega_1\}}$};
        \node[] (1M) [above=0.5cm of 1] {$\{w_{1M}\}_{\{\omega_1\}}$};  
        \node[main] (2) [right=1 of 1] {$\x_2$};
        \node[] (2D) [right=0.5cm of 2] {$\{w_{2D}\}_{\{\omega_2\}}$};
        \node[] (2H) [above right=0.5cm and 0.5cm of 2] {$\{w_{2H}\}_{\{\omega_2\}}$};
        \node[] (2M) [above=0.5cm of 2] {$\{w_{2M}\}_{\{\omega_2\}}$}; 
        \draw[->] (1) -- (1D);
        \draw[->] (1) -- (1H);
        \draw[->] (1) -- (1M);
        \draw[->] (1) -- (2M);
        \draw[->] (1) -- (2H);
        \draw[->] (2) -- (2M);
        \draw[->] (2) -- (1H);
        \draw[->] (2) -- (1M);
        \draw[->] (2) -- (2D);
        \draw[->] (2) -- (2H);
    \end{tikzpicture}
    \caption{The absent-minded driver \textsc{sdf}, following Gilboa: $(F,\supseteq)$ represented as a directed graph, in case $\rho$ is injective, with $\omega_\ell = \rho^{-1}(\ell)$, $w_{\ell S} = (\omega_\ell,S)$, for all $\ell\in \{1,2\}$ and symbols $S$ such that $(\omega,S)\in W$ (left), $(\Tr,\ge_\Tr)$ where elements of $\Tr \setminus \X$, of the form $\{(\omega,\{w\})\}$ and seen as maps $\omega\mapsto \{w\}$, are denoted by $\{w\}_{\{\omega\}}$ (right). Non-minimal elements are indicated by circles, respectively.}
    \label{fig:absent_minded_driver_Gilboa_sdf}
\end{figure}

\begin{lemma}\label{lemma:absent_minded_driver_Gilboa_sdf}
    The tuple $(F,\pi,\X)$ just defined is a not order consistent stochastic decision forest.
\end{lemma}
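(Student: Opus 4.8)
The statement bundles two claims: that $(F,\pi,\X)$ satisfies Definition~\ref{def:sdf}, and that it violates the order-consistency condition of Definition~\ref{def:sdf.addon}. The plan is to treat the first as a routine structural verification, streamlined by Theorem~\ref{thm:decision_forest=forest_of_decision_trees}, and to settle the second by exhibiting one explicit offending pair of random moves. Throughout I fix $\omega$ and identify a node with the set of its second coordinates in $\{D,H,M\}$, so that each fibre $T_\omega = \pi^{-1}(\{\omega\})$ has root $W_\omega = \{\omega\}\times\{D,H,M\} = \x_{\rho(\omega)}(\omega)$, one intermediate move $\{\omega\}\times\{H,M\} = \x_{3-\rho(\omega)}(\omega)$, and three terminal singletons.

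For Axiom~\ref{def:sdf.df} I would first check that $(F,\supseteq)$ is a rooted forest: every $x\in F$ is a subset of a single $\{\omega\}\times\{D,H,M\}$, so $\uparrow x$ stays inside $T_\omega$, is a finite $\supseteq$-chain, and is topped by the maximal element $W_\omega$. Distinct fibres are $\supseteq$-incomparable (their members have different first coordinates), so the $T_\omega$ are exactly the connected components; this makes $\pi$ a well-defined surjection with $\pi^{-1}(\{\omega\}) = T_\omega$, giving Axiom~\ref{def:sdf.conn_comp}. Since the roots $W_\omega$ are pairwise disjoint with union $W$, condition~\ref{thm:decision_forest=forest_of_decision_trees.partition} of Theorem~\ref{thm:decision_forest=forest_of_decision_trees} holds, and condition~\ref{thm:decision_forest=forest_of_decision_trees.T_decision_tree} reduces to the finite check that each $T_\omega$ is a decision tree on $W_\omega$: its three maximal chains $\{W_\omega,\{(\omega,D)\}\}$, $\{W_\omega,\{\omega\}\times\{H,M\},\{(\omega,H)\}\}$ and the analogue for $M$ stand in the $\uparrow$-compatible bijection with the three outcomes of $W_\omega$ demanded by Definition~\ref{def:decision_forest}. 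The theorem then yields that $F$ is a decision forest on $W$. Finally I set $\X = \{\x_1,\x_2\}$; each $\x_k$ has domain $\Omega\in\ms E\setminus\{\emptyset\}$, satisfies $\pi\circ\x_k = \id_\Omega$ and takes values among the moves, and since $\{\x_1(\omega),\x_2(\omega)\}$ is precisely the two-element set of moves of $T_\omega$ for every $\omega$, Axiom~\ref{def:sdf.X.cov} holds. Hence $(F,\pi,\X)$ is a stochastic decision forest.

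To defeat order consistency I would invoke surjectivity of $\rho$ to pick $\omega_1\in\rho^{-1}(\{1\})$ and $\omega_2\in\rho^{-1}(\{2\})$ and test the pair $\x_1,\x_2$, whose common domain is all of $\Omega$. At $\omega_1$ we have $\x_1(\omega_1) = \{\omega_1\}\times\{D,H,M\}\supseteq\{\omega_1\}\times\{H,M\} = \x_2(\omega_1)$, so the antecedent of the implication in Definition~\ref{def:sdf.addon} is met for $(\x_1,\x_2)$. At $\omega_2$, however, $\x_1(\omega_2) = \{\omega_2\}\times\{H,M\}$ while $\x_2(\omega_2) = \{\omega_2\}\times\{D,H,M\}$, and $D\notin\{H,M\}$ gives $\x_1(\omega_2)\not\supseteq\x_2(\omega_2)$, so $\x_1\ge_\X\x_2$ fails. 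The implication is thereby violated by $(\x_1,\x_2)$, so $\X$ --- and with it the stochastic decision forest --- is not order consistent.

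I expect no serious obstacle: the delicate points are purely definitional. One is the representation-by-decision-paths bookkeeping for $T_\omega$, which the finiteness of the three-outcome tree and Theorem~\ref{thm:decision_forest=forest_of_decision_trees} render a short verification; the other is reading the order-consistency implication in the correct direction, so that a single scenario witnessing the antecedent together with a single scenario breaking $\supseteq$ suffices. The conceptual content the example is designed to display is exactly this: the one random move $\x_1$ is the root on $\rho^{-1}(\{1\})$ but a strict descendant on $\rho^{-1}(\{2\})$, so it crosses $\x_2$ as $\omega$ varies --- the crossing that, per the remark following Definition~\ref{def:sdf.addon}, order consistency rules out --- and surjectivity of $\rho$ is precisely what guarantees both sides of the crossing occur.
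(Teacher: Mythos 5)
Your proposal is correct and follows essentially the same route as the paper: the paper dismisses the verification that $(F,\pi,\X)$ is an \textsc{sdf} with ``easily shown, similarly as above'' (you simply fill in that routine check via Theorem~\ref{thm:decision_forest=forest_of_decision_trees}), and its refutation of order consistency is exactly your argument, picking $\omega_k\in\rho^{-1}(\{k\})$ and observing that $\x_1(\omega_1)\supsetneq\x_2(\omega_1)$ while $\x_1(\omega_2)\not\supseteq\x_2(\omega_2)$.
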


\subsection{Action path stochastic decision forests}\label{subs:APsdf}

In most pieces of the literature, dynamic games are defined by supposing a notion of time and specifying outcomes as certain paths of action at instants of time. \cite[Subsection~2.2]{AlosFerrer2005} provides a broad overview for this, including classical textbook definitions as in \cite{Fudenberg1991}, infinite bilateral bargaining in discrete time as in \cite{Rubinstein1982Perfect}, repeated games, the long cheap talk game in \cite{Aumann2003}, and a decision-theoretic interpretation of differential games as in \cite{Dockner2000}. We desire to add to this list stochastic control in both discrete and continuous time (see, e.g.\ \cite{Pham2009Continuous,Bertsekas1996Stochastic,Karatzas1998Methods}) and stochastic differential games (see, e.g.\ \cite{Carmona2018}) without restrictions on the noise in question, and the first concrete step in this direction is taken in what follows next.\footnote{As described in the introduction, this is a guiding theme of the whole series and its investigation culminates only in the third and final paper.}

In this subsection, this approach is formulated in one abstract and general framework. This framework is based on a specific structure pertaining to all of these examples, namely \emph{time}. Interestingly, time is not included in the abstract formulation of decision forests, and it serves as a particularly strong similarity structure for trees and even branches of one and the same tree. Of course, every decision path is totally ordered and therefore has its own time axis -- but in the following construction, the ``clocks'' of all decision paths are be synchronised.\footnote{We do not address the question whether all ``relevant'' order consistent and maximal stochastic decision forests can be represented in this framework -- that is, admit a ``clock'' synchronising all their decision paths -- though it does not seem obvious to conceive a counterexample.} The second major point of the following framework is that it will allow for general exogenous stochastic noise, going strictly beyond the ``nature'' agent setting.\smallskip

Let $(\T,\le)$ be a total order admitting a minimum which we denote by $0$. Further, let $I$ be a non-empty finite set (which will represent decision makers alias agents, $(\A^i)_{i\in I}$ be a family of non-empty metric spaces, and $\A = \prod_{i\in I}\A^i$ be their topological product, with canonical projections $p^i\colon\A \to \A^i$, $i\in I$. Of course, the case of singleton $I$ is included in this setting.

Let $(\Omega,\ms E)$ be an exogenous scenario space. Let $W \subseteq \Omega\times\A^\T$ be such that for all $\omega\in\Omega$, there is $f\in\A^\T$ with $(\omega,f)\in W$. An outcome will thus be a pair of an exogenous scenario and a path $f\colon \T \to \A$ of ``action'', and any scenario is required to admit at least one outcome.
For any $t\in\T$ and $\tilde w = (\omega,f)\in\Omega\times\A^\T$, let \[x_t(\tilde w) = x_t(\omega,f) = \{ (\omega',f')\in W \mid \omega' = \omega, \, f'|_{[0,t)_\T} = f|_{[0,t)_\T}\}.\] 
Let $F=\{x_t(w) \mid t\in\T, \,w\in W\} \cup \{\{w\} \mid w\in W\}$. Further, let $\pi\colon F\to \Omega$ be the unique function mapping any $x\in F$ to the first item of one choice of its elements. 

For $(t,f)\in\T\times\A^\T$, let $D_{t,f} = \{\omega \in\Omega\mid |x_t(\omega,f)| \ge 2\}$. This will turn out as the event that $x_t(.,f)$ is a move.
We consider the following assumptions:
\begin{itemize}[label=--]
    \item\hypertarget{Ass:AP.SDF0}{\textbf{Assumption~AP.SDF0.}}~For all $t\in\T$ and $f\in\A^\T$, $D_{t,f}\in\ms E$. 
    \item\hypertarget{Ass:AP.SDF1}{\textbf{Assumption~AP.SDF1.}}~For all $w\in W$ and all $t,u\in\T$ with $t\neq u$ and $x_t(w) = x_u(w)$, we have $x_t(w) = \{w\}$.
    \item\hypertarget{Ass:AP.SDF2}{\textbf{Assumption~AP.SDF2.}}~For all $\omega\in\Omega$ and $\tilde f\in\A^\T$, and for all subsets $\T'\subseteq \T$ satisfying $x_t(\omega,\tilde f) \in F$ for all $t\in \T'$, there is $f\in\A^\T$ with $(\omega,f)\in W$ and $f|_{[0,t)_\T} = \tilde f|_{[0,t)_\T}$ for all $t\in \T'$.
    \item\hypertarget{Ass:AP.SDF3}{\textbf{Assumption~AP.SDF3.}}~For all $t\in\T$ and $f,g\in\A^\T$ such that $D_{t,f},D_{t,g}\neq\emptyset$ and $D_{t,f} \cap D_{t,g} = \emptyset$, there is $u\in [0,t)_\T$ such that $D_{u,f} \cap D_{u,g} \neq\emptyset$ and $f|_{[0,u)_\T} \neq g|_{[0,u)_\T}$.
\end{itemize}
Assumption~\hyperlink{Ass:AP.SDF0}{AP.SDF0} requires that $D_{t,f}$ is indeed an event in $(\Omega,\ms E)$. Assumption~\hyperlink{Ass:AP.SDF1}{AP.SDF1} stipulates that any (presumptive) move has a unique time associated to it. Indeed, we have the following result. For any $x\in F$, let 
\[\T_x = \{t\in\T \mid \exists w\in x\colon x = x_t(w)\}. \]
It is evident that $\T_x$ is a non-empty convex set. We recall that convexity means that for all $t_0,t_1\in\T_x$, we have $[t_0,t_1]_\T \subseteq \T_x$. 

\begin{lemma}\label{lemma:AP_sdf.AssmAP.SDF1}
    Assumption~\hyperlink{Ass:AP.SDF1}{AP.SDF1} is satisfied iff for all $x\in F$ that are not singletons, $\T_x$ is a singleton.
\end{lemma}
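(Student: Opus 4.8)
The plan is to exploit the fact that each $x_t(w)$ is an equivalence class, which turns $\T_x$ into a ``level set'' of the map $t\mapsto x_t(w_0)$ and makes both implications essentially immediate. No discreteness, monotonicity, or topology on $\T$ is required.

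First I would record a coherence property. Fix $t\in\T$ and declare $w\sim_t w'$, for $w=(\omega,f)$ and $w'=(\omega',f')$ in $W$, iff $\omega=\omega'$ and $f|_{[0,t)_\T}=f'|_{[0,t)_\T}$; this is an equivalence relation on $W$, and directly from the definition $x_t(w)$ is exactly the $\sim_t$-class of $w$. Consequently, if $w'\in x_t(w)$ then $x_t(w')=x_t(w)$, since two classes either coincide or are disjoint. From this I obtain, for any $x\in F$ and any fixed $w_0\in x$, the level-set description
\[ \T_x = \{\,t\in\T \mid x_t(w_0)=x\,\}. \]
Indeed, if $t\in\T_x$ then $x=x_t(w_1)$ for some $w_1\in x$, and since $w_0\in x=x_t(w_1)$ coherence gives $x_t(w_0)=x_t(w_1)=x$; the reverse inclusion is trivial because $w_0\in x$. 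I would also note that any non-singleton $x\in F$ has $\T_x\neq\emptyset$: such an $x$ cannot be one of the sets $\{w\}$, so it must be of the form $x_t(w)$, and then $w\in x_t(w)=x$ witnesses $t\in\T_x$.

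For the forward implication, assume \hyperlink{Ass:AP.SDF1}{AP.SDF1} and let $x\in F$ be a non-singleton; then $\T_x\neq\emptyset$, and it remains to show $|\T_x|\le 1$. Fix $w_0\in x$ and suppose $t,u\in\T_x$ with $t\neq u$. By the level-set description $x_t(w_0)=x=x_u(w_0)$, so \hyperlink{Ass:AP.SDF1}{AP.SDF1} applied to $w_0$ forces $x=x_t(w_0)=\{w_0\}$, contradicting that $x$ is not a singleton; hence $\T_x$ is a singleton. For the converse, assume every non-singleton $x\in F$ has singleton $\T_x$, and take $w\in W$ and $t\neq u$ with $x_t(w)=x_u(w)=:x$. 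Since $w\in x$ witnesses both $t\in\T_x$ and $u\in\T_x$, we get $|\T_x|\ge 2$; by the contrapositive of the hypothesis $x$ must be a singleton, and as $w\in x$ this yields $x_t(w)=x=\{w\}$, which is precisely \hyperlink{Ass:AP.SDF1}{AP.SDF1}.

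The only genuine subtlety -- the ``main obstacle'', though a mild one -- is the coherence step: one must check that membership $t\in\T_x$ depends only on $x$ and not on the witness $w\in x$, i.e.\ that $x=x_t(w)$ for \emph{some} $w\in x$ is equivalent to $x=x_t(w')$ for \emph{every} $w'\in x$. Once this is in place, everything reduces to bookkeeping; in particular the convexity of $\T_x$ asserted in the surrounding text plays no role in this argument.
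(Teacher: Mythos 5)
Your proof is correct and follows essentially the same route as the paper's: the ``coherence'' observation you isolate (that $x=x_t(w')$ for some $w'\in x$ forces $x=x_t(w)$ for every $w\in x$) is exactly the step the paper performs inline via ``$w\in x_{t'}(w')$ and thus $x=x_{t'}(w)$'', and both directions then reduce to the same application of Assumption~AP.SDF1 and its contrapositive. Making the equivalence-class/level-set structure explicit is a nice touch, but it is the same argument.
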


Note that the following weak converse statement holds true independently of whether \hyperlink{Ass:AP.SDF1}{AP.SDF1} is assumed or not:

\begin{lemma}\label{lemma:AP_sdf.AssmAP.SDF1_addon}
    All $x\in F$ such that $\T_x = \{t\}$ for some $t\in\T$ that is not maximal in $\T$, are no singletons.
\end{lemma}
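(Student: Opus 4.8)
The plan is to argue by contradiction. Assuming that $x$ is a singleton while $\T_x = \{t\}$ for some $t$ that is not maximal in $\T$, I would produce a second time $u \neq t$ with $x = x_u(w)$ for some $w \in x$, i.e.\ $u \in \T_x$, contradicting that $\T_x$ equals the singleton $\{t\}$. Concretely, since $t \in \T_x$, there is an outcome $w = (\omega,f) \in x$ with $x = x_t(w)$; and since $t$ is not maximal, I may fix some $u \in \T$ with $u > t$.

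The key observation is a monotonicity of the maps $x_s$ in the index $s$: because $u > t$ yields $[0,t)_\T \subseteq [0,u)_\T$, the agreement condition defining $x_u(w)$ is stronger than the one defining $x_t(w)$, whence $x_u(w) \subseteq x_t(w) = x$. At the same time $w \in x_u(w)$, since $w = (\omega,f) \in W$ trivially agrees with $f$ on $[0,u)_\T$ and has first component $\omega$. Now if $x$ were the singleton $\{w\}$, these two facts together force $x_u(w) = \{w\} = x$. As $w \in x$, this means $u \in \T_x$ by the very definition of $\T_x$; but $u > t$ gives $u \neq t$, contradicting $\T_x = \{t\}$. Hence $x$ cannot be a singleton.

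I expect essentially no obstacle here: the argument uses only the definitions of $x_t$ and of $\T_x$ together with the non-maximality of $t$, and in particular none of the Assumptions \hyperlink{Ass:AP.SDF0}{AP.SDF0}--\hyperlink{Ass:AP.SDF3}{AP.SDF3}, consistent with the preceding remark that the statement holds independently of whether \hyperlink{Ass:AP.SDF1}{AP.SDF1} is assumed. The only points requiring a little care are (i) checking that the outcome $w$ realising $x = x_t(w)$ genuinely lies in $x$, so that it is forced to be the unique element once $x$ is assumed to be a singleton, and (ii) verifying the inclusion $x_u(w) \subseteq x_t(w)$ in the correct direction, namely that the larger time index yields the smaller node.
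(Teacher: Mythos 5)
Your proof is correct and follows essentially the same route as the paper: both rest on the inclusion $x_u(w)\subseteq x_t(w)$ for $u>t$ together with $w\in x_u(w)$, the paper phrasing it directly (concluding $x=x_t(w)\supsetneq x_u(w)\ni w$, so $x$ has at least two elements) where you argue by contradiction via $u\in\T_x$. The two formulations are contrapositives of one another, so there is nothing substantive to distinguish them.
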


We immediately infer:

\begin{corollary}\label{cor:AP_sdf.AssmAP.SDF1}
    If Assumption~\hyperlink{Ass:AP.SDF1}{AP.SDF1} is satisfied and $\T$ has no maximum, then for all $x\in F$ the following statements are equivalent:
    \begin{itemize}[label=--]
        \item $x$ is a singleton.
        \item $\T_x$ is not a singleton.\qed
    \end{itemize}
\end{corollary}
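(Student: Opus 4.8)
The plan is to derive the equivalence directly from the two preceding lemmas, treating each implication by contraposition. First I would record a purely order-theoretic observation: in a \emph{total} order the notions of maximal element and maximum coincide, since if $t\in\T$ were maximal, comparability would force $t\ge u$ for every $u\in\T$, making $t$ a maximum. Consequently, the hypothesis that $\T$ has no maximum is equivalent to the statement that \emph{every} $t\in\T$ is non-maximal. I would also recall from the discussion preceding Lemma~\ref{lemma:AP_sdf.AssmAP.SDF1} that $\T_x$ is always non-empty, so that ``$\T_x$ is not a singleton'' genuinely means $\T_x$ contains at least two distinct points.

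For the implication ``$\T_x$ not a singleton $\Rightarrow$ $x$ is a singleton'', I would argue contrapositively: assuming $x$ is not a singleton, Lemma~\ref{lemma:AP_sdf.AssmAP.SDF1}, applicable because Assumption~\hyperlink{Ass:AP.SDF1}{AP.SDF1} is in force, immediately yields that $\T_x$ is a singleton. For the converse implication ``$x$ a singleton $\Rightarrow$ $\T_x$ not a singleton'', I would again contrapose: supposing $\T_x=\{t\}$ is a singleton, the observation above shows that $t$ is non-maximal in $\T$, so Lemma~\ref{lemma:AP_sdf.AssmAP.SDF1_addon} gives that $x$ is not a singleton. Combining the two contrapositives then establishes the asserted equivalence.

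I expect no genuine obstacle here; the corollary is essentially a packaging of the two lemmas. The only point demanding a moment's care is the reduction of ``no maximum'' to ``every element non-maximal'', which relies crucially on $\T$ being \emph{totally} (not merely partially) ordered, and on the non-emptiness of $\T_x$, so that a non-singleton $\T_x$ really supplies distinct $t\neq t'$ and the invocation of Lemma~\ref{lemma:AP_sdf.AssmAP.SDF1} can take hold.
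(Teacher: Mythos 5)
Your proof is correct and is precisely the argument the paper intends: the corollary is stated with an immediate \qed, to be read as the combination of the contrapositives of Lemma~\ref{lemma:AP_sdf.AssmAP.SDF1} and Lemma~\ref{lemma:AP_sdf.AssmAP.SDF1_addon}, using that in the total order $\T$ the absence of a maximum means every element is non-maximal. Nothing is missing.
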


Assumption~\hyperlink{Ass:AP.SDF2}{AP.SDF2} corresponds to what is generally called ``boundedness'' in \cite[Subsection~3.4]{AlosFerrer2005}. In the present context, this is the \emph{conditio sine qua non} demanding that any (presumptive) decision path, alias maximal chain of nodes, corresponds to a path of action $\T\to\A$. 
Assumption~\hyperlink{Ass:AP.SDF3}{AP.SDF3} refers to the maximality in Axiom~\ref{def:sdf.X.max} and ensures that different paths in different scenarios cannot be identified without violating the order consistency. If the assumption does not hold true, then it will become possible to identify the path $f$ on $D_{t,f}$ with the path $g$ on $D_{t,g}$, as for all $u<t$ until that (exclusively) $f$ and $g$ can be distinguished, $D_{u,f}$ and $D_{u,g}$ continue to be disjoint. 

Further, let $\X$ be the set of maps
\[ \x_t(f) \colon D_{t,f} \to F,\, \omega \mapsto \x_t(f)(\omega) = x_t(\omega,f), \]
ranging over all $t\in\T$, $f\in \A^\T$ with $D_{t,f} \neq \emptyset$.

\begin{definition}
    The tuple $(I,\A,\T,W)$ is called \emph{action path stochastic decision forest (\textsc{sdf}) data} on $(\Omega,\ms E)$ iff Assumptions \hyperlink{Ass:AP.SDF0}{AP.SDF$k$}, $k=0,1,2,3$, are satisfied. 
    $(F,\pi,\X)$ is said to be the \emph{action path stochastic decision forest (\textsc{sdf}) candidate} (on $(\Omega,\ms E)$) \emph{induced} by these action path \textsc{sdf} data. 
    
    If $(F,\pi,\X)$ defines an \textsc{sdf} on $(\Omega,\ms E)$, the word ``candidate'' is dropped. $\A = \prod_{i\in I} \A^i$ is called \emph{action space}, $\T$ \emph{time axis}, $I$ the \emph{action index set} and $\A^i$ the \emph{$i$-th action space factor} of the action path \textsc{sdf} data, and \textit{a fortiori}, of the induced action path \textsc{sdf}.\footnote{Note that $\A$, $\T$, $I$ are part of the data only and, in contrast to $W$, not uniquely determined by the \textsc{sdf}, although ``minimal'' representatives might be inferred.}
\end{definition}

\begin{example}
    For $\T = \R_+$, singleton $\Omega$, $W = \Omega \times \A^\T$, the induced action path \textsc{sdf} candidate $(F,\pi,\X)$, more precisely, $F$ yields the decision tree of what is called ``differential game'' in \cite[Subsection~2.2.5]{AlosFerrer2005}. Hence, the action path \textsc{sdf} is a generalisation of this example in several directions: first and foremost, by adding an exogenous scenario space, second, by generalising the time axis, and third, by allowing for much higher flexibility regarding the set of outcomes.
\end{example}

\begin{example}
    For $\T = \{0,\dots,N\}$, for $N\in\Nast$, or $\T = \N$ any corresponding action path \textsc{sdf} candidate can serve as a basis for describing the classical finite or infinite horizon discrete-time stochastic decision problem (see, e.g.\ \cite{Bertsekas1996Stochastic}). The admissible actions can be specified through the choice of $W$. The case $W=\Omega\times\A^\T$ corresponds to the case where at each move, the set of possible joint action is given by $\A$. It is discussed in Example~\ref{ex:APsdf} that these action path \textsc{sdf} data induce an action path \textsc{sdf}, even for general $\T$.

    Note that in contrast to the traditional ``nature'' model of stochastic games and decision problems in discrete time, exogenous information cannot be deduced from the order-theoretic properties of the (random) moves. The ``nature'' player or agent is replaced with a forest of decision trees and a structure of random moves to that one may attach exogenous information in the form of $\sigma$-algebras as explained in the next section. 

    $\T$ can also equal more general well-orders: For $\T = \N \cup \{+\infty\}$ with canonical order, that is, essentially, the second smallest infinite ordinal, the long cheap talk game tree (see \cite{Aumann2003}) -- and general stochastic variants thereof in the sense of \textsc{sdf}s -- can be obtained. This example, in case of singleton $\Omega$, is treated in \cite{AlosFerrer2005}.
\end{example}

An important result of this paper, expressed in the next theorem, is that the preceding construction -- encompassing a large set of decision problems and games in their elementary decision-theoretic structure, including very general stochastic versions of them -- is well-defined and yields an \textsc{sdf}. This provides the basis for formulating a large class of stochastic decision problems in extensive form -- already in this paper it is shown how exogenous information and choices adapted to it, and in the following papers how decision problems can be modelled on that basis.

\begin{thm}\label{thm:AP_sdf}
    Let $(I,\A,\T,W)$ be action path \textsc{sdf} data on an exogenous scenario space $(\Omega,\ms E)$. Then, the induced action path \textsc{sdf} candidate is an order consistent and maximal stochastic decision forest on $(\Omega,\ms E)$. It surely non-trivial iff for all $\omega\in\Omega$, there are $f,f'\in\A^\T$ with $f\neq f'$ and $(\omega,f),(\omega,f')\in W$.
\end{thm}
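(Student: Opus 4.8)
The plan is to verify the axioms of Definition~\ref{def:sdf} and Definition~\ref{def:sdf.addon} in turn, delegating the heaviest part -- that $F$ is a decision forest on $W$ -- to Theorem~\ref{thm:decision_forest=forest_of_decision_trees}, so that it reduces to a per-scenario statement about a single tree. Throughout, the nesting identity $x_s(\omega,f)\supseteq x_t(\omega,f)$ for $s\le t$ (agreement on $[0,t)_\T$ implies agreement on $[0,s)_\T$) does the bookkeeping, Assumption~AP.SDF0 supplies measurability of domains, AP.SDF1 pins down the times of moves, AP.SDF2 realises decision paths by outcomes, and AP.SDF3 is reserved for maximality.

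First I would establish that $(F,\supseteq)$ is a rooted forest. For $x=x_t(\omega,f)\in F$, any $y\supseteq x$ contains the outcome $(\omega,f)$, hence lies in scenario $\omega$ and, writing $y=x_s(\omega,g)$, forces $g|_{[0,s)_\T}=f|_{[0,s)_\T}$, so $y=x_s(\omega,f)$; as the family $\{x_s(\omega,f)\}_s$ is totally ordered by the nesting identity (and $\{(\omega,f)\}$ sits below all of them), $\uparrow x$ is a chain with maximum $W_\omega=x_0(\omega,\cdot)$. The same observation shows the connected components are exactly the fibres $T_\omega=\pi^{-1}(\{\omega\})$, giving Axiom~\ref{def:sdf.conn_comp}, and that the roots $W_\omega$ partition $W$. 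For Theorem~\ref{thm:decision_forest=forest_of_decision_trees} it then remains to see that each $(T_\omega,\supseteq)$ is a decision tree on $W_\omega$: it is a rooted tree with maximum $W_\omega$, and for the representation by decision paths I would use Proposition~\ref{prop:f(v)=uparrow v}'s candidate $v\mapsto\{x\in T_\omega\mid v\in x\}$. Injectivity is immediate from the terminal singletons; that each such set is a maximal chain is again the nesting identity together with AP.SDF1 (by Lemma~\ref{lemma:AP_sdf.AssmAP.SDF1} a non-singleton node has a unique time, so the order along the chain matches increasing time); surjectivity onto maximal chains is exactly AP.SDF2 -- given a maximal chain, choose compatible representatives along it, glue them to a single $\tilde f\in\A^\T$, and apply boundedness to obtain $(\omega,g)\in W$ realising the whole chain, which, containing a terminal node, coincides with the given chain by maximality. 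This yields Axiom~\ref{def:sdf.df}. Axiom~\ref{def:sdf.X} is then routine: for $\omega\in D_{t,f}$ the node $x_t(\omega,f)$ has at least two elements, hence is non-terminal, i.e.\ a move, and $\pi\circ\x_t(f)=\id_{D_{t,f}}$; the domains lie in $\ms E$ by AP.SDF0 and are non-empty by construction; and every move, being a non-singleton node $x_t(\omega,f)$, is hit, so $\X$ covers $X$.

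Order consistency follows from AP.SDF1 alone. Suppose $x_t(\omega_0,f)\supseteq x_s(\omega_0,g)$ for some $\omega_0\in D_{t,f}\cap D_{s,g}$. Picking an outcome in the smaller node shows $f$ and $g$ agree on $[0,\min(s,t))_\T$; were $t>s$, the nesting identity would give $x_s(\omega_0,g)=x_t(\omega_0,g)$ at two distinct times, forcing a singleton by AP.SDF1 and contradicting $\omega_0\in D_{s,g}$. Hence $t\le s$ and $f|_{[0,t)_\T}=g|_{[0,t)_\T}$, which immediately propagates $x_t(\cdot,f)\supseteq x_s(\cdot,g)$ to every $\omega\in D_{s,g}$ and, via cardinalities, gives $D_{s,g}\subseteq D_{t,f}$; that is $\x_t(f)\ge_\X\x_s(g)$. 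For the surely-non-trivial equivalence I would note that the root $W_\omega=x_0(\omega,\cdot)$ is the set of all outcomes in scenario $\omega$, so it is a move (non-terminal, $|W_\omega|\ge 2$) precisely when there exist $f\neq f'$ with $(\omega,f),(\omega,f')\in W$; requiring this for every root is exactly everywhere non-triviality of $(F,\supseteq)$.

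The main obstacle is maximality, which is where AP.SDF3 enters. Let $(F,\pi,\bar\X)$ be an order consistent \textsc{sdf} refined by $\X$ and suppose, for contradiction, some $\bar\x\in\bar\X\setminus\X$; writing $\bar\x=\bigcup P_{\bar\x}$ with $P_{\bar\x}\subseteq\X$, the two-sided version of the comparability computation above shows distinct pieces of $P_{\bar\x}$ have disjoint domains, so $\bar\x$ genuinely glues sections following incompatible action-path data. Reducing to two such pieces $\x_t(f),\x_t(g)$ with disjoint non-empty domains $D_{t,f},D_{t,g}$ and $f|_{[0,t)_\T}\neq g|_{[0,t)_\T}$, AP.SDF3 furnishes $u\in[0,t)_\T$ and a common scenario $\omega^\ast\in D_{u,f}\cap D_{u,g}$ with $f|_{[0,u)_\T}\neq g|_{[0,u)_\T}$; then $x_u(\omega^\ast,f)$ and $x_u(\omega^\ast,g)$ are distinct, set-theoretically disjoint -- hence $\supseteq$-incomparable -- moves sitting above the time-$t$ pieces. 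The plan is to cover these two moves by elements of $\bar\X$ and, tracing the ancestors of $\bar\x$ through order consistency (each earlier move above a value of $\bar\x$ is taken by some $\bar\y\ge_\X\bar\x$, which therefore reproduces the gluing one level higher), force the two incompatible branches at $\omega^\ast$ into a single $\ge_\X$-comparison, contradicting their incomparability. Keeping precise track of which scenario lies in which domain as one descends in time is the delicate point; it is exactly here that AP.SDF3 is indispensable, since dropping it permits an order consistent gluing of two ``cleanly branching'' sections (as happens when the tree shape varies across scenarios), and maximality genuinely fails.
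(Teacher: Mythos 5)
Your overall architecture is sound and, for most of the verification, runs parallel to the paper's: order consistency from AP.SDF1 via the ``$t\le s$ and agreement on $[0,t)_\T$'' computation, the sure-non-triviality equivalence via $W_\omega = x_0(\omega,\cdot)$, and maximality via AP.SDF3 are all essentially the arguments the paper gives. Where you genuinely diverge is on Axiom~\ref{def:sdf.df}. The paper does \emph{not} route this through Theorem~\ref{thm:decision_forest=forest_of_decision_trees}; instead it verifies the Alós-Ferrer--Ritzberger set-theoretic criteria directly on $F$ (trivial intersection, separability, irreducibility, boundedness of chains, plus maximal elements of principal up-sets) and invokes \cite[Theorem~3]{AlosFerrer2005}. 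Your route -- establish the rooted forest structure, identify the components with the fibres $T_\omega$, and then verify per scenario that $(T_\omega,\supseteq)$ is its own representation by decision paths (injectivity from the terminal singletons, maximality of $\uparrow\{w\}$ from the nesting identity and AP.SDF1, surjectivity from AP.SDF2 by gluing representatives along a maximal chain) -- is correct and arguably more self-contained, since it leans on this paper's own decomposition theorem rather than on the external characterisation; the core AP.SDF2 computation is the same as the paper's ``boundedness'' step. What the paper's route buys is that one never has to argue that a maximal chain contains a singleton: AFR-boundedness only asks for a common outcome in every chain, whereas your surjectivity step needs the additional (easy, but necessary) observation that a maximal chain without a singleton could be extended by $\{(\omega,g)\}$, contradicting maximality.

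One point in the maximality argument needs repair. You reduce to two pieces $\x_t(f),\x_t(g)$ of $P_{\bar\x}$ \emph{at the same time $t$ with disjoint domains} and then invoke AP.SDF3. But if the two offending pieces are $\x_t(f)$ and $\x_u(g)$ with $t<u$, disjointness of $D_{t,f}$ and $D_{u,g}$ does not give disjointness of $D_{t,f}$ and $D_{t,g}$; the paper therefore splits into two cases: if $D_{t,f}\cap D_{t,g}\neq\emptyset$ one already has a common scenario with $f|_{[0,t)_\T}\neq g|_{[0,t)_\T}$ (else $D_{t,f}=D_{t,g}\supseteq D_{u,g}$, contradicting disjointness) and AP.SDF3 is not needed, while only in the genuinely disjoint case does AP.SDF3 supply the earlier time $v$ with $D_{v,f}\cap D_{v,g}\neq\emptyset$ and differing histories. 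Your sketch silently assumes the second case. The omitted case is the easier one and your subsequent argument (incomparable moves $x_v(\omega^\ast,f),x_v(\omega^\ast,g)$ both lying $\ge_{\bar\X}$-above $\bar\x$, contradicting that $(\bar\X,\ge_{\bar\X})$ is a forest via Proposition~\ref{prop:ev_on_Tr_is_iso}) handles it verbatim, so this is a gap in the case analysis rather than in the idea; but as written the reduction step is not valid.
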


As will be discussed in the follow-up article, for well-ordered $\T$, well-posed decision problems can be defined on this action path \textsc{sdf}, while for the general case, this \textsc{sdf} can turn out to be ill-posed. For this latter case, the action path \textsc{sdf} represents an important first step though, in that a modification of it, the action-reaction \textsc{sdf}, a generalisation of a concept in \cite{AlosFerrer2015}, will restore well-posedness, as will be discussed in the third paper of the present series.\smallskip

Given an action path \textsc{sdf} as above, let $\mf t\colon X\to \T$ be the map assigning to any move $x\in X$ the unique element $\mf t(x)$ of $\T_x$, see Lemma \ref{lemma:AP_sdf.AssmAP.SDF1}.

\begin{lemma}\label{lemma:mf_t}
    Let $(F,\pi,\X)$ be the action path \textsc{sdf} induced by action path \textsc{sdf} data $(I,\A,\T,W)$ on an exogenous scenario space $(\Omega,\ms E)$. Then $\mf t$ is strictly decreasing, that is, for all $x,y\in X$ with $x\supsetneq y$ we have $\mf t(x) < \mf t(y)$. Moreover, for all $\x\in\X$ and all $\omega,\omega'\in D_{\x}$, $\mf t(\x(\omega)) = \mf t(\x(\omega'))$.
\end{lemma}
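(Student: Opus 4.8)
The plan is to read off both assertions directly from the cylinder structure of the nodes $x_t(\omega,f)$ together with the uniqueness of the associated time guaranteed by Lemma~\ref{lemma:AP_sdf.AssmAP.SDF1}. The one structural observation that drives everything is monotonicity in $t$: since $x_t(\omega,f)$ constrains the action path on the initial segment $[0,t)_\T$, enlarging $t$ tightens the constraint and hence shrinks the set. Throughout I use that a move $x\in X$ is by definition a non-singleton, so by Lemma~\ref{lemma:AP_sdf.AssmAP.SDF1} (under Assumption~\hyperlink{Ass:AP.SDF1}{AP.SDF1}) the set $\T_x$ is a singleton, and $\mf t(x)$ is its unique element.

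I would settle the second (constancy) claim first, as it is the cleaner of the two. Write $\x = \x_t(f)$ for suitable $t\in\T$ and $f\in\A^\T$ with $D_\x = D_{t,f}\neq\emptyset$. For $\omega\in D_\x$ one has $\x(\omega) = x_t(\omega,f)$, which is a move because $|x_t(\omega,f)|\ge 2$ by definition of $D_{t,f}$. It then suffices to show $t\in\T_{\x(\omega)}$, for then the singleton property forces $\T_{\x(\omega)}=\{t\}$ and hence $\mf t(\x(\omega)) = t$, independently of $\omega$. To exhibit $t\in\T_{\x(\omega)}$, pick any $w = (\omega,g)\in\x(\omega)$ (possible as $\x(\omega)$ is non-empty); membership gives $g|_{[0,t)_\T} = f|_{[0,t)_\T}$, so $x_t(\omega,g) = x_t(\omega,f) = \x(\omega)$ with $w\in\x(\omega)\cap W$, which is exactly the condition $t\in\T_{\x(\omega)}$. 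Applying this to $\omega$ and $\omega'$ yields $\mf t(\x(\omega)) = t = \mf t(\x(\omega'))$.

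For the first (strict monotonicity) claim, take moves $x\supsetneq y$ and set $s = \mf t(x)$, $t = \mf t(y)$, so that $x = x_s(w_0)$ and $y = x_t(w_1)$ for some $w_0\in x$ and $w_1\in y$. The key manoeuvre is to express both nodes through one common representative: choosing $w_1 = (\omega',f')\in y\subseteq x = x_s(w_0)$ forces the first coordinate of $w_1$ to match that of $w_0$ and $f'|_{[0,s)_\T}$ to match the action path of $w_0$ on $[0,s)_\T$, whence $x = x_s(\omega',f')$ while $y = x_t(\omega',f')$. Now compare the two cylinders over the common path $f'$: the case $s = t$ would give $x = y$, and the case $s > t$ would give $[0,t)_\T\subseteq [0,s)_\T$, so that every element of $x$ already lies in $y$, i.e.\ $x\subseteq y$, again forcing $x = y$ against $x\supsetneq y$. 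Hence $s < t$, that is $\mf t(x) < \mf t(y)$, as claimed. The only point requiring care --- and the mild obstacle of the argument --- is the reduction to a single representative path $f'$; once both nodes are written as cylinders over the same $f'$, the comparison is purely a matter of comparing the initial segments $[0,s)_\T$ and $[0,t)_\T$ in the total order $(\T,\le)$.
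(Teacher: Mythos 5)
Your proof is correct and follows essentially the same route as the paper's: both claims are read off from the cylinder structure of the nodes, with the strict monotonicity obtained by rewriting $x$ and $y$ over a common representative $w\in y\subseteq x$ and noting that $\mf t(x)\ge\mf t(y)$ would force $x\subseteq y$. Your version merely makes explicit a few steps the paper leaves implicit (the verification that $t\in\T_{\x(\omega)}$ and the split into the cases $s=t$ and $s>t$).
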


By this lemma, $\mf t$ induces a map $\X\to\T$ which we denote also by $\mf t$.\smallskip

We conclude this section with examples of action path \textsc{sdf}. This includes the illustrative examples from Subsection~\ref{subs:simple_sdf} and several typical classes of well-known decision problems. As announced earlier, it remains an open question at this point whether any ``relevant'' order consistent and maximal stochastic decision forest can be represented as an action path \textsc{sdf}, that is, so to speak, whether for any such \textsc{sdf} all branches of all trees can be synchronised.

\begin{lemma}\label{lemma:simple_sdf_as_APsdf}
    The two simple stochastic decision forests from Subsection~\ref{subs:simple_sdf} can be represented as an action path \textsc{sdf} with time $\T = \{0,1\}$.
\end{lemma}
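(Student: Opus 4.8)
The plan is to exhibit, for each of the two stochastic decision forests, explicit action path \textsc{sdf} data $(I,\A,\T,W)$ with $\T=\{0,1\}$ (ordered by $0<1$, minimum $0$) that induces it, to verify Assumptions \hyperlink{Ass:AP.SDF0}{AP.SDF0}--\hyperlink{Ass:AP.SDF3}{AP.SDF3}, and then to invoke Theorem~\ref{thm:AP_sdf} together with a direct comparison of the induced triple $(F,\pi,\X)$ with the data of Subsection~\ref{subs:simple_sdf}. In both cases I take $I$ a singleton and $\A=\{1,2\}$ with the discrete metric, so that $\A^\T\cong\{1,2\}^2$ via $f\mapsto(f(0),f(1))$.

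For the first forest I would set $W=\Omega\times\A^\T$, identified with $\Omega\times\{1,2\}^2$ under this bijection; then the boundedness assumption \hyperlink{Ass:AP.SDF2}{AP.SDF2} holds trivially, since any prefix is completed by the path itself. A short computation gives $x_0(\omega,f)=\{\omega\}\times\A^\T=\x_0(\omega)$ and $x_1(\omega,f)=\{\omega\}\times\{f(0)\}\times\A=\x_{f(0)}(\omega)$ in the notation of Subsection~\ref{subs:simple_sdf}, so that each non-singleton node carries a single associated time ($\{0\}$ for roots, $\{1\}$ for depth-one moves), which yields \hyperlink{Ass:AP.SDF1}{AP.SDF1} through Lemma~\ref{lemma:AP_sdf.AssmAP.SDF1}. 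Since all these nodes have at least two outcomes, $D_{t,f}=\Omega$ for every $t,f$, so the premise of \hyperlink{Ass:AP.SDF3}{AP.SDF3} (two disjoint non-empty domains) never occurs and it holds vacuously, as does \hyperlink{Ass:AP.SDF0}{AP.SDF0} because $\ms E=\mc P\Omega$. Theorem~\ref{thm:AP_sdf} then delivers an order consistent, maximal, surely non-trivial \textsc{sdf}, and the same computation shows that the induced random moves collapse to exactly $\{\x_0,\x_1,\x_2\}$: all time-$0$ random moves coincide (giving $\x_0$) and the time-$1$ random moves depend only on $f(0)\in\{1,2\}$ (giving $\x_1$ and $\x_2$).

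For the variant the key modelling idea is to realise the identification of $(\omega_1,2,1)$ and $(\omega_1,2,2)$ by keeping, in scenario $\omega_1$, only one continuation after the first action $2$: I would set $W'=\big(\{\omega_1\}\times\{(1,1),(1,2),(2,1)\}\big)\cup\big(\{\omega_2\}\times\{1,2\}^2\big)\subseteq\Omega\times\A^\T$, identifying $(\omega_1,2)$ with $(\omega_1,(2,1))$ and matching the remaining outcomes componentwise. Then $x_1(\omega_1,f)$ is a singleton precisely when $f(0)=2$, which is exactly what makes the corresponding second move exist only in scenario $\omega_2$, i.e.\ $D_{\x'_2}=\{\omega_2\}$. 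The checks of \hyperlink{Ass:AP.SDF0}{AP.SDF0}, \hyperlink{Ass:AP.SDF1}{AP.SDF1}, and \hyperlink{Ass:AP.SDF3}{AP.SDF3} proceed as before (every $D_{t,f}$ still contains $\omega_2$, so the disjointness premise of \hyperlink{Ass:AP.SDF3}{AP.SDF3} never arises, and the cardinality argument again gives \hyperlink{Ass:AP.SDF1}{AP.SDF1}). The only assumption requiring genuine work is \hyperlink{Ass:AP.SDF2}{AP.SDF2}: running through the four subsets $\T'\subseteq\{0,1\}$, the condition reduces to finding, for each scenario $\omega$ and each prescribed first action $f(0)$, an outcome of that scenario starting with $f(0)$, which holds because $\omega_1$ admits continuations of both first actions ($(1,1)$ and $(2,1)$) and $\omega_2$ admits all of them.

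I expect the main obstacle to be precisely this design of $W'$ for the variant: one must choose the embedding $W'\hookrightarrow\Omega\times\A^\T$ so that the glued outcome becomes terminal immediately after time $0$ in scenario $\omega_1$ while the depth-one move in $\omega_2$ is preserved, and then confirm that boundedness survives the deletion of the path $(\omega_1,(2,2))$. Once $W'$ is fixed, the rest is bookkeeping: Theorem~\ref{thm:AP_sdf} supplies the \textsc{sdf} axioms (in particular order consistency, maximality, and, as each scenario retains at least two paths, sure non-triviality), and a termwise comparison of the nodes, of $\pi$, and of the three induced random moves $\x'_0,\x'_1,\x'_2$ identifies the induced candidate with the forest of Subsection~\ref{subs:simple_sdf}.
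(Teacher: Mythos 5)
Your proof is correct, and for the first forest it coincides with the paper's argument (take $W=\Omega\times\A^\T$ with $\A=\{1,2\}$, $\T=\{0,1\}$, check that all $D_{t,f}=\Omega$ so that AP.SDF0 and AP.SDF3 are immediate, get AP.SDF1 from the cardinalities via Lemma~\ref{lemma:AP_sdf.AssmAP.SDF1}, and read off $\x_0,\x_1,\x_2$). For the variant you take a genuinely different encoding: the paper keeps the outcome set $W'$ literally as defined and enlarges the action space to $\A'=\A\cup\{0\}$, interpreting the glued outcome $(\omega_1,2)$ as the path $f(0)=2$, $f(1)=0$ with $0$ a fresh ``inaction'' symbol, whereas you stay inside $\A=\{1,2\}$ and realise the gluing by deleting the path $(\omega_1,(2,2))$, so that $x_1(\omega_1,f)$ with $f(0)=2$ collapses to the singleton $\{(\omega_1,(2,1))\}$. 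Both encodings verify AP.SDF0--AP.SDF3 (in yours, every non-empty $D_{t,f}$ contains $\omega_2$, so AP.SDF3 is again vacuous, and your case analysis for AP.SDF2 over the four subsets $\T'\subseteq\{0,1\}$ is the only place where real work is needed) and induce isomorphic triples $(F',\pi',\X')$ with $D_{\x'_2}=\{\omega_2\}$. The trade-off: the paper's version preserves $W'$ verbatim at the cost of an extra action symbol, which is what motivates its remark after the lemma about having to specify a ``dummy'' action at $\{w'_{12}\}$; your version avoids enlarging $\A$ but instead designates one of the two existing time-$1$ actions as the forced continuation, so the underlying phenomenon -- a terminal node at which an action value must be fixed without an alternative -- is present in both, merely realised with an old symbol rather than a new one. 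Neither choice affects the validity of the representation, since the lemma only asserts representability up to the evident identification of outcome sets.
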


The proof of this lemma in the appendix shows that the representation of the variant $(F',\pi',\X')$ as an action path \textsc{sdf} has to specify a ``dummy'' action at the terminal node $\{w'_{12}\}$, without letting an alternative, that is, an action meaning inaction. This is an artefact of the modelling decision to explicitly include a temporal dimension, as reflected in the action path formulation.\footnote{Assumption~\hyperlink{Ass:AP.SDF1}{AP.SDF1} ensures that moves have a unique time associated to them, but terminal nodes need not. In many cases they do not, as made apparent by the Lemmata \ref{lemma:AP_sdf.AssmAP.SDF1} and \ref{lemma:AP_sdf.AssmAP.SDF1_addon} and, most strikingly, their Corollary~\ref{cor:AP_sdf.AssmAP.SDF1}. But there may be an instant of time that certain branches still ``move'' at, and others do not and actually identify that instant with later points in time, which appears a bit artificial.} 
Stochastic decision forests are based on first principles and do no include such a dimension. Therefore, apart from being more general and flexible, their decision-theoretic interpretation is much clearer. At the same time, they include action path \textsc{sdf}s which model a structure present in many applications and which can therefore simplify the formal representation, at the cost of possibly introducing artificial phenomena like inactive activity. It may be noted that, in action path \textsc{sdf}, time plays a role complementary to that of random moves: it defines similarity of moves across branches, while random moves define similarity across trees. The random moves in action path \textsc{sdf}s are defined such as to be perfectly compatible with time.

\begin{example}\label{ex:APsdf}
    \begin{itemize}[label=--]
        \item Without further assumptions on the data $(I,\A,\T,W)$, it holds true that if $W= \Omega \times \A^\T$, then they are action path \textsc{sdf} data.
        
        \item Suppose that $\A^i = \{0,1\}$ for all $i\in I$. Let $W$ be the set of pairs $(\omega,f)$ where $f\colon\T \to \A$ is componentwise decreasing. Then, the quadruple $(I,\A,\T,W)$ defines action path \textsc{sdf} data. The corresponding action path \textsc{sdf} is a natural (and at least approximate) candidate for describing stochastic decision problems of \emph{timing} (alias \emph{stopping}) (see, e.g.\ the monographs \cite{Shiryaev2007Optimal,Peskir2006,ElKaroui1981} for the mathematical theory, and \cite{Riedel2017,Karatzas1998Methods} for the link to applications in economics and finance).
        
        \item Let us consider an example of a forest that becomes shallower towards its crown in some areas of the exogenous scenario space. We use an example from finance, namely the exercise of an American up-and-out option (\cite[Chapter~26]{Hull2018Options}). Let $P = (P_t)_{t\in\R_+}$ be a continuous stochastic process on $(\Omega,\ms E)$ with strictly positive real values describing the price of a financial security. The initial price is $P_0 = 1$. As long as the price has not reached $2$, the holder of the option can exercise it, but once the price reaches $2$, the option expires irreversibly. This problem can be modelled using the action path stochastic decision forest associated to the following data. 
        
        Let $I$ be a singleton, $\T=\R_+$, and $\A = \{0,1\}$. Take $W$ to be the set of all $(\omega,f)$ where $\omega\in\Omega$ and $f\colon\R_+ \to \{0,1\}$ is decreasing such that, if $f$ takes the value $0$, then $t^\ast_f = \inf\{t\in\R_+ \mid f(t) = 0\}$ satisfies $\max_{t\in[0,t^\ast_f]} P_{t}(\omega) < 2$. Then, $W$ induces an action path \textsc{sdf}, and 
        \[D_{t,f} = \{\omega\in\Omega \mid \max_{u\in[0,t]} P_u(\omega) < 2\},\]
        for all $t\in\R_+$ and decreasing $f\in\{0,1\}^{\R_+}$ with $f(t-) = 1$.
    \end{itemize}
    The proofs of these claims can be found in the corresponding part of the appendix.
\end{example}

\section{Exogenous information}\label{sec:exogenous_information}

As observed in the previous section, exogenous information is not contained in the order-theoretic structure of random moves. The approach put forward in this paper consists in attaching exogenous information to random moves in the form of $\sigma$-algebras, in a way analogous to (and in some sense, more general than) the use of filtrations in probability theory. Therefore, this latter approach is analysed and interpreted in the first subsection, and based on that, in the following subsections, a concept of exogenous information revelation on stochastic decision forests is introduced and explained, as well as illustrated by examples for the stochastic decision forests encountered in the previous section.

\subsection{Filtrations in probability theory}\label{subs:filtr_in_prob_th}

Recall that for a measurable space $(\Omega,\ms E)$, the usual interpretation of $\ms E$ sees its elements as being those subsets of $\Omega$ that are measurable for a given observer or agent. In the context of what we called exogenous scenario spaces, this had essentially the meaning that for all admissible beliefs (alias probability measures) on $\Omega$, probabilities can be computed for these subsets (see Subsection~\ref{subs:Ex_sc_sp}). However, there is a second meaning to ``measurable'', ubiquitous in probability theory and many of its applications.

In probability theory, a filtration is a monotone map $\ms F$ from a non-empty totally ordered set $\T$ modelling time, typically a suborder of the two-sided compactification of $\R$, to the set of sub-$\sigma$-algebras of $\ms E$, thus mapping any $t\in\T$ to a sub-$\sigma$-algebra $\ms F_t \subseteq \ms E$ such that for all $t,u\in\T$ with $t\le u$, $\ms F_t \subseteq \ms F_u$ (see, e.g.\ \cite[Chapter~9]{Kallenberg2021} or \cite[Chapter~3]{Cohen2015Stochastic}). The customary interpretation refers to an agent being equipped with that filtration $\ms F$ such that, for all times $t\in\T$, $\ms F_t$ describes the information the agent has at time $t$, or put differently, $\ms F_t$ is the set of events measurable for that agent at time $t$. Of course, this agent is able to compute probabilities for all $E\in\ms E$, but the elements of $\ms F_t$ are measurable in an even stronger sense. 

Namely, one tacitly supposes that there is one realised scenario $\omega\in\Omega$, drawn ``at random'', so to speak, but information about it is only revealed to the agent progressively via $\ms F$, in the following way: For all instants of time $t\in\T$ and each $E\in\ms F_t$ the agent can discern at time $t$ whether it contains the realised scenario or not. In particular, the agent when equipped with a belief in form of a probability measure on $\ms E$ can derive updated probabilities and expectations by evaluating conditional probabilities and expectations on $E$. Yet, this notion of measurement must be read with some caution. Actually, this measurement capacity of the agent includes logical operations given by the $\sigma$-algebra property, in particular countable logical disjunctions alias unions. Uncountable operations might be excluded however. Moreover, under the belief of the agent, $E$ may have probability zero and hence the evaluation of conditional probabilities and expectations on $E$ may be meaningless (\cite{Kallenberg2021,Cohen2015Stochastic}). This can be illustrated by the following examples.

In the discrete setting, where $\ms E$ is generated by a countable partition of $\Omega$, the interpretation is evident. Then all $\ms F_t$, $t\in\T$, are also generated by countable partitions, and the partition of later instants of time refine the preceding partitions. The members of the partition generating $\ms F_t$ can be thought of as the agent's information sets regarding the ``nature'' agent's past choices at time $t$.\footnote{This will be discussed in detail in the follow-up paper.} 
In that discrete case, thus, filtrations are formally and conceptually a special case of order-theoretic rooted forests on the set $\Omega$, as discussed in Subsection~\ref{sec:def}.

But in general, $\ms F_t$, for some $t\in\T$, may, for instance, contain all the singletons subsets of $\Omega$, without containing all elements of $\ms E$, let alone all subsets of $\Omega$ -- think of $\Omega$ being the unit interval with Borel $\sigma$-algebra $\ms E$ and $\ms F_t$ being the set of subsets $E\subseteq [0,1]$ that are countable or have countable complement. So the agent would actually be able to measure the realised scenario without discerning some $E\in\ms E$, say the interval $[0,\frac 12]$, although it is a union of measurable singletons. The point is that this union is uncountable, and we do not assume agents to be able to perform uncountable logical operations. Hence, the ability to discern the realised scenario does not necessarily imply the ability to discern all events, that is elements of $\ms E$. Moreover, all but countably many (and in the case of the Lebesgue measure on $[0,1]$ even all) singletons must have probability zero, and so the ability to discern the realised scenario may not be relevant from a decision-theoretic perspective: If the conditional expectation of some interesting quantity, e.g.\ a payoff, given $\ms F_t$ is computed, it is completely inconclusive to evaluate it on some $E\in\ms F_t$ whose probability is zero, because the conditional expectation is defined almost surely. 

Once again we note that the full descriptor of exogenous information at time $t$ is the set of ``measurable'' events $\ms F_t$, where ``measurable'' is used in the second, stronger sense, not solely the set of scenarios $\Omega$ (compare the discussion in Subsection~\ref{subs:Ex_sc_sp}). In particular, this discussion suggests that, in general, filtrations cannot be subsumed under the theory of decision forests and trees, which is ``discrete'' in that it can be represented via refined partitions, and thus filtrations provide a means to truly extend the scope of stochastic game and decision theory beyond those types of stochastic noise a dynamically choosing ``nature'' agent can simulate.

\subsection{Exogenous information structures}

Based on the preceding argument about exogenous scenario spaces and filtrations, and bearing in mind the way stochastic decision forests are built on exogenous scenario spaces, the following definition is proposed, which is interpreted and analysed in the sequel.

\begin{definition}\label{def:EIS}
    Let $(F,\pi,\X)$ be a stochastic decision forest on an exogenous scenario space $(\Omega,\ms E)$ and let $\tilde\X\subseteq\X$. An \emph{exogenous information structure on $\tilde\X$} is a family $\ms F = (\ms F_\x)_{\x\in\tilde\X}$ such that for all $\x\in\tilde\X$, $\ms F_\x$ is a sigma-algebra on $D_\x$ with $\ms F_\x\subseteq\ms E$. An exogenous information structure $\ms F$ on $\tilde\X$ is said to admit \emph{recall} iff for all $\x,\x'\in\tilde\X$ with $\x \ge_\X \x'$ and every $E\in \ms F_\x$, we have $E\cap D_{\x'} \in \ms F_{\x'}$.
\end{definition}

Concerning the sense of this definition, we think of any decision maker alias agent as being equipped with some exogenous information structure on the set of his random moves $\tilde\X$. For any of that agent's random moves $\x\in\tilde\X$, $\ms F_\x$ is interpreted as the set of events $E\in\ms E$ relevant at $\x$, meaning that $E\subseteq D_\x$, and that the agent can measure at $\x$, meaning that the agent can discern whether $E$ contains the realised scenario or not. This second property is to be read in the sense discussed in the previous Subsection~\ref{subs:filtr_in_prob_th}. 
The more straightforward condition of recall\footnote{Note that recall of an exogenous information structure is a weak adaptation of the notion of ``weak recall'' in \cite{Ritzberger1999Recall} along the exogenous dimension.} ensures that the agent does not forget exogenous information: if the agent is at $\x'$ and has been at $\x$ already, and was able to measure $E$ at $\x$, then this agent can measure $E$ given the domain of $\x'$ also at $\x'$.
Note that, if for all $\x\in\X$, we have $D_\x = \Omega$, then an exogenous information structure on $\X$ admitting recall is nothing but a monotone decreasing map $\ms F\colon\x\mapsto \ms F_\x$ from $\X$ into the set of sub-sigma-algebras of $\ms E$, ``monotone decreasing'' meaning that for all $\x,\x'\in\X$ with $\x \ge_\X \x'$, we have $\ms F_\x \subseteq \ms F_{\x'}$. 

It should be noted that what we have at hand here is a generalised adaptation of the notion of filtrations from probability theory to decision trees. In both cases, the index set is partially ordered. However, in the former case the order is total, whereas in the latter quite the opposite case is true, since it is a decision tree. Moreover, the underlying forest can ``thin out'' towards the crowns non-uniformly across trees alias exogenous scenarios, so that, in general, domains of random moves must be taken into account.

\subsection{Simple examples}\label{subs:simple_sdf_EIS}
In the following we discuss all exogenous information structures admitting recall for the simple examples from the previous section, Subsection~\ref{subs:simple_sdf}.\smallskip

First, consider the simple stochastic decision forest $(F,\pi,\X)$ from Subsection~\ref{subs:simple_sdf}, illustrated in Figure~\ref{fig:simple_sdf}. Consider the five following families $\ms F = (\ms F_\x)_{\x\in\X}$:
\begin{enumerate}
    \item $\ms F_\x = \{\Omega,\emptyset\}$ for all $\x\in\X$: at all moves, it is unknown which scenario is realised;
    \item $\ms F_{\x_0} = \{\Omega,\emptyset\}$ and one of the following three cases is true:
    \begin{enumerate}
        \item $\ms F_{\x_1} = \ms F_{\x_2} = \mc P \Omega$: only at the second move, it becomes known which scenario is realised, irrespective of which one is the second move;
        \item $\ms F_{\x_1} = \mc P \Omega$, $\ms F_{\x_2} = \{\Omega,\emptyset\}$: $\x_1$ is the only move at that the realised scenario is revealed; an agent with this exogenous information may have interest in choosing (if possible) $\x_1$ rather than $\x_2$ in order to learn, modelling the trade-off \emph{exploration vs.\ exploitation}; that way, problems with partial information and adaptive control can be modelled;
        \item $\ms F_{\x_2} = \mc P \Omega$, $\ms F_{\x_1} = \{\Omega,\emptyset\}$: analogous to the preceding situation;
    \end{enumerate}
    \item $\ms F_\x = \mc P \Omega$ for all $\x\in\X$: at all moves, the realised scenario is known.
\end{enumerate}

\begin{lemma}\label{lemma:simple_sdf1_EIS}
    For the simple stochastic decision forest $(F,\pi,\X)$ from Subsection~\ref{subs:simple_sdf} there are exactly five exogenous information structures on $\X$ admitting recall, and these are given by the families $\ms F$ considered just above.
\end{lemma}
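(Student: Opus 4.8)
The plan is to work directly with the concrete stochastic decision forest from Figure~\ref{fig:simple_sdf}: here $\Omega = \{\omega_1,\omega_2\}$ with $\ms E = \mc P\Omega$, and $\X = \{\x_0,\x_1,\x_2\}$ where $D_{\x_k} = \Omega$ for all $k$ (every random move is defined on all of $\Omega$). The key structural fact is the order on $\X$: from the construction, $\x_0(\omega) \supseteq \x_1(\omega)$ and $\x_0(\omega) \supseteq \x_2(\omega)$ for all $\omega$, so $\x_0 >_\X \x_1$ and $\x_0 >_\X \x_2$, while $\x_1$ and $\x_2$ are $\ge_\X$-incomparable. Since every $D_\x = \Omega$, Definition~\ref{def:EIS} tells us that an exogenous information structure admitting recall is exactly a monotone decreasing map $\x \mapsto \ms F_\x$ into sub-$\sigma$-algebras of $\ms E$, i.e.\ $\ms F_{\x_0} \subseteq \ms F_{\x_1}$ and $\ms F_{\x_0} \subseteq \ms F_{\x_2}$.

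\textbf{Enumerating the sub-$\sigma$-algebras.}
The second step is to observe that on the two-point space $\Omega$, the only $\sigma$-algebras contained in $\ms E = \mc P\Omega$ are the trivial one $\{\Omega,\emptyset\}$ and the full one $\mc P\Omega$; there are no others, since any $\sigma$-algebra containing a proper nonempty subset of a two-point set must contain both singletons and hence all of $\mc P\Omega$. Thus each $\ms F_{\x_k}$ is one of exactly two options. A priori there are $2^3 = 8$ assignments, and the task reduces to selecting those satisfying the monotonicity constraints $\ms F_{\x_0}\subseteq\ms F_{\x_1}$ and $\ms F_{\x_0}\subseteq\ms F_{\x_2}$.

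\textbf{Imposing monotonicity and counting.}
The third step is a short case split on the value of $\ms F_{\x_0}$. If $\ms F_{\x_0} = \mc P\Omega$, then monotonicity forces $\ms F_{\x_1} = \ms F_{\x_2} = \mc P\Omega$, giving one structure (case~5 in the list). If $\ms F_{\x_0} = \{\Omega,\emptyset\}$, then $\ms F_{\x_1}$ and $\ms F_{\x_2}$ are unconstrained and each may independently be either $\{\Omega,\emptyset\}$ or $\mc P\Omega$, giving $2\times 2 = 4$ structures; these are precisely case~1 (both trivial) and the three subcases 2(a), 2(b), 2(c). Matching each of the five surviving assignments to the five families $\ms F$ listed before the lemma completes the count, yielding exactly five.

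\textbf{Main obstacle.}
The only genuine subtlety — rather than the routine enumeration — is making sure the recall condition really collapses to the stated monotonicity. Since all domains equal $\Omega$, the clause ``$E\cap D_{\x'}\in\ms F_{\x'}$'' in Definition~\ref{def:EIS} simplifies to ``$E\in\ms F_{\x'}$'', i.e.\ $\ms F_\x\subseteq\ms F_{\x'}$ whenever $\x\ge_\X\x'$, as already noted in the remark following Definition~\ref{def:EIS}; I would invoke this explicitly. One must also confirm that incomparability of $\x_1$ and $\x_2$ imposes \emph{no} relation between $\ms F_{\x_1}$ and $\ms F_{\x_2}$ — which is why the two cases 2(b) and 2(c) are distinct and both admissible — so that no spurious constraint reduces the count below five, and symmetrically that no missed constraint inflates it above five.
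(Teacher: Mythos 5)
Your proof is correct and follows essentially the same route as the paper's: both reduce recall (using $D_\x=\Omega$ for all $\x$ and the order $\x_0>_\X\x_1$, $\x_0>_\X\x_2$ with $\x_1,\x_2$ incomparable) to the condition $\ms F_{\x_0}\subseteq\ms F_{\x_1}\cap\ms F_{\x_2}$, note that a two-point space carries exactly two $\sigma$-algebras, and count. You merely make explicit the final enumeration that the paper dismisses with ``the claim follows easily''; the only cosmetic slip is calling the all-discrete structure ``case 5'' where the paper labels it case 3.
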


Now, consider the variant $(F',\pi',\X')$ from Subsection~\ref{subs:simple_sdf}, as illustrated in Figure~\ref{fig:simple_sdf_variant}. Consider the three following families $(\ms F'_{\x'})_{\x'\in\X'}$.
In all three cases, let $\ms F'_{\x'_2} = \{\{\omega_2\},\emptyset\}$. Moreover, separate the following three cases.
\begin{enumerate}
    \item $\ms F'_{\x'_0} = \ms F'_{\x'_1} = \{\Omega,\emptyset\}$: again, there could be an exploitation vs.\ exploration trade-off (compare the cases 2(c) above);
    \item $\ms F'_{\x'_0} = \{\Omega,\emptyset\}$, $\ms F'_{\x'_1} =\mc P(\Omega)$: this is similar to case 2(a) above;
    \item $\ms F'_{\x'_0} = \ms F'_{\x'_1} = \mc P(\Omega)$: the realised scenario is known at any move (similar to case 3 above).
\end{enumerate}

\begin{lemma}\label{lemma:simple_sdf2_EIS}
    For the stochastic decision forest $(F',\pi',\X')$ from Subsection~\ref{subs:simple_sdf} there are exactly three exogenous information structures on $\X'$ admitting recall, and these are given by the families $\ms F'$ considered just above.
\end{lemma}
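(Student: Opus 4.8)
The plan is to enumerate all exogenous information structures $\ms F' = (\ms F'_{\x'})_{\x'\in\X'}$ on $\X' = \{\x'_0,\x'_1,\x'_2\}$ admitting recall, and show that exactly the three listed families arise. First I would record the order structure on $\X'$: from the construction, $\x'_0 \ge_{\X'} \x'_1$ and $\x'_0 \ge_{\X'} \x'_2$, while $\x'_1$ and $\x'_2$ are incomparable (their domains being $\Omega$ and $\{\omega_2\}$, and their images lying in distinct branches). The domains are $D_{\x'_0} = D_{\x'_1} = \Omega = \{\omega_1,\omega_2\}$ and $D_{\x'_2} = \{\omega_2\}$. By Definition~\ref{def:EIS}, each $\ms F'_{\x'}$ must be a sub-$\sigma$-algebra of $\ms E = \mc P\Omega$ on its respective domain, so $\ms F'_{\x'_0},\ms F'_{\x'_1} \in \{\{\Omega,\emptyset\},\mc P\Omega\}$ (the only two $\sigma$-algebras on a two-point set) and $\ms F'_{\x'_2} \in \{\{\{\omega_2\},\emptyset\}\}$, since $\{\omega_2\}$ is a singleton and admits only the trivial $\sigma$-algebra. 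This already pins down $\ms F'_{\x'_2} = \{\{\omega_2\},\emptyset\}$ in every case, matching the common stipulation in the lemma.

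Next I would impose the recall condition. Recall requires, for each comparable pair $\x'\ge_{\X'}\x''$ and each $E\in\ms F'_{\x'}$, that $E\cap D_{\x''}\in\ms F'_{\x''}$. The nontrivial constraints come from the two relations $\x'_0\ge_{\X'}\x'_1$ and $\x'_0\ge_{\X'}\x'_2$. For the first, since $D_{\x'_1}=\Omega$, recall demands $\ms F'_{\x'_0}\subseteq\ms F'_{\x'_1}$; thus if $\ms F'_{\x'_0}=\mc P\Omega$ then necessarily $\ms F'_{\x'_1}=\mc P\Omega$, which forces case (3) and rules out any configuration with $\ms F'_{\x'_0}$ finer than $\ms F'_{\x'_1}$. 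For the second relation, I would check that restricting any $E\in\ms F'_{\x'_0}$ to $D_{\x'_2}=\{\omega_2\}$ always lands in $\{\{\omega_2\},\emptyset\}$: indeed $E\cap\{\omega_2\}$ is either $\{\omega_2\}$ or $\emptyset$, both of which lie in $\ms F'_{\x'_2}$ regardless of whether $\ms F'_{\x'_0}$ is trivial or $\mc P\Omega$. Hence the $\x'_0\ge_{\X'}\x'_2$ relation imposes no further restriction.

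Combining these, the admissible choices are: $\ms F'_{\x'_0}=\{\Omega,\emptyset\}$ with $\ms F'_{\x'_1}$ either trivial (case 1) or $\mc P\Omega$ (case 2), and $\ms F'_{\x'_0}=\mc P\Omega$ forcing $\ms F'_{\x'_1}=\mc P\Omega$ (case 3). I would verify each of these three is genuinely a valid exogenous information structure admitting recall, completing the characterisation. Since $\x'_1$ and $\x'_2$ are incomparable, no recall constraint links them, so no additional cases are excluded or added on that account.

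I do not expect a serious obstacle here; the computation is finite and elementary. The only point requiring mild care is confirming the order relations $\ge_{\X'}$ among the three random moves from the explicit construction in Subsection~\ref{subs:simple_sdf} — in particular verifying that $\x'_1$ and $\x'_2$ are incomparable rather than ordered, since an erroneous comparability would spuriously introduce or remove constraints — and ensuring that $\{\omega_2\}$ being a singleton indeed forces the unique trivial $\sigma$-algebra on $D_{\x'_2}$. Both are straightforward once the domains and images are written out explicitly.
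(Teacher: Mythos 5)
Your proposal is correct and follows essentially the same route as the paper's own proof: enumerate the (two) $\sigma$-algebras on $\Omega$ and the (unique) one on $\{\omega_2\}$, observe that recall reduces to $\ms F'_{\x'_0}\subseteq\ms F'_{\x'_1}$ together with the automatically satisfied condition at $\x'_2$, and count the resulting three families. Your version merely spells out the order relations and the triviality of the constraint along $\x'_0\ge_{\X'}\x'_2$ more explicitly than the paper does.
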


We conclude this subsection with a note on the absent-minded driver story as modelled by Gilboa in \cite{Gilboa1997Comment}. Let $I = \{1,2\}$ and $\xi^1,\xi^2$ real-valued random variables on $(\Omega,\ms E)$ such that there is a probability measure $\P$ on it making $(\rho,\xi^1,\xi^2)$ independent. On the basis of the \textsc{sdf} $(F,\pi,\X)$ proposed in Subsection~\ref{subs:simple_sdf}, let $\ms F^i_{\x_i} = \sigma(\xi^i)$ for both $i\in I$, that is, the agent active at $\x_i$ has no non-trivial information about $\rho$ if both agents have posterior beliefs derived from $\P$. These give trivially rise to exogenous information structures on $\X$ admitting recall.

\subsection{Action path stochastic decision forests}\label{subs:APsdf_EIS}

In this subsection, we discuss examples of exogenous information structures for action path \textsc{sdf} data $(I,\A,\T,W)$ and the induced action path \textsc{sdf} $(F,\pi,\X)$ as in Subsection~\ref{subs:APsdf}, defined on an exogenous scenario space $(\Omega,\ms E)$. First, we observe a necessary condition on these in the case of recall: When evaluated along maximal chains of random moves in $(\Tr,\ge_\Tr)$, we obtain a filtration in the sense of classical probability theory. More precisely:

\begin{lemma}\label{lemma:APsdf_EIS_induces_filtration}
    Let $(F,\pi,\X)$ be the action path \textsc{sdf} induced by action path \textsc{sdf} data $(I,\A,\T,W)$ on an exogenous scenario space $(\Omega,\ms E)$. 
    Let $\ms F$ be an exogenous information structure on $\X$ admitting recall, and let $f\in\A^\T$. Let $\T_f = \{t\in\T \mid D_{t,f} \neq \emptyset\}$. 
    
    Then, for all $t,u\in\T_f$ with $t<u$ and $E\in\ms F_{\x_t(f)}$, $E\cap D_{u,f} \in \ms F_{\x_u(f)}$. In particular, if $D_{t,f} = \Omega$ for all $t\in\T_f$, then $(\ms F_{\x_t(f)})_{t\in\T_f}$ defines a filtration with time index set $\T_f$.
\end{lemma}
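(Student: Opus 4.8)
The plan is to reduce the entire statement to a single application of the recall property, once it is verified that for $t<u$ in $\T_f$ the two random moves $\x_t(f)$ and $\x_u(f)$ are comparable in the order $\ge_\X$, with $\x_t(f)\ge_\X\x_u(f)$. Since $t,u\in\T_f$ means $D_{t,f}\neq\emptyset\neq D_{u,f}$, both maps genuinely belong to $\X$, so the information structure assigns the sub-sigma-algebras $\ms F_{\x_t(f)}$ and $\ms F_{\x_u(f)}$ to them, and the hypotheses of recall are available.

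First I would establish the nodewise inclusion. For an arbitrary $\omega\in\Omega$, the node $x_u(\omega,f)$ is cut out of $W$ by the condition $f'|_{[0,u)_\T}=f|_{[0,u)_\T}$, whereas $x_t(\omega,f)$ is cut out by the weaker condition $f'|_{[0,t)_\T}=f|_{[0,t)_\T}$; since $t<u$ gives $[0,t)_\T\subseteq[0,u)_\T$, the former condition implies the latter, so $x_u(\omega,f)\subseteq x_t(\omega,f)$. This immediately yields the two defining clauses of $\x_t(f)\ge_\X\x_u(f)$: on the one hand, the valuewise inclusion $\x_t(f)(\omega)=x_t(\omega,f)\supseteq x_u(\omega,f)=\x_u(f)(\omega)$ holds for every $\omega$, in particular on $D_{u,f}$; on the other hand, passing to cardinalities, $|x_u(\omega,f)|\ge 2$ forces $|x_t(\omega,f)|\ge 2$, which is precisely the domain inclusion $D_{u,f}\subseteq D_{t,f}$.

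With $\x_t(f)\ge_\X\x_u(f)$ in hand, the recall hypothesis applied to $E\in\ms F_{\x_t(f)}$ yields $E\cap D_{\x_u(f)}=E\cap D_{u,f}\in\ms F_{\x_u(f)}$, which is the first assertion. For the ``in particular'' part, I would specialise to the case $D_{t,f}=\Omega$ for all $t\in\T_f$: then $E\cap D_{u,f}=E$, so the conclusion becomes $\ms F_{\x_t(f)}\subseteq\ms F_{\x_u(f)}$ for all $t<u$ in $\T_f$. As $\T_f$ is a subset of the total order $\T$ and each $\ms F_{\x_t(f)}$ is a sub-sigma-algebra of $\ms E$, this monotone increasing family is by definition a filtration indexed by $\T_f$.

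I do not expect a genuine obstacle here; the only thing to watch is the bookkeeping of the three order conventions at play -- nodes ordered by $\supseteq$, random moves ordered by $\ge_\X$, and their reversed relationship to the time parameter $t$ -- together with the fact that recall transports measurability from the earlier (larger, $\ge_\X$-greater) random move forward to the later one, which is exactly what produces an increasing rather than decreasing family in the time index.
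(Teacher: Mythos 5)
Your proposal is correct and follows essentially the same route as the paper's own proof: establish $\x_t(f)\ge_\X\x_u(f)$ and then apply recall, with the ``in particular'' clause following because each $\ms F_{\x_t(f)}$ is then a sub-$\sigma$-algebra of $\ms E$ on all of $\Omega$. The only difference is that you spell out the nodewise inclusion $x_u(\omega,f)\subseteq x_t(\omega,f)$ and the domain inclusion $D_{u,f}\subseteq D_{t,f}$ explicitly, which the paper leaves as an immediate consequence of the definitions.
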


In the already discussed special case that $\A$ has at least two elements and $W=\Omega \times \A^\T$, $(\ms F_{\x_t(f)})_{t\in\T}$ is a filtration.\smallskip

Now we turn to sufficient conditions, or more precisely, to the construction of exogenous information structures for the action path \textsc{sdf}. Let $\ms G = (\ms G_t)_{t\in\T}$ be a filtration on $(\Omega,\ms E)$. For example, $\ms G$ could be generated by some stochastic process, possibly augmented with nullsets with respect to some set of probability measures on $(\Omega,\ms E)$.

\begin{enumerate}
    \item The basic setting in stochastic control of exogenous noise revealed over time independently of the agents' behaviour (see, e.g.\ \cite{Pham2009Continuous}) corresponds to letting, for all $\x\in\X$, \[\ms F_{\x} = \ms G_{\mf t(\x)} \mid_{D_{\x}}, \] and $\ms F = (\ms F_\x)_{\x\in\X}$. This exogenous information structure admits recall.
    \item We can also consider a more general case that allows for exogenous information depending on previous behaviour -- which can serve as a basis for describing problems with partial information involving stochastic filtering, the trade-off exploration vs.\ exploitation, and adaptive control (see, e.g.\ \cite{Bain2009Fundamentals,Cohen2015Stochastic,Cohen2023Optimal}). Again, let $\ms G = (\ms G_t)_{t\in\T}$ be a filtration on $(\Omega,\ms E)$. Let $(Y_\x)_{\x\in\X}$ be a family of random variables with values in some metric space $\B$. Let, for any $\x\in\X$:
    \[ \ms F_\x =\Big( \sigma(Y_{\x'} \mid \x' \ge_\X \x) \vee \ms G_{\mf t(\x)} \Big)  \bigmid_{D_{\x}}. \]
    Again, let $\ms F = (\ms F_\x)_{\x\in\X}$, yielding an exogenous information structure admitting recall.
    \item Taking $\ms F_\x = \sigma(Y_\x)$, for all $\x\in\X$, can and does often provide an exogenous information structure not admitting recall.
\end{enumerate}

In particular, these cases can be used to model ``open loop'' and ``closed loop'' controls, respectively, covering the definitions in \cite[Chapter~2]{Carmona2018}, for example. However, as will be discussed in the Subsection~\ref{sec:adapted_choices} about adapted choices, the crux lies in that local decisions not only depend on the filtration-like object representing exogenous information at the current random move, but also on the current random move itself. Hence, when comparing a counterfactual random move to a reference random move revealing the same exogenous information, the same strategy may demand a different choice, because the random move is different.

\begin{example}\label{ex:APsdf_EIS}
    We give a typical example of such a family $(Y_\x)_{\x\in\X}$ in the case $\B = \R$, $\A = \R$, and $W = \Omega \times C(\R)$. Consider an auxiliary family of real-valued random variables $(\tilde Y_\x)_{\x\in\X}$ satisfying the stochastic differential equation
    \[ \tilde Y_{\x_t(f)} = \int_0^t \tilde b(f(u),\tilde Y_{\x_u(f)})~\d \tilde Z_{u}, \]
    for all $t\in\T$ and $f\in C(\R)$, and bounded continuous $\tilde b\colon \R\times \R \to \R$ and a suitable $\ms G$-adapted stochastic integrator $\tilde Z$, say Brownian motion\footnote{... with respect to a probability measure on $(\Omega,\ms E)$}. Then let $(Y_\x)_{\x\in\X}$ solve the stochastic differential equation
    \[ Y_{\x_t(f)} = \int_0^t b(\tilde Y_{\x_u(f)},Y_{\x_u(f)})~\d u + \int_0^t \sigma(Y_{\x_u(f)})~\d Z_{u} \]
    for all $t\in\T$ and $f\in C(\R)$, and bounded continuous $b\colon \R\times \R \to \R$ and $\sigma \colon \R \to \R$, and a suitable $\ms G$-adapted stochastic integrator $Z$, say Brownian motion. 

    $\tilde Y$ can be interpreted as a noisy and not directly observable signal controlled by an agent through $f$, while $Y$ can be seen as an observation depending on $\tilde Y$. This is a typical setting in control theory involving stochastic filtering (see, for instance, \cite{Bain2009Fundamentals,Cohen2023Optimal}, \cite[Part~V]{Cohen2015Stochastic}).
\end{example}

\begin{thm}\label{thm:AP_sdf_EIS}
    Let $(F,\pi,\X)$ be the action path \textsc{sdf} induced by action path \textsc{sdf} data $(I,\A,\T,W)$ on an exogenous scenario space $(\Omega,\ms E)$. Consider the family $\ms F$ from above, in its general version as in point (2). Then, $\ms F$ defines an exogenous information structure on $\X$ admitting recall.
\end{thm}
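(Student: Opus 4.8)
The plan is to verify the two defining requirements of an exogenous information structure admitting recall separately, after isolating the ``un-restricted'' object living on all of $\Omega$. Throughout, write $\ms H_\x = \sigma(Y_{\x'} \mid \x' \ge_\X \x) \vee \ms G_{\mf t(\x)}$ for the $\sigma$-algebra on $\Omega$ generated by the signals at $\x$ and its $\ge_\X$-predecessors together with the filtration value at the time of $\x$, so that by definition $\ms F_\x = \ms H_\x|_{D_\x}$. Note that $\mf t(\x)$ is well-defined (and constant across $\omega\in D_\x$) by Lemma~\ref{lemma:mf_t}, and that $D_\x \in \ms E\setminus\{\emptyset\}$ by Axiom~\ref{def:sdf.X.section}.

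First I would check that $\ms F$ is an exogenous information structure at all. Since each $Y_{\x'}$ is $\ms E$-measurable and $\ms G_{\mf t(\x)} \subseteq \ms E$ (because $\ms G$ is a filtration on $(\Omega,\ms E)$), the join $\ms H_\x$ is a sub-$\sigma$-algebra of $\ms E$. As $D_\x \in \ms E$, the trace $\ms H_\x|_{D_\x}$ is a $\sigma$-algebra on $D_\x$, and it is contained in $\ms E$ since $\ms E$ is closed under intersection with $D_\x$. Hence $\ms F_\x$ is a $\sigma$-algebra on $D_\x$ with $\ms F_\x \subseteq \ms E$, as required. This step is entirely routine.

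The substance lies in recall. Fix $\x,\x'\in\X$ with $\x \ge_\X \x'$; I would show $\ms H_\x \subseteq \ms H_{\x'}$ and then transport this through the traces. For the signal part, transitivity of $\ge_\X$ gives $\{\x'' \mid \x'' \ge_\X \x\} \subseteq \{\x'' \mid \x'' \ge_\X \x'\}$, whence $\sigma(Y_{\x''} \mid \x'' \ge_\X \x) \subseteq \sigma(Y_{\x''} \mid \x'' \ge_\X \x')$. For the filtration part, I must compare the times: picking any $\omega\in D_{\x'}$ (nonempty by Axiom~\ref{def:sdf.X.section}), the relation $\x\ge_\X\x'$ yields $\x(\omega)\supseteq\x'(\omega)$ in $F$, so the weak form of the strict monotonicity in Lemma~\ref{lemma:mf_t} gives $\mf t(\x) = \mf t(\x(\omega)) \le \mf t(\x'(\omega)) = \mf t(\x')$, and monotonicity of $\ms G$ then yields $\ms G_{\mf t(\x)}\subseteq \ms G_{\mf t(\x')}$. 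Combining both inclusions gives $\ms H_\x \subseteq \ms H_{\x'}$. Finally, for $E\in\ms F_\x$ write $E = H\cap D_\x$ with $H\in\ms H_\x$; since $\x\ge_\X\x'$ also forces $D_\x\supseteq D_{\x'}$, one gets $E\cap D_{\x'} = H\cap D_{\x'}$, and $H\in\ms H_\x\subseteq\ms H_{\x'}$ shows $E\cap D_{\x'}\in \ms H_{\x'}|_{D_{\x'}} = \ms F_{\x'}$. This is exactly recall.

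The only step requiring genuine input beyond set-theoretic bookkeeping is the time comparison $\mf t(\x)\le\mf t(\x')$, which rests on translating the order $\ge_\X$ on random moves into the inclusion order on their images and then invoking Lemma~\ref{lemma:mf_t}; I expect this to be the main (if mild) obstacle, together with keeping track of the traces via $D_\x\supseteq D_{\x'}$. Everything else follows from the stability of sub-$\sigma$-algebras of $\ms E$ under joins and traces.
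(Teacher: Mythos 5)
Your proof is correct and follows essentially the same route as the paper's: verify the $\sigma$-algebra property directly, then establish recall by showing $\sigma(Y_{\x''} \mid \x'' \ge_\X \x) \vee \ms G_{\mf t(\x)} \subseteq \sigma(Y_{\x''} \mid \x'' \ge_\X \x') \vee \ms G_{\mf t(\x')}$ and passing to traces via $D_\x \supseteq D_{\x'}$. You are in fact slightly more explicit than the paper, which leaves the transitivity argument and the time comparison $\mf t(\x) \le \mf t(\x')$ (via Lemma~\ref{lemma:mf_t}) implicit in its displayed inclusion.
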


\section{Adapted choices}\label{sec:adapted_choices}

In this section, a concept of choices is introduced that aims at reconciling the classical decision-theoretic model of choice under uncertainty and the probabilistic setting of adapted processes. First, these two concepts, their differences and intersections are discussed. Then, the concept of adapted choices on stochastic decision forests is introduced and explained. The section will be completed with examples for the stochastic decision forests and exogenous information structures introduced in the two preceding sections.

\subsection{Choice under uncertainty and adapted processes}

In the refined partitions approach classical in sequential decision theory and developed in high generality in \cite{AlosFerrer2005}, a choice consists in selecting a member of a given partition of the set of outcomes that are still possible according to the information the given agent has at the current move. The partition members can be seen as ``local consequences'', pertaining to the current information. Upon relabelling moves as ``local states'', this is a direct application of Savage's concept of acts (\cite{Savage1972Foundations}) which assign a (local) consequence to each (local) state. If the overall problem is well-posed this happens in such a way that the set of possible outcomes (which can be seen as global consequences) is progressively reduced to a singleton.\footnote{This point is discussed in the follow-up paper.}

In probabilistic models based on filtrations which describe exogenous information on an abstract measurable space $(\Omega,\ms E)$, choices are modelled differently. If $\ms F = (\ms F_t)_{t\in\T}$ is a filtration on $(\Omega,\ms E)$ with some totally ordered time index set $\T$, then the choice of an agent $i$ equipped with that information and capable of action described by a metric space $\A^i$ at a time $t$ is typically modelled by an $\ms F_t$-measurable function $g_t\colon\Omega\to \A^i$, modulo regularity conditions in $t$ (leading to predictable processes, for instance), see, for instance, \cite{Pham2009Continuous,Cohen2015Stochastic}. $\ms F_t$-measurability of $g_t$ means that for any measurable set of actions $A_t^i \subseteq \A^i$, the event that $i$ chooses an action in $A_t^i$ is measurable to agent $i$ at time $t$, that is, $(g_t)^{-1}(A_t^i)\in \ms F_t$. Actually, $A_t^i$ can be understood as a set-valued reference choice and $(g_t)^{-1}(A_t^i)$ is just the event that both choices, $g_t$ and $A^i_t$, are compatible.

In the discrete case, this approach can be subsumed under the refined partitions framework going back to \cite{Neumann1944} and presented in high generality in \cite{AlosFerrer2016}. Indeed, if $\ms E$ is generated by a countable partition, then so is $\ms F_t$ for each $t\in\T$, and $\ms F_t$-measurability is another way of expressing the requirement that $g_t$ is constant on each member of the partition generating $\ms F_t$. Recall from Subsection~\ref{subs:filtr_in_prob_th} that each realised partition member is thought of as the agent's information set regarding the ``nature'' agent's past choices. In that sense, $g_t$ corresponds to a map from moves at time $t$ to action such that at moves from the same information set the same action is selected. A family of such maps $g_t$, ranging over all times $t$, also called adapted process in the language of stochastic processes, then defines a complete contingent plan of action for this agent $i$ provided $i$ has only exogenous information.

On general measurable spaces, however, the $\ms F_t$-measurability cannot be rephrased like this in terms of partitions. Moreover, agents may have endogenous information about the own or other agents' past behaviour and condition their own behaviour on this information, that is, they may condition on what they know about their current random move. This suggests that one may adapt the more general measure-theoretic concept of measurable functions to stochastic decision forests as regards exogenous information about the horizontal $\Omega$ axis, while maintaining the refined partitions logic along the vertical tree axis in order to rigorously explain the interactive decision making in extensive form. The aim of this section is to introduce such a concept, bringing together the measure-theoretic probabilistic and the refined partitions based tree-like approaches to information.\footnote{In this section we do however not make formally precise which axioms the set of choices an agent is equipped with should satisfy in order to consistently define a stochastic extensive form following the refined partitions paradigm and how exactly information sets can be modelled, nor do we discuss how and when this general stochastic approach could be represented using a ``nature'' agent: this is the aim of the second part of the present series, see \cite{Rapsch2024DecisionB}.}

\subsection{Choices in stochastic decision forests}

A basic principle in extensive form theory is that at any ``move'' it is ``known'' to decision makers whether a given ``choice'' is available to them or not. One of the important facts the refined partitions approach formalised in \cite{AlosFerrer2005} elucidates, is that the availability of a choice at a given move can be completely described in terms of the underlying set-theoretic structure: A choice is available at a move iff the latter is an immediate predecessor of the former. More precisely, if $(F,\pi,\X)$ is a stochastic decision forest on an exogenous scenario space $(\Omega,\ms E)$, $W=\bigcup F$ and $c\subseteq W$ is some subset (for instance, a union of nodes representing a choice), then, with
\[ \downarrow c = \{ x\in F \mid c \supseteq x \}, \]
let, as in the classical setting of \cite{AlosFerrer2005}, the set of \emph{immediate predecessors of $c$} be defined by:
\[ P(c) = \{x\in F \mid \exists y \in \downarrow c\colon \uparrow x = \uparrow y \setminus \downarrow c \}. \]

\begin{lemma}\label{lemma:P(c)_compatible_with_conn_comp}
    For each subset $E\subseteq\Omega$, and each subset $c\subseteq W$, 
    \[ P(c \cap W_E) = P(c) \cap F_E. \]
\end{lemma}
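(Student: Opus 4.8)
The plan is to reduce the identity $P(c \cap W_E) = P(c) \cap F_E$ to two structural facts: a nodewise dichotomy relating containment in $W_E$ to membership in the components indexed by $E$, and the observation that both $\downarrow(\cdot)$ and $\uparrow(\cdot)$ interact with this dichotomy in a component-local way. First I would establish the dichotomy: for every node $x \in F$ one has $x \subseteq W_E$ if and only if $x \in F_E$, i.e.\ $\pi(x) \in E$. This rests on three inputs from the preceding material. Every node is nonempty, as a consequence of the representation by decision paths in Definition~\ref{def:decision_forest}; each node $x$ is contained in the root $W_{\pi(x)}$ of its own component, since that root is the maximum of $(T_{\pi(x)},\supseteq)$ by Lemma~\ref{lemma:partion_of_forest}; and the roots $\{W_\omega \mid \omega \in \Omega\}$ partition $W$ by Theorem~\ref{thm:decision_forest=forest_of_decision_trees}. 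Indeed, if $\pi(x) \in E$ then $x \subseteq W_{\pi(x)} \subseteq W_E$, while if $\pi(x) \notin E$ then $W_{\pi(x)}$ is disjoint from $W_E$, so the nonempty set $x$ cannot lie inside $W_E$.

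From this dichotomy the down-set identity $\downarrow(c \cap W_E) = (\downarrow c) \cap F_E$ is immediate: a node $x$ lies in $\downarrow(c \cap W_E)$ iff $x \subseteq c$ and $x \subseteq W_E$, which by the dichotomy means exactly $x \in \downarrow c$ and $x \in F_E$. I would also record that principal up-sets are component-local: if $z \supseteq x$ then $z$ and $x$ lie in the same connected component by Lemma~\ref{lemma:partion_of_forest}, so $\uparrow x \subseteq T_{\pi(x)}$ and $\pi$ is constant on $\uparrow x$. In particular, any witness $y$ appearing in the definition of $P$ --- that is, $y \in \downarrow c$ with $\uparrow x = \uparrow y \setminus \downarrow c$ --- satisfies $x \in \uparrow x \subseteq \uparrow y$, hence $x \supseteq y$ and $\pi(x) = \pi(y)$.

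Finally, I would prove both inclusions using the same witness. For $x \in P(c \cap W_E)$ with witness $y \in \downarrow(c \cap W_E) = (\downarrow c) \cap F_E$, one has $\pi(y) \in E$ and $\pi(x) = \pi(y)$, so $x \in F_E$; moreover, since $\uparrow y \subseteq F_E$, the down-set identity gives $z \in \downarrow(c \cap W_E) \Leftrightarrow z \in \downarrow c$ for every $z \in \uparrow y$, whence $\uparrow y \setminus \downarrow(c \cap W_E) = \uparrow y \setminus \downarrow c$. Thus $y \in \downarrow c$ already witnesses $x \in P(c)$, giving $x \in P(c) \cap F_E$. The reverse inclusion is symmetric: for $x \in P(c) \cap F_E$ with witness $y \in \downarrow c$, the relation $\pi(y) = \pi(x) \in E$ places $y$ in $(\downarrow c) \cap F_E = \downarrow(c \cap W_E)$, and the same collapse of the down-set identity along $\uparrow y$ turns $y$ into a witness for $x \in P(c \cap W_E)$.

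The only genuinely delicate point, which I expect to be the main obstacle, is the invariance of the witness: one must verify that $\uparrow y \setminus \downarrow c$ and $\uparrow y \setminus \downarrow(c \cap W_E)$ coincide. This hinges precisely on $\uparrow y$ lying entirely within the single component $T_{\pi(y)}$ with $\pi(y) \in E$, so that the down-set identity degenerates on $\uparrow y$; the nonemptiness of nodes and the partition property of the roots are exactly what make this degeneration valid.
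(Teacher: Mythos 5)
Your proposal is correct and follows essentially the same route as the paper's proof: first the down-set identity $\downarrow(c\cap W_E)=(\downarrow c)\cap F_E$, then the transfer of the witness $y$ in both directions using the fact that $\uparrow y$ lies entirely in the single connected component $T_{\pi(y)}\subseteq F_E$, so that $\uparrow y\setminus\downarrow(c\cap W_E)=\uparrow y\setminus\downarrow c$. The only cosmetic difference is that you derive the down-set identity from an explicit nodewise dichotomy (nonemptiness of nodes plus the partition of $W$ by the roots), whereas the paper cites Lemma~\ref{lemma:set_forest} for the same fact.
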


As in \cite{AlosFerrer2005}, a \emph{choice} is a non-empty union $c$ of nodes. For $x\in X$, $c$ is said to be \emph{available at $x$} iff $x\in P(c)$, as in \cite{AlosFerrer2005}. The sets $P(c)$ are a model for information sets in \cite{AlosFerrer2005} and \cite{AlosFerrer2008,AlosFerrer2011Comment} and will turn out to have, in a weaker sense, a similar role in the setting of stochastic decision forests, as discussed in the second paper, see \cite[Subsection~1.3]{Rapsch2024DecisionB}.

While in a discrete setting the description of availability of choices in terms of predecessors makes it relatively easy to interpret and thus adhere to the above-mentioned basic principle, this becomes less evident in the present general measure-theoretic context, because of the tension between the discreteness of choices and the general measure-theoretic description of exogenous information in terms of systems of ``measurable'' sets rather than using partitions.
We therefore explicitly assume a locally defined structure of reference choices like the $A_t^i$ above, that can be measured by agents at this random move. Choices made by agents must be adapted with respect to their individual exogenous information structure and that structure of reference choices. Hence, on the one hand, choices can be ``discrete'' in the sense of constituting partitions and thus eliminating a sufficient amount of alternatives so that progressive choosing results in a unique outcome. And on the other hand, choices can be adapted in the sense that the availability of its restriction to any reference choice is a measurable event with respect to the current exogenous information.

\begin{definition}
    Let $(F,\pi,\X)$ be a stochastic decision forest on an exogenous scenario space $(\Omega,\ms E)$, let $\tilde \X\subseteq\X$, and let $\ms F$ be an exogenous information structure on $\tilde \X$.
    \begin{enumerate}
        \item A choice $c$ is said
        \begin{enumerate}
            \item \emph{non-redundant} iff for any $\omega\in\Omega$ with $P(c) \cap T_\omega = \emptyset$, we have $c\cap W_\omega = \emptyset$;
            \item \emph{$\tilde \X$-complete} iff for every random move $\x\in \tilde\X$, $\x^{-1}(P(c))$ is either empty or equal to $D_\x$;
            \item \emph{complete} iff it is $\X$-complete.
        \end{enumerate}
        \item For any random move $\x\in\X$, a choice $c$ is said \emph{available at $\x$} iff $\x^{-1}(P(c)) = D_\x$.
        \item A \emph{reference choice structure on $\tilde\X$} is a family $\ms C = (\ms C_\x)_{\x\in\tilde\X}$ of sets $\ms C_\x$ of non-redundant and $\tilde\X$-complete choices available at $\x$, $\x\in\tilde\X$.
        \item Let $\ms C$ be a reference choice structure on $\tilde\X$. An \emph{$\ms F$-$\ms C$-adapted choice} is a non-redundant and $\tilde\X$-complete choice $c$ such that for all $\x\in\tilde\X$ that $c$ is available at and all $c'\in\ms C_\x$:
        \[ \x^{-1}(P(c \cap c')) = \{\omega\in D_\x \mid \x(\omega) \in P(c\cap c') \} \in \ms F_\x. \]
    \end{enumerate}
\end{definition}

From the an agent's perspective, both for exogenous information revelation and adapted choices the relevant order (or even tree-like) structure is the partial order (or even decision tree) $(\Tr,\ge_\Tr)$, and more precisely some subset $\tilde\X$ describing the moves of that agent. In that respect, choices are made on that tree with respect to the exogenous information revealed along it. Regarding outcomes and outcome generation, however, the $\Omega$ dimension and thus the forest $F$ are crucial. $\X$ builds the link between both, and exogenous information as well as adapted choices are defined with respect to $\X$ and such as to be compatible with each other. It is more general than usual continuous-time stochastic control and differential games formulations (as in \cite{Pham2009Continuous,Cohen2015Stochastic,Carmona2018}) because of the dependence on $\X$, rather than on the more rigid notion of time. This point, among others, will be clarified in the following examples.

%%% A major example and inspiration for the preceding definition is the choice of a number in the real unit interval $[0,1]$. The set of non-empty intervals $A\subseteq [0,1]$ describes the reference choices. But these intervals intersect non-trivially, and they only yield a set-valued choice, and thus the local consequence is not uniquely determined. A choice must correspond to a function $\gamma\colon\Omega\to[0,1]$, ``discretely'' selecting points not sets. Hence, it cannot be of the reference form above, but it appears natural to impose that the availability of a choice be measurable given any prototypical ``quasi-discrete'' choice. More formally, in our example, we require the set $\{\omega\in\Omega \mid \gamma(\omega) \in A\}$ to be a measurable event, for all non-empty intervals $A\subseteq [0,1]$, as above, where ``measurable'' relates to the exogenous information (= $\sigma$-algebra) available to the choosing agent. This is indeed a standard convention, not the least, because it allows Lebesgue integration theory to be applied to $\gamma$. But the author insists on the fact that this convention is not purely ``technical'', but on the contrary, the mathematical requirement of Borel-measurability coincides with a natural decision-theoretic requirement, once we accept the non-empty intervals as valid approximates of discrete choices.

\subsection{Simple examples}\label{subs:simple_sdf_AC}

For the examples from Subsection~\ref{subs:simple_sdf}, both the basic version and its variation, we provide a reference choice structure, and a list of adapted choices, one for each exogenous information structure from Example~\ref{subs:simple_sdf_EIS}.\smallskip

First, consider the basic version $(F,\pi,\X)$. Let $M$ be the set of maps $\Omega\to \{1,2\}$. For $k\in \{1,2\}$ and $f,g\in M$, let
\begin{align*}
    c_{f\bullet} =&~ \{(\omega,k',m')\in W \mid k' = f(\omega)\}, \\
    c_{k g} =&~  \{(\omega,k',m')\in W \mid k' = k, ~ m' = g(\omega)\}, \\
    c_{\bullet g} =&~\{(\omega,k',m')\in W \mid  m' = g(\omega)\}.
\end{align*}
Define $c_{k\bullet}$, $c_{\bullet m}$, and $c_{km}$ by identifying $k,m\in\{1,2\}$ with the constant maps on $\Omega$ with values $k$ and $m$, respectively.  
Let $\ms C_{\x_0} = \{c_{1 \bullet},c_{2 \bullet}\}$, $\ms C_{\x_1} = \ms C_{\x_2} = \{c_{\bullet1},c_{\bullet2}\}$. Note the partitioned structure of these sets, reflecting the discreteness of the situation. 

\begin{lemma}\label{lemma:simple_sdf1_LCS}
    $\ms C = (\ms C_\x)_{\x\in\X}$ defines a reference choice structure on $\X$.
\end{lemma}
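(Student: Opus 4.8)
The plan is to verify, for each of the four distinct choices occurring in the family $\ms C$ --- namely $c_{1\bullet}$, $c_{2\bullet}$, $c_{\bullet1}$, $c_{\bullet2}$ --- the defining requirements of a reference choice structure on $\X$: that each is a choice (a non-empty union of nodes), and that each is non-redundant, $\X$-complete, and available at the random move(s) it is assigned to. Since $F$ is finite, the whole verification reduces to computing the immediate-predecessor sets $P(c)$ explicitly. First I would note that $c_{k\bullet} = \x_k(\omega_1)\cup\x_k(\omega_2)$ and $c_{\bullet m} = \bigcup_{\omega,k}\{(\omega,k,m)\}$, so both are genuine non-empty unions of nodes, and invoke the symmetries $k\leftrightarrow m$ and $\omega_1\leftrightarrow\omega_2$ so that it suffices to treat $c_{1\bullet}$ and $c_{\bullet1}$.

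The computational core is to determine $P(c)$ in the two cases. For the first-coordinate choice $c_{k\bullet}$, the down-set $\downarrow c_{k\bullet} = \{x\in F\mid c_{k\bullet}\supseteq x\}$ consists of the nodes $\x_k(\omega)$ together with their terminal successors $\{(\omega,k,m)\}$, $\omega\in\Omega$, $m\in\{1,2\}$, while neither root $\x_0(\omega)$ lies in it. Applying $P(c) = \{x\mid \exists y\in\downarrow c\colon \uparrow x = \uparrow y\setminus\downarrow c\}$ with $y = \x_k(\omega)$, and using $\uparrow\x_0(\omega) = \{\x_0(\omega)\}$ together with $\uparrow\x_k(\omega) = \{\x_k(\omega),\x_0(\omega)\}$, I get $P(c_{k\bullet}) = \{\x_0(\omega_1),\x_0(\omega_2)\}$. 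For the second-coordinate choice $c_{\bullet m}$ the down-set is exactly the four terminal nodes $\{(\omega,k,m)\}$; taking $y = \{(\omega,k,m)\}$ and subtracting it from $\uparrow y = \{\{(\omega,k,m)\},\x_k(\omega),\x_0(\omega)\}$ leaves $\{\x_k(\omega),\x_0(\omega)\} = \uparrow\x_k(\omega)$, whence $P(c_{\bullet m}) = \{\x_k(\omega)\mid k\in\{1,2\},\,\omega\in\Omega\}$.

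From these two predecessor sets the three remaining properties follow immediately. For availability, $\x_0^{-1}(P(c_{k\bullet})) = \Omega = D_{\x_0}$, and for $j\in\{1,2\}$, $\x_j^{-1}(P(c_{\bullet m})) = \Omega = D_{\x_j}$, so each choice is available at each random move it is assigned to; in particular $c_{\bullet m}$ is available at both $\x_1$ and $\x_2$, which is consistent with placing it in both $\ms C_{\x_1}$ and $\ms C_{\x_2}$. For $\X$-completeness, the preimages of $P(c)$ under the remaining random moves vanish, since $P(c_{k\bullet})$ contains only roots (so $\x_1^{-1} = \x_2^{-1} = \emptyset$) and $P(c_{\bullet m})$ contains no root (so $\x_0^{-1} = \emptyset$). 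For non-redundancy, in both cases $P(c)\cap T_\omega\neq\emptyset$ for every $\omega\in\Omega$, so the implication defining non-redundancy holds vacuously.

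I expect no genuine obstacle beyond careful bookkeeping; the one place demanding attention is the evaluation of $P(c)$ from the identity $\uparrow x = \uparrow y\setminus\downarrow c$, where one must correctly identify which $y\in\downarrow c$ realises a given candidate predecessor $x$ and rule out spurious elements entering $P(c)$. Once the two predecessor sets are pinned down, the remaining checks are purely a matter of reading off preimages.
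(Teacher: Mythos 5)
Your proposal is correct and follows essentially the same route as the paper, which delegates the work to an auxiliary lemma (Lemma~\ref{lemma:simple_sdf1_choices}) computing $P(c_{f\bullet})=\im\x_0$ and $P(c_{\bullet g})=\im\x_1\cup\im\x_2$ and then reads off non-redundancy, completeness and availability exactly as you do. The only cosmetic difference is that you compute the predecessor sets directly from the definition for the constant-parameter choices, whereas the paper organises the computation scenario-by-scenario via Lemma~\ref{lemma:P(c)_compatible_with_conn_comp} and covers the general $f,g\in M$ at once; both yield the same predecessor sets and the same conclusion.
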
 

Next, consider the following table. It reads as follows: Each line specifies a set of subsets of $W$ for each of the five exogenous information structures (\textsc{eis}) listed in Subsection~\ref{subs:simple_sdf_EIS}, first part; these subsets are classified according to whether they will correspond to choices at the beginning of the ``first period'' (at time $0$) or of the ''second period'' (at time $1$), if perceived as action path \textsc{sdf} according to Lemma \ref{lemma:simple_sdf_as_APsdf}:
\begin{center}
    \begin{tabular}{r| c c}
     \textsc{eis}& 1st period & 2nd period  \\
     \hline
      1. & $c_{k \bullet}$ : $k\in\{1,2\}$ &$c_{k m}, c_{\bullet m}$ : $k,m\in\{1,2\}$ \\
      2.(a) &$c_{k \bullet}$ : $k\in\{1,2\}$ & $c_{k g}, c_{\bullet g}$ : $k\in\{1,2\},\, g\in M$ \\
      2.(b) & $c_{k \bullet}$ : $k\in\{1,2\}$ & $c_{1g}, c_{2m}, c_{\bullet m}$ : $m\in\{1,2\},\, g\in M$ \\
      2.(c) & $c_{k \bullet}$ : $k\in\{1,2\}$ & $c_{1m}, c_{2g}, c_{\bullet m}$ : $m\in\{1,2\},\, g\in M$ \\
      3. &  $c_{f \bullet}$ : $f\in M$ & $c_{k g}, c_{\bullet g}$ : $k\in\{1,2\},\, g\in M$ \\
    \end{tabular}
\end{center}

\begin{lemma}\label{lemma:simple_sdf1_AC}
    For each exogenous information structure $\ms F$, the subsets of $W$ given in the corresponding line of the preceding table are $\ms F$-$\ms C$-adapted choices on $(F,\pi,\X)$.
\end{lemma}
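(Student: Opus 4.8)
The plan is to verify, line by line, that each listed family of subsets consists of $\ms F$-$\ms C$-adapted choices. By the definition of an $\ms F$-$\ms C$-adapted choice, I must check for each candidate $c$ that (i) $c$ is a non-redundant and $\X$-complete choice, and (ii) for every $\x\in\X$ at which $c$ is available and every $c'\in\ms C_\x$, the set $\x^{-1}(P(c\cap c'))$ lies in $\ms F_\x$. Since the $\ms C_\x$ are already known to consist of non-redundant, $\X$-complete choices available at $\x$ (Lemma~\ref{lemma:simple_sdf1_LCS}), the first task is mainly to identify, for each candidate $c$ of the form $c_{k\bullet}$, $c_{f\bullet}$, $c_{km}$, $c_{kg}$, $c_{\bullet m}$, or $c_{\bullet g}$, exactly which moves it is available at, i.e.\ to compute $P(c)$ explicitly. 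Because the forest here is finite and fully explicit (Figure~\ref{fig:simple_sdf}), I expect $P(c_{f\bullet}) = \im\x_0$, while $P(c_{\bullet g})$, $P(c_{kg})$ meet $\im\x_1\cup\im\x_2$, and these computations reduce to reading off immediate predecessors from the tree.

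\textbf{Reducing condition (ii) to a measurability check on $\Omega$.}
First I would record the key simplification: for a candidate available at $\x$ and a reference choice $c'\in\ms C_\x$ also available at $\x$, the intersection $c\cap c'$ is again a choice whose immediate predecessors at the level of $\x$ are controlled by where $c$ and $c'$ agree. The point is that $\x^{-1}(P(c\cap c'))$ is a subset of $D_\x$ determined by the scenario-dependent selections defining $c$ and $c'$; since the candidates are built from maps $f,g\colon\Omega\to\{1,2\}$, the event $\{\omega\in D_\x\mid \x(\omega)\in P(c\cap c')\}$ will be expressible as a preimage under such a map (or a constant, hence trivial, event). Thus condition (ii) collapses to asserting that a specified subset of $\Omega$ — typically $\emptyset$, $\Omega$, or a level set $\{f=k\}$ — belongs to $\ms F_\x$. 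This is where the line-by-line matching against the five exogenous information structures comes in: for \textsc{eis} 1, where $\ms F_{\x_1}=\ms F_{\x_2}=\mc P\Omega$ and $\ms F_{\x_0}=\{\Omega,\emptyset\}$, the constant selections at $\x_0$ force the relevant events to be trivial while at $\x_1,\x_2$ every subset is measurable; the analogous bookkeeping handles 2(a)--(c) and 3.

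\textbf{Carrying out the case analysis.}
I would organise the argument by the random move at which availability holds. At $\x_0$ (the first period), only choices of the form $c_{f\bullet}$ are available, and the relevant reference choices are $c_{1\bullet},c_{2\bullet}$; here $\x_0^{-1}(P(c_{f\bullet}\cap c_{k\bullet}))=\{f=k\}$, so measurability in $\ms F_{\x_0}$ is exactly the requirement that $\{f=k\}\in\ms F_{\x_0}$ — trivial when $f$ is constant (lines 1, 2(a)--(c), first column) and needing $\ms F_{\x_0}=\mc P\Omega$ when $f$ ranges over all of $M$ (line 3). At $\x_1,\x_2$ (the second period), the reference choices are $c_{\bullet1},c_{\bullet2}$, and for a candidate $c_{\bullet g}$ or $c_{kg}$ the event $\x_i^{-1}(P(c\cap c_{\bullet m}))$ will be $\{g=m\}$ (restricted to $D_{\x_i}=\Omega$), so one again checks $\{g=m\}\in\ms F_{\x_i}$; this is where the asymmetry between $\x_1$ and $\x_2$ in lines 2(b) and 2(c) is crucial, since $g$ is allowed to be scenario-dependent precisely at the move whose $\ms F$ equals $\mc P\Omega$ and must be constant at the move whose $\ms F$ is trivial.

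\textbf{Main obstacle.}
The genuine work, and the step I expect to be delicate, is the explicit computation of the immediate-predecessor sets $P(c\cap c')$ and the verification that each candidate is non-redundant and $\X$-complete in the first place — i.e.\ that $\x^{-1}(P(c))$ is always either $\emptyset$ or $D_\x$ — rather than the final membership test in $\ms F_\x$, which is essentially combinatorial once the events are identified. In particular I must confirm that intersecting a period-two candidate with a period-two reference choice never produces spurious availability at $\x_0$ (so that the $\x_0$-condition is vacuous for those candidates), and conversely that period-one candidates intersected with their reference choices yield the level sets $\{f=k\}$ and nothing finer. Because the tree is small and drawn explicitly, I would handle these predecessor computations by direct inspection, using Lemma~\ref{lemma:P(c)_compatible_with_conn_comp} to split the computation scenario-by-scenario and thereby reduce everything to the two-node-per-period structure visible in Figure~\ref{fig:simple_sdf}.
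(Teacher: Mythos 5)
Your proposal is correct and follows essentially the same route as the paper: compute the immediate-predecessor sets of the candidates (the paper isolates this as Lemma~\ref{lemma:simple_sdf1_choices}), express $c\cap c'$ as a reference choice intersected with $W_{\{f=k\}}$ or $W_{\{g=m\}}$, apply Lemma~\ref{lemma:P(c)_compatible_with_conn_comp} to reduce $\x^{-1}(P(c\cap c'))$ to the corresponding level set in $\Omega$, and check membership in $\ms F_\x$ case by case across the five exogenous information structures. No substantive difference.
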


Second, consider the variant $(F',\pi',\X')$ of the simple stochastic decision forest. Let, again, $M$ be the set of maps $\Omega\to \{1,2\}$. For $k\in \{1,2\}$ and $f,g\in M$, let
\begin{align*}
    c'_{f\bullet} =&~ \{(\omega,k',m'), (\omega,k')\in W \mid k' = f(\omega)\}, \\
    c'_{k g} =&~ \{(\omega,k',m')\in W \mid k' = k, \, m' = g(\omega)\}, \\
    c'_{\bullet g} =&~ \{(\omega,k',m') \in W \mid  m' = g(\omega)\}.
\end{align*}
Define $c'_{k\bullet}$, $c'_{\bullet m}$, and $c'_{km}$ by identifying $k,m\in\{1,2\}$ with the constant maps on $\Omega$ with values $k$ and $m$, respectively.  
Let $\ms C'_{\x'_0} = \{ c'_{1 \bullet},c'_{2 \bullet}\}$, $\ms C'_{\x'_1} = \ms C'_{\x'_2} = \{c'_{\bullet1},c'_{\bullet2}\}$. 

\begin{lemma}\label{lemma:simple_sdf2_LCS}
    $\ms C' = (\ms C'_{\x'})_{\x'\in\X'}$ defines a reference choice structure on $\X'$.
\end{lemma}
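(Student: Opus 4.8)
The task is to check each member of $\ms C'_{\x'_0} = \{c'_{1\bullet},c'_{2\bullet}\}$ and of $\ms C'_{\x'_1} = \ms C'_{\x'_2} = \{c'_{\bullet 1},c'_{\bullet 2}\}$ against the definition of a reference choice structure, i.e.\ that each such $c$ is (i) a non-empty union of nodes, (ii) non-redundant, (iii) $\X'$-complete, and (iv) available at the random move(s) to which it is assigned. I would first record $(F',\supseteq)$ explicitly: the tree $T_{\omega_1}$ has root $\x'_0(\omega_1)$, the single interior move $\x'_1(\omega_1)$, and terminal nodes $\{(\omega_1,1,1)\}$, $\{(\omega_1,1,2)\}$, and $\{(\omega_1,2)\}$, the last being an immediate successor of the root; the tree $T_{\omega_2}$ has root $\x'_0(\omega_2)$, interior moves $\x'_1(\omega_2)$ and $\x'_2(\omega_2)$, and four terminal singletons. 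Then (i) is immediate, as each listed choice is visibly a union of nodes. By Lemma~\ref{lemma:P(c)_compatible_with_conn_comp} applied with $E = \{\omega\}$ one has $P(c)\cap T_\omega = P(c\cap W_\omega)$, so every $P(c)$ may be computed tree by tree.

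The heart of the proof is computing $P(c)$ from $\downarrow c$ via $P(c) = \{x\in F' \mid \exists y\in\downarrow c\colon \uparrow x = \uparrow y\setminus\downarrow c\}$. For the root-level choices I expect $\downarrow c'_{1\bullet}$ to comprise $\x'_1(\omega_1),\x'_1(\omega_2)$ and their terminal successors, and $\downarrow c'_{2\bullet}$ to comprise $\{(\omega_1,2)\}$, $\x'_2(\omega_2)$ and the latter's terminal successors; in both cases stripping $\downarrow c$ from the up-set of any witness collapses it to the up-set of the relevant root, giving $P(c'_{1\bullet}) = P(c'_{2\bullet}) = \{\x'_0(\omega_1),\x'_0(\omega_2)\}$. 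Availability at $\x'_0$ is then $(\x'_0)^{-1}(P(c)) = \Omega = D_{\x'_0}$; $\X'$-completeness holds because $(\x'_1)^{-1}(P(c))$ and $(\x'_2)^{-1}(P(c))$ are empty (no interior move lies in $P(c)$); and non-redundancy is vacuous, since each root lies in $P(c)$, so $P(c)\cap T_\omega \neq \emptyset$ for every $\omega$.

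For the second-period choices $c'_{\bullet 1}, c'_{\bullet 2}$ I would find $\downarrow c'_{\bullet g}$ consisting solely of terminal singletons, whence $P(c'_{\bullet g}) = \{\x'_1(\omega_1),\x'_1(\omega_2),\x'_2(\omega_2)\}$: each terminal singleton's up-set minus $\downarrow c'_{\bullet g}$ reduces to the up-set of its immediate predecessor. Availability then reads $(\x'_1)^{-1}(P(c'_{\bullet g})) = \Omega = D_{\x'_1}$ and $(\x'_2)^{-1}(P(c'_{\bullet g})) = \{\omega_2\} = D_{\x'_2}$, matching the assignment $\ms C'_{\x'_1} = \ms C'_{\x'_2} = \{c'_{\bullet 1},c'_{\bullet 2}\}$; $\X'$-completeness and non-redundancy follow just as before.

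The only delicate point, and the one genuinely distinguishing this variant from the basic example, is the asymmetric identified node $\{(\omega_1,2)\}$, a terminal node hanging directly off the root rather than an interior move. I would verify that in computing $P(c'_{2\bullet})$ the witness $y = \{(\omega_1,2)\}$ indeed contributes $\x'_0(\omega_1)$ and not a nonexistent ``$\x'_2(\omega_1)$'', and that the partially defined $\x'_2$ is treated with domain $\{\omega_2\}$ throughout, so that availability at $\x'_2$ is tested against $D_{\x'_2} = \{\omega_2\}$ rather than $\Omega$. Modulo these bookkeeping checks the whole verification is finite and mechanical.
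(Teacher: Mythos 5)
Your proposal is correct and follows essentially the same route as the paper: the paper delegates to Lemma~\ref{lemma:simple_sdf2_choices}, whose proof likewise computes $P(c'_{f\bullet}) = \im\x'_0$ and $P(c'_{\bullet g}) = \im\x'_1\cup\im\x'_2$ tree by tree via Lemma~\ref{lemma:P(c)_compatible_with_conn_comp} and then reads off non-redundancy, completeness, and availability. Your explicit computation of $\downarrow c$ and $P(c)$, including the careful handling of the terminal node $\{(\omega_1,2)\}$ and of $D_{\x'_2}=\{\omega_2\}$, simply spells out what the paper leaves as ``easily shown''.
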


Next, consider the following table. It reads as above.
\begin{center}
    \begin{tabular}{r| c c}
     \textsc{eis}& 1st period & 2nd period  \\
     \hline
      1. & $c'_{k \bullet}$ : $k\in\{1,2\}$ & $c'_{k m}, c'_{\bullet m}$ : $k,m\in\{1,2\}$ \\
      2. & $c'_{k \bullet}$ : $k\in\{1,2\}$ & $c'_{k g}, c'_{\bullet g}$ : $k\in\{1,2\},\, g\in M$ \\
      3. & $c'_{f \bullet}$ : $f\in M$ & $c'_{k g}, c'_{\bullet g}$ : $k\in\{1,2\},\, g\in M$ \\
    \end{tabular}
\end{center}   

\begin{lemma}\label{lemma:simple_sdf2_AC}
    For each exogenous information structure $\ms F'$, the subsets of $W'$ given in the corresponding line of the preceding table are $\ms F'$-$\ms C'$-adapted choices on $(F',\pi',\X')$.
\end{lemma}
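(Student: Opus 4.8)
The plan is to verify the two defining requirements of an $\ms F'$-$\ms C'$-adapted choice — being a non-redundant, $\X'$-complete choice, and satisfying the measurability condition $\x'^{-1}(P(c\cap c'))\in\ms F'_{\x'}$ at every random move $\x'$ the choice is available at — by explicit computation, exploiting that the example is finite. First I would compute the immediate-predecessor sets $P(\cdot)$ for each of the three families appearing in the table. Reading off the tree structure from Figure~\ref{fig:simple_sdf_variant} and using the definition of $P$, I expect to obtain $P(c'_{f\bullet})=\im\x'_0$ for every $f\in M$, next $P(c'_{1g})=\im\x'_1$ and $P(c'_{2g})=\{\x'_2(\omega_2)\}$ for every $g\in M$, and $P(c'_{\bullet g})=\{\x'_1(\omega_1),\x'_1(\omega_2),\x'_2(\omega_2)\}$ for every $g\in M$; here the merged terminal node $\{w'_{12}\}$ of scenario $\omega_1$ is responsible for $c'_{2g}$ having a single-valued predecessor set and for $\x'_2$ being defined only on $\{\omega_2\}$. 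These four computations hand me non-redundancy (whenever $P(c)\cap T_\omega=\emptyset$ the choice $c$ indeed has no outcomes in $W_\omega$), $\X'$-completeness (each $\x'^{-1}(P(c))$ is empty or the whole domain), and availability: $c'_{f\bullet}$ is available precisely at $\x'_0$, $c'_{1g}$ at $\x'_1$, $c'_{2g}$ at $\x'_2$, and $c'_{\bullet g}$ at both $\x'_1$ and $\x'_2$.

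Next I would dispose of the measurability condition wherever it is automatic. The point is that $\ms F'_{\x'_2}=\{\{\omega_2\},\emptyset\}$ is the full power set of $D_{\x'_2}=\{\omega_2\}$, so the condition holds trivially at $\x'_2$ in all three cases; likewise $\ms F'_{\x'_1}=\mc P(\Omega)$ in cases (2) and (3), and $\ms F'_{\x'_0}=\mc P(\Omega)$ in case (3), are full power sets, so nothing remains to check there. In particular all of case (3) is settled once availability is known, and in case (2) only the first-period choices $c'_{k\bullet}$, which are available at $\x'_0$ where $\ms F'_{\x'_0}=\{\Omega,\emptyset\}$, require work. This reduces the genuine verification to the random moves carrying the trivial $\sigma$-algebra $\{\Omega,\emptyset\}$: namely $\x'_0$ in cases (1) and (2), and $\x'_1$ in case (1).

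For these remaining checks I would use the ``collapse to full or empty'' mechanism. At $\x'_0$ the reference choices are $c'_{1\bullet},c'_{2\bullet}$, and for a first-period candidate $c=c'_{k\bullet}$ the intersection $c\cap c'$ equals $c$ when $c'=c'_{k\bullet}$ and is $\emptyset$ otherwise, since $c'_{1\bullet}$ and $c'_{2\bullet}$ partition $W'$ along the first action coordinate. Using $P(\emptyset)=\emptyset$ and $P(c'_{k\bullet})=\im\x'_0$, the preimage $\x'_0^{-1}(P(c\cap c'))$ is thus $\Omega$ or $\emptyset$, both lying in $\ms F'_{\x'_0}$. The same argument works at $\x'_1$ in case (1): the admitted second-period candidates there have a constant second index ($c'_{1m}$, $c'_{\bullet m}$), so intersecting with a reference choice $c'_{\bullet j}\in\ms C'_{\x'_1}$ yields the candidate itself if $j=m$ and $\emptyset$ if $j\neq m$, whence $\x'_1^{-1}(P(c\cap c'))\in\{\Omega,\emptyset\}=\ms F'_{\x'_1}$. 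The analogous but automatic checks at $\x'_2$, where $\ms F'_{\x'_2}$ is full, confirm the $c'_{2m}$ and $c'_{\bullet m}$ entries.

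I expect the main obstacle to be the bookkeeping in the first step — correctly computing the predecessor sets $P(\cdot)$ in the presence of the identified node $\{w'_{12}\}$, which breaks the symmetry between the two trees and makes the $\omega_1$- and $\omega_2$-subtrees of the forest non-isomorphic. Once these sets are in hand, the availability pattern and the constancy requirement on the second-period index (constant $m$ in case (1), arbitrary $g$ in cases (2)–(3)) line up exactly with the fineness of the corresponding $\sigma$-algebras, and the measurability condition reduces either to the triviality ``full power set'' or to the ``intersection is the choice or empty'' dichotomy; no genuine analysis is involved beyond this finite case distinction.
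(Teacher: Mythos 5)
Your proposal is correct and follows essentially the same route as the paper: compute the predecessor sets $P(c'_{f\bullet})=\im\x'_0$, $P(c'_{kg})=\im\x'_k$, $P(c'_{\bullet g})=\im\x'_1\cup\im\x'_2$ (which also give non-redundancy, $\X'$-completeness and availability), then check $\x'^{-1}(P(c\cap c'))\in\ms F'_{\x'}$ against each reference choice. The only cosmetic difference is organisational: the paper computes the intersections uniformly as $c'_{k'\bullet}\cap W_{\{f=k'\}}$ etc.\ and invokes Lemma~\ref{lemma:P(c)_compatible_with_conn_comp} to read off the preimage as $\{f=k'\}$ or $\{g=m'\}$, whereas you first dispatch the random moves carrying a full power-set $\sigma$-algebra and then handle the trivial-$\sigma$-algebra cases via the ``intersection is the choice or empty'' dichotomy — the underlying computation is the same.
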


Note that the adaptedness of choices can be rephrased, namely by requiring that $f$ and $g$ be measurable with respect to the $\sigma$-algebra of exogenous information at the random move the choice is available at, respectively. This more convenient language is used in the context of action paths in the next subsection which contains the previous two examples following up on Lemma \ref{lemma:simple_sdf_as_APsdf}.

Also note that a choice like $c_{kg}$ is conditional on the knowledge that $\x_k$ has been chosen at the root of $\X$, while $c_{\bullet g}$ is independent of the initial choice. This is reflected by the respective sets of immediate predecessors, see Lemma \ref{lemma:simple_sdf1_choices}. A similar remark is true for the variant, see Lemma \ref{lemma:simple_sdf2_choices}. Hence, as in \cite[Section~5]{AlosFerrer2005}, choices can reflect the endogenous information, that is, the information about the position in the decision tree $(\Tr,\ge_\Tr)$, agents have. While $c_{kg}$ is a choice of perfect (endogenous) information, $c_{\bullet g}$ is not. The discussion on this theme will be continued in the second paper, see \cite{Rapsch2024DecisionB}.\smallskip

Concerning the \textsc{sdf} model $(F,\pi,\X)$ of the absent-minded driver story following Gilboa, with the exogenous information structure $\ms F$ as discussed in Subsection~\ref{subs:simple_sdf_EIS}, let, for $i\in I = \{1,2\}$:
\[ \op{Ex}_i = \underbrace{[\rho^{-1}(i)\times\{D\}] \cup [\rho^{-1}(3-i)\times\{H\}]\}}_{=\text{``exit''}}, \qquad \op{Ct}_i = \underbrace{[\rho^{-1}(i)\times\{H,M\}] \cup [\rho^{-1}(3-i) \times \{M\}]}_{\text{=``continue''}},\]
and $\ms C_{\x_i}^i = \{ \op{Ex}_i,\op{Ct}_i\}$.
Futhermore, for any $E\in\ms F^i_{\x_i}$, let
\[ c_i(E) = (W_E \cap \op{Ex}_i)\cup(W_{E^\complement} \cap \op{Ct}_i), \]
the choice of ``agent'' to exit in the event $E$ and to continue in the opposite event $E^\complement$. $E$ might be thought about as an event independent of $\rho$, allowing for individual ``randomisation''. Let $C^i = \{c_i(E) \mid E\in\ms F^i_{\x_i}\}$. That is, at both random moves $\x_i$, $i\in I=\{1,2\}$, the active agent $i$ has two basic choices: ``exit'' and ``continue'', between that $i$ can randomise depending on $i$'s exogenous information, that is, in an $\ms F_{\x_i}^i$-measurable way. It is easily seen that $\ms C^i$ defines a reference choice structure on $\{\x_i\}$ and that $C^i$ is a set of $\ms F^i$-$\ms C^i$-adapted choices, for both $i\in I$.

\subsection{Action path stochastic decision forests}\label{subs:APsdf_AC}

Finally, we consider the action path \textsc{sdf} $(F,\pi,\X)$ from Subsection~\ref{subs:APsdf}, induced by action path \textsc{sdf} data $(I,\A,\T,W)$ on an exogenous scenario space $(\Omega,\ms E)$. \smallskip

Let $t\in\T$. For any set $A_{<t}\subseteq \A^{[0,t)_\T}$ and any family $A_t = (A_{t,\omega})_{\omega\in\Omega}\in\mc P(\A)^\Omega$ of subsets of $\A$, let
\[ c(A_{<t},A_t) = \{(\omega,f)\in W \mid f|_{[0,t)_\T} \in A_{<t},\, f(t) \in A_{t,\omega} \}. \]
For $t\in\T$, let $\ms C_t$ be the set of all $c(A_{<t},A_t)$ ranging over all $A_{<t}\subseteq\A^{[0,t)_\T}$ and all families $A_t = (A_{t,\omega})_{\omega\in\Omega}\in \mc P(\A)^\Omega$ of subsets of $\A$ satisfying the following assumptions:
\begin{itemize}[label=--]
    \item \hypertarget{Ass:AP.C0}{\textbf{Assumption~AP.C0.}}~$c(A_{<t},A_t)\neq\emptyset$.
    \item \hypertarget{Ass:AP.C1}{\textbf{Assumption~AP.C1.}}~For all $w\in c(A_{<t},A_t)$, there is $w'\in x_t(w)\setminus c(A_{<t},A_t)$. 
    \item \hypertarget{Ass:AP.C2}{\textbf{Assumption~AP.C2.}}~For all $f\in\A^\T$ with $f|_{[0,t)_\T} \in A_{<t}$, we have \[x_t(\omega,f)\cap c(A_{<t},A_t) \neq \emptyset\] for all or for no $\omega\in D_{t,f}$.
\end{itemize}
So we consider choices that correspond to actions at a predefined time $t$. Again, there is some sort of duality here: action paths are the result of progressive choosing; choices are collections of action paths, essentially. More precisely, $c(A_{<t},A_t)\in\ms C_t$ describes the choice of an action in $A_{t,\omega}$ in scenario $\omega$ and at time $t$, given the history is contained in $A_{<t}$. We assume that such an action is really possible for at least some scenario (Assumption~\hyperlink{Ass:AP.C0}{AP.C0}), that it really constitutes a choice in that there is an alternative (Assumption~\hyperlink{Ass:AP.C1}{AP.C1}), and that it is complete in the sense that it only trivially intersects with random moves (Assumption~\hyperlink{Ass:AP.C2}{AP.C2}).

The principal down-sets and sets of predecessors of such choices take the expected form, as affirmed by the following lemmata. In particular, such a choice $c(A_{<t},A_t)$ is available exactly at all those moves $x$, whose time $\mf t(x)$ is $t$, and that contain an outcome compatible with the choice.

\begin{lemma}\label{lemma:AP_downarrow_c}
    For all $t\in\T$ and $c \in\ms C_t$, we have:
    \[ \downarrow c = \{ x_u(w) \mid w \in c,\,u\in\T\colon t<u\} \cup \{\{w\} \mid w\in c\}.  \]
\end{lemma}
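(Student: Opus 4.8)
The plan is to establish the set equality by double inclusion, recalling that $\downarrow c = \{x\in F \mid c\supseteq x\}$ simply collects the nodes of $F$ contained in $c$, and that $F$ is assembled from the two families $\{x_u(w) \mid u\in\T,\,w\in W\}$ and $\{\{w\}\mid w\in W\}$. Accordingly, I would split the analysis of a node $x\in F$ into the case where $x$ is a singleton and the case where $x$ has at least two elements; in the latter case $x$ cannot be of the form $\{w\}$, so by definition of $F$ it admits a representation $x = x_u(w')$ with $u\in\T$ and $w'\in W$. Since $w'\in x_u(w')$ always holds, any such representative lies in $x$, which makes the choice of representative harmless.

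For the inclusion ``$\supseteq$'', the singleton part is immediate: $\{w\}\subseteq c$ holds iff $w\in c$, yielding the second set on the right. For a node $x_u(w)$ with $w=(\omega,f)\in c$ and $u>t$, I would take an arbitrary $(\omega',f')\in x_u(w)$, so $\omega'=\omega$ and $f'|_{[0,u)_\T}=f|_{[0,u)_\T}$. Since $u>t$ we have $[0,t)_\T\subseteq[0,u)_\T$ and $t\in[0,u)_\T$, so this equality restricts to $f'|_{[0,t)_\T}=f|_{[0,t)_\T}\in A_{<t}$ and $f'(t)=f(t)\in A_{t,\omega}=A_{t,\omega'}$, whence $(\omega',f')\in c$. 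Thus $x_u(w)\subseteq c$, and this direction is entirely mechanical.

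For ``$\subseteq$'', take $x\in F$ with $x\subseteq c$. If $x=\{w\}$ is a singleton then $w\in c$ and we are done. Otherwise write $x=x_u(w')$ as above; then $w'\in x\subseteq c$, so $w'\in c$, and it remains only to show $u>t$. This is the heart of the argument. The key monotonicity observation is that if $u\le t$ then $[0,u)_\T\subseteq[0,t)_\T$, which forces $x_t(w')\subseteq x_u(w')=x$. Now applying Assumption~\hyperlink{Ass:AP.C1}{AP.C1} to $c\in\ms C_t$ and the element $w'\in c$ furnishes a witness $w''\in x_t(w')\setminus c$; but then $w''\in x\subseteq c$, contradicting $w''\notin c$. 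Hence $u>t$, so $x=x_u(w')$ with $w'\in c$ and $u>t$, exactly as required.

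The main obstacle — and essentially the only non-routine step — is ruling out $u\le t$ in the second inclusion, which hinges precisely on pairing the containment $x_t(w')\subseteq x_u(w')$ (valid exactly when $u\le t$) with the alternative supplied by Assumption~\hyperlink{Ass:AP.C1}{AP.C1}. I would also note two minor points to record cleanly: that a non-singleton element of $F$ genuinely cannot be of singleton form, so the representation $x=x_u(w')$ is always available, and that uniqueness of the associated time via Assumption~\hyperlink{Ass:AP.SDF1}{AP.SDF1} (Lemma~\ref{lemma:AP_sdf.AssmAP.SDF1}) is not actually needed here, since the contradiction argument applies to \emph{any} admissible representation of $x$.
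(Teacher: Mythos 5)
Your proof is correct and follows essentially the same route as the paper's: the forward inclusion by restricting path agreement on $[0,u)_\T$ to $[0,t)_\T$ and the point $t$, and the reverse inclusion by using Assumption~\hyperlink{Ass:AP.C1}{AP.C1} together with the containment $x_t(w')\subseteq x_u(w')$ to rule out $u\le t$. The only difference is organisational (explicit double inclusion versus the paper's per-node case analysis), and your remark that the argument is independent of the chosen representative $x=x_u(w')$ is a valid, if minor, point of care.
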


\begin{lemma}\label{lemma:AP_P(c)}
    For all $t\in\T$ and $c \in\ms C_t$, we have:
    \[ P(c) = \{ x_t(w) \mid w \in c\}.  \]
\end{lemma}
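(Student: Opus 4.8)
The plan is to read $P(c)$ off directly from the description of $\downarrow c$ furnished by Lemma~\ref{lemma:AP_downarrow_c}, by locating, along each decision path, the exact node at which the path enters $c$. Throughout fix $t\in\T$ and $c=c(A_{<t},A_t)\in\ms C_t$. The first and decisive step is to determine, for a fixed $w=(\omega,f)\in c$, which nodes $x_s(w)$ lie in $\downarrow c$. For $s>t$ we have $t\in[0,s)_\T$, so every $(\omega,f')\in x_s(w)$ satisfies $f'|_{[0,t)_\T}=f|_{[0,t)_\T}\in A_{<t}$ and $f'(t)=f(t)\in A_{t,\omega}$; hence $x_s(w)\subseteq c$, that is $x_s(w)\in\downarrow c$. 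For $s\le t$, Assumption~\hyperlink{Ass:AP.C1}{AP.C1} provides some $w'\in x_t(w)\setminus c$, and since $x_t(w)\subseteq x_s(w)$ we get $w'\in x_s(w)\setminus c$, so $x_s(w)\notin\downarrow c$; in particular $x_t(w)\neq\{w\}$. Finally $\{w\}\in\downarrow c$ because $w\in c$.

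Next I would compute the principal up-set of $x_t(w)$ for $w\in c$. Every ancestor of $x_t(w)$ contains $w$, hence lies in $\uparrow\{w\}=\{x_s(w)\mid s\in\T\}\cup\{\{w\}\}$. Now $x_s(w)\supseteq x_t(w)$ for all $s\le t$, whereas for $s>t$ one has $x_s(w)\subseteq x_t(w)$, so no such node lies properly above $x_t(w)$; moreover $\{w\}\not\supseteq x_t(w)$ since $x_t(w)\neq\{w\}$. Consequently
\[ \uparrow x_t(w)=\{x_s(w)\mid s\le t\}. \]

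Combining these two steps with Lemma~\ref{lemma:AP_downarrow_c}, any $y\in\downarrow c$ is of the form $x_u(v)$ with $v\in c$ and $u>t$, or $\{v\}$ with $v\in c$. In either case $\uparrow y$ is a down-segment (in the $\supseteq$-order) of the decision path $\uparrow\{v\}$ containing every $x_s(v)$ with $s\le t$, and by the first step deleting $\downarrow c$ removes precisely the nodes $x_s(v)$ with $s>t$ together with $\{v\}$; hence $\uparrow y\setminus\downarrow c=\uparrow x_t(v)$. For the inclusion ``$\supseteq$'', take $y=\{w\}\in\downarrow c$ for each $w\in c$: then $\uparrow y\setminus\downarrow c=\uparrow x_t(w)$, witnessing $x_t(w)\in P(c)$. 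For ``$\subseteq$'', let $x\in P(c)$ be witnessed by some $y\in\downarrow c$ with associated $v\in c$; then $\uparrow x=\uparrow y\setminus\downarrow c=\uparrow x_t(v)$, and since a principal up-set determines its generator (its $\supseteq$-minimum) uniquely, $x=x_t(v)\in\{x_t(w)\mid w\in c\}$. Together these give $P(c)=\{x_t(w)\mid w\in c\}$.

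The main obstacle is pinning down the cutoff in the first step, namely the equivalence $x_s(w)\in\downarrow c\Leftrightarrow s>t$ for $w\in c$: this is what singles out $t$ as the unique time at which $c$ becomes available, and it is exactly here that the definition of $c(A_{<t},A_t)$ (forcing containment for $s>t$) and Assumption~\hyperlink{Ass:AP.C1}{AP.C1} (forbidding it for $s\le t$, and ensuring $x_t(w)$ is a genuine move) must be used in tandem. Once this is secured, the remainder is bookkeeping on the chains $\uparrow\{v\}$, and Assumptions~\hyperlink{Ass:AP.C0}{AP.C0} and~\hyperlink{Ass:AP.C2}{AP.C2} play no essential role beyond guaranteeing that $c$ is a non-empty, well-structured choice.
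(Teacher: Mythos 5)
Your proof is correct and follows essentially the same route as the paper's: both directions hinge on the cutoff characterisation of $\downarrow c$ along a decision path (containment for $s>t$ from the definition of $c(A_{<t},A_t)$, failure for $s\le t$ from Assumption~AP.C1), the description of $\uparrow x_t(w)$ as $\{x_s(w)\mid s\le t\}$, and the witness $y=\{w\}$ for the inclusion ``$\supseteq$''. The only cosmetic difference is in the converse inclusion, where you compute $\uparrow y\setminus\downarrow c=\uparrow x_t(v)$ outright and invoke that principal up-sets determine their generators, while the paper squeezes the time $t_0$ of the candidate move between $t$ and $t$ via the map $\mf t$; the underlying facts used are the same.
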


As a result, we obtain a large class of non-redundant and complete choices for action path \textsc{sdf}s.

\begin{lemma}\label{lemma:C_t_non-redundant_complete}
    Let $t\in\T$ and $c\in \ms C_t$. Then, $c$ defines a non-redundant and complete choice. Moreover, for all $\x\in\X$ that $c$ is available at, there is $(\omega,f)\in c$ such that $\omega\in D_{t,f} = D_\x$ and $\x = \x_t(f)$.
\end{lemma}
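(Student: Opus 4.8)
The plan is to check the three assertions separately, leaning on the explicit descriptions of $\downarrow c$ and $P(c)$ supplied by Lemmata~\ref{lemma:AP_downarrow_c} and~\ref{lemma:AP_P(c)}, on Assumptions~\hyperlink{Ass:AP.C0}{AP.C0}--\hyperlink{Ass:AP.C2}{AP.C2}, and on the uniqueness of the time associated to a move guaranteed by Assumption~\hyperlink{Ass:AP.SDF1}{AP.SDF1} (Lemma~\ref{lemma:AP_sdf.AssmAP.SDF1}). First I would note that $c$ is genuinely a choice: it is non-empty by \hyperlink{Ass:AP.C0}{AP.C0}, and since $\{w\}\in F$ for every $w\in W$, one has $c=\bigcup_{w\in c}\{w\}$, a union of terminal nodes. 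I would also record, via \hyperlink{Ass:AP.C1}{AP.C1}, that every $x_t(w)$ with $w\in c$ contains both $w\in c$ and some $w'\notin c$, hence is a non-singleton move with $\mf t(x_t(w))=t$; so $P(c)$ consists entirely of moves of time $t$.

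Non-redundancy is then almost immediate. Recalling that $W_\omega = x_0(\omega,f)$ is the scenario-$\omega$ slice of $W$, the condition $c\cap W_\omega\neq\emptyset$ means some $(\omega,f)\in c$; but then $x_t(\omega,f)\in P(c)$ and $\pi(x_t(\omega,f))=\omega$, so $x_t(\omega,f)\in P(c)\cap T_\omega$. Contrapositively, $P(c)\cap T_\omega=\emptyset$ forces $c\cap W_\omega=\emptyset$.

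Completeness is the heart of the matter. For a random move $\x=\x_u(g)$ and $\omega\in D_{u,g}$ I would first prove the equivalence that $\x(\omega)=x_u(\omega,g)\in P(c)$ holds iff $u=t$ and $x_t(\omega,g)\cap c\neq\emptyset$: the forward direction uses that $P(c)$ consists of moves of time $t$ (so $u=t$ by time-uniqueness) and that $x_u(\omega,g)=x_t(w)\ni w\in c$; the converse uses that any $w\in x_t(\omega,g)\cap c$ satisfies $x_t(w)=x_t(\omega,g)$, which then lies in $P(c)$. Hence $\x^{-1}(P(c))=\emptyset$ whenever $u\neq t$, and for $u=t$ it equals $\{\omega\in D_{t,g}\mid x_t(\omega,g)\cap c\neq\emptyset\}$. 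If $g|_{[0,t)_\T}\notin A_{<t}$ this set is empty, since no element of any $x_t(\omega,g)$ can satisfy the history requirement for membership in $c$; if $g|_{[0,t)_\T}\in A_{<t}$, Assumption~\hyperlink{Ass:AP.C2}{AP.C2} says exactly that it is $\emptyset$ or all of $D_{t,g}=D_\x$. In every case $\x^{-1}(P(c))\in\{\emptyset,D_\x\}$, which is completeness.

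For the ``moreover'' clause, suppose $c$ is available at $\x=\x_u(g)$, i.e.\ $\x^{-1}(P(c))=D_\x\neq\emptyset$. By the equivalence above the preimage is non-empty, so $u=t$ and there is $\omega\in D_{t,g}$ with some $(\omega,f)\in x_t(\omega,g)\cap c$. Since $(\omega,f)\in x_t(\omega,g)$ gives $f|_{[0,t)_\T}=g|_{[0,t)_\T}$, the node $x_t(\omega',f)=x_t(\omega',g)$ for every $\omega'$, whence $D_{t,f}=D_{t,g}=D_\x$ and $\x_t(f)=\x_t(g)=\x$; thus $(\omega,f)\in c$ is the required outcome. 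I expect the completeness step to be the only real obstacle: the care lies in isolating the ``wrong time'' case (settled by \hyperlink{Ass:AP.SDF1}{AP.SDF1}) and the ``wrong history'' case (settled directly) so that Assumption~\hyperlink{Ass:AP.C2}{AP.C2} applies cleanly to the remaining one; everything else is routine unfolding of the definitions of $x_t$ and $D_{t,f}$.
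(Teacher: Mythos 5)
Your proof is correct and follows essentially the same route as the paper's: both arguments rest on the explicit description of $P(c)$ from Lemma~\ref{lemma:AP_P(c)}, use Assumption~\hyperlink{Ass:AP.C1}{AP.C1} to see that predecessors are genuine moves of time $t$, and invoke Assumption~\hyperlink{Ass:AP.C2}{AP.C2} to upgrade a non-empty preimage $\x^{-1}(P(c))$ to all of $D_\x$. Your explicit case split into ``wrong time'' and ``wrong history'' is handled implicitly in the paper (completeness only requires checking random moves whose preimage is non-empty), and where the paper cites Proposition~\ref{prop:ev_on_Tr_is_iso} to identify $\x$ with $\x_t(f)$, you compute $x_t(\cdot,f)=x_t(\cdot,g)$ directly from $f|_{[0,t)_\T}=g|_{[0,t)_\T}$ --- a cosmetic difference only.
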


We fix an action index $i\in I$ (in perspective, modelling an agent), a set of random moves $\tilde\X^i$ (of that agent), and an exogenous information structure $\ms F^i = (\ms F^i_\x)_{\x\in\tilde\X^i}$ (this agent is equipped with). We denote the canonical projection $\A\to\A^i$ by $p^i$.
For all $t\in\T$ and $\x\in\tilde\X^i$ with $\mf t(\x) = t$, let $\ms C^i_{\x}$ be \emph{a} set of sets $c(A_{<t},A_t)$ as above such that
\begin{enumerate}
    \item\label{def:msC.1} $A_{<t}\subseteq\A^{[0,t)_\T}$;
    \item\label{def:msC.2} $A_t = (A_{t,\omega})_{\omega\in\Omega}$ such that there is $A_t^i\in \ms B(\A^i)$ satisfying, for all $\omega\in \Omega$, \[A_{t,\omega} = \begin{cases} (p^i)^{-1}(A^i_t), &\quad \omega\in D_\x, \\ \emptyset, &\quad \omega\notin D_\x; \end{cases}\]
    \item\label{def:msC.3} $c(A_{<t},A_t)\in\ms C_t$; and
    \item\label{def:msC.4} for all $\omega\in D_{\x}$, $\x(\omega) \cap c(A_{<t},A_t) \neq\emptyset$.
\end{enumerate}
These properties are referenced as ($\ms C^i_\x$.\ref{def:msC.1}) etc. Hence, a choice $c(A_{<t},A_t)\in\ms C^i_\x$ allows for choosing a measurable set of individual actions for ``agent'' $i$ at the random move $\x$ given the endogenous past $A_{<t}$.

\begin{proposition}\label{prop:APsdf_LCS}
    $\ms C^i = (\ms C^i_\x)_{\x\in\tilde\X^i}$ defines a reference choice structure on $\tilde\X^i$.
\end{proposition}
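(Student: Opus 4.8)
The plan is to unwind the definition of a reference choice structure and verify it elementwise: fixing $\x\in\tilde\X^i$ with $t=\mf t(\x)$, I must show that every $c = c(A_{<t},A_t)\in\ms C^i_\x$ is (i) a non-redundant choice, (ii) $\tilde\X^i$-complete, and (iii) available at $\x$. The crucial observation is that (i) and (ii) are already delivered by the earlier lemmata, so that the only genuine work lies in (iii), which is precisely what property ($\ms C^i_\x$.\ref{def:msC.4}) is designed to enforce.

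For (i) and (ii), I would invoke property ($\ms C^i_\x$.\ref{def:msC.3}), which places $c$ in $\ms C_t$. Lemma~\ref{lemma:C_t_non-redundant_complete} then immediately yields that $c$ is a non-redundant and complete, i.e.\ $\X$-complete, choice. Since $\tilde\X^i\subseteq\X$, the $\X$-completeness requirement that $(\x')^{-1}(P(c))\in\{\emptyset,D_{\x'}\}$ for all $\x'\in\X$ holds \emph{a fortiori} for all $\x'\in\tilde\X^i$; hence $c$ is $\tilde\X^i$-complete. No further argument is needed at this stage.

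For (iii), I would first write $\x=\x_t(f)$ with $D_\x=D_{t,f}$, which is legitimate: every element of $\X$ has this form by construction, and $\mf t(\x)=t$ pins down the time index by Lemma~\ref{lemma:mf_t}. Availability means $\x^{-1}(P(c))=D_\x$, and since $\x^{-1}(P(c))\subseteq D_\x$ trivially, it suffices to show $\x(\omega)\in P(c)$ for every $\omega\in D_\x$. Using the explicit description $P(c)=\{x_t(w)\mid w\in c\}$ from Lemma~\ref{lemma:AP_P(c)}, together with the fact that $x_t(w)=x_t(\omega,f)=\x(\omega)$ precisely when $w\in\x(\omega)$ (because two outcomes sharing the scenario $\omega$ and agreeing on $[0,t)_\T$ define the same node $x_t$), one sees that $\x(\omega)\in P(c)$ is equivalent to $\x(\omega)\cap c\neq\emptyset$. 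This last condition holds for all $\omega\in D_\x$ by property ($\ms C^i_\x$.\ref{def:msC.4}), which establishes availability.

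The argument is essentially a bookkeeping exercise once the prior lemmata are in place; the only point requiring care is the node-identification in step (iii), namely recognising that membership of the node $\x(\omega)$ in $P(c)$ reduces exactly to the nonempty-intersection condition ($\ms C^i_\x$.\ref{def:msC.4}), via the definitions of $x_t$ and of $\ms C^i_\x$. I do not anticipate any structural obstacle beyond making this equivalence precise.
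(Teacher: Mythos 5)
Your proposal is correct and follows essentially the same route as the paper's own proof: property ($\ms C^i_\x$.3) plus Lemma~\ref{lemma:C_t_non-redundant_complete} for non-redundancy and ($\tilde\X^i$-)completeness, and property ($\ms C^i_\x$.4) combined with the description $P(c)=\{x_t(w)\mid w\in c\}$ from Lemma~\ref{lemma:AP_P(c)} for availability at $\x$. The node-identification step you single out is exactly the one the paper performs by picking $w\in\x(\omega)\cap c$ and observing $\x(\omega)=x_t(w)$.
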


Let $t\in\T$, $A_{<t}\subseteq \A^{[0,t)_\T}$, $D\in\ms E$, and $g\colon D\to\A^i$. Let $A_t^{i,g} = (A_{t,\omega}^{i,g})_{\omega\in\Omega}$ be given by
\[ A_{t,\omega}^{i,g} = \begin{cases} \{a\in \A \mid p^i(a) = g(\omega) \}, &\quad\omega\in D, \\ \emptyset, &\quad \omega\notin D. \end{cases} \]
Let $c(A_{<t},i,g) = c(A_{<t},A_t^{i,g})$. Provided it is an element of $\ms C_t$, this models the choice of ``agent'' $i$, given an endogenous history in $A_{<t}$, to take the random action $g$, that is, the action $g(\omega)$ in scenario $\omega\in D$, at time $t$.
The perfect (endogenous) information case corresponds to $A_{<t} = \{f|_{[0,t)_\T}\}$ for some $f\in\A^\T$ such that $D_{t,f}\neq\emptyset$.

\begin{example}
    For $\T=\R_+$, singleton $\Omega$, singleton $A_{<t} = \{f|_{[0,t)_\T}\}$ for $t\in\T$ and $f\in\A^\T$, and $W = \Omega \times \A^\T \cong \A^\T$, $c(A_{<t},i,g)$ corresponds to the so-called ``differential game'' choice in \cite[Section~2]{AlosFerrer2011Comment} (see also \cite[Example~4.14]{AlosFerrer2016}), where agent $i$ chooses the value of $g$ given the history $f|_{[0,t)_\T}$.
\end{example}

\begin{example}\label{ex:adapted_choice_vs_adapted_process}
    If $\T = \{0,1,\dots,N\}$ for some $N\in\N$ or $\T = \N$, then these $c(A_{<t},i,g)$ reflect the usual specification of choices in terms of adapted processes from decision problems in discrete time (see, e.g.\ \cite{Bertsekas1996Stochastic}) if $g$ is supposed to be $\ms F_\x^i$-adapted at all $\x\in\tilde\X^i$ that $c(A_{<t},i,g)$ is available at. For the case of continuous-time, $\T=\R_+$ (see, e.g.\, \cite{Pham2009Continuous}), a similar remark can be made (note, however, that additional regularity conditions along time -- progressive measurability, optionality, predictability, for example -- are imposed for complete contingent plans of action, which is discussed in the third part).
\end{example}

The next and final theorem of this paper is concerned with the following question, which can be motivated by the preceding Example~\ref{ex:adapted_choice_vs_adapted_process}. Provided $c = c(A_{<t},i,g) \in \ms C_t$, what is the link between the adaptedness of $c$ and the measurability of the function $g|_{D_\x}$ with respect to $\ms F^i_\x$, for all $\x\in\tilde\X^i$ that $c$ is available at?\footnote{As stated in the theorem below, the availability of $c$ at $\x$ implies $D_\x \subseteq D$.} Indeed, in stochastic control and game theory, action path specifications are often implicitly used and choices are defined by assigning an action $g(\omega)$ to any scenario $\omega$, for $g$ measurable with respect to the value of the underlying filtration at the current time (see, e.g.\ \cite[Chapter~8]{Bertsekas1996Stochastic}, \cite{Pham2009Continuous,Cohen2015Stochastic,Carmona2018}). Hence, if we can answer the question above by providing a tight link, then this customary modelling paradigm in stochastic control and game theory can be explained as a derivative of adapted choices in stochastic decision forests and can therefore be interpreted in terms of traditional decision theory.

The theorem below affirms that the $\ms F^i_\x$-measurability of $g|_{D_\x}$ for relevant $\x\in\tilde\X^i$ is sufficient for the adaptedness of $c$. Moreover, it is also necessary, provided the reference choice structure sufficiently reflects the Borel $\sigma$-algebra on $\A^i$. 
For making this precise, let us call a set $\mc M$ \emph{stable under non-trivial intersections} iff for all $A,B\in\mc M$ with $A\cap B\neq\emptyset$, we have $A\cap B\in\mc M$. This property is equivalent to saying that $\mc M \cup \{\emptyset\}$ is stable under intersections. For example, any partition has this property. 
Now, given some $t\in\T$, $A_{<t}\subseteq \A^{[0,t)_\T}$, $i\in I$, $D\in\ms E$, and $g\colon D \to \A^i$ such that $\tilde c = c(A_{<t},i,g)\in\ms C_t$, we consider the following assumption:
\begin{itemize}[label=--]
    \item \hypertarget{Ass:AP.C3}{\textbf{Assumption~AP.C3}} on the triple $(A_{<t},i,g)$ and on $\ms C^i$.~For all $\x\in\tilde\X^i$ that $\tilde c$ is available at, there is
    a generator $\ms G(\A^i)$ of the Borel $\sigma$-algebra of $\A^i$, stable under non-trivial intersections, such that for all $G\in\ms G(\A^i)$, upon letting $A^{i,G}_t = (A^{i,G}_{t,\omega})_{\omega\in\Omega}$ be given by $A^{i,G}_{t,\omega} = (p^i)^{-1}(G)$ for $\omega\in D_\x$ and $A^{i,G}_{t,\omega} = \emptyset$ for $\omega\notin D_\x$, we have \[c(A_{<t},A^{i,G}_t)\in \ms C^i_{\x}.\]
\end{itemize}

\begin{thm}\label{thm:APsdf_AC}
    Let $(F,\pi,\X)$ be the action path \textsc{sdf} induced by action path \textsc{sdf} data $(I,\A,\T,W)$ on an exogenous scenario space $(\Omega,\ms E)$. Further, let $i\in I$ be an action index, $\tilde\X^i\subseteq\X$ be a set of random moves, $\ms F^i$ be an exogenous information structure on $\tilde\X^i$ and $\ms C^i$ be a reference choice structure on $\tilde\X^i$ as above satisfying Axioms ($\ms C^i_\x$.$k$), $k=1,\dots,4$. 
    Let $t\in\T$, $A_{<t}\subseteq \A^{[0,t)_\T}$, $D\in\ms E$, and $g\colon D\to\A^i$ such that $c(A_{<t},i,g)\in\ms C_t$. 
    
    Then, we have:
    \begin{enumerate}
        \item\label{thm:APsdf_AC.non_red_and_compl} $c(A_{<t},i,g)$ is a non-redundant and $\tilde\X^i$-complete choice. 
        \item\label{thm:APsdf_AC.Dx_subset_D} For all $\x\in\tilde\X^i$ that $c(A_{<t},i,g)$ is available at, we have $D_\x \subseteq D$.
        \item\label{thm:APsdf_AC.Fx_mb_=>_adapted} If for all $\x\in\tilde\X^i$ that $c(A_{<t},i,g)$ is available at $g|_{D_\x}$ is $\ms F^i_{\x}$-measurable, then $c(A_{<t},i,g)$ is $\ms F^i$-$\ms C^i$-adapted.
        \item\label{thm:APsdf_AC.adapted_=>_Fx_mb} If Assumption~\hyperlink{Ass:AP.C3}{AP.C3} is satisfied for $(A_{<t},i,g)$ and $\ms C^i$, and if moreover $c(A_{<t},i,g)$ is $\ms F^i$-$\ms C^i$-adapted, then $g|_{D_\x}$ is $\ms F^i_{\x}$-measurable for all $\x\in\tilde\X^i$ that $c(A_{<t},i,g)$ is available at.
    \end{enumerate}
\end{thm}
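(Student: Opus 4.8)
The plan is to settle the two structural claims directly, then reduce both measurability statements to one computation of $\x^{-1}(P(c\cap c'))$. Write $c := c(A_{<t},i,g) = c(A_{<t},A_t^{i,g})$. Claim~\ref{thm:APsdf_AC.non_red_and_compl} is immediate: by hypothesis $c\in\ms C_t$, so Lemma~\ref{lemma:C_t_non-redundant_complete} shows $c$ is non-redundant and complete, and completeness trivially implies $\tilde\X^i$-completeness since $\tilde\X^i\subseteq\X$. For Claim~\ref{thm:APsdf_AC.Dx_subset_D}, fix $\x\in\tilde\X^i$ at which $c$ is available, so $\x^{-1}(P(c)) = D_\x$. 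By Lemma~\ref{lemma:AP_P(c)}, $P(c) = \{x_t(w)\mid w\in c\}$, so for $\omega\in D_\x$ there is $(\omega',f')\in c$ with $\x(\omega) = x_t(\omega',f')$; comparing $\pi$-images and using $\pi\circ\x = \id_{D_\x}$ gives $\omega' = \omega$, while $(\omega,f')\in c$ forces $f'(t)\in A_{t,\omega}^{i,g}\neq\emptyset$, whence $\omega\in D$. Thus $D_\x\subseteq D$.

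The core step is the identity
\[ \x^{-1}\big(P(c\cap c')\big) = (g|_{D_\x})^{-1}(B), \]
valid for every $\x\in\tilde\X^i$ at which $c$ is available and every $c' = c(A'_{<t},A'_t)\in\ms C^i_\x$, where $t = \mf t(\x)$ and $B\in\ms B(\A^i)$ is the Borel set with $A'_{t,\omega} = (p^i)^{-1}(B)$ for $\omega\in D_\x$, as in property~($\ms C^i_\x$.\ref{def:msC.2}). Since $c\cap c'\subseteq c$, it still satisfies Assumption~\hyperlink{Ass:AP.C1}{AP.C1}, and it retains the time-$t$-focused form $c(A_{<t}\cap A'_{<t},\widehat A_t)$; therefore the arguments proving Lemmata~\ref{lemma:AP_downarrow_c} and~\ref{lemma:AP_P(c)}---which use only \hyperlink{Ass:AP.C1}{AP.C1} and this form, not \hyperlink{Ass:AP.C2}{AP.C2}---carry over to give $P(c\cap c') = \{x_t(w)\mid w\in c\cap c'\}$, and hence $\x(\omega)\in P(c\cap c')$ iff $\x(\omega)\cap(c\cap c')\neq\emptyset$. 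Writing $\x = \x_t(f_0)$, availability of $c$ at $\x$ provides, for each $\omega\in D_\x$, a witness $(\omega,f'')\in c$ with $f''|_{[0,t)_\T} = f_0|_{[0,t)_\T}$ and $p^i(f''(t)) = g(\omega)$, and availability of $c'$ gives $f_0|_{[0,t)_\T}\in A'_{<t}$. Using $D_\x\subseteq D$ from Claim~\ref{thm:APsdf_AC.Dx_subset_D}, this same $f''$ lies in $c'$ precisely when $g(\omega)\in B$; so $\x(\omega)\cap(c\cap c')\neq\emptyset$ iff $g(\omega)\in B$, which is the identity. For the special choices $c(A_{<t},A^{i,G}_t)$ of Claim~\ref{thm:APsdf_AC.adapted_=>_Fx_mb} one may shortcut this, since then $c\cap c(A_{<t},A^{i,G}_t) = c\cap W_E$ with $E = (g|_{D_\x})^{-1}(G)$, and Lemma~\ref{lemma:P(c)_compatible_with_conn_comp} gives $P(c\cap W_E) = P(c)\cap F_E$ directly.

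With the identity, Claim~\ref{thm:APsdf_AC.Fx_mb_=>_adapted} follows: if $g|_{D_\x}$ is $\ms F^i_\x$-measurable then $(g|_{D_\x})^{-1}(B)\in\ms F^i_\x$ for every Borel $B$, so $\x^{-1}(P(c\cap c'))\in\ms F^i_\x$ for all $c'\in\ms C^i_\x$; together with Claim~\ref{thm:APsdf_AC.non_red_and_compl} this is exactly $\ms F^i$-$\ms C^i$-adaptedness. For Claim~\ref{thm:APsdf_AC.adapted_=>_Fx_mb}, fix $\x$ at which $c$ is available and take the generator $\ms G(\A^i)$ from Assumption~\hyperlink{Ass:AP.C3}{AP.C3}; for each $G\in\ms G(\A^i)$ one has $c(A_{<t},A^{i,G}_t)\in\ms C^i_\x$, so adaptedness and the identity yield $(g|_{D_\x})^{-1}(G)\in\ms F^i_\x$. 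As $\{B\subseteq\A^i\mid (g|_{D_\x})^{-1}(B)\in\ms F^i_\x\}$ is a $\sigma$-algebra containing the generating family $\ms G(\A^i)$, it contains $\ms B(\A^i)$, i.e.\ $g|_{D_\x}$ is $\ms F^i_\x$-measurable.

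The main obstacle is the middle paragraph. Two points need care: first, that $P(c\cap c') = \{x_t(w)\mid w\in c\cap c'\}$ holds even though $c\cap c'$ need not lie in $\ms C_t$ (Assumption~\hyperlink{Ass:AP.C2}{AP.C2} can fail for it), so Lemma~\ref{lemma:AP_P(c)} cannot simply be quoted and its proof must be revisited for the intersected choice; second, converting the existential condition ``there is an extending action path in $c'$'' into the pointwise condition $g(\omega)\in B$, which is exactly where the availability of $c$ at $\x$ and the inclusion $D_\x\subseteq D$ are indispensable. The stability of $\ms G(\A^i)$ under non-trivial intersections in~\hyperlink{Ass:AP.C3}{AP.C3} serves to keep the reference choices $c(A_{<t},A^{i,G}_t)$ legitimate members of $\ms C^i_\x$; the measurability conclusion itself rests only on $\ms G(\A^i)$ being a generator.
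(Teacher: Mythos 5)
Your proposal is correct, and Claims 1, 2 and the two endgame arguments essentially coincide with the paper's. Where you genuinely diverge is in how the central identity $\x^{-1}(P(c\cap c')) = (g|_{D_\x})^{-1}(B)$ is obtained. The paper factors $c\cap c' = c_0\cap W_{E}$ with $c_0 = c(A_{<t}\cap A'_{<t},i,g)$ and $E = (g|_{D_\x})^{-1}(A_t^{\prime i})$, applies Lemma~\ref{lemma:P(c)_compatible_with_conn_comp} to get $P(c\cap c') = P(c_0)\cap F_E$, and then spends the bulk of the proof verifying that $c_0\in\ms C_t$ (re-checking \hyperlink{Ass:AP.C0}{AP.C0}--\hyperlink{Ass:AP.C2}{AP.C2} for $c_0$ using the availability of both $c$ and $c'$ at $\x$) so that Lemmata~\ref{lemma:AP_P(c)} and~\ref{lemma:C_t_non-redundant_complete} can be quoted to conclude $\x^{-1}(P(c_0)) = D_\x$. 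You instead observe that the proofs of Lemmata~\ref{lemma:AP_downarrow_c} and~\ref{lemma:AP_P(c)} use only Assumption~\hyperlink{Ass:AP.C1}{AP.C1} (which is inherited by subsets) and the time-$t$ form of the choice (which is preserved under intersection), re-derive $P(c\cap c') = \{x_t(w)\mid w\in c\cap c'\}$ for the intersected set directly, and then translate membership pointwise via the witness $f''$; I checked that those two lemma proofs indeed invoke nothing beyond \hyperlink{Ass:AP.C1}{AP.C1} and the structural form, so this is sound, and it buys you a shorter argument at the cost of having to reopen the lemma proofs rather than cite them. Your shortcut for Claim~\ref{thm:APsdf_AC.adapted_=>_Fx_mb}, $c\cap c(A_{<t},A^{i,G}_t) = c\cap W_E$ followed by Lemma~\ref{lemma:P(c)_compatible_with_conn_comp}, is exactly the paper's mechanism specialised to $A'_{<t} = A_{<t}$ and is valid because $D_\x\subseteq D$. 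Finally, your closing remark is a small improvement on the paper: since $\{B\subseteq\A^i\mid (g|_{D_\x})^{-1}(B)\in\ms F^i_\x\}$ is automatically a $\sigma$-algebra, the generator property alone suffices and the paper's appeal to the $\pi$-$\lambda$ theorem (hence the intersection-stability of $\ms G(\A^i)$) is not needed for the measurability conclusion.
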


\begin{example}\label{ex:APsdf_AC}
    Recall that we consider the action path \textsc{sdf} $(F,\pi,\X)$ induced by action path \textsc{sdf} data $(I,\A,\T,W)$ on an exogenous scenario space $(\Omega,\ms E)$. 
    \begin{itemize}[label=--]
        \item If we consider the simple \textsc{sdf}s from the preceding subsection in action path formulation, according to Lemma \ref{lemma:simple_sdf_as_APsdf}, then the preceding construction yields exactly the adapted choices from the preceding subsection.
        
        \item Suppose that $W=\Omega\times\A^\T$. Fix some $i\in I$ and suppose that $\A^i$ has at least two elements. Let $t\in\T$ and $A_{<t}\subseteq \A^{[0,t)_\T}$.
        Then, for any map $g\colon\Omega\to\A^i$ we have $c(A_{<t},i,g) \in \ms C_t$, that is, it satisfies Assumptions \hyperlink{Ass:AP.C0}{AP.C0}, \hyperlink{Ass:AP.C1}{AP.C1}, and \hyperlink{Ass:AP.C2}{AP.C2}. Moreover, if for all $\x\in\tilde\X^i$ with $\mf t(\x) = t$, $\ms C^i_\x$ contains all $c(A_{<t},A_t)$ ranging over all $A_t$ satisfying ($\ms C_\x^i$.$k$), $k=2,3,4$, then Assumption~\hyperlink{Ass:AP.C3}{AP.C3} is satisfied for $(A_{<t},i,g)$ and $\ms C^i$.
        
        \item In case of the timing problem (see Example~\ref{ex:APsdf}, and, e.g.\ \cite{Shiryaev2007Optimal,Karatzas1998Methods}), we have $\A^i = \{0,1\}$ for all $i\in I$. Fix $i\in I$. Let $t\in\T$ and $A_{<t}\subseteq \A^{[0,t)_\T}$ be a non-empty set of componentwise decreasing paths $f_t$ such that $p^i\circ f_t = 1_{[0,t)_\T}$. 
        Then, for all $g\colon\Omega\to\{0,1\}$, we have $c(A_{<t},i,g)\in\ms C_t$, that is, it satisfies Assumptions \hyperlink{Ass:AP.C0}{AP.C0}, \hyperlink{Ass:AP.C1}{AP.C1}, \hyperlink{Ass:AP.C2}{AP.C2}. Moreover, if for all $\x\in\tilde\X^i$ with $\mf t(\x) = t$, $\ms C^i_\x$ contains all $c(A_{<t},A_t)$ ranging over all $A_t$ satisfying ($\ms C_\x^i$.$k$), $k=2,3,4$, then Assumption~\hyperlink{Ass:AP.C3}{AP.C3} is satisfied for $(A_{<t},i,g)$ and $\ms C^i$.
        
        \item An analogous statement holds true in the case of the up-and-out option control problem (see Example~\ref{ex:APsdf}, and \cite{Hull2018Options}). Here $I$ is a singleton, $\T = \R_+$, and $\A^i = \A = \{0,1\}$ for the unique $i\in I$. Let $t\in\R_+$ such that, with  
        \[ D = \{ \omega\in \Omega \mid \max_{u\in[0,t]} P_u(\omega) < 2\}, \]
        we have $D\neq\emptyset$. Further, let $g\colon D\to\{0,1\}$ be a map and let $A_{<t} = \{1\}^{[0,t)}$. 
        Then, we have $c(A_{<t},i,g)\in\ms C_t$, that is, it satisfies Assumptions \hyperlink{Ass:AP.C0}{AP.C0}, \hyperlink{Ass:AP.C1}{AP.C1}, \hyperlink{Ass:AP.C2}{AP.C2}. Moreover, if for all $\x\in\tilde\X^i$ with $\mf t(\x) = t$, $\ms C^i_\x$ contains all $c(A_{<t},A_t)$ ranging over all $A_t$ satisfying ($\ms C_\x^i$.$k$), $k=2,3,4$, Assumption~\hyperlink{Ass:AP.C3}{AP.C3} is satisfied for $(A_{<t},i,g)$ and $\ms C^i$.
    \end{itemize}
    The proofs of these claims can be found in the corresponding part of the appendix.
\end{example}

It is important to note that a choice $c = c(A_{<t},i,g) \in \ms C_t$ as above, available at a random move $\x$, with $A_{<t}\subseteq \A^{[0,t)_\T}$ and $\ms F^i_\x$-measurable $g\colon D \to \A^i$, $D\in\ms E$, can condition on two distinct things: first, on the endogenous information $A_{<t}$, and second, on the exogenous information via the $\ms F^i_\x$-measurability of $g$. For instance, one can the imagine the case in that there are $\x'\in\tilde\X^i \setminus \{\x\}$ with $\ms F^i_\x = \ms F^i_{\x'}$ and $\mf t(\x) = t = \mf t(\x')$, $\ms F^i_{\x}$-measurable $g'\colon D \to \A^i$ with $g\neq g'$ and $A'_{<t}\subseteq \A^{[0,t)_\T}$ such that $c(A'_{<t},i,g')$ is available at $\x'$ and an agent's complete contingent plan of action specifies $c(A_{<t},i,g)$ at $\x$, but $c(A'_{<t},i,g')$ at $\x'$.
In the context of Example~\ref{ex:APsdf_EIS}, with $\x = \x_t(f)$ and $\x' = \x_t(f')$ for suitable $f,f'\in \A^\T$, this would mean that, even if $Y_{\x_u(f)}$, $u\le t$, and $Y_{\x_u(f')}$, $u\le t$, generate the same $\sigma$-algebra, i.e.\ $\ms F^i_{\x_t(f)} = \ms F^i_{\x_t(f')}$, an agent need not make the very choice the agent makes at $\x_t(f)$ at the counterfactual random move $\x_t(f')$ as well.

Let us close on a note about this specific example for that we consider an agent only having choices of the form $c = c(A_{<t},i,g)$ as in the preceding paragraph. In case $A_{<t} = \A^{[0,t)_\T}$ for all choices available to this agent, and moreover $\ms F^i_{\x_t(f)} = \ms G_t$ for all $f\in\A^\T$ and $t\in D_{t,f}$, and a fixed filtration $\ms G = (\ms G_t)_{t\in\T}$ (case 1 in Subsection~\ref{subs:APsdf_EIS}), then the complete contingent plans of action this agent can build, model what is described by the term ``open loop'' in the control- and game-theoretic literature. If, conversely, only choices with singleton $A_{<t}$ are available to the agent and $\ms F^i_{\x_t(f)}$ depends on a stochastic state controlled by $f$, for $f\in\A^\T$ and $t\in D_{t,f}$, in the sense of case 2 of Subsection~\ref{subs:APsdf_EIS}, possibly in combination with Example~\ref{ex:APsdf_EIS}, then we obtain what is subsumed under the term ``closed loop''  (see, e.g.\ \cite{Fudenberg1988Open,Fudenberg1991}).

Note that there are mixed regimes. ``Large'' alias uninformative $A_{<t}$ and $\ms F^i$ like in case 2 of Subsection~\ref{subs:APsdf_EIS} may coexist as well, which is often also called ``closed loop'' in the control and differential games literature (see, e.g.\ the corresponding discussion in \cite[p.\ 72--76]{Carmona2018}), but actually the complete contingent plans of action the agent can make only close the loop with respect to exogenous information: the agent can only react to the controlled noise, but not explicitly to the employed controls themselves which can make a crucial difference when evaluating counterfactuals, especially in case several agents interact. 
On the other hand, ``small'' alias informative $A_{<t}$ may coexist with case 1 of Subsection~\ref{subs:APsdf_EIS}, that is, ``$\ms F^i_{\x_t(f)} = \ms G_t$'', which is near, or in the extreme case of singleton $A_{<t}$ equal to closed loops with respect to endogenous information, but is equivalent, on the exogenous information side, to the ignorance of any controlled stochastic state. In any case, we conclude, that the terminology of ``closed'' and ``open'' loops must be used with respect to both endogenous and exogenous information. It requires some caution in the mixed regimes just discussed, especially if the underlying stochastic decision forest, exogenous information structures and adapted choices are not properly specified.

\section*{Conclusion}
\addcontentsline{toc}{section}{Conclusion}

It is possible to implement general stochastic processes as background noise on refined partitions-based decision forests without encountering outcome generation problems for a ``nature'' agent, while allowing for a rigorous decision-theoretic interpretation of the relationship between endogenous and exogenous information and choices. This represents an improvement over the existing state of the art in both refined partitions-based decision and game theory (e.g.\ in \cite{AlosFerrer2016,AlosFerrer2015}) and stochastic control and differential games theory (e.g.\ \cite{Pham2009Continuous,Karatzas1998Methods,Carmona2018,Cohen2015Stochastic}). This has been achieved by abandoning the assumption of a ``nature'' agent and instead constructing a theory of stochastic decision forests. These decision forests satisfy duality between outcomes and nodes and can be represented as forests of decision trees. They allow for a notion of similarity across moves on different trees, given by random moves, which, under weak hypotheses on the underlying order-theoretic structure, form the moves of a decision tree in its own right. They serve as the basis for both a structure of exogenous information revelation similar to filtrations in probability theory and a concept of adapted choices compatible with exogenous information structures. Adapted choices capture measurability assumptions on choices typically made in the literature and also provide insight into the decision-theoretic meaning of open and closed loop controls in the stochastic setting. 

In addition, a general model of action path stochastic decision forests has been constructed, based on a small number of easily verifiable conditions. This model can serve as a unified decision-theoretic foundation for a large class of stochastic decision problems, including those in discrete time. Moreover, as will be shown in the third paper, it constitutes a step toward approximately explaining the extensive form characteristics of stochastic decision problems in continuous time.\smallskip

In order to develop a theory of stochastic extensive forms that encompasses the classical theory in \cite{AlosFerrer2016}, it is still necessary to formulate consistency criteria on stochastic decision forests equipped with a set of agents, each agent provided with an exogenous information structure, a reference choice structure, and a set of adapted choices. This would allow us to define strategies as Savage acts, as prepared in this paper, and to address the question of whether and, if so, how outcome existence and uniqueness for strategy profiles -- and in that sense, essentially, ``well-posedness'' -- can be characterised for stochastic extensive forms. Based on this, we can address the question, implicit in the present paper, of which stochastic extensive forms can and which cannot be faithfully represented as extensive forms with an additional ``nature'' agent that, by executing a strategy, generates exogenous information as endogenous information. Furthermore, we note, in a spirit related to \cite{Riedel2017}, that in order to allow for subgame-perfect or perfect Bayesian equilibrium analysis, a decision-theoretically explainable and general model of subgames, histories, and information sets in stochastic extensive form decision problems needs to be established. Moreover, such an analysis requires an abstract concept of preferences, e.g.\ via beliefs and utility functions, following entirely from basic decision-theoretic principles. Last but not least, such a theory is expected to enable the construction and study of stochastic extensive forms based on action path data, providing a large class of game-theoretic models and, especially, the basis of an approximation theory for stochastic differential games. This remains to be shown as well. Together, all of this constitutes the programme for the second paper \cite{Rapsch2024DecisionB} of the present series.

\section*{Acknowledgements}
\addcontentsline{toc}{section}{Acknowledgements}
First and foremost, the author is particularly indebted to his doctoral supervisor, Christoph Knochenhauer, who read and commented on several earlier versions of this paper, encouraged him to pursue this project, and actively supported him in presenting its contents at workshops. Special thanks are due to Frank Riedel for the helpful discussion we had in Berlin in 2022, especially for providing useful insights regarding the game-theoretic literature. The author also wants to thank Elizabeth Baldwin, Miguel Ballester, and Samuel N.\ Cohen for inspiring discussions on decision and control theory during his stay in Oxford in 2023. The author is also grateful to the attendants of the workshops and seminars where he had the opportunity to present and discuss earlier versions of this particular project, in Oxford, Kiel, Berlin, Palaiseau, including Daniel Andrei, Karolina Bassa, Peter E.\ Caines, Fanny Cartellier, Sebastian Ertel, Ali Lazrak, Kristoffer Lindensjö, Christopher Lorenz, Berenice Anne Neumann, Manos Perdikakis, Jan-Henrik Steg, Peter Tankov, and Jacco Thijssen, for their questions and comments. 
Partial funding by Deutsche Forschungsgemeinschaft (DFG) through, first, the \href{https://gepris.dfg.de/gepris/projekt/410208580}{IRTG 2544} Stochastic Analysis in Interaction, Project ID: 410208580, and, second, under Germany's Excellence Strategy, the \href{https://gepris.dfg.de/gepris/projekt/390685689}{EXC 2046/1} The Berlin Mathematics Research Center MATH+, Project ID: 390685689, is gratefully acknowledged. 

\phantomsection
\addcontentsline{toc}{section}{References}
\bibliography{main}

\begin{thebibliography}{10}

\bibitem{AlosFerrer2015}
Carlos Al{\'o}s-Ferrer and Johannes Kern.
\newblock Repeated games in continuous time as extensive form games.
\newblock {\em Journal of Mathematical Economics}, 61:34--57, 2015.

\bibitem{AlosFerrer2011Comment}
Carlos Al{\'o}s-Ferrer, Johannes Kern, and Klaus Ritzberger.
\newblock {{Comment on ``Trees and extensive forms''}}.
\newblock {\em Journal of Economic Theory}, 146(5):2165--2168, 2011.

\bibitem{AlosFerrer2005}
Carlos Al{\'o}s-Ferrer and Klaus Ritzberger.
\newblock Trees and decisions.
\newblock {\em Economic Theory}, 25:764--798, 2005.

\bibitem{AlosFerrer2008}
Carlos Al{\'o}s-Ferrer and Klaus Ritzberger.
\newblock Trees and extensive forms.
\newblock {\em Journal of Economic Theory}, 143:216--250, 2008.

\bibitem{AlosFerrer2016}
Carlos Al{\'o}s-Ferrer and Klaus Ritzberger.
\newblock {\em {The Theory of Extensive Form Games}}.
\newblock Springer, 2016.

\bibitem{Aumann2020}
Robert~J. Aumann.
\newblock Game theory.
\newblock In Matias Vernengo, Esteban~Perez Caldentey, and Barkley~J {Rosser Jr}, editors, {\em {The New Palgrave Dictionary of Economics}}. Palgrave Macmillan UK: Imprint: Palgrave Macmillan, 2020.

\bibitem{Aumann2003}
Robert~J. Aumann and Sergiu Hart.
\newblock Long cheap talk.
\newblock {\em Econometrica}, 71(6):1619--1660, 2003.

\bibitem{Bain2009Fundamentals}
Alan Bain and Dan Crisan.
\newblock {\em {Fundamentals of Stochastic Filtering}}.
\newblock Stochastic Modelling and Applied Probability. Springer, New York, 2009.

\bibitem{Bayer2023Rough}
Christian Bayer, Peter~K.\ Friz, Masaaki Fukasawa, Jim Gatheral, Antoine Jacquier, and Mathieu Rosenbaum, editors.
\newblock {\em Rough Volatility}.
\newblock Society for Industrial and Applied Mathematics, Philadelphia, PA, 2023.

\bibitem{Bertsekas1996Stochastic}
Dimitri Bertsekas and Steven~E.\ Shreve.
\newblock {\em {Stochastic Optimal Control: The Discrete-Time Case}}, volume~5.
\newblock Athena Scientific, 1996.

\bibitem{Bogachev2007Measure}
Vladimir~I. Bogachev.
\newblock {\em {Measure Theory}}, volume~1.
\newblock Springer, Berlin, Heidelberg, 2007.

\bibitem{Bollobas2013Modern}
B{\'e}la Bollob{\'a}s.
\newblock {\em {Modern Graph Theory}}, volume 184 of {\em Graduate Texts in Mathematics}.
\newblock Springer Science \& Business Media, New York, 2013.

\bibitem{Carmona2018}
Ren{\'e} Carmona and Fran{\c c}ois Delarue.
\newblock {\em {Probabilistic Theory of Mean Field Games with Applications I. Mean Field FBSDEs, Control, and Games}}, volume~83 of {\em {Probability Theory and Stochastic Modelling}}.
\newblock Springer, Cham, 2018.

\bibitem{Cohen2015Stochastic}
Samuel~N.\ Cohen and Robert~J.\ Elliott.
\newblock {\em {Stochastic Calculus and Applications}}, volume~2.
\newblock Springer, 2015.

\bibitem{Cohen2023Optimal}
Samuel~N. Cohen, Christoph Knochenhauer, and Alexander Merkel.
\newblock Optimal adaptive control with separable drift uncertainty.
\newblock arXiv2309.07091, 2023.
\newblock Preprint.

\bibitem{Davey2002}
Brian~A. Davey and Hilary~A. Priestley.
\newblock {\em {Introduction to Lattices and Order}}.
\newblock {{Cambridge University Press}}, 2nd edition, 2002.

\bibitem{Delbaen2006Mathematics}
Freddy Delbaen and Walter Schachermayer.
\newblock {\em {The Mathematics of Arbitrage}}.
\newblock Springer, 2006.

\bibitem{Dockner2000}
Engelbert~J. Dockner, Steffen Jorgensen, Ngo~Van Long, and Gerhard Sorger.
\newblock {\em {{Differential Games in Economics and Management Science}}}.
\newblock Cambridge University Press, 2000.

\bibitem{ElKaroui1981}
Nicole {El Karoui}.
\newblock Les aspects probabilistes du controle stochastique.
\newblock In P.~L. Hennequin, editor, {\em Ecole d'Et{\'e} de Probabilit{\'e}s de Saint-Flour IX-1979}, pages 73--238, Berlin, Heidelberg, 1981. Springer.

\bibitem{Fudenberg1988Open}
Drew Fudenberg and David~K.\ Levine.
\newblock Open-loop and closed-loop equilibria in dynamic games with many players.
\newblock {\em Journal of Economic Theory}, 44(1):1--18, 1988.

\bibitem{Fudenberg1991}
Drew Fudenberg and Jean Tirole.
\newblock {\em Game Theory}.
\newblock MIT Press, Cambridge (Massachusetts), London, 1991.

\bibitem{Gilboa1997Comment}
Itzhak Gilboa.
\newblock A comment on the absent-minded driver paradox.
\newblock {\em Games and Economic Behavior}, 20(1):25--30, 1997.

\bibitem{Gilboa1989Maxmin}
Itzhak Gilboa and David Schmeidler.
\newblock Maxmin expected utility with non-unique prior.
\newblock {\em Journal of Mathematical Economics}, 18(2):141--153, 1989.

\bibitem{Grenadier1996Strategic}
Steven~R. Grenadier.
\newblock {{The strategic exercise of options: development cascades and overbuilding in real estate markets}}.
\newblock {\em J. Finance}, 51(5):1653--1679, 1996.

\bibitem{Guo2023Reinforcement}
Xin Guo, Anran Hu, and Yufei Zhang.
\newblock Reinforcement learning for linear-convex models with jumps via stability analysis of feedback controls.
\newblock {\em SIAM Journal on Control and Optimization}, 61(2):755--787, 2023.

\bibitem{Hara2023Multiple}
Kazuhiro Hara and Gil Riella.
\newblock Multiple tastes and beliefs with an infinite prize space.
\newblock {\em Economic Theory}, 76(2):417--444, 2023.

\bibitem{Harsanyi1967}
John~C. Harsanyi.
\newblock {Games with incomplete information played by ``{Bayesian}'' players, Part I. The basic model}.
\newblock {\em Management science}, 14(3):159--182, 1967.

\bibitem{Harsanyi1968a}
John~C. Harsanyi.
\newblock {Games with incomplete information played by ``{Bayesian}'' players, Part II. Bayesian equilibrium points}.
\newblock {\em Management Science}, 14(5):320--334, 1968.

\bibitem{Harsanyi1968b}
John~C. Harsanyi.
\newblock {Games with incomplete information played by ``{Bayesian}'' players, Part III. The basic probability distribution of the game}.
\newblock {\em Management Science}, 14(7):486--502, 1968.

\bibitem{Huang2006}
Minyi Huang, Rolang~P. Malham{\'e}, and Peter~E. Caines.
\newblock {{Large population stochastic dynamic games. Closed-loop McKean-Vlasov systems and the Nash certainty equivalence principle}}.
\newblock {\em Communication in information and systems}, 6(3):221--252, 2006.

\bibitem{Hull2018Options}
John~C.\ Hull and Sankarshan Basu.
\newblock {\em {Options, Futures, and Other Derivatives}}.
\newblock Pearson Education India, 10th edition, 2018.

\bibitem{Kallenberg2021}
Olav Kallenberg.
\newblock {\em Foundations of Modern Probability}, volume~99 of {\em Probability Theory and Stochastic Modelling}.
\newblock Springer Cham, 2021.
\newblock 3rd edition.

\bibitem{Karatzas1998Methods}
Ioannis Karatzas and Steven~E.\ Shreve.
\newblock {\em {Methods of Mathematical Finance}}, volume~39.
\newblock Springer, 1998.

\bibitem{Kuhn1950}
Harold~W.\ Kuhn.
\newblock Extensive games.
\newblock {\em Proceedings of the National Academy of Sciences}, 36(10):570--576, 1950.

\bibitem{Kuhn1953Extensive}
Harold~W.\ Kuhn.
\newblock Extensive games and the problem of information.
\newblock {\em Contributions to the Theory of Games}, 2(28):193--216, 1953.

\bibitem{Lasry2007}
Jean-Michel Lasry and Pierre-Louis Lions.
\newblock Mean field games.
\newblock {\em Japanese journal of mathematics}, 2:227--260, 2007.

\bibitem{Peskir2006}
Goran Peskir and Albert Shiryaev.
\newblock {\em {Optimal Stopping and Free-Boundary Problems}}.
\newblock Birkh{\"a}user, Basel, 2006.

\bibitem{Pham2009Continuous}
Huy{\^e}n Pham.
\newblock {\em {Continuous-Time Stochastic Control and Optimization with Financial Applications}}, volume~61.
\newblock Springer Science \& Business Media, 2009.

\bibitem{Piccione1997Interpretation}
Michele Piccione and Ariel Rubinstein.
\newblock On the interpretation of decision problems with imperfect recall.
\newblock {\em Games and Economic Behavior}, 20(1):3--24, 1997.

\bibitem{Rapsch2024DecisionB}
E.~Emanuel Rapsch.
\newblock {Decision making in stochastic extensive form II: Stochastic extensive forms and games}, November 2024.
\newblock Mimeo.

\bibitem{Riedel2017}
Frank Riedel and Jan-Henrik Steg.
\newblock Subgame-perfect equilibria in stochastic timing games.
\newblock {\em Journal of Mathematical Economics}, 72:36--50, 2017.

\bibitem{Ritzberger1999Recall}
Klaus Ritzberger.
\newblock Recall in extensive form games.
\newblock {\em International Journal of Game Theory}, 28:69--87, 1999.

\bibitem{Rokach2016Decision}
Lior Rokach.
\newblock Decision forest: Twenty years of research.
\newblock {\em Information Fusion}, 27:111--125, 2016.

\bibitem{Rubinstein1982Perfect}
Ariel Rubinstein.
\newblock Perfect equilibrium in a bargaining model.
\newblock {\em Econometrica: Journal of the Econometric Society}, pages 97--109, 1982.

\bibitem{Savage1972Foundations}
Leonard~J.\ Savage.
\newblock {\em {The Foundations of Statistics}}.
\newblock Dover Publications, New York, 2nd edition, 1972.

\bibitem{Shapley1953}
Lloyd~S.\ Shapley.
\newblock Stochastic games.
\newblock {\em Proceedings of the national academy of sciences}, 39(10):1095--1100, 1953.

\bibitem{Shiryaev2007Optimal}
Albert~N.\ Shiryaev.
\newblock {\em {Optimal Stopping Rules}}, volume~8.
\newblock Springer Science \& Business Media, 2007.

\bibitem{Simon1987}
Leo~K. Simon.
\newblock Basic timing games.
\newblock Working paper 8745, University of California, Berkeley, 1987.

\bibitem{Simon1989}
Leo~K.\ Simon and Maxwell~B.\ Stinchcombe.
\newblock Extensive form games in continuous time: Pure strategies.
\newblock {\em Econometrica: Journal of the Econometric Society}, pages 1171--1214, 1989.

\bibitem{Steg2018}
Jan-Henrik Steg.
\newblock On preemption in discrete and continuous time.
\newblock {\em Dynamic Games and Applications}, 8(4):918--938, 2018.

\bibitem{Stinchcombe1992}
Maxwell~B. Stinchcombe.
\newblock Maximal strategy sets for continuous-time game theory.
\newblock {\em Journal of Economic Theory}, 56(2):235--265, 1992.

\bibitem{Neumann1944}
John von Neumann and Oskar Morgenstern.
\newblock {\em Theory of Games and Economic Behavior}.
\newblock Princeton University Press, Princeton, 1944.

\end{thebibliography}
\bibliographystyle{plain}

\newpage
\appendix
\section{Proofs}

\subsection{Section~\ref{sec:def}}

We first prove Lemma~\ref{lemma:partion_of_forest} and Theorem~\ref{thm:decision_forest=forest_of_decision_trees}, without using the Propositions \ref{prop:f(v)=uparrow v}, \ref{prop:decision_forest_over_set_is_decision_forest}, and \ref{prop:repr_by_dec_paths_of_decision_forest}.

\begin{proof}
[Proof of Lemma~\ref{lemma:partion_of_forest}] 
    \footnote{We recall that this lemma can actually be seen as an explicitly order-theoretic reformulation of a basic result from graph theory (see the discussion in \cite[Section~I.1]{Bollobas2013Modern}). As the claim that it is a reformulation requires proof, and also for the reader's convenience, a proof is given nonetheless.}
    For $x,y\in F$, let $x\sim y$ iff there is $z\in F$ such that $z\ge x$ and $z\ge y$. $\sim$ is clearly reflexive and symmetric. It is also transitive, because any principal up-set in $(F,\ge)$ is a chain and $\ge$ is transitive. The equivalence classes of $\sim$ define trees by construction of $\sim$.
	If there are $x,y\in F$ such that $x\ge y$, then (by taking $z=x$) we obtain that $x\sim y$. Hence, the equivalence classes of $\sim$ define a partition with the claimed property.
	
	Let $\mc F$ be an arbitrary partition with the claimed property. Then, for $x,y\in F$, $x$ and $y$ belong to the same partition member iff $x\sim y$. Indeed, if they belong to the same partition member $T$, which by assumption is a tree, then $x\sim y$ by definition of a tree. If conversely $x\sim y$, then there is $z\in F$ with $z\ge x$ and $z\ge y$. Hence, there are $T, T'\in\mc F$ such that $x,z\in T$ and $y,z\in T'$. Thus $T\cap T' \neq \emptyset$, whence $T=T'$. Hence, $\mc F$ is uniquely determined.

    We now prove the statement of the second sentence. Suppose the forest $(F,\ge)$ to be rooted, and let $T\in\mc F$. By definition of $\mc F$, $T$ is a non-empty tree. Let $x\in T$. Then the principal up-set $\uparrow x = \{y\in F \mid y \ge x\}$ in $F$ contains a maximal element $y$. By definition of $T$, $y\in\uparrow x$ implies $y\in T$. $\uparrow x$ is a chain, because $(F,\ge)$ is a forest. Hence, $y$ is even a maximum of $\uparrow x$ with respect to the induced order. Let $z\in T$ be an arbitrary element of the tree $T$. Then $T\cap\uparrow x \cap \uparrow z \neq \emptyset$. Take an element $u$ in this intersection. Then $y\ge u \ge z$, whence $y\ge z$. Thus $y$ is a maximum of $(T,\ge)$. In particular, $y$ is maximal in $(T,\ge)$. Therefore, $(T,\ge)$ is a rooted tree according to our definition and has a maximum.
\end{proof}

We prepare the proof of Theorem~\ref{thm:decision_forest=forest_of_decision_trees} with two lemmata.

\begin{lemma}\label{lemma:set_forest}
	Let $V$ be a set and $F$ be a $V$-poset such that $(F,\supseteq)$ is a rooted forest. Let $\mc F$ be the set of its connected components and for any $T\in\mc F$, let $V_T$ be the root of $(T,\supseteq)$. Then, for any $T\in\mc F$, we have:
	\begin{enumerate}
		\item\label{lemma:set_forest.T=downarrow_V_T} $T = \{x\in F \mid V_T \supseteq x\}$;
		\item\label{lemma:set_forest.W(V_T)=w_subseteq_T} $W(V_T) = \{w\in W \mid w\subseteq T\}$;
		\item\label{lemma:set_forest.W(x)} for any $x\in T$, $W(x)$ is equal to the set of maximal chains $w$ in $(T,\supseteq)$ with $x\in w$; in particular, $W(V_T)$ is equal to the set of maximal chains in $(T,\supseteq)$.
	\end{enumerate}
\end{lemma}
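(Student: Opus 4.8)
The plan is to distill from Lemma~\ref{lemma:partion_of_forest} a single \emph{comparability principle} and derive all three claims from it. The principle is: for $x,y\in F$, if $x$ and $y$ are comparable with respect to $\supseteq$, then they lie in the same connected component of $(F,\supseteq)$; this is immediate from Lemma~\ref{lemma:partion_of_forest}, which supplies a component containing both whenever one dominates the other, combined with the fact that $\mc F$ is a partition. A first consequence I would record is that the root $V_T$, by definition the maximum of $(T,\supseteq)$, is actually a maximal element of the whole forest $(F,\supseteq)$: if $y\supseteq V_T$, then $y$ and $V_T$ share a component, hence $y\in T$, and maximality of $V_T$ in $T$ forces $y=V_T$. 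Claim~\ref{lemma:set_forest.T=downarrow_V_T} then splits cleanly: the inclusion $T\subseteq\{x\in F\mid V_T\supseteq x\}$ is merely the statement that $V_T$ is the maximum of $T$, while conversely $V_T\supseteq x$ makes $x$ comparable to $V_T$, so $x$ lies in the component of $V_T$, namely $T$.

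For claim~\ref{lemma:set_forest.W(V_T)=w_subseteq_T} I would prove both inclusions by hand. If $w\in W(V_T)$, then every $x\in w$ is comparable to $V_T$ (as $w$ is a chain through $V_T$), and by maximality of $V_T$ in $F$ this forces $V_T\supseteq x$, whence $x\in T$ by claim~\ref{lemma:set_forest.T=downarrow_V_T}; thus $w\subseteq T$. Conversely, if $w\subseteq T$ is a maximal chain of $F$, then $w\cup\{V_T\}$ is again a chain, since $V_T\supseteq x$ for all $x\in w\subseteq T$; maximality of $w$ then forces $V_T\in w$, i.e.\ $w\in W(V_T)$.

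Claim~\ref{lemma:set_forest.W(x)} rests on a second consequence of the comparability principle: any maximal chain $w$ of $F$ that meets $T$ is already contained in $T$, because if $x\in w\cap T$ then each $y\in w$ is comparable to $x$ and hence lies in the component of $x$, which is $T$. With this observation I would show that, for fixed $x\in T$, the maximal chains of $F$ containing $x$ are exactly the maximal chains of $T$ containing $x$. A maximal chain $w$ of $F$ through $x$ lies in $T$ by the observation, and cannot be extended within $T\subseteq F$, so it is maximal in $T$; conversely, a maximal chain $w$ of $T$ through $x$ cannot be extended within $F$, since any extending element would be comparable to $x$, hence forced into $T$, contradicting maximality in $T$. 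The ``in particular'' statement follows by taking $x=V_T$ and noting that $V_T$, being the maximum of $T$, lies in every maximal chain of $T$.

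The only place demanding genuine attention, rather than routine unwinding of definitions, is the bidirectional equivalence of maximal chains in claim~\ref{lemma:set_forest.W(x)}: one must check that maximality transfers both from $F$ to $T$ and from $T$ back to $F$, and for the latter direction it is essential that an element extending a $T$-chain is forced back into $T$ by comparability. Everything else reduces to the comparability principle and the extremal property of $V_T$.
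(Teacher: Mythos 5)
Your proof is correct and follows essentially the same route as the paper's: both arguments reduce everything to the fact (from Lemma~\ref{lemma:partion_of_forest}) that $\supseteq$-comparable nodes lie in the same connected component, combined with $V_T$ being the maximum of $T$. The only differences are cosmetic routing choices — you isolate the comparability principle and the maximality of $V_T$ in all of $F$ as explicit preliminaries, and in the converse direction of claim~\ref{lemma:set_forest.W(x)} you compare an extending element with $x$ directly rather than first establishing $V_T\in w$ as the paper does.
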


\begin{proof}
	(Ad 1):~ If $x\in T$, then $V_T\supseteq x$, because $V_T$ is the root of $T$. Conversely, let $x\in F$ such that $V_T\supseteq x$. In combination with $V_T\in T$, we obtain $x\in T$, because $T\in\mc F$.\smallskip
	
	(Ad 2):~ Let $w\in W(V_T)$ and $x\in w$, then $V_T\supseteq x$ or $x\supseteq V_T$, whence $x\in T$, by definition of $V_T$ and $\mc F$. Thus $w\subseteq T$. Conversely, let $w\in W$ such that $w\subseteq T$. By definition of $V_T$, any $x\in w$ satisfies $V_T \supseteq x$. Because of maximality of the chain $w$, we must have $V_T\in w$.\smallskip
	
	(Ad 3):~ Let $x\in T$.
	
	Let $w\in W(x)$. Then $w\subseteq T$ since, for any $y\in w$ we have $x\supseteq y$ or $y\supseteq x$, whence $y\in T$. The chain $w$ is also maximal in $(T,\supseteq)$ because $T$ is a subset of $F$ and $w$ is maximal in $(F,\supseteq)$ by assumption.
	
	Now let $w$ be a maximal chain in $(T,\supseteq)$ with $x\in w$. By definition of $V_T$, we have $V_T\supseteq y$ for all $y\in w$, and maximality of the chain $w$ implies that $V_T\in w$. Let $y\in F$ be such that $w\cup\{y\}$ is a chain in $(F,\supseteq)$. Hence, $V_T\supseteq y$ or $y\supseteq V_T$, in particular we get $y\in T$. As $w$ is maximal in $(T,\supseteq)$, $y\in w$. Hence $w$ is already maximal in $(F,\supseteq)$.
	
	The second part follows from the first, since all maximal chains $w$ in $(T,\supseteq)$ satisfy $V_T\in w$. Indeed, as a maximal chain in the rooted tree $(T,\supseteq)$, $w$ is non-empty, and any element $x\in w\subseteq T$ satisfies $V_T\supseteq x$ by definition of $V_T$. By maximality of the chain $w$, we infer $V_T\in w$.
\end{proof}

\begin{lemma}\label{lemma:game_forest}
	Let $F$ be a decision forest on a set $V$ with $f$ as in Definition~\ref{def:decision_forest}. Let $\mc F$ be the set of $(F,\supseteq)$'s connected components. Then, for any $v\in V$ and any $T\in\mc F$, we have $f(v)\subseteq T$ iff $v\in V_T$.
\end{lemma}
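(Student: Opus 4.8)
The plan is to establish the two-sided implication by a short chain of equivalences that translates the chain-containment $f(v)\subseteq T$ into the membership $v\in V_T$, relying only on Lemma~\ref{lemma:set_forest} and the defining property of $f$ in Definition~\ref{def:decision_forest}.\ref{def:decision_forest:repr_by_dec_paths}. In particular, I would deliberately avoid invoking Proposition~\ref{prop:f(v)=uparrow v}, so as to keep this preparatory lemma (and hence the forthcoming proof of Theorem~\ref{thm:decision_forest=forest_of_decision_trees}) independent of the three cited propositions, as announced at the start of the appendix.

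First I would note that $f(v)$ is, by construction, an element of $W$, i.e.\ a maximal chain in $(F,\supseteq)$; hence the condition $f(v)\subseteq T$ is the same as saying $f(v)$ lies in the set $\{w\in W\mid w\subseteq T\}$ of maximal chains contained in $T$. By Lemma~\ref{lemma:set_forest}.\ref{lemma:set_forest.W(V_T)=w_subseteq_T}, this set is precisely $W(V_T)$, so $f(v)\subseteq T$ is equivalent to $f(v)\in W(V_T)$. Next, since $V_T$ is the root of $T$ and thus $V_T\in T\subseteq F$, Definition~\ref{def:decision_forest}.\ref{def:decision_forest:repr_by_dec_paths} applies with $y=V_T$ and gives $W(V_T)=(\mc P f)(V_T)=\{f(u)\mid u\in V_T\}$. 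Therefore $f(v)\in W(V_T)$ is equivalent to $f(v)\in\{f(u)\mid u\in V_T\}$, which, because $f$ is a bijection and in particular injective, holds if and only if $v\in V_T$. Concatenating these equivalences yields the claim $f(v)\subseteq T \iff v\in V_T$.

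This argument is essentially a bookkeeping exercise, so I do not anticipate a genuine obstacle; the only real content is Lemma~\ref{lemma:set_forest}.\ref{lemma:set_forest.W(V_T)=w_subseteq_T}, which has already been established. The one point demanding care is to keep the two layers of subsets straight: $f(v)$ and $T$ are subsets of $F$ (a maximal chain and a connected component, i.e.\ collections of nodes), whereas $V_T$ and the outcomes $u\in V_T$ are subsets of $V$. The equivalence precisely hinges on the interplay between membership $V_T\in f(v)$ at the node level and membership $v\in V_T$ at the outcome level, mediated by the bijection $f$ and the representation property of Definition~\ref{def:decision_forest}.\ref{def:decision_forest:repr_by_dec_paths}.
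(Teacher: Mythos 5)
Your proof is correct, and it takes a genuinely different (though equally short) route from the paper's. The paper's own proof is a two-step chain: first, $v\in V_T$ iff $V_T\in f(v)$, by Proposition~\ref{prop:f(v)=uparrow v}; second, $V_T\in f(v)$ iff $f(v)\in W(V_T)$ iff $f(v)\subseteq T$, by the definition of $W(V_T)$ and Lemma~\ref{lemma:set_forest}, Part~\ref{lemma:set_forest.W(V_T)=w_subseteq_T}. You share the second step but replace the first by a direct appeal to the representation property $(\mc P f)(V_T)=W(V_T)$ from Definition~\ref{def:decision_forest}.\ref{def:decision_forest:repr_by_dec_paths} combined with the injectivity of $f$, so that $f(v)\in\{f(u)\mid u\in V_T\}$ forces $v\in V_T$. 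Both arguments hinge on Lemma~\ref{lemma:set_forest}; the difference is only in which consequence of the definition of a decision forest carries the remaining weight. Your variant has the merit of actually honouring the appendix's announced intention to prove Theorem~\ref{thm:decision_forest=forest_of_decision_trees} without invoking the three propositions attributed to \cite{AlosFerrer2005} --- the paper's written proofs of this lemma and of the theorem do in fact lean on Proposition~\ref{prop:f(v)=uparrow v}, which slightly undercuts that announcement --- at no cost, since the bijectivity of $f$ is part of the hypothesis. The one point you flag, keeping the node-level membership $V_T\in f(v)$ separate from the outcome-level membership $v\in V_T$, is exactly the right thing to be careful about, and your chain of equivalences handles it correctly.
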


\begin{proof}
	Let $v\in V$ and $T\in\mc F$. By Proposition~\ref{prop:f(v)=uparrow v}, $v\in V_T$ iff $V_T\in f(v)$. By Lemma~\ref{lemma:set_forest}, Part~\ref{lemma:set_forest.W(V_T)=w_subseteq_T}, this is equivalent to $f(v) \subseteq T$.
\end{proof}

\begin{proof}[Proof of Theorem~\ref{thm:decision_forest=forest_of_decision_trees}]
	(Ad necessity):~ Suppose $F$ to be a decision forest on $V$. Let $W$ be the set of maximal chains in $(F,\supseteq)$, and $f\colon V \rightarrow W$ a bijection such that, for every $x\in F$, $(\Pot f)(x) = W(x)$, just as in Definition~\ref{def:decision_forest}.
	
	We claim that for all $v\in V$ there is a unique $T\in\mc F$ such that $f(v)\subseteq T$. Property \ref{thm:decision_forest=forest_of_decision_trees.partition} then follows from Lemma~\ref{lemma:game_forest}.
	
	For the proof, let $v\in V$. As s maximal chain in the rooted forest $(F,\supseteq)$, $f(v)$ is non-empty. Hence, there is $x\in f(v)$.
	As $\mc F$ is a partition of $F$, there is a unique $T\in\mc F$ with $x\in T$. For arbitrary $y\in f(v)$, we have $x\subseteq y$ or $y\subseteq x$, since $f(v)$ is a chain. Hence, $x\sim y$, or in other words $y\in T$. Thus $T$ is the unique element of $\mc F$ with $f(v)\subseteq T$.
	
	For Property \ref{thm:decision_forest=forest_of_decision_trees.T_decision_tree}, let $T\in\mc F$. Then $(T,\supseteq)$ defines a rooted tree with root $V_T$, by definition of $\mc F$. Moreover, for all $v\in V$, we have $f(v)\in W(V_T)$ iff $v\in V_T$, by Proposition~\ref{prop:f(v)=uparrow v}. Hence, $f_T:=f|_{V_T}$ yields a bijection $V_T \rightarrow W(V_T)$. 
	
	By Lemma~\ref{lemma:set_forest}, Part~\ref{lemma:set_forest.W(x)}, second sentence, $W(V_T)$ equals the set of maximal chains in $(T,\supseteq)$. By definition of $V_T$, all $x\in T$ satisfy $V_T\supseteq x$ so that the definition of $f$ yields:
    \[ (\mc P f|_T)(x) = (\mc P f)(x) = W(x). \]
    By Lemma~\ref{lemma:set_forest}, Part~\ref{lemma:set_forest.W(x)}, first sentence, $W(x)$ consists exactly of all maximal chains in $(T,\supseteq)$ containing $x$. Hence, $T$ is its own representation by decision paths.\smallskip
	
	(Ad sufficiency):~ Let $F$ satisfy Properties \ref{thm:decision_forest=forest_of_decision_trees.partition} and \ref{thm:decision_forest=forest_of_decision_trees.T_decision_tree} from the theorem. It remains to show that $F$ is its own representation by decision paths, i.e.\ Part~\ref{def:decision_forest:repr_by_dec_paths} in Definition~\ref{def:decision_forest}.
	
	By Property \ref{thm:decision_forest=forest_of_decision_trees.T_decision_tree} and Lemma~\ref{lemma:set_forest}, Part~\ref{lemma:set_forest.W(x)}, for each $T\in\mc F$, there is a bijection $f_T\colon V_T\rightarrow W(V_T)$ such that for all $x\in T$, $(\mc P f_T)(x)$ equals the set of decision paths in $(T,\supseteq)$ containing $x$. By Part~\ref{lemma:set_forest.W(x)} of Lemma~\ref{lemma:set_forest}, this set is equal to $W(x)$. By Property \ref{thm:decision_forest=forest_of_decision_trees.partition}, we obtain a map $f\colon V \rightarrow W$ satisfying, for any $T\in\mc F$ and $v\in V_T$, $f(v) = f_T(v)$. $f$ is injective by Lemma~\ref{lemma:set_forest}, Part~\ref{lemma:set_forest.W(V_T)=w_subseteq_T}, since $\mc F$ is a partition. $f$ is also surjective, hence bijective, since $F\neq\emptyset$ and so every maximal chain $w\in W$ contains an element $x$ for which there is $T\in\mc F$ with $x\in T$. Hence, $w\in W(x)$ which is a subset of $W(V_T)$ by Lemma~\ref{lemma:set_forest}, Part~\ref{lemma:set_forest.W(x)}. Hence, there is $v\in V_T$ with $f(v) = f_T(v) = w$.
	
	Let $x\in F$. Then there is $T\in\mc F$ with $x\in T$. Then $V_T \supseteq x$. Hence, $(\Pot f)(x) = (\Pot f_T)(x)$. By definition of $f_T$ and Part~\ref{lemma:set_forest.W(x)} of Lemma~\ref{lemma:set_forest}, we have  
	\[ (\Pot f_T)(x) = \{ w\in W(V_T) \mid x\in w \}. \]
	Lemma~\ref{lemma:set_forest}, Part~\ref{lemma:set_forest.W(x)}, implies that the latter set equals $W(x)$. Thus, $(\mc P f)(x) = W(x)$.
\end{proof}

\subsection{Section~\ref{sec:sdf}}

\begin{proof}[Proof of Lemma~\ref{lemma:sdf.X.roots}]
    (Ad Part~\ref{lemma:sdf.X.roots.OC}):~ Suppose that $(F,\pi,\X)$ is order consistent. Let $\x\in\X$. Suppose there is $\omega\in D_\x$ such that $\x(\omega) = W_\omega$. Let $\omega'\in D_\x$. It suffices to show that $\x(\omega') = W_{\omega'}$.
    
    By the covering property in Axiom~\ref{def:sdf.X.cov}, there is $\x'\in\X$ such that $\x'(\omega') = W_{\omega'}$. Then $\x'(\omega') \supseteq \x(\omega')$. Hence, $\x' \ge_\X \x$ by Axiom~\ref{def:sdf.X.OC}. Thus, by definition of $\ge_\X$, $\omega\in D_{\x'}$ and $W_\omega \supseteq \x'(\omega) \supseteq \x(\omega) = W_\omega$, whence $\x'(\omega) = \x(\omega)$ which in turn implies $\x' = \x$ by order consistency, Axiom~\ref{def:sdf.X.OC}. Thus, $\x(\omega') = \x'(\omega') = W_{\omega'}$.\smallskip

    (Ad Part~\ref{lemma:sdf.X.roots.surely_non-trivial}):~ $D=\Omega$ iff for all $\omega\in\Omega$, $W_\omega\in X$. In view of Axiom~\ref{def:sdf.conn_comp}, the set of roots of $(F,\supseteq)$ is given by precisely all $W_\omega$, $\omega\in\Omega$. The claimed equivalence follows.\smallskip

    (Ad Part~\ref{lemma:sdf.X.roots.root_random_move}):~ Suppose that $(F,\pi,\X)$ is order consistent and maximal, and such that $X\neq\emptyset$. It suffices to show the claim $(\ast)$ that there is exactly one $\x\in\X$ whose image contains a root of $(F,\supseteq)$. Indeed, then, by Axiom~\ref{def:sdf.X.cov}, $D\subseteq D_\x$ and $\x(\omega) = W_\omega = \x_0(\omega)$ for all $\omega\in D$, and by Part~\ref{lemma:sdf.X.roots.OC} of this lemma, proven just before, $D=D_\x$, whence $\x_0 = \x\in\X$.

    Regarding the proof of $(\ast)$, note that by hypothesis there is $x\in X$. As $W_{\pi(x)}\supseteq x$, $W_{\pi(x)}\in X$. By Axiom~\ref{def:sdf.X.cov}, there is $\x\in\X$ with $x\in\im\x$. Hence, if $(\ast)$ did not hold true, there would have to be at least two distinct random moves $\x_1,\x_2\in\X$ whose images contain roots of $F$. By Part~\ref{lemma:sdf.X.roots.OC}, the images of both $\x_1$ and $\x_2$ then would contain roots only. As $\x_1\neq\x_2$, by Axiom~\ref{def:sdf.X.OC}, we would have $D_{\x_1}\cap D_{\x_2} = \emptyset$. Then, let $\bar\x = \x_1 \cup \x_2$,\footnote{That is, $\bar\x\colon D_{\x_1} \cup D_{\x_2}\to X$ such that $\bar\x(\omega) = \x_k(\omega)$ for $\omega\in D_{\x_k}$, for both $k=0,1$.} 
    and $\bar\X = (\X \setminus \{\x_1,\x_2\}) \cup \{\bar\x\}$. 
    
    Clearly, $D_{\bar\x} = D_{\x_1} \cup D_{\x_2}$ would be a non-empty element of $\ms E$ and $\bar\x$ would be a section of $\pi$ on $D_{\bar\x}$. It is evident that $\bar\X$ would be order consistent and induce a covering of $X$. By construction, $\X$ would refine $\bar\X$. Thus, maximality (see Axiom~\ref{def:sdf.X.max}) would imply the false statement $\bar\X = \X$. Hence, $(\ast)$ obtains and the proof is complete.
\end{proof}

\begin{proof}[Proof of Proposition~\ref{prop:ev_on_Tr_is_iso}]
    (Ad ``order embedding''):~ Let $(\y_1,\omega_1),(\y_2,\omega_2)\in\Tr\bullet\Omega$. If both $\y_1,\y_2$ are elements of $\X$, the implication ``$\Rightarrow$'' is a consequence of the definition of $\ge_\X$, and the other implication ``$\Leftarrow$'' can be shown as follows: If $\y_1(\omega_1) \supseteq \y_2(\omega_2)$, then both moves belong to the same connected component of the rooted forest $(F,\supseteq)$ (whose existence Axiom~\ref{def:sdf.df} ensures). By Axioms~\ref{def:sdf.conn_comp} and \ref{def:sdf.X.section}, $\omega_1 = \omega_2$. Then, Axiom~\ref{def:sdf.X.OC} implies that $\y_1 \ge_\X \y_2$, whence $\y_1\ge_\Tr \y_2$. 
    
    Else, for some $k=1,2$, $\y_k$ is a random terminal node so that $D_{\y_k} = \{\omega_k\}$ and $\y_k(\omega_k) = \{w_k\}$ for some $w_k\in W$.    
    If $k=1$, then first, $\y_1 \ge_\Tr \y_2$ is equivalent to $\y_1 = \y_2$. Second, $\y_1(\omega_1) \supseteq \y_2(\omega_2)$ is equivalent to $\y_1(\omega_1) = \y_2(\omega_2)$, because $\y_1(\omega)$ is terminal in $F$. But then $\y_2$ must be a random terminal node as well, whence $\y_1 = \y_2$ and $\omega_1=\omega_2$. Conversely, if these two equalities hold true, then $\y_1(\omega_1) = \y_2(\omega_2)$ follows.
    
    If $k=2$, then $\y_1(\omega_1) \supseteq \y_2(\omega_2)$ is equivalent to the conjunction of these three statements: $\omega_1=\omega_2$, $\omega_2\in D_{\y_1}$, and $w_2\in \y_1(\omega_1)$. By definition of $\ge_\Tr$, this is equivalent to $\y_1 \ge_\Tr \y_2$ and $\omega_1 = \omega_2$.\smallskip

    (Ad ``bijection''):~ Let $(\y_1,\omega_1),(\y_2,\omega_2)\in\Tr\bullet\Omega$ such that $\y_1(\omega_1) = \y_2(\omega_2)$. As the evaluation map is an order embedding, which we have proven just before, we obtain $\y_1 = \y_2$ and $\omega_1 = \omega_2$. Regarding surjectivity, let $y\in F$. If $y$ is terminal, then there is $w\in W$ with $y = \{w\}$ because $F$ is a decision forest on $W$.\footnote{Although this is discussed in \cite{AlosFerrer2005}, we give an argument here for the reader's convenience. Let $f$ be the map from Definition~\ref{def:decision_forest}. If $y$ is terminal, then $(\mc P f)(y) = W(y)$ contains exactly one element, namely $\uparrow y$. As $f$ is injective, $y$ must be a singleton.}
    Hence, $y = \y(\omega)$ for $\omega = \pi(y)$ and the random terminal node $\y = \{(\omega,\{w\})\}$. If $y$ is a move, then, by the definition of $\X$, Property \ref{def:sdf.X.cov}, there is $\x\in\X$ and $\omega\in D_\x$ such that $\x(\omega) = y$.
\end{proof}

\begin{proof}[Proof of Theorem~\ref{thm:Xrm_is_dec_tree}]
    (Ad ``poset''):~ It follows directly from the definitions that $\ge_\Tr$ defines a partial order on $\Tr$.\smallskip
    
    (Ad ``forest''):~ Let $\y_0\in\Tr$ and $\y_1,\y_2\in\uparrow \y_0$. As $D_{\y_0}\neq\emptyset$ by Definition~\ref{def:sdf} and the definition of random terminal nodes, there is $\omega\in D_{\y_0}$, and we have $\omega\in D_{\y_k}$ and $\y_k(\omega) \supseteq \y_0(\omega)$ for both $k=1,2$. As $(F,\supseteq)$ is a forest, $\uparrow \y_0(\omega)$ is a chain, thus $\y_1(\omega) \supseteq \y_2(\omega)$ or $\y_2(\omega) \supseteq \y_1(\omega)$. Hence, by Proposition~\ref{prop:ev_on_Tr_is_iso}, we get $\y_1\ge_\Tr \y_2$ or $\y_2\ge_\Tr \y_1$. We conclude that $(\Tr,\ge_\Tr)$ is a forest. \smallskip

    (Ad ``decision forest''):~ Let $\y_1,\y_2\in\Tr$ be such that $\y_1\neq \y_2$. If they are not comparable by $\ge_\Tr$, that is neither $\y_1 >_\Tr \y_2$ nor $\y_2>_\Tr \y_1$, then any maximal chain $\w$ in $(\Tr,\ge_\Tr)$ with $\y_1\in\w$ satisfies $\y_2\notin\w$. Using the axiom of choice in the form of the Hausdorff maximality principle, there exists indeed such a $\w$. 
    
    For symmetry reasons, it remains to consider the case in that we have $\y_1 >_\Tr \y_2$. Then, by Proposition~\ref{prop:ev_on_Tr_is_iso}, we have for all $\omega^\prime\in D_{\y_2}$, $\y_1(\omega^\prime) \supsetneq \y_2(\omega^\prime)$. As $D_{\y_2} \neq \emptyset$, we can select $\omega\in D_{\y_2}$. We can also select $w\in \y_1(\omega) \setminus \y_2(\omega)$. In particular, $\y_2(\omega) \notin \uparrow \{w\}$. As the latter set is a maximal chain in $(F,\supseteq)$, there is $y_3\in \uparrow \{w\}$ such that $y_3$ and $\y_2(\omega)$ cannot be compared by $\supseteq$. As $\y_1(\omega)$ is also an element of the chain $\uparrow \{w\}$, this implies $\y_1(\omega) \supsetneq y_3$. By Proposition~\ref{prop:ev_on_Tr_is_iso}, there is $\y_3\in\Tr$ such that $\omega\in D_{\y_3}$ and $y_3 = \y_3(\omega)$. This lemma also implies that $\y_2$ and $\y_3$ are not comparable via $\ge_\Tr$. Hence, as we have proven just above, there is a maximal chain $\w$ in $(\Tr,\ge_\Tr)$ with $\y_2\notin\w$ and $\y_3\in\w$. By Proposition~\ref{prop:ev_on_Tr_is_iso}, we obtain $\y_1 \ge_\Tr \y_3$, and using the fact that $(\Tr,\ge_\Tr)$ is a forest, as already proven, we infer that $\y_1\in\w$. Hence there is a maximal chain in $(\Tr,\ge_\Tr)$ separating $\y_1$ and $\y_2$.\smallskip

    (Ad ``rooted tree''):~ In view of the preceding results, it suffices to show that $(\Tr,\ge_\Tr)$ has a maximum. As $(F,\supseteq)$ is non-empty (as a rooted forest) and non-trivial (by sure non-triviality and non-emptiness), $X\neq\emptyset$. By Lemma~\ref{lemma:sdf.X.roots}, Parts \ref{lemma:sdf.X.roots.surely_non-trivial} and \ref{lemma:sdf.X.roots.root_random_move}, the map $\x_0\colon\Omega\to X,\,\omega\to W_\omega$ is a random move, i.e.\ $\x_0\in\X$. Then, we have $\x_0 \ge_\Tr \y$ for all $\y\in\Tr$. Indeed, for all $\y\in\Tr$ and $\omega\in D_\y$, $W_\omega$ is a root of the tree $(T_\omega,\supseteq)$, whence $W_\omega\supseteq \y(\omega)$.\smallskip

    (Ad $\X$ = set of moves):~ Let $\x\in\X$. Then there is $\omega\in D_\x$ and, by Axiom~\ref{def:sdf.X.section}, $\x(\omega)\in X$. Hence, there is $y\in F$ with $\x(\omega) \supsetneq y$. By Proposition~\ref{prop:ev_on_Tr_is_iso}, there is unique $\y\in\Tr$ with $\omega\in D_\y$ and $y = \y(\omega)$, and whence, by the same proposition, $\x >_\Tr \y$. Hence, $\x$ is a move in $(\Tr,\ge_\Tr)$. 

    On the other hand, the elements of $\Tr \setminus \X$, i.e.\ the random terminal nodes, are terminal nodes in $(\Tr,\ge_\Tr)$ directly by definition of $\ge_\Tr$. Hence, $\X$ is the set of moves in $(\Tr,\ge_\Tr)$.
\end{proof}

We continue with the verification of the presented examples of stochastic decision forests.

\begin{proof}[Proof of Lemma~\ref{lemma:simple_sdf1}]
    $(\Omega,\ms E)$ is a measurable space. $F$ is clearly a $W$-poset and the map assigning to any $w\in W$ the chain $\uparrow \{w\}$ is easily seen to be a bijection between $W$ and the set of maximal chains in $(F,\supseteq)$. Moreover, any chain in $(F,\supseteq)$ contains a maximal element, hence, by \cite[Theorem~3]{AlosFerrer2005}, expressed in the language of the present paper, $F$ a decision forest on $W$ (Axiom~\ref{def:sdf.df}).
   
    The connected components of $(F,\supseteq)$ are easily identified as 
    \[ \{\x_k(\omega) \mid k=0,1,2\} \cup \{\{(\omega,k,m)\} \mid k,m\in\{1,2\}\}, \]
    ranging over $\omega\in\Omega$. $\pi$ is well-defined because all nodes are non-empty subsets of $W$ and its definition does not depend on the choice of the element whose first entry is evaluated. 
    The connected component spelled out above equals $\pi^{-1}(\{\omega\})$ indeed. Hence, Axiom~\ref{def:sdf.conn_comp} is satisfied.

    For all $k=0,1,2$, $\x_k$ is defined on $\Omega$ which is an event. Clearly, $\x_k$ is a section of $\pi$. The union of the images of $\X$'s elements is the set of nodes with at least two elements. As $\{\{w\} \mid w\in W\} \subseteq F$, this union equals $X$. Thus Axioms~\ref{def:sdf.X.section} and \ref{def:sdf.X.cov} are satisfied.

    Furthermore, the relations between the images of all presumed random moves, put aside equalities, amount to
    \[ \forall \omega\in\Omega\forall k=1,2\colon \quad \x_0(\omega) \supseteq \x_k(\omega). \]
    Hence, put aside equalities, $\ge_\X$ is given by $\x_0 \ge_\X \x_k$, for $k=1,2$. Moreover, the roots of $F$ are given by the image of $\x_0$. It follows that $\X$ satisfies Axioms~\ref{def:sdf.X.OC} and \ref{def.sdf.X.surely_NT}.

    As the domains of all $\x\in\X$ are equal to the sample space $\Omega$, the maximality property of Axiom~\ref{def:sdf.X.max} is satisfied. 
\end{proof}

\begin{proof}[Proof of Lemma~\ref{lemma:simple_sdf2}]
    The proof is almost completely analogous to the preceding one, up to small modifications. We only comment on those. Thus, we omit the first two axioms.

    For all $k=0,1$, $\x_k$ is defined on $\Omega$ as in the previous lemma. For $k=2$, it has to be noted that $\{\omega_2\}\in\ms E$. Clearly, $\x'$ is a section of $\pi$ for all $\x'\in\X'$. The union of the images of the elements of $\X'$ is the set of nodes with at least two elements. Again, as $\{\{w'\} \mid w'\in W\} \subseteq F'$, this union equals $X'$. It follows that $\X'$ satisfies Axioms~\ref{def:sdf.X.section} and \ref{def:sdf.X.cov}.

    Furthermore, the relations between the images of all presumed random moves, put aside equalities, amount to
    \[ \forall \omega\in\Omega\colon \quad \x'_0(\omega) \supseteq \x'_1(\omega), \qquad \x'_0(\omega_2) \supseteq \x'_2(\omega_2).\]
    Hence, put aside equalities, $\ge_{\X'}$ is given by $\x'_0 \ge_{\X'} \x'_1$ and $\x'_0 \ge_{\X'} \x'_2$. The roots of $F'$ are given by the image of $\x'_0$. It follows that $\X'$ satisfied Axioms~\ref{def:sdf.X.OC} and \ref{def.sdf.X.surely_NT}.

    The maximality property of Axiom~\ref{def:sdf.X.max} requires only a little bit more justification this time. If $\bar\X'$ is such that $(F,\pi,\bar\X')$ is an \textsc{sdf} and $\X'$ refines $\bar\X'$, then any $\bar\x'\in\bar\X'$ with values in $\im\x'_1 \cup \im\x'_0$, we have $\bar\x'\in\{\x'_1,\x'_0\}$ because the domain of $\x'_1$ and $\x'_0$ is $\Omega$. Hence, there can and must be exactly one other element of $\bar\X'$, namely $\x'_2$. Hence, $\bar\X' = \X'$.
\end{proof}

\begin{proof}[Proof of Lemma~\ref{lemma:absent_minded_driver_Gilboa_sdf}]
    $(F,\pi,\X)$ is easily shown to be an \textsc{sdf}, similarly as above.

    $\rho$ is surjective by assumption. Let $\omega_k\in\rho^{-1}(\{k\})$, for $k=1,2$. Then for both of these $k$, $\x_k(\omega_k) \supsetneq \x_{3-k}(\omega_k)$, but, as a consequence, $\x_k\ngeq_\X \x_{3-k}$. Hence, $(F,\pi,\X)$ is not order consistent.
\end{proof}

\begin{proof}[Proof of Lemma~\ref{lemma:AP_sdf.AssmAP.SDF1}]
    (Ad ``$\Rightarrow$''):~ Suppose \hyperlink{Ass:AP.SDF1}{AP.SDF1} to hold and let $x\in F$ be not a singleton. Then, there are $(t,w)\in \T\times W$ with $x = x_t(w)$. Let $(t',w')\in\T\times W$ such that $x=x_{t'}(w')$. Hence, $w\in x_{t'}(w')$ and thus $x=x_{t'}(w)$. By Assumption~\hyperlink{Ass:AP.SDF1}{AP.SDF1}, $t=t'$. Hence, $\T_x = \{t\}$.\smallskip

    (Ad ``$\Leftarrow$'')~ Suppose the right-hand criterion to be satisfied. Let $w\in W$ and $t,u\in\T$ with $t\neq u$ and $x_t(w) = x_u(w)$. Hence, $\T_{x_t(w)}$ is not a singleton, and by our hypothesis $x_t(w)$ must be a singleton. As $w\in x_t(w)$, we get $x_t(w) = \{w\}$.
\end{proof}

\begin{proof}[Proof of Lemma~\ref{lemma:AP_sdf.AssmAP.SDF1_addon}]
    Let $x\in F$ such that $\T_x = \{t\}$ for some non-maximal $t\in\T$, and let $w\in x$. Then there is $u\in\T$ with $t<u$, whence $x = x_t(w) \supsetneq x_u(w)$. As $w\in x_u(w)$, $x$ contains at least two elements.
\end{proof}

\begin{proof}[Proof of Theorem~\ref{thm:AP_sdf}]
    (Ad Axiom~\ref{def:sdf.df}):~ As $F\subseteq \mc P (W)$, $F$ is a $W$-poset. We show that it is a ``bounded'' and ``irreducible'' ``$W$-set tree'', in the language of \cite{AlosFerrer2005}, and that any principal up-set in $(F,\supseteq)$ has a maximal element. From this, using \cite[Theorem~3]{AlosFerrer2005}, we obtain, expressed in the language of the present paper, that $F$ is decision forest on $W$.

    A ``$W$-set tree'' is a $W$-poset satisfying ``trivial intersection'' and ``separability'' (see \cite[Definition 3]{AlosFerrer2005}).    
    Regarding ``trivial intersection'' (defined in \cite[(4), p.\ 768]{AlosFerrer2005}), let $x,y\in F$ be such that $x\cap y \neq\emptyset$. Then there is $w\in W$ such that $w\in x \cap y$. Hence, $x$ and $y$ are, respectively, of the form $\{w\}$ or $x_t(w)$, for some $t\in\T$. In each of the four possible cases, we get $x\supseteq y$ or $y\supseteq x$. Hence, the ``trivial intersection'' property is satisfied.
    Regarding ``separability'' (defined in \cite[(8), p.\ 773]{AlosFerrer2005}), let $x,y\in F$ such that $x\supsetneq y$. Then there is $w\in y$, whence $w\in x$. $x$ has at least two elements, thus $x = x_t(w)$ for some $t\in\T$. Let $w'\in x\setminus y$. Then $x \supseteq \{w'\}$ and $y \cap \{w'\} = \emptyset$. Hence, the ``separability'' property is satisfied, and $F$ is a $W$-set tree.

    Regarding ``irreducibility'', let $w_0,w_1\in W$ with $w_0\neq w_1$. Then $\{w_k\}\in F$ satisfies $w_k\in\{w_k\} \setminus \{w_{1-k}\}$ for both $k=0,1$. Whence irreducibility. 
    
    Regarding ``boundedness'', let $c$ be a chain in $(F,\supseteq)$. It suffices to consider the case where $c\neq \emptyset$. First, consider the case that there is $w\in W$ with $\{w\} \in c$. In that case, any $x\in c$ is non-empty and hence the chain property of $c$ implies $w\in x$. Second, if there is no $w\in W$ with $\{w\} \in c$, then there must be a unique $\omega\in\Omega$ such that 
    \[ \T' = \{t'\in\T \mid \exists f_{t'} \in\A^\T\colon x_{t'}(\omega,f_{t'})\in c\} \]
    is non-empty. Uniqueness holds true because if $t_1,t_2\in\T$ and $(\omega_1,f_1), (\omega_2,f_2)\in W$ satisfy $x_{t_1}(\omega_1,f_1) \supseteq x_{t_2}(\omega_2,f_2)$, then, by definition, $\omega_1 = \omega_2$. 
    
    Let $\tilde \T = \{t\in\T \mid \exists t'\in\T'\colon t < t'\}$, a convex subset of $\T$. As $c$ is a chain and by definition of $\T'$, for all $t\in\tilde \T$, there is a unique $a(t)\in\A$ such that for all $t'\in\T'$ with $t < t'$ and $f_{t'}\in \A^\T$ satisfying $x_{t'}(\omega,f_{t'})\in c$ we have $f_{t'}(t) = a(t)$. Let $\tilde f\in\A^\T$ be such that $\tilde f(t) = a(t)$ for all $t\in \tilde \T$. Hence, all $t'\in\T'$ satisfy $x_{t'}(\omega,\tilde f)\in c\subseteq F$. By Assumption~\hyperlink{Ass:AP.SDF2}{AP.SDF2}, there is $f\in\A^\T$ such that $(\omega,f)\in W$ and $x_{t'}(\omega,f) = x_{t'}(\omega,\tilde f)\in c$ for all $t'\in\T'$.

    Hence, as $c$ contains no singleton, we get
    \[ (\ast) \qquad c = \{ x_{t'}(\omega,f) \mid t'\in\T'\}. \]
    Hence, for all $x\in c$, we have $(\omega,f)\in x$. The proof of ``boundedness'' is complete.

    It remains to be shown that any principal up-set contains a maximal element. The crucial point here is that for any $x\in F$, there is $\omega\in\Omega$ such that $x\subseteq (\{\omega\} \times \A^\T)\cap W$. Moreover, for any $w\in x$, $(\{\omega\} \times \A^\T)\cap W = x_0(w)$. Thus $x_0(w)\in \uparrow x$, and if $y\in\uparrow x_0(w)$, then $y = x_0(w)$. Thus, $x_0(w) \in\uparrow x$ is a maximal element for $(F,\supseteq)$.\smallskip

    (Ad Axiom~\ref{def:sdf.conn_comp}):~ Recall from the preceding argument that any $x\in F$ is contained in $(\{\omega\} \times \A^\T)\cap W$ for some $\omega\in\Omega$. Hence, $\pi(x)$ is well-defined and equal to this uniquely determined $\omega$. Let $x,y\in F$. Then $\pi(x) = \pi(y)$ iff there is $\omega\in\Omega$ such that $x,y \subseteq (\{\omega\} \times \A^\T)\cap W$. As $(\{\omega\} \times \A^\T)\cap W$ is a root of $(F,\supseteq)$, as seen just above, this is equivalent to the statement that $x,y$ belong to the same connected component.\smallskip

    (Ad Axiom~\ref{def:sdf.X.section}):~ By definition of the sets $D_{t,f}$, $(t,f)\in\T\times\A^\T$, and in view of the fact that $F$ contains all singletons in $\mc P(W)$, all elements of $\X$ map into the set $X$ of moves of the decision forest $F$ on $W$. Moreover, by Assumption~\hyperlink{Ass:AP.SDF0}{AP.SDF0}, $D_{t,f}\in\ms E$. Furthermore, any $\x\in\X$ is a section of $\pi$. For this represent $\x$ as $\x_t(f)$ for $(t,f)\in\T\times\A^\T$ such that $D_{t,f}\neq\emptyset$, and let $\omega\in D_{t,f}$. Then, 
    \[ \pi(\x_t(f)(\omega)) = \pi(x_t(\omega,f)) = \omega, \]
    because $\omega$ is the first component of $(\omega,f)\in x_t(\omega,f)$.\smallskip

    (Ad Axiom~\ref{def:sdf.X.cov}):~ Let $x\in X$. Then $x$ has at least two elements, because $F$ is a decision forest on $W$. Hence, there are $(\omega,f)\in W$ and $t\in\T$ such that $x = x_t(\omega,f)$. Hence, $\omega\in D_{t,f}$, and $x = \x_t(f)(\omega)$.\smallskip

    (Ad Axiom~\ref{def:sdf.X.OC}):~ Let $\x_1,\x_2\in\X$ such that there is $\omega\in D_{\x_1}\cap D_{\x_2}$ with $\x_1(\omega) \supseteq \x_2(\omega)$. Hence, there are $t_1,t_2\in\T$ and $f\in\A^\T$ such that $\omega \in D_{t_k,f}$ and $\x_k = \x_{t_k}(\omega,f)$, for both $k=1,2$. If we had $t_1 > t_2$, then, by construction of the nodes and by Assumption~\hyperlink{Ass:AP.SDF1}{AP.SDF1}, $\x_1(\omega) = x_{t_1}(\omega,f) \subsetneq x_{t_2}(\omega,f) = \x_2(\omega)$, because they contain at least two elements as moves -- a contradiction. Hence, $t_1\le t_2$.

    Let $\omega'\in D_{t_2,f}$. By definition, $x_{t_2}(\omega',f)$ is a move. As $t_1 \le t_2$, $x_{t_1}(\omega',f) \supseteq x_{t_2}(\omega',f)$, whence $\omega'\in D_{t_1,f}$. In particular, $\x_1(f)(\omega') \supseteq \x_2(f)(\omega')$. Thus, $\x_1 \ge_\X \x_2$.\smallskip

    (Ad Axiom~\ref{def:sdf.X.max}):~ Let $\bar\X$ be a set such that $(F,\pi,\bar\X)$ is an order consistent \textsc{sdf} and that is refined by $\X$. By definition of the latter, for any $\bar\x\in\bar\X$ there is $P_{\bar\x}\subseteq\X$ such that $\bar\x = \bigcup P_{\bar\x}$. 
    As $\X$ and $\bar\X$ induce partitions of $X$, by Proposition~\ref{prop:ev_on_Tr_is_iso}, and as $D_\x\neq \emptyset$ for all $\x\in\X$ by construction of $\X$, we infer that there is a unique map $b\colon\X \to \bar\X, \x\mapsto b(\x)$ such that $D_\x\subseteq D_{b(\x)}$ and $b(\x)|_{D_\x} = \x$, for all $\x\in\X$. 
    For each pair $(t,f)\in\T\times\A^\T$ with $D_{t,f}\neq\emptyset$, denote $\bar \x_t(f) = b({\x_t(f)})$ the image of $\x_t(f)$ under this map.

    Our aim is to show that for all $\bar\x\in\bar\X$, we have $P_{\bar\x} = \{\bar\x\}$. If this were not the case, there would be $\bar\x\in\bar\X$ such that $P_{\bar\x}$ has at least two elements. Hence, there would be $t,u\in\T$ and $f,g\in\A^\T$ with $D_{t,f},D_{u,g}\neq\emptyset$, $\x_t(f) \neq \x_u(g)$, and $\x_t(f),\x_u(g)\in P_{\bar\x}$. This would imply 
    \[ (\dagger)\qquad D_{t,f} \cap D_{u,g} = \emptyset. \] 
    Indeed, otherwise there would be $\omega \in D_{t,f}\cap D_{u,g}$ for that we would obtain $x_t(\omega,f) = \bar\x(\omega) = x_u(\omega,g)$. Hence, for $v = t\wedge u$, we would have $f|_{[0,v)_\T} = g|_{[0,v)_\T}$ and there would be $w\in W$ with $x_t(\omega,f) = x_t(w) = x_u(w) = x_u(\omega,g)$. As these are moves, Assumption~\hyperlink{Ass:AP.SDF1}{AP.SDF1} would imply $t=u=v$, whence $\x_t(f) = \x_u(g)$ which is absurd. Hence, we would necessarily have $D_{t,f} \cap D_{u,g} = \emptyset$.
    
    We can assume, without loss of generality, that $t\le u$. 
    There would be two cases. First, $D_{t,f} \cap D_{t,g} \neq \emptyset$. Then, $f|_{[0,t)_\T} \neq g|_{[0,t)_\T}$, because else $D_{t,f} = D_{t,g}$ whence $D_{u,g} \subseteq D_{t,g} = D_{t,f}$ which contradicts $(\dagger)$. The second case would be $D_{t,f} \cap D_{t,g} = \emptyset$. Then, by Assumption~\hyperlink{Ass:AP.SDF3}{AP.SDF3}, there would be $v\in [0,t)_\T$ such that
    \[ (\circ) \qquad D_{v,f} \cap D_{v,g} \neq \emptyset, \quad \text{and} \quad f|_{[0,v)_\T} \neq g|_{[0,v)_\T}. \]
    We conclude that in both cases, there would be $v\in [0,t]_\T$ satisfying $(\circ)$.

    Let $\omega'\in D_{v,f} \cap D_{v,g}$. Then, $x_v(\omega',f)$ and $x_v(\omega',g)$ would not be comparable in $(F,\supseteq)$ since this would imply $f|_{[0,v)_\T} = g|_{[0,v)_\T}$. Hence, by the definition of $\ge_{\bar\X}$, the same would be true for $\bar\x_v(f)$ and $\bar\x_v(g)$ in $(\bar\X,\ge_{\bar\X})$. But, on the other hand, $\x_v(f) \ge_\X \x_t(f)$ and $\x_v(g) \ge_\X \x_u(g)$, whence by definition of $\ge_\X$ and Axiom~\ref{def:sdf.X.OC} applied to $\bar\X$, we get $\bar\x_v(f),\bar\x_v(g) \in \uparrow \bar \x_t(f)$. As $(\bar\X,\ge_{\bar\X})$ is a forest (which follows from Proposition~\ref{prop:ev_on_Tr_is_iso} as in the proof of Theorem~\ref{thm:Xrm_is_dec_tree}), $\bar\x_v(f)$ and $\bar\x_v(g)$ would have to be comparable, a contradiction. Hence, the initial assumption was false, and we infer that $P_{\bar\x} = \{\bar\x\}$ for all $\bar\x\in\bar\X$. 
    
    This in turn implies that $\x = b(\x)$ for all $\x\in\X$. Hence, we have both $\bar\X\subseteq \X$ and $\X\subseteq\bar\X$, whence $\X = \bar \X$.\smallskip

    (Ad statement about sure non-triviality):~ For all $w\in W$, we have $x_0(w) \supseteq \{w\}$ and $\{w\}\in F$. Hence, for all $\omega\in\Omega$ and $f\in\A^\T$, $x_0(\omega,f)\in X$ iff $\omega\in D_{0,f}$. It is moreover obvious from the definitions that all $\omega\in\Omega$ and $f\in\A^\T$ satisfy $x_0(\omega,f) = W_\omega$.
    Hence, for any $f\in\A^\T$, 
    \[ \{\omega\in\Omega \mid W_\omega\in X\} = D_{0,f} = \big \{\omega\in\Omega \mid \exists f',f''\in\A^\T\colon [f'\neq f'' \text{ and } (\omega,f'),(\omega,f'')\in W]\big \}, \] 
    from which the claimed equivalence directly follows in view of $(\ast)$.   
\end{proof}

\begin{proof}[Proof of Lemma~\ref{lemma:mf_t}]
    Let $x,y\in X$ with $x \supsetneq y$. Then, by definition of $\mf t$ and the construction of nodes in action path \textsc{sdf}, there is $w\in W$ such that $x=x_{\mf t(x)}(w)$ and $y=x_{\mf t(y)}(w)$. Indeed, $y$ is non-empty, and any $w\in y$ satisfies the preceding equalities. If we had $\mf t(x)\ge \mf t(y)$, then $y = x_{\mf t(y)}(w) \supseteq x_{\mf t(x)}(w) = x$, in contradiction to the choice of $x$ and $y$. Hence, by totality of the order on $\T$, $\mf t(x)<\mf t(y)$. 

    Let $\x\in\X$. Then there are $t\in\T$ and $f\in\A^\T$ such that $D_{t,f} \neq \emptyset$ with $\x = \x_t(f)$. Hence, for all $\omega\in D_\x = D_{t,f}$:
    \[\mf t(\x(\omega)) = \mf t(x_t(\omega,f)) = t, \]
    which does not depend on $\omega$.
\end{proof}

\begin{proof}[Proof of Lemma~\ref{lemma:simple_sdf_as_APsdf}]
    Take $\A = \{1,2\}$ and $\T = \{0,1\}$. Recall that $\Omega = \{\omega_1,\omega_2\}$, with $\omega_1 \neq \omega_2$, and $\ms E = \mc P(\Omega)$. 
    
    For the first example, we may set $W = \Omega \times \A^\T$. Any $(\omega,f)\in W$ can be identified with the triple $(\omega,f(0),f(1))$, so that $W$ corresponds to the set denoted by $W$ in the example in Subsection~\ref{subs:simple_sdf}. Under this identification, it is easily verified that $D_{t,f} = \Omega$ for all $f\in\A^\T$ and $t\in\T$, $\x_0 = \x_0(f)$ for all $f\in\A^\T$, and $\x_k = \x_1(f)$ for all $f\in\A^\T$ with $f(0) = k$, for both $k=1,2$. Assumptions \hyperlink{Ass:AP.SDF0}{AP.SDF0} to \hyperlink{Ass:AP.SDF3}{AP.SDF3} are readily verified. Under this identification, the action path construction yields exactly the objects $F$, $\pi$, and $\X$ from the example in Subsection~\ref{subs:simple_sdf}. 

    For the second example, let $W'$ as in Subsection~\ref{subs:simple_sdf}. Interpret every triple $(\omega,k,m)\in W'$ as the pair $(\omega,f)$ where $f\in\A^\T$ is given by $f(0) = k$ and $f(1) = m$. Interpret the pair $(\omega_1,2)$ as the map $f\colon\T\to\{0,1,2\}$ with $f(0) = 2$ and $f(1) = 0$. Then, $W \subseteq {\A'}^\T$ with $\A' = \A \cup \{0\}$, where $0$ is a placeholder for inaction. Furthermore, $D_{t,f} = \Omega$ for all $(t,f)\in\T\times {\A'}^\T$ with $t=0$, or $t=1$ and $f(0) = 1$; and $D_{1,f} = \{\omega_2\}$ for $f\colon\T\to \A'$ with $f(0) = 2$; for all other pairs $(t,f) \in \T\times{\A'}^\T$ we have $D_{t,f} = \emptyset$. Moreover, $\x'_0 = \x_0(f)$ for all $f\in{\A'}^\T$; and $\x'_1 = \x_1(f)$ for all $f\colon\T\to\A'$ with $f(0) = 1$; and $\x'_2 = \x_1(f)$ for $f\colon\T\to \A'$ with $f(0) = 2$. Assumptions \hyperlink{Ass:AP.SDF0}{AP.SDF0} to \hyperlink{Ass:AP.SDF3}{AP.SDF3} are readily verified. Under these identifications, the action path construction yields exactly the objects $F'$, $\pi'$, and $\X'$ from the example in Subsection~\ref{subs:simple_sdf}. 
\end{proof}

\begin{proof}[Proof of the claims in Example~\ref{ex:APsdf}]
    (The case $W = \Omega \times \A^\T$):~ We show that for $W = \Omega\times\A^\T$, $W$ induces an action path \textsc{sdf}. First, we note that for any $\omega\in\Omega$, there is $f\in\A^\T$ such that $(\omega,f)\in W$. 
    
    Clearly, for all $(t,f)\in\T\times\A^\T$, $D_{t,f} = \Omega$, whence \hyperlink{Ass:AP.SDF0}{AP.SDF0}. 
    
    Regarding \hyperlink{Ass:AP.SDF1}{AP.SDF1}, let $w=(\omega,f)\in W$ and $t,u\in\T$ with $t\neq u$ such that $x_t(w) = x_u(w)$. Without loss of generality, assume $t<u$. If $\A$ is a singleton, then $\A^\T$ is a singleton as well. Hence, $x_t(w) = \{w\}$. But $\A$ must be a singleton, since else there would be $g\in\A^\T$ with $(\omega,g) \in x_t(w)$ and $g(t) \neq f(t)$. Thus, $(\omega,g)\in x_t(w)\setminus x_u(w)$ -- a contradiction. 

    Regarding \hyperlink{Ass:AP.SDF2}{AP.SDF2}, let $\omega\in\Omega$, $\tilde f\in\A^\T$, and $\T'\subseteq\T$ satisfying $x_t(\omega,\tilde f)\in F$ for all $t\in\T'$. Then we already have $(\omega,\tilde f)\in W$.

    \hyperlink{Ass:AP.SDF3}{AP.SDF3} is clearly satisfied because $D_{t,f} = \Omega$ for all $(t,f)\in\T\times\A^\T$.\smallskip

    (The timing problem):~ Clearly, for any $\omega\in\Omega$ there is $f\in\A^\T$ with $(\omega,f)\in W$, e.g.\ the constant map taking only the value $1$ at all times, in all components. 

    Regarding Assumption~\hyperlink{Ass:AP.SDF0}{AP.SDF0}, for all $(t,f)\in\A^\T$, $D_{t,f} = \emptyset$ if $f$ is not decreasing on $[0,t)_\T$ or if $f(t-) = 0$. Else, $D_{t,f} = \Omega$. Hence, \hyperlink{Ass:AP.SDF0}{AP.SDF0} is satisfied.

    Regarding Assumption~\hyperlink{Ass:AP.SDF1}{AP.SDF1}, let $w=(\omega,f)\in W$ and $t,u\in\T$ with $t<u$ such that $x_t(w) = x_u(w)$. Hence, for all decreasing $g\colon \T\to\A$ with $g|_{[0,t)_\T} = f|_{[0,t)_\T}$ we must have $f(v) = g(v)$ for all $v\in [t,u)_\T$. Thus, we must have $f(t-) = 0$. Hence, $x_t(w) = \{(\omega,f)\}$.

    Regarding Assumption~\hyperlink{Ass:AP.SDF2}{AP.SDF2}, let $\omega\in\Omega$, $\tilde f\in \A^\T$, and $\T'\subseteq\T$ such that $x_t(\omega,\tilde f)\in F$ for all $t\in \T'$. Hence, $\tilde f$ is decreasing on $\bigcup_{t\in\T'} [0,t)_\T$. Hence, there is decreasing $f\colon \T\to \A$ such that $f|_{[0,t)_\T} = \tilde f|_{[0,t)_\T}$ for all $t\in\T'$. By construction, $(\omega,f)\in W$.

    Regarding Assumption~\hyperlink{Ass:AP.SDF3}{AP.SDF3}, it is sufficient to note that $D_{t,f}$ equals $\emptyset$ or $\Omega$, for all $(t,f)\in\T\times\A^\T$ which has been established earlier.\smallskip

    (Ad American up-and-out option):~ Let $\A = \{0,1\}$ and $\T=\R_+$. Clearly, for any $\omega\in\Omega$, there is $f\colon\R_+\to\A$ with $(\omega,f)\in W$, namely, the constant path with value $1$.

    Regarding Assumption~\hyperlink{Ass:AP.SDF0}{AP.SDF0}, let $(t,f)\in\T\times\A^{\R_+}$ and $\omega\in\Omega$. Then, we have $\omega\in D_{t,f}$ iff $x_t(\omega,f)$ has at least two elements. This is equivalent to the fact that $f$ is decreasing on $[0,t)$, $f(t-) = 1$, and $P_u(\omega) < 2$ for all $u\in[0,t]$. Whence the claimed representation of $D_{t,f}$ and the fact that $D_{t,f}\in\ms E$, because $P$ has continuous paths so that $\max_{u\in[0,t]} P_u$ is $\ms E$-measurable since $P_u$ is so for all real (and in particular rational) $u\ge 0$.

    Regarding Assumption~\hyperlink{Ass:AP.SDF1}{AP.SDF1}, let $w=(\omega,f)\in W$ and $t,u\in\R_+$ with $t<u$ such that $x_t(w) = x_u(w)$. 
    Hence, for all decreasing $g\colon \R_+\to\{0,1\}$ with $g|_{[0,t)} = f|_{[0,t)}$ and $(\omega,g)\in W$, we must have 
    \[ (\ast) \qquad f(v) = g(v), \quad \text{ for all }~v\in [t,u). \] 
    
    If $f$ is constant to $1$, this implies the existence of $t_0\in[0,t]$ such that $P_{t_0}(\omega) \ge 2$. If such $t_0$ did not exist, the map $g\colon \R_+ \to \{0,1\}$ given by $g(t') = 1\{t' < t\}$ would violate condition $(\ast)$ above. Hence, $x_t(w) = \{w\}$. 
    If $f$ takes the value $0$ and $t_f^\ast < t$, then clearly $x_t(w) = \{w\}$. 
    The remaining case, namely that $f$ takes the value $0$ and $t_f^\ast \ge t$, cannot arise. Indeed, if it did, for both $a\in\{0,1\}$, the map $g_a\colon\R_+ \to \{0,1\}$ given by $g_a(t') = 1$ for $t'<t$, $g_a(t) = a$, and $g_a(t') = 0$ for $t'> t$ would satisfy $t_{g_a}^\ast = t$. Moreover, the fact that $(\omega,f)\in W$ would imply
    \[ \max_{t'\in[0,t]} P_{t'}(\omega) \le \max_{t'\in[0,t_f^\ast]} P_{t'}(\omega) < 2, \]
    hence $(\omega,g_a)\in W$. In particular, $(\ast)$ would imply that $a = g_a(t) = f(t)$, for both $a\in\{0,1\}$ which is impossible. Hence, in any possible case we have $x_t(w) = \{w\}$, thus Assumption~\hyperlink{Ass:AP.SDF1}{AP.SDF1} is satisfied.

    Regarding Assumption~\hyperlink{Ass:AP.SDF2}{AP.SDF2}, let $\omega\in\Omega$, $\tilde f\in \A^{\R_+}$, and $\T'\subseteq\R_+$ such that $x_t(\omega,\tilde f)\in F$ for all $t\in \T'$. Hence, $\tilde f$ is decreasing on $[0,\sup \T')$ (where, in the context of the order on $\R_+$, we have $\sup \emptyset = 0$), and if $\tilde f|_{[0,\sup \T')}$ attains $0$, then $P_u(\omega) < 2$ for all $u\in[0,t_{\tilde f}^\ast]$. Let $f\colon \R_+ \to\{0,1\}$ be given by $f|_{[0,\sup \T')} = \tilde f|_{[0,\sup \T')}$ and, for all $u\in [\sup\T',\infty)$, $f(u) = f(\sup\T'-)$ if $\sup\T' > 0$ and $f(u) = 1$ if $\sup\T' = 0$. By construction, $f|_{[0,t)} = \tilde f|_{[0,t)}$ for all $t\in\T'$. Further, $f$ is decreasing. Moreover, if $f$ attains the value $0$, then it does so on $[0,\sup\T')$ and so does $\tilde f$, and $t_f^\ast = t_{\tilde f}^\ast$. Hence, $(\omega,f)\in W$. We conclude that Assumption~\hyperlink{Ass:AP.SDF2}{AP.SDF2} is satisfied.

    Regarding Assumption~\hyperlink{Ass:AP.SDF3}{AP.SDF3}, let $t\in\R_+$ and $f,g\in\A^\T$ such that $D_{t,f},D_{t,g} \neq\emptyset$. Then, as shown earlier in this proof, $f(t-) = g(t-) = 1$ and
    \[ D_{t,f} = \{\omega\in\Omega \mid \max_{u\in[0,t]} P_u(\omega) < 2\} = D_{t,g}. \]
    Hence, $D_{t,f} \cap D_{t,g} \neq \emptyset$. Thus, Assumption~\hyperlink{Ass:AP.SDF3}{AP.SDF3} is trivially satisfied.
\end{proof}

\subsection{Section~\ref{sec:exogenous_information}}

\begin{proof}[Proof of Lemma~\ref{lemma:simple_sdf1_EIS}]
    There are exactly two ($\sigma$-)algebras on $\Omega$: the discrete and the trivial one, that is, $\mc P(\Omega)$ and $\{\Omega,\emptyset\}$. In the present situation, we have $D_\x = \Omega$ for all $\x\in\X$. Hence, by definition, the set of exogenous information structures $\ms F$ on $(F,\pi,\X)$ admitting recall is given by all families $\ms F = (\ms F_\x)_{\x\in\X}$ of $\sigma$-algebras on $\Omega$ such that $\ms F_{\x_0} \subseteq \ms F_{\x_1} \cap \ms F_{\x_2}$. The claim follows easily from this.
\end{proof}

\begin{proof}[Proof of Lemma~\ref{lemma:simple_sdf2_EIS}]
    There is exactly one ($\sigma$-)algebra on $\{\omega_2\}$, namely $\{\{\omega_2\},\emptyset\}$. On $\Omega$, there are exactly two $\sigma$-algebras, namely $\mc P(\Omega)$ and $\{\Omega,\emptyset\}$, as in the preceding proof. Hence, by definition, the set of exogenous information structures $\ms F'$ on $(F',\pi',\X')$ admitting recall is given by all families $\ms F' = (\ms F'_{\x'})_{\x'\in\X'}$ of $\sigma$-algebras on $\Omega$ such that $\ms F_{\x'_2} =\{\{\omega_2\},\emptyset\}$ and $\ms F_{\x'_0} \subseteq \ms F_{\x'_1}$. The claim follows easily from this.
\end{proof}

\begin{proof}[Proof of Lemma~\ref{lemma:APsdf_EIS_induces_filtration}]
    Suppose that the exogenous information structure $\ms F$ on $\X$ admits recall and let $f\in\A^\T$. Let $t,u\in\T_f$ with $t<u$ and $E\in\ms F_{\x_t(f)}$. Then, $\x_t(f) \ge_\X \x_u(f)$, whence $E\cap D_{u,f} \in \ms F_{\x_u(f)}$ by Definition~\ref{def:EIS}.

    The second claim follows from the first because under its hypothesis $\ms F_{\x_t(f)}$ is a $\sigma$-algebra on $D_{t,f} = \Omega$ for all $t\in\T_f$.
\end{proof}

\begin{proof}[Proof of Theorem~\ref{thm:AP_sdf_EIS}]
    Let $\x\in\X$. Then, by construction, $\ms F_\x$ is a $\sigma$-algebra on $D_\x$ contained in $\ms E$, because $\ms G_{\mf t(x)}$ is a sub-$\sigma$-algebra of $\ms E$ and all $Y_{\x'}$, ranging over $\x'\in\X$, are $\ms E$-measurable.

    Furthermore, let $\x_1,\x_2\in\X$ such that $\x_1\ge_\X\x_2$, and let $E\in \ms F_{\x_1}$. Hence, there is $E'\in \sigma(Y_{\x'} \mid \x'\ge_\X \x_1) \vee \ms G_{\mf t(\x_1)}$ such that $E = E' \cap D_{\x_1}$. Note that
    \[ E'\in \sigma(Y_{\x'} \mid \x'\ge_\X \x_1) \vee \ms G_{\mf t(\x_1)} \subseteq \sigma(Y_{\x'} \mid \x'\ge_\X \x_2) \vee \ms G_{\mf t(\x_2)}. \]
    As $D_{\x_1} \supseteq D_{\x_2}$, we have $ E\cap D_{\x_2} = E' \cap D_{\x_2} $. Hence, $E \cap D_{\x_2} \in \ms F_{\x_2}$.
\end{proof}

\subsection{Section~\ref{sec:adapted_choices}}

\begin{proof}[Proof of Lemma~\ref{lemma:P(c)_compatible_with_conn_comp}]
    As $c\cap W_E = \bigcup_{\omega\in E} (c\cap W_\omega)$ and $\{T_\omega\mid\omega\in\Omega\}$ is a partition of $F$, we infer from Lemma~\ref{lemma:set_forest}, Part~\ref{lemma:set_forest.T=downarrow_V_T}, that the following introductory statement holds true:
    \[ \downarrow (c\cap W_E) = \{x\in F \mid c \cap W_E \supseteq x\} = \{x\in F_E \mid c \supseteq x\} = (\downarrow c) \cap F_E. \]
    
    From this, the claim follows easily: If $x\in P(c\cap W_E)$, then there is $y\in (\downarrow c)\cap F_E$ such that
    \[ (\dagger) \qquad \uparrow x = \uparrow y \setminus ((\downarrow c) \cap F_E). \]
    Hence, $x \in \uparrow y$. There is unique $\omega\in E$ such that $y\in (\downarrow c)\cap T_\omega$. As $T_\omega$ is a connected component of $(F,\supseteq)$ and $y\in T_\omega$, we have $\uparrow y\subseteq T_\omega$, whence $x\in T_\omega$ and
    \[ (\ast) \qquad \uparrow x = \uparrow y \setminus \downarrow c. \]
    If conversely $x\in F_E$ is such that there is $y\in \downarrow c$ satisfying $(\ast)$, then $x\in \uparrow y$, hence $x,y$ belong to the same connected component, whence $y\in F_E$ and $\uparrow y \subseteq F_E$. We infer that $(\dagger)$ holds true, i.e.\ $x\in P(c\cap W_E)$ by the introductory statement.
\end{proof}

\begin{lemma}\label{lemma:simple_sdf1_choices}
    Consider the basic version $(F,\pi,\X)$ of the simple \textsc{sdf} on the exogenous scenario space $(\Omega,\ms E)$ as in Subsection~\ref{subs:simple_sdf_AC}. Let $M$ be the set of maps $\Omega\to \{1,2\}$. 
    Then the following subsets of $W$ define non-redundant and complete choices:
    \begin{itemize}[label=--]
        \item $c_{f\bullet}$, where $f\in M$;
        \item $c_{kg}$, where $k=1,2$ and $g\in M$;
        \item $c_{\bullet g}$, where $g\in M$.
    \end{itemize}
    The corresponding sets of immediate predecessors are given by $P(c_{f\bullet}) = \im\x_0$;  $P(c_{kg}) = \im \x_k$; $P(c_{\bullet g}) = \im \x_1 \cup \im \x_2$.
\end{lemma}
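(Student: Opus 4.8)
The plan is to verify, for each of the three families, that the set in question is a choice (a non-empty union of nodes) and then to establish non-redundancy, completeness, and the stated formula for its immediate-predecessor set by direct computation in the explicit depth-two binary forest $(F,\supseteq)$. That each set is a union of nodes is immediate: $c_{f\bullet} = \bigcup_{\omega\in\Omega}\x_{f(\omega)}(\omega)$ is a union of moves, while $c_{kg}$ and $c_{\bullet g}$ are unions of terminal singletons. Recall that each connected component $T_\omega$ consists of the root $W_\omega = \x_0(\omega)$ (with four elements), the two intermediate moves $\x_1(\omega),\x_2(\omega)$ (with two elements each), and the four terminal singletons $\{(\omega,k,m)\}$. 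Since non-redundancy and completeness are both read off from $P(c)$, I would first compute $P(c)$ for each family and only then deduce the other two properties.

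The core step is the computation of $\downarrow c$ followed by $P(c)$. For a fixed choice $c$ I would determine $\downarrow c = \{x\in F\mid c\supseteq x\}$ by running through the four node types componentwise. For $c_{f\bullet}$ one checks that $c_{f\bullet}\cap W_\omega = \x_{f(\omega)}(\omega)$, so the only nodes contained in $c_{f\bullet}$ are the moves $\x_{f(\omega)}(\omega)$ and the singletons below them; in particular no root lies in $\downarrow c_{f\bullet}$. Having $\downarrow c$, I would compute $P(c)$ from the definition $P(c) = \{x\mid \exists y\in\downarrow c\colon \uparrow x = \uparrow y\setminus\downarrow c\}$, using that in this forest $\uparrow$ of each node is an explicit two- or three-element chain. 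For $c_{f\bullet}$, taking $y=\x_{f(\omega)}(\omega)$ gives $\uparrow y\setminus\downarrow c_{f\bullet} = \{\x_0(\omega)\} = \uparrow\x_0(\omega)$, so $\x_0(\omega)\in P(c_{f\bullet})$, and one checks no other node arises; hence $P(c_{f\bullet}) = \im\x_0$. The same bookkeeping yields $\downarrow c_{kg} = \{\{(\omega,k,g(\omega))\}\mid\omega\in\Omega\}$ with $\uparrow y\setminus\downarrow c_{kg} = \{\x_k(\omega),\x_0(\omega)\} = \uparrow\x_k(\omega)$, so $P(c_{kg}) = \im\x_k$; and $\downarrow c_{\bullet g} = \{\{(\omega,j,g(\omega))\}\mid\omega\in\Omega,\,j\in\{1,2\}\}$ with $\uparrow y\setminus\downarrow c_{\bullet g} = \uparrow\x_j(\omega)$, so $P(c_{\bullet g}) = \im\x_1\cup\im\x_2$.

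From these formulas the remaining properties follow at once. In every case $P(c)\cap T_\omega\neq\emptyset$ for each $\omega\in\Omega$, so the hypothesis in the definition of non-redundancy is never met and non-redundancy holds vacuously. For completeness I would use that $D_\x = \Omega$ for all $\x\in\X$ and that the images $\im\x_0,\im\x_1,\im\x_2$ are pairwise disjoint: the nodes in $\im\x_0$ have four elements while those in $\im\x_1,\im\x_2$ have two, and $\im\x_1,\im\x_2$ are distinguished by the first action coordinate. Consequently, for each $\x\in\X$ the preimage $\x^{-1}(P(c))$ is either all of $\Omega$ (when $\im\x\subseteq P(c)$) or empty (otherwise), which is exactly $\X$-completeness.

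The only delicate point is the evaluation of $P(c)$ through the formula $\uparrow x = \uparrow y\setminus\downarrow c$: one must confirm that the candidate immediate predecessor's principal up-set matches the set-difference exactly, and that no spurious node $x$ satisfies the condition for some other $y\in\downarrow c$. In the present finite tree this is a bounded check over the explicitly listed nodes, so I expect no genuine difficulty, only careful case bookkeeping; moreover, Lemma~\ref{lemma:P(c)_compatible_with_conn_comp} guarantees that the computation respects the decomposition into components $T_\omega$, which lets me carry out all of it one scenario at a time.
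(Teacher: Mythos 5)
Your proposal is correct and follows essentially the same route as the paper: verify that each set is a non-empty union of nodes, compute $P(c)$ componentwise via Lemma~\ref{lemma:P(c)_compatible_with_conn_comp}, and then read off non-redundancy (vacuously, since $P(c)\cap T_\omega\neq\emptyset$ for every $\omega$) and completeness (since $D_\x=\Omega$ and the images $\im\x_0,\im\x_1,\im\x_2$ are pairwise disjoint). You merely spell out the bookkeeping that the paper's proof leaves as ``easily shown''.
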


\begin{proof}
    Any set $c$ of the form above is non-empty, and as $F$ contains all singletons in $\mc P(W)$, $c$ is a choice.
    The sets of immediate predecessors are easily shown to be of the claimed form, using Lemma~\ref{lemma:P(c)_compatible_with_conn_comp}. From this, we directly infer the non-redundancy and completeness of all the considered choices.
\end{proof}

\begin{proof}[Proof of Lemma~\ref{lemma:simple_sdf1_LCS}]
    Let $C = \bigcup_{\x\in\X} \ms C_\x$. By Lemma~\ref{lemma:simple_sdf1_choices}, all elements of $C$ are non-redundant and complete choices. The statement about the set of immediate predecessors of first $c_{k \bullet}$ and second $c_{\bullet m}$, $k,m\in\{1,2\}$, stated in Lemma~\ref{lemma:simple_sdf1_choices} shows that the first type of choice is available at $\x_0$, while the second is available at $\x_1$ and $\x_2$.  
\end{proof}

\begin{proof}[Proof of Lemma~\ref{lemma:simple_sdf1_AC}]
    In view of Lemma~\ref{lemma:simple_sdf1_choices}, we only have to verify that for any line and corresponding exogenous information structure (\textsc{eis}) $\ms F$, and any subset $c\subseteq W$ in that line, all $\x\in\X$ that $c$ is available at and $c_\x\in\ms C_\x$, we have
    \[ \x^{-1}(P(c\cap c_\x)) \in \ms F_\x. \]
    For this, one easily verifies that all $k,k',m'\in\{1,2\}$ and $f,g\in M$ satisfy:
    \begin{align*}
        c_{f\bullet} \cap c_{k'\bullet} =&~ c_{k'\bullet} \cap W_{\{f = k'\}}; \\
        c_{kg} \cap c_{\bullet m'} =&~ c_{k m'} \cap W_{\{g = m'\}}; \\
        c_{\bullet g} \cap c_{\bullet m'} =&~  c_{\bullet m'} \cap W_{\{g = m'\}}.
    \end{align*}
    As shown in Lemma~\ref{lemma:simple_sdf1_choices}, we have
    \[ P(c_{k'\bullet} ) = \im \x_0, \quad P(c_{km'} ) = \im \x_k, \quad P(c_{\bullet m'} ) = \im \x_1 \cup \im \x_2.\]
    Then, applying Lemma~\ref{lemma:P(c)_compatible_with_conn_comp} and using the definition of $\ms F$, in each of the five cases respectively, completes the proof.
\end{proof}

\begin{lemma}\label{lemma:simple_sdf2_choices}
    Consider the variant $(F',\pi',\X')$ of the simple \textsc{sdf} on the exogenous scenario space $(\Omega,\ms E)$ as in Subsection~\ref{subs:simple_sdf_AC}. Let $M$ be the set of maps $\Omega\to \{1,2\}$. 
    Then the following subsets of $W'$ define non-redundant and complete choices:
    \begin{itemize}[label=--]
        \item $c'_{f\bullet}$, where $f\in M$;
        \item $c'_{kg}$, where $k=1,2$ and $g\in M$;
        \item $c'_{\bullet g}$, where $g\in M$.
    \end{itemize}
    The corresponding sets of immediate predecessors are given by $P(c'_{f\bullet}) = \im\x'_0$; $P(c'_{kg}) = \im\x'_k$; $P(c'_{\bullet g}) = \im\x'_1\cup \im\x'_2$.
\end{lemma}

\begin{proof}
    Any set $c'$ of the form above is non-empty, and as $F'$ contains all singletons in $\mc P(W')$, any such $c'$ is a choice.
    The sets of immediate predecessors are easily shown to be of the claimed form, using Lemma~\ref{lemma:P(c)_compatible_with_conn_comp}. From this, we directly infer the non-redundancy and completeness of all the considered choices.
\end{proof}

\begin{proof}[Proof of Lemma~\ref{lemma:simple_sdf2_LCS}]
    Let $C' = \bigcup_{\x'\in\X'} \ms C'_{\x'}$. By Lemma~\ref{lemma:simple_sdf2_choices}, all elements of $C'$ are non-redundant and complete choices. The statement about the set of immediate predecessors of first $c'_{k \bullet}$ and second $c'_{\bullet m}$, $k,m\in\{1,2\}$, stated in Lemma~\ref{lemma:simple_sdf2_choices} shows that the first type of choice is available at $\x'_0$, while the second is available at $\x'_1$ and $\x'_2$.  
\end{proof}

\begin{proof}[Proof of Lemma~\ref{lemma:simple_sdf2_AC}]
    In view of Lemma~\ref{lemma:simple_sdf2_choices}, we only have to verify that for any line and corresponding exogenous information structure (\textsc{eis}) $\ms F'$, and any subset $c'\subseteq W$ in that line, all $\x'\in\X'$ that $c'$ is available at and $c'_{\x'}\in\ms C'_{\x'}$, we have
    \[ \x'^{-1}(P(c'\cap c'_{\x'})) \in \ms F'_{\x'}. \]
    For this, one easily verifies that all $k,k',m'\in\{1,2\}$ and $f,g\in M$ satisfy:
    \begin{align*}
        c'_{f\bullet} \cap c'_{k'\bullet} =&~ c'_{k'\bullet} \cap W_{\{f = k'\}}; \\
        c'_{kg} \cap c'_{\bullet m'} =&~ c'_{k m'} \cap W_{\{g = m'\}}; \\
        c'_{\bullet g} \cap c'_{\bullet m'} =&~  c'_{\bullet m'} \cap W_{\{g = m'\}}.
    \end{align*}
    As shown in Lemma~\ref{lemma:simple_sdf2_choices}, we have
    \[ P(c'_{k'\bullet} ) = \im \x'_0, \quad P(c'_{km'} ) = \im \x'_k, \quad P(c'_{\bullet m'} ) = \im \x'_1 \cup \im \x'_2.\]
    Then applying Lemma~\ref{lemma:P(c)_compatible_with_conn_comp} and using the definition of $\ms F'$, in each of the three cases respectively, completes the proof.
\end{proof}

\begin{proof}[Proof of Lemma~\ref{lemma:AP_downarrow_c}]
    Let $t\in\T$ and $c\in\ms C_t$. Let $x\in F$. 
    
    If $x = \{w\}$ for some $w\in W$, then clearly $x\in \downarrow c$ iff $w\in c$.
    It remains to consider the case $x = x_u(w)$ for some $w=(\omega,f)\in W$ and $u\in\T$. 
    
    If $t<u$ and $w\in c$, then any $w' = (\omega',f')\in x$ satisfies $\omega' = \omega$ and $f'|_{[0,u)_\T} = f|_{[0,u)_\T}$, in particular $f'|_{[0,t]_\T} = f|_{[0,t]_\T}$, hence $w'\in c$. We conclude for this case that $x\subseteq c$. 

    If $t<u$ and $w\in c$ do not both hold true, then either $w\notin c$, or $w\in c$ and $u\le t$. If $w\notin c$, then $w\in x \setminus c$, hence $x\nsubseteq c$. If $w\in c$ and $u\le t$, then, by Assumption~\hyperlink{Ass:AP.C1}{AP.C1}, there is $w' \in x_t(w) \setminus c \subseteq x_u(w) \setminus c = x\setminus c$. Whence $x\nsubseteq c$.
\end{proof}

\begin{lemma}\label{lemma:AP_uparrow_x}
    Consider action path \textsc{sdf} data $(I,\A,\T,W)$ on an exogenous scenario space $(\Omega,\ms E)$ and the induced action path \textsc{sdf} $(F,\pi,\X)$. Let $t\in\T$ and $w\in W$. Then,
    \[ \uparrow x_t(w) = \{ x_u(w) \mid u\in \T\colon u\le t\}. \]
\end{lemma}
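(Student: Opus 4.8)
The plan is to prove the two inclusions of the claimed set equality separately, writing $w=(\omega,f)$ throughout and recalling that the order on $F$ is reverse inclusion, so $\uparrow x_t(w) = \{y\in F \mid y\supseteq x_t(w)\}$. For the inclusion ``$\supseteq$'', I would observe that $u\le t$ implies $[0,u)_\T\subseteq[0,t)_\T$, so the constraint $f'|_{[0,t)_\T}=f|_{[0,t)_\T}$ defining $x_t(w)$ refines the constraint $f'|_{[0,u)_\T}=f|_{[0,u)_\T}$ defining $x_u(w)$; hence $x_t(w)\subseteq x_u(w)$, that is $x_u(w)\in\uparrow x_t(w)$. Since each $x_u(w)$ lies in $F$ by the very definition of $F$, this settles the easy direction.

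For the inclusion ``$\subseteq$'', I would take $y\in F$ with $y\supseteq x_t(w)$ and first record that $w\in x_t(w)$, since the condition $f|_{[0,t)_\T}=f|_{[0,t)_\T}$ is trivially satisfied; thus $x_t(w)$ is non-empty and $w\in y$. By the definition of $F$, the node $y$ is either a singleton or of the form $x_s(\tilde w)$. If $y=\{w'\}$, then $w\in y$ forces $w'=w$, and $x_t(w)\subseteq\{w\}$ together with $w\in x_t(w)$ gives $x_t(w)=\{w\}=y$, so $y=x_t(w)$ belongs to the right-hand set with $u=t$. If $y=x_s(\tilde w)$ for some $s\in\T$ and $\tilde w\in W$, then $w\in x_s(\tilde w)$ says precisely that $w$ and $\tilde w$ agree in the first coordinate and on $[0,s)_\T$; substituting this back into the defining formula yields $x_s(\tilde w)=x_s(w)$, so $y=x_s(w)$.

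It then remains only to rewrite $y=x_s(w)$ in the required form. If $s\le t$ this is immediate. The sole point needing attention is the case $s>t$, where $[0,t)_\T\subseteq[0,s)_\T$ gives $x_s(w)\subseteq x_t(w)$, and combining this with the hypothesis $y=x_s(w)\supseteq x_t(w)$ forces $x_s(w)=x_t(w)$; hence $y=x_t(w)$, once more of the form $x_u(w)$ with $u=t\le t$. (Assumption~\hyperlink{Ass:AP.SDF1}{AP.SDF1} could be invoked here to deduce in addition that $x_t(w)=\{w\}$, but it is not required for the set equality.) I anticipate no real obstacle: the only care points are tracking the two possible syntactic shapes of $y$ and noticing that, when $s>t$, the two opposing inclusions collapse $x_s(w)$ onto $x_t(w)$.
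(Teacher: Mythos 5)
Your proof is correct and follows essentially the same route as the paper's: both directions rest on the observation that $u\le t$ implies $x_t(w)\subseteq x_u(w)$, and the converse inclusion is handled by the same case split (singleton versus $x_s(\tilde w)$) together with the same substitution $x_s(\tilde w)=x_s(w)$ obtained from $w\in y$. The only divergence is the final step: the paper concludes $u\le t$ by invoking the strict monotonicity of $\mf t$ (Lemma~\ref{lemma:mf_t}, which rests on Assumption~\hyperlink{Ass:AP.SDF1}{AP.SDF1}), whereas you collapse $x_s(w)$ onto $x_t(w)$ when $s>t$ via the two opposing inclusions and re-index with $u=t$ --- a slightly more elementary argument that, as you note, does not need that assumption.
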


\begin{proof}
    The inclusion $\supseteq$ is clear from the definition of the nodes of the action path \textsc{sdf}. For the converse inclusion, let $x\in\uparrow x_t(w)$. If $x$ is a singleton, then $x = x_t(w)$. If $x$ is no singleton, then it is a move and there are $u\in \T$ and $w'\in W$ with $x =  x_u(w')$. As $w\in x_t(w) \subseteq x$, we even have $x = x_u(w)$. By means of Lemma~\ref{lemma:mf_t}, we infer $u = \mf t(x) \le \mf t(x_t(w)) = t$.
\end{proof}

\begin{proof}[Proof of Lemma~\ref{lemma:AP_P(c)}]
    Let $t\in\T$ and $c\in\ms C_t$. Let $x\in F$. By definition, $x\in P(c)$ is equivalent to the existence of $y\in \downarrow c$ satisfying
    \[ (\ast)\qquad \uparrow x = \uparrow y \setminus \downarrow c. \]

    If $x = x_t(w)$ for $w\in c$, then, by Assumption~\hyperlink{Ass:AP.C1}{AP.C1}, there is $w'\in x_t(w) \setminus c$. Thus, $x$ has at least two elements. In view of Lemmata \ref{lemma:AP_downarrow_c} and \ref{lemma:AP_uparrow_x} and Assumption~\hyperlink{Ass:AP.SDF1}{AP.SDF1}, $(\ast)$ is satisfied for $y = \{w\}$.

    If, conversely, $(\ast)$ is satisfied for some $y\in \downarrow c$, then $x$ is a move, whence the existence of $w_0\in W$ and $t_0\in\T$ with $x = x_{t_0}(w_0)$. Moreover, there is $w\in y$, and as $y\in \downarrow c$, we obtain $w\in c$. By $(\ast)$, we know that $x\in\uparrow y$, hence also $w\in x$. Thus, $x = x_{t_0}(w)$. From representation $(\ast)$ and Lemma~\ref{lemma:AP_downarrow_c}, we immediately get $t_0 \le t$. Moreover, by Assumption~\hyperlink{Ass:AP.C1}{AP.C1} on $c\in\ms C_t$, $x_t(w)$ has at least two elements. Indeed, as $w\in c$, there is $w'\in x_t(w)\setminus c$. Hence, by Lemma~\ref{lemma:AP_sdf.AssmAP.SDF1}, $\T_{\x_t(w)} = \{t\}$. Hence, $x_t(w) \in \uparrow y \setminus \downarrow c = \uparrow x$, again by representation $(\ast)$, the fact that $y\in\downarrow c$ and Lemma~\ref{lemma:AP_downarrow_c}. Thus, by Lemma~\ref{lemma:mf_t}, $t = \mf t(x_t(w)) \le \mf t(x) = t_0$. Hence, $t_0 = t$ and $x = x_t(w)$ with $w\in c$.\smallskip
\end{proof}

\begin{proof}[Proof of Lemma~\ref{lemma:C_t_non-redundant_complete}]
    By Assumption~\hyperlink{Ass:AP.C0}{AP.C0}, $c$ is a non-empty union of singletons in $\mc P(W)$, which are elements of $F$ by construction. Hence, it is a choice.

    Concerning non-redundancy, let $\omega\in\Omega$ and suppose there is $w\in c \cap W_\omega$. Then, $x_t(w)\in P(c) \cap T_\omega$, by Lemma~\ref{lemma:AP_P(c)}. Hence, by contraposition, if $P(c) \cap T_\omega = \emptyset$, then $c\cap W_\omega = \emptyset$ as well.

    Concerning completeness, let $\x\in\X$ such that there is $\omega\in D_\x$ with $\x(\omega) \in P(c)$. By Lemma~\ref{lemma:AP_P(c)}, there is $f\in\A^\T$ such that $(\omega,f)\in c$ and $\x(\omega) = x_t(\omega,f)$. First, we infer that $D_{t,f}\neq\emptyset$, whence $D_\x = D_{t,f}$ and $\x = \x_t(f)$, by Proposition~\ref{prop:ev_on_Tr_is_iso}. Second, we infer that $x_t(\omega,f)\cap c \neq\emptyset$. By Assumption~\hyperlink{Ass:AP.C2}{AP.C2}, for any $\omega'\in D_{t,f}$ there is $w' \in x_t(\omega',f)\cap c$. In particular, $\x(\omega') = x_t(\omega',f) = x_t(w')$. Hence, $\x(\omega') \in P(c)$, by Lemma~\ref{lemma:AP_P(c)}. We conclude that $\x^{-1}(P(c)) = D_\x$.\smallskip

    Regarding the proof of the second sentence, let $\x\in\X$ such that $c$ is available at $\x$. There is $\omega\in D_\x$. Then, $\x(\omega)\in P(c)$. As in the proof of completeness above, we infer the existence of $f\in\A^\T$ such that $(\omega,f)\in c$, $\omega\in D_{t,f} = D_\x$ and $\x = \x_t(f)$.
\end{proof}

\begin{proof}[Proof of Proposition~\ref{prop:APsdf_LCS}]
    Let $\x\in\tilde\X^i$, $t=\mf t(\x)$, and $c\in \ms C_\x^i$. In particular, $c\in\ms C_t$, and, by Lemma~\ref{lemma:C_t_non-redundant_complete}, $c$ is a non-redundant and complete, in particular $\tilde\X^i$-complete choice. Let $\omega\in D_\x$. Then, by definition of $\ms C_\x^i$, there is $w\in\x(\omega)\cap c$. Hence, $\x(\omega) = x_{\mf t(\x)}(w) = x_t(w)$, and, by Lemma~\ref{lemma:AP_P(c)}, $\x(\omega)\in P(c)$. We conclude that $c$ is available at $\x$.
\end{proof}

\begin{proof}[Proof of Theorem~\ref{thm:APsdf_AC}]
    Let $c = c(A_{<t},i,g)$.
    \smallskip

    (Ad \ref{thm:APsdf_AC.non_red_and_compl}):~ By Lemma~\ref{lemma:C_t_non-redundant_complete}, $c$ is a non-redundant and complete, in particular $\tilde\X^i$-complete choice which proves the first claim. \smallskip

    (Ad \ref{thm:APsdf_AC.Dx_subset_D}):~ 
    Let $\x\in\tilde\X^i$ be such that $c$ is available at $\x$. Let $\omega\in D_\x$. Then $\x(\omega) \in P(c)$, hence, by Lemma~\ref{lemma:AP_P(c)}, there is $w\in c$ with $\x(\omega) = x_t(w)$. There is $f\in\A^\T$ such that $w=(\omega,f)$, and as $w\in c$, we have $f(t) \in A^{i,g}_{t,\omega}$. Hence, $\omega\in D$. We conclude that $D_\x\subseteq D$.\smallskip
    
    (Helpful statements for $c'\in\ms C^i_\x$ and $\x$ that $c$ is available at):~ Let $\x\in\tilde\X^i$ be such that $c$ is available at $\x$. By definition of $\X$, there is $f_0\in\A^\T$ with $D_{t,f_0} \neq\emptyset$ such that $\x = \x_t(f_0)$. 
    
    Let $c'\in\ms C^i_\x$. We compute the set $P(c\cap c')$ and its preimage under $\x$. By definition of $\ms C^i_\x$, there are $A'_{<t}\subseteq\A^{[0,t)_\T}$ and $A_t^{\prime i}\in\ms B(\A^i)$ such that, with $A'_t = (A'_{t,\omega})_{\omega\in\Omega}$ and $A'_{t,\omega} = (p^i)^{-1}(A_t^{\prime i})$ for all $\omega\in D_\x$, and $A'_{t,\omega}=\emptyset$ for all $\omega\notin D_\x$, we have $c' = c(A'_{<t},A'_t)$, $c'\in\ms C_t$, and, for all $\omega\in D_\x$, $\x(\omega) \cap c' \neq\emptyset$.
    
    Let $c_0 = c(A_{<t}\cap A'_{<t},i,g)$. By definition of $c$ and $c'$, we have
    \begin{align*} 
        &~c\cap c' \\
        =&~ \{(\omega,f)\in W_{D_\x} \mid f|_{[0,t)_\T} \in A_{<t} \cap A'_{<t},~ p^i \circ f(t) = g(\omega)\in A_t^{\prime i}\}\\
        =&~ c(A_{<t}\cap A'_{<t},i,g) \cap \{(\omega,f) \in W_{D_\x} \mid g(\omega) \in A_t^{\prime i} \} \\
        =&~ c_0 \cap W_{g|_{D_\x}^{-1}(A_t^{\prime i})}.
    \end{align*}
    Using Lemma~\ref{lemma:P(c)_compatible_with_conn_comp}, we infer
    \[ (\ast)\qquad P(c\cap c') = P(c_0) \cap F_{g|_{D_\x}^{-1}(A_t^{\prime i})}. \]

    Next, we show that
    \[ (\circ)\qquad \forall \omega\in D_\x\exists f\in \A^\T\colon \quad (\omega,f) \in c_0,~ \x(\omega) = x_t(\omega,f).  \]
    For the proof of $(\circ)$, let $\omega\in D_\x$. As $\x$ is available both at $c$ and $c'$, we have $\x(\omega) = x_t(\omega,f_0)\in P(c)\cap P(c')$. Hence, by Lemma~\ref{lemma:AP_P(c)} applied to $c,c'\in\ms C_t$, there are $f,f'\in\A^\T$ with $(\omega,f)\in c$ and $(\omega,f')\in c'$ such that 
    \[ f'|_{[0,t)_\T} = f_0|_{[0,t)_\T} = f|_{[0,t)_\T}. \]
    Thus, $f|_{[0,t)_\T} \in A_{<t}\cap A'_{<t}$. By definition of $f$ and $f_0$, we must have $p^i \circ f(t) = g(\omega)$. Hence, $(\omega,f)\in c_0$. By definition of $f$, we also have $\x(\omega) = x_t(\omega,f)$.

    We infer that $c_0 \in \ms C_t$.
    Indeed, Assumption~\hyperlink{Ass:AP.C0}{AP.C0} is satisfied by $(\circ)$. Concerning \hyperlink{Ass:AP.C1}{AP.C1}, note that $c_0\subseteq c$. Let $w\in c_0$. Thus $w\in c$. Hence, by \hyperlink{Ass:AP.C1}{AP.C1} applied to $c$, there is $w'\in x_t(w) \setminus c \subseteq x_t(w) \setminus c_0$. Regarding \hyperlink{Ass:AP.C2}{AP.C2}, let $f\in\A^\T$ with $f|_{[0,t)_\T} \in A_{<t}\cap A'_{<t}$ such that there is $\omega\in D_{t,f}$ satisfying
    \[ x_t(\omega,f) \cap c_0 \neq \emptyset. \]
    Let $\omega'\in D_{t,f}$. As $c_0 \subseteq c$, we infer that $x_t(\omega,f) \cap c \neq\emptyset$. Hence, by \hyperlink{Ass:AP.C2}{AP.C2} applied to $c$, there exists $w'\in x_t(\omega',f) \cap c$. Let $f'\in\A^\T$ such that $w' = (\omega',f')$. Then, by definition of $w'$ and $f'$, we get $f'|_{[0,t)_\T} = f|_{[0,t)_\T}\in A_{<t}\cap A'_{<t}$, $\omega'\in D$, and $p^i(f'(t)) = g(\omega')$. Hence, $(\omega',f') \in x_t(\omega',f) \cap c_0$. We conclude that $c_0\in\ms C_t$.

    Hence, by Lemma~\ref{lemma:C_t_non-redundant_complete}, $c_0$ defines a non-redundant and complete choice. Moreover, by $(\circ)$ and Lemma~\ref{lemma:AP_P(c)}, $c_0$ is available at $\x$. 
    In particular, we have $\x^{-1}(P(c_0)) = D_\x$. By $(\ast)$, we get:
    \begin{equation*}
    (\dagger)\qquad
    \begin{aligned} 
        \x^{-1}(P(c\cap c')) = \x^{-1}(P(c_0)) \cap \x^{-1}(F_{g|_{D_\x}^{-1}(A_t^{\prime i})}) 
        = D_{\x} \cap g|_{D_\x}^{-1}(A_t^{\prime i}) = g|_{D_\x}^{-1}(A_t^{\prime i}). 
    \end{aligned}
    \end{equation*}
    
    (Ad \ref{thm:APsdf_AC.Fx_mb_=>_adapted}):~ Let $\x\in\tilde\X^i$ be such that $c$ is available at $\x$. If $g|_{D_\x}$ is $\ms F^i_\x$-measurable, then, by $(\dagger)$, $\x^{-1}(P(c\cap c'))\in\ms F^i_\x$ for all $c'\in\ms C^i_\x$. \smallskip
    
    (Ad \ref{thm:APsdf_AC.adapted_=>_Fx_mb}):~ Suppose conversely that $\x^{-1}(P(c\cap c'))\in\ms F^i_\x$ for all $\x\in\tilde\X^i$ that $c$ is available at and all $c'\in\ms C^i_\x$, and that Assumption~\hyperlink{Ass:AP.C3}{AP.C3} is satisfied for $(A_{<t},i,g)$ and $\ms C^i$. 
    
    Let $\x\in\tilde\X^i$ be such that $c$ is available at it. By Assumption~\hyperlink{Ass:AP.C3}{AP.C3}, there is a generator $\ms G(\A^i)$ of $\ms B(\A^i)$, stable under non-trivial intersections, such that for all $G\in\ms G(\A^i)$, we have $c(A_{<t},A^{i,G}_t)\in\ms C^i_\x$. Let $G\in\ms G(\A^i)$ and $c' = c(A_{<t},A^{i,G}_t)$.
    
    Represent $c'$ as in the beginning of the ``Helpful statements'' part above, with $A_t^{\prime i} = G$ and $A'_{<t} = A_{<t}$. Let $c_0$ be defined as in the ``Helpful statements'' part.
    
    Then, we can use these helpful statements to infer that $c_0$ is a non-redundant and complete choice available at $\x$.
    Hence, $\x^{-1}(P(c_0)) = D_\x$, and thus $(\dagger)$ and the hypothesis imply that
    \[ (g|_{D_\x})^{-1}(G) = \x^{-1}(P(c\cap c')) \in \ms F^i_\x. \]
   
    We conclude that $(g|_{D_\x})^{-1}(G)\in\ms F^i_\x$ for all $G\in\ms G(\A^i)$. Trivially, we have $(g|_{D_\x})^{-1}(\emptyset) = \emptyset \in\ms F^i_\x$. As $\ms G(\A^i)\cup \{\emptyset\}$ is a generator of $\ms B(\A^i)$ stable under intersections, in view of the $\pi$-$\lambda$ theorem, $g|_{D_\x}$ is $\ms F^i_\x$-measurable.
\end{proof}

\begin{proof}[Proofs for Example~\ref{ex:APsdf_AC}]
    (Ad $W=\Omega\times\A^\T$):~ We show the following more general statement: If $t\in\T$ and $A_{<t}\subseteq\A^{[0,t)_\T}$ is non-empty, and moreover $A_t = (A_{t,\omega})_{\omega\in\Omega}\in \mc P(\A)^\Omega$ is such that 
    \[ (\ast) \qquad \forall \omega\in\Omega \colon\quad\emptyset \subsetneq A_{t,\omega} \subsetneq \A, \]
    then $c(A_{<t},A_t) \in \ms C_t$. This includes the two following cases: A) $A_t = A_t^{i,g}$ for given $i\in I$ and a map $g\colon \Omega\to\A^i$ because $p^i$ is surjective and $\A^i$ has at least two elements; and B) $A_{t,\omega} = (p^i)^{-1}(G)$ for some $i\in I$ and some set $\emptyset\subsetneq G\subsetneq \A^i$, again because $p^i$ is surjective.
    
    As $A_{<t}\neq\emptyset$ and $(\ast)$ is assumed, there is $w=(\omega,f) \in W=\Omega\times\A^\T$ such that $f|_{[0,t)_\T}\in A_{<t}$ and $f(t) \in A_{t,\omega}$. Hence, $w\in c(A_{<t},A_t)$. Thus, \hyperlink{Ass:AP.C0}{AP.C0} is satisfied.\smallskip

    Regarding \hyperlink{Ass:AP.C1}{AP.C1}, let $w=(\omega,f)\in c(A_{<t},A_t)$. There is $f'\in\A^\T$ with $f'|_{[0,t)_\T} = f|_{[0,t)_\T}$ and $f'(t) \notin A_{t,\omega}$ (by $(\ast)$). Then $w'=(\omega,f')\in \Omega\times\A^\T = W$, hence $w'\in x_t(w)$, but $w'\notin c(A_{<t},A_t)$.\smallskip

    Regarding \hyperlink{Ass:AP.C2}{AP.C2}, let $f\in\A^\T$ with $f|_{[0,t)_\T}\in A_{<t}$ and $\omega\in D_{t,f}$. There is $f'\in\A^\T$ with $f'|_{[0,t)_\T} = f|_{[0,t)_\T}$ and $f'(t) \in A_{t,\omega}$, by $(\ast)$. As $(\omega,f') \in \Omega\times\A^\T = W$, we infer
    \[ (\omega,f') \in x_t(\omega,f) \cap c(A_{<t},A_t). \]

    Hence, $c(A_{<t},A_t)\in\ms C_t$. The general statement above is proven. By considering the case A), we infer that for all $i\in I$ and $g\colon \Omega\to\A^i$, $c(A_{<t},i,g)\in\ms C_t$.\smallskip

    Regarding \hyperlink{Ass:AP.C3}{AP.C3}, suppose that for all $\x\in\tilde\X^i$ with $\mf t(\x)=t$, $\ms C^i_\x$ contains all $c(A_{<t},A_t)$ ranging over all $A_t$ satisfying ($\ms C_\x^i$.$k$), $k=2,3,4$. Let $\x\in\tilde\X^i$ be such that $c(A_{<t},i,g)$ is available at $\x$.
    By Lemma~\ref{lemma:C_t_non-redundant_complete}, there is $(\omega_0,f_0)\in c(A_{<t},i,g)$ such that $\x = \x_t(f_0)$ and $\omega_0\in D_\x$.
    
    Let $\ms G(\A^i) = \ms B(\A^i)\setminus \{\A^i,\emptyset\}$ which is obviously a generator of $\ms B(\A^i)$ stable under non-trivial intersections. Let $G\in\ms G(\A^i)$. Then, the general statement $(\ast)$ above in case B) applies, hence $c(A_{<t},A_t^{i,G})\in\ms C_t$. To complete the proof that $c(A_{<t},A_t^{i,G})\in\ms C_\x^i$, in view of our additional assumption on $\ms C^i_\x$, it remains to prove Property ($\ms C_\x^i$.\ref{def:msC.4}) for $c(A_{<t},A_t^{i,G})$, ($\ms C_\x^i$.\ref{def:msC.3}) having been proven just before and ($\ms C_\x^i$.\ref{def:msC.2}) being evident. For this, let $\omega'\in D_\x$. Then there is $f'\in\A^\T$ such that $f'|_{[0,t)_\T} = f_0|_{[0,t)_\T}\in A_{<t}$ and $p^i \circ f'(t) \in G$, because $G$ is non-empty and $p^i$ surjective. Then, $w' = (\omega',f')\in \Omega\times \A^\T = W$, and thus $w'\in \x_t(f_0)(\omega') \cap c(A_{<t},A_t^{i,G})$. Property ($\ms C_\x^i$.\ref{def:msC.4}) therefore holds true for $c(A_{<t},A_t^{i,G})$ and the proof of the statement $c(A'_{<t},A_t^{i,G})\in\ms C_\x^i\cup\{\emptyset\}$ is complete.\medskip

    (Ad timing problem):~ Let $t\in\T$. Further, let $A_{<t}\subseteq \A^{[0,t)_\T}$ be a non-empty set of componentwise decreasing paths $f_t$ such that $p^i\circ f_t = 1_{[0,t)_\T}$, and $g\colon\Omega\to\{0,1\} = \A^i$. Let $c = c(A_{<t},i,g)$. We are going to prove that $c\in\ms C_t$. 

    Regarding \hyperlink{Ass:AP.C0}{AP.C0}, let $\omega\in\Omega$. By assumption on $A_{<t}$, there is componentwise decreasing $f\colon\T \to \A$ such that $f|_{[0,t)_\T}\in A_{<t}$ and $p^i \circ f(t) = g(\omega)$. Hence, $(\omega,f)\in W$, and even $(\omega,f)\in c$. Thus, $c\neq\emptyset$.\smallskip

    Regarding \hyperlink{Ass:AP.C1}{AP.C1}, let $w=(\omega,f)\in c$. There is componentwise decreasing $f'\in\A^\T$ with $f'|_{[0,t)_\T} = f|_{[0,t)_\T}$ and $p^i\circ f'(t)\neq g(\omega)$, because $p^i\circ f|_{[0,t)_\T} = 1_{[0,t)_\T}$. Then $w'=(\omega,f')\in W$, hence $w'\in x_t(w)$, but $w'\notin c$.\smallskip

    Regarding \hyperlink{Ass:AP.C2}{AP.C2}, let $f\in\A^\T$ with $f|_{[0,t)_\T}\in A_{<t}$ and $\omega\in D_{t,f}$. There is componentwise decreasing $f'\in\A^\T$ with $f'|_{[0,t)_\T} = f|_{[0,t)_\T}$ and $p^i \circ f'(t) = g(\omega)$, by assumption on $A_{<t}$. As $(\omega,f') \in W$, we infer
    \[ (\omega,f') \in x_t(\omega,f) \cap c. \]
    We conclude that $c\in \ms C_t$.\smallskip

    Regarding Assumption~\hyperlink{Ass:AP.C3}{AP.C3}, suppose that for all $\x\in\tilde\X^i$ with $\mf t(\x)=t$, $\ms C^i_\x$ contains all $c(A_{<t},A_t)$ ranging over all $A_t$ satisfying ($\ms C_\x^i$.$k$), $k=2,3,4$. Let $\x\in\tilde\X^i$ be such that $c(A_{<t},i,g)$ is available at $\x$. By Lemma~\ref{lemma:C_t_non-redundant_complete}, there is $(\omega_0,f_0)\in c(A_{<t},i,g)$ such that $\omega_0 \in D_{t,f_0}=D_\x$ and $\x = \x_t(f_0)$. As $c(A_{<t},i,g)\in\ms C_t$, Assumption~\hyperlink{Ass:AP.C1}{AP.C1} implies, that $p^i \circ f_0|_{[0,t)_\T}$ is constant with value $1$. Indeed, there is $f_1\in\A^\T$ with $(\omega,f_1)\in x_t(\omega,f_0)\setminus c(A_{<t},i,g)$. In particular, $f_1$ is componentwise decreasing and takes the same values on $[0,t)_\T$ as $f_0$. If $p^i \circ f_0$ took the value $0$ at some point in $[0,t)_\T$, the monotonicity would imply $p^i\circ f_1(t) = 0 = p^i\circ f_0(t) = g(\omega)$, whence $(\omega,f_1)\in c(A_{<t},i,g)$, in contradiction to the choice of $f_1$.

    Further, let $\ms G(\A^i) = \ms B(\A^i)\setminus \{\A^i,\emptyset\}$ which is obviously a generator of $\ms B(\A^i)$ stable under non-trivial intersections. As $\A^i = \{0,1\}$, $\ms G(\A^i) = \{\{0\},\{1\}\}$. Let $G\in\ms G(\A^i)$. Thus, $G$ is a singleton and $c(A_{<t},A_t^{i,G}) = c(A_{<t},i,g_G)$ for the constant map $g_G$ with value given by the unique element of $G$. We have shown just beforehand that $c(A_{<t},i,g_G)\in\ms C_t$. To complete the proof of the fact that $c(A_{<t},A_t^{i,G})\in\ms C_\x^i$, in view of our additional assumption on $\ms C^i_\x$, it thus remains to show Axiom ($\ms C_\x^i$.\ref{def:msC.4}) in the definition of $\ms C_\x$, because ($\ms C_\x^i$.\ref{def:msC.3}) has just been proven and ($\ms C_\x^i$.\ref{def:msC.2}) is evident by construction. For this, let $\omega'\in D_\x$. As $p^i\circ f_0$ only takes the value $1$ on $[0,t)_\T$, there is componentwise decreasing $f' \colon \T\to\A$ such that $f'|_{[0,t)_\T} = f_0|_{[0,t)_\T}$ and $p^i\circ f'(t) = g_G(\omega')$. Hence, $(\omega',f')\in \x_t(f_0)(\omega')\cap c(A_{<t},i,g_G) = \x(\omega') \cap c(A_{<t},A_t^{i,G})$. This completes the proofs for the timing game example.\medskip

    (Ad up-and-out option exercise example):~ Let $t\in\R_+$, $A_{<t} = \{1\}^{[0,t)}$ and $D$ be the set of $\omega\in\Omega$ such that $\max_{u\in [0,t]} P_u(\omega) < 2$. We suppose that $D\neq\emptyset$. Let $g\colon D\to\{0,1\}$ be a map and let $c = c(A_{<t},i,g)$. We are first going to prove that $c\in\ms C_t$. 

    We start with the proof of \hyperlink{Ass:AP.C0}{AP.C0}. There is $\omega_0\in D$. As $P$ is continuous, there is $\e>0$ such that $\max_{u\in [0,t+\e]} P_u(\omega_0) < 2$. Then, regardless of the value of $g(\omega_0)$, there is decreasing $f\colon{\R_+} \to \A$ such that $f|_{[0,t)} = 1_{[0,t)}$, $f(t) = g(\omega_0)$, and $f(t+\e) = 0$. Hence, $(\omega_0,f)\in W$, and even $(\omega_0,f)\in c$. Thus, $c\neq\emptyset$.\smallskip

    Regarding \hyperlink{Ass:AP.C1}{AP.C1}, let $w=(\omega,f)\in c$. Then, $\max_{u\in [0,t]} P_u(\omega) < 2$, and by continuity of $P$, there is $\e>0$ such that $\max_{u\in [0,t+\e]} P_u(\omega) < 2$. Hence, regardless of the value of $g(\omega)$, there is decreasing $f'\in\A^{\R_+}$ with $f'|_{[0,t)} = f|_{[0,t)} = 1_{[0,t)}$, $f'(t)\neq g(\omega)$, and $f'(t+\e) = 0$. Then $w'=(\omega,f')\in W$, hence $w'\in x_t(w)$, but $w'\notin c$.\smallskip

    Regarding \hyperlink{Ass:AP.C2}{AP.C2}, let $f\in\A^{\R_+}$ with $f|_{[0,t)}\in A_{<t}$ and $\omega\in D_{t,f}$. Then there is decreasing $\tilde f\colon\R_+\to\A$ with $(\omega,\tilde f)\in x_t(\omega,f)$, that is, $\tilde f|_{[0,t)} = f|_{[0,t)}$ and $\max_{u\in [0,t]} P_u(\omega) < 2$. By continuity of $P$, there is $\e>0$ such that $\max_{u\in [0,t+\e]} P_u(\omega) < 2$. Hence, regardless of the value of $g(\omega)$, there is decreasing $f'\in\A^{\R_+}$ with $f'|_{[0,t)} = f|_{[0,t)} = 1_{[0,t)}$, $f'(t) = g(\omega)$, and $f'(t+\e) = 0$. Hence, $(\omega,f') \in W$, and we infer
    \[ (\omega,f') \in x_t(\omega,f) \cap c. \]
    We conclude that $c\in \ms C_t$.\smallskip

    Regarding Assumption~\hyperlink{Ass:AP.C3}{AP.C3}, suppose that for all $\x\in\tilde\X^i$ with $\mf t(\x)=t$, $\ms C^i_\x$ contains all $c(A_{<t},A_t)$ ranging over all $A_t$ satisfying ($\ms C_\x^i$.$k$), $k=2,3,4$. Let $\x\in\tilde\X^i$ be such that $c(A_{<t},i,g)$ is available at $\x$. By Lemma~\ref{lemma:C_t_non-redundant_complete}, there is $(\omega_0,f_0)\in c(A_{<t},i,g)$ such that $\x = \x_t(f_0)$ and $\omega_0\in D_\x$. As $c(A_{<t},i,g)\in\ms C_t$, Assumption~\hyperlink{Ass:AP.C1}{AP.C1} implies, that $f_0|_{[0,t)}$ is constant with value $1$. Indeed, there is $f_1\in\A^\T$ with $(\omega,f_1)\in x_t(\omega,f_0)\setminus c(A_{<t},i,g)$. In particular, $f_1$ is componentwise decreasing and takes the same values on $[0,t)_\T$ as $f_0$. If $f_0$ took the value $0$ at some point in $[0,t)$, the monotonicity would imply $f_1(t) = 0 = f_0(t) = g(\omega)$, whence $(\omega,f_1)\in c(A_{<t},i,g)$, in contradiction to the choice of $f_1$.

    Let $\ms G(\{0,1\}) = \ms B(\{0,1\})\setminus \{\{0,1\},\emptyset\}$ which is obviously a generator of $\ms B(\{0,1\})$ stable under non-trivial intersections. Note that $\ms G(\{0,1\}) = \{\{0\},\{1\}\}$. Let $G\in\ms G(\{0,1\})$.
    Thus, $G$ is a singleton and $c(A_{<t},A_t^{i,G}) = c(A_{<t},i,g_G)$ for the constant map $g_G$ with value given by the unique element of $G$.

    We have shown just above that $c(A_{<t},i,g_G)\in\ms C_t$. To complete the proof of the fact that $c(A_{<t},i,g_G)\in\ms C_\x^i$, in view of our additional assumption on $\ms C^i_\x$, it remains to show Axiom ($\ms C_\x^i$.\ref{def:msC.4}), because ($\ms C_\x^i$.\ref{def:msC.3}) has just been proven and ($\ms C_\x^i$.\ref{def:msC.2}) is evident by construction. For this, let $\omega'\in D_\x = D_{t,f_0}$. In particular, $\max_{u\in [0,t]} P_u(\omega') < 2$. By continuity of $P$, there is $\e>0$ such that $\max_{u\in [0,t+\e]} P_u(\omega') < 2$. Then, regardless of the value of $g(\omega)$, and as $f_0$ only takes the value $1$ on $[0,t)$, there is componentwise decreasing $f' \colon {\R_+}\to\A$ such that $f'|_{[0,t)} = f_0|_{[0,t)}$, $f'(t) = g_G(\omega')$, and $f'(t+\e) = 0$. Hence, $(\omega',f')\in \x(\omega')\cap c(A_{<t},i,g_G)$. This completes the proofs for the up-and-out option exercise problem example.
\end{proof}

\end{document}